\documentclass[a4paper,onecolumn,11pt,accepted=2024-03-06]{quantumarticle}
\pdfoutput=1

\usepackage[T1]{fontenc}
\usepackage[utf8]{inputenc}
\usepackage{amsmath,amsfonts,amsthm,amssymb}
\usepackage{mathtools}
\usepackage{subeqnarray}
\usepackage{setspace}
\usepackage{graphics,graphicx,color}
\usepackage{url}
\usepackage{enumerate}
\usepackage{float} 
\usepackage{multirow}
\usepackage{hhline}
\usepackage[margin=1.2in]{geometry}
\usepackage{braket}
\usepackage{dsfont} 
\usepackage{multirow}
\usepackage{hhline}
\usepackage[makeroom]{cancel}
\usepackage{ifthen}
\usepackage{comment}
\usepackage{breakcites}
\usepackage[usenames,dvipsnames]{xcolor}
\usepackage[numbers,sort&compress]{natbib}

\definecolor{ddgreen}{rgb}{.05,.4,.05}
\definecolor{damethyst}{rgb}{0.4, 0.2, 0.6}
\usepackage[colorlinks, linkcolor=damethyst,citecolor=ddgreen]{hyperref}

\newtheorem{theorem}{Theorem}
\newtheorem{cor}{Corollary}
\newtheorem{definition}{Definition}

\newtheorem{lem}{Lemma}

\newtheorem{construction}{Construction}
\newtheorem{example}{Example}
\newtheorem{remark}[theorem]{Remark}

\usepackage[scr=boondoxo,scrscaled=1.05]{mathalfa}

\newcommand{\Tr}[1]{\mbox{$\mathrm{Tr}\Big[#1\Big]$}}

\newcommand{\Id}{\mathds{1}}
\newcommand{\beq}{\begin{eqnarray}}
\newcommand{\eeq}{\end{eqnarray}}

\newcommand{\Hmin}{\mathbf{H}_{\min}}
\newcommand{\POVM}{\mathsf{POVM}}\newcommand{\CPTP}{\mathsf{CPTP}}
\newcommand{\QPT}{\mathsf{QPT}}

\newcommand{\SSL}{\mathsf{SSL}}
\newcommand{\Lease}{\mathsf{Lease}}
\newcommand{\Verify}{\mathsf{Verify}}
\newcommand{\Gen}{\mathsf{Gen}}
\newcommand{\Pieq}{\Pi_\theta^{\mathsf{eq}}}

\newcommand{\Eval}{\mathsf{Eval}}

\newcommand{\sk}{\mathsf{sk}}

\newcommand{\poly}{\mathsf{poly}}
\newcommand{\negl}{\mathsf{negl}}
\newcommand{\EPR}{\mathsf{EPR}}
\newcommand{\KeyGen}{\mathsf{KeyGen}}
\newcommand{\Enc}{\mathsf{Enc}}
\newcommand{\Dec}{\mathsf{Dec}}
\newcommand{\SKQES}{\mathsf{SKQES}}
\newcommand{\QECM}{\ensuremath{\mathsf{QECM}}}

\usepackage{tikz}
\def\firstcircle{(210:1.75cm) circle (2.5cm)}
\def\secondcircle{(330:1.75cm) circle (2.5cm)}

\definecolor{dred}{rgb}{.7,.2,.2}

\definecolor{amethyst}{rgb}{0.6, 0.4, 0.8}

\definecolor{amethyst}{rgb}{0.6, 0.4, 0.8}


\title{Quantum copy-protection of compute-and-compare programs in the quantum random oracle model}

\author{Andrea Coladangelo}
\affiliation{University of California, Berkeley, USA.}
\email{andrea.coladangelo@gmail.com}

\author{Christian Majenz}
\affiliation{QuSoft and Centrum Wiskunde \& Informatica, The Netherlands.}
\affiliation{Department of Applied Mathematics and Computer Science, Technical University of Denmark, Denmark.}
\email{christian.majenz@cwi.nl}

\author{Alexander Poremba}
\affiliation{Computing and Mathematical Sciences, Caltech, USA.}
\email{aporemba@caltech.edu}

\begin{document}

\maketitle

\abstract{Copy-protection allows a software distributor to encode a program in such a way that it can be evaluated on any input, yet it cannot be ``pirated'' --  a notion that is impossible to achieve in a classical setting. Aaronson (CCC 2009) initiated the formal study of quantum copy-protection schemes, and speculated that quantum cryptography could offer a solution to the problem thanks to the quantum no-cloning theorem.

In this work, we introduce a quantum copy-protection scheme for a large class of evasive functions known as ``compute-and-compare programs'' -- a more expressive generalization of point functions. A compute-and-compare program $\mathsf{CC}[f,y]$ is specified by a function $f$ and a string $y$ within its range: on input $x$, $\mathsf{CC}[f,y]$ outputs $1$, if $f(x) = y$, and $0$ otherwise. We prove that our scheme achieves non-trivial security against fully malicious adversaries in the quantum random oracle model (QROM), which makes it the first copy-protection scheme to enjoy any level of provable security in a standard cryptographic model. 
As a complementary result, we show that the same scheme fulfills a weaker notion of software protection, called ``secure software leasing'', introduced very recently by Ananth and La Placa (eprint 2020), with a standard security bound in the QROM, i.e. guaranteeing negligible adversarial advantage. Finally, as a third contribution, we elucidate the relationship between
unclonable encryption and copy-protection for multi-bit output point
functions.
}

\newpage
\tableofcontents

\section{Introduction}

\subsection{Copy-protection}

Copy-protection captures the following cryptographic task. A vendor wishes to encode a program in such a way that a user who receives the encoded program is able to run it on arbitrary inputs. However, the recipient should not be able to create functionally equivalent ``pirated'' copies of the original program. More concretely, no user should be able to process the encoded program so as to split it into two parts, each of which allows for the evaluation of the function implemented by the original program. 
Copy-protection of any kind is trivially impossible to achieve classically. This is because any classical information that the user receives can simply be copied. In the quantum realm, however, the no-cloning theorem prevents any naive copying strategy from working unconditionally, and copy-protection seems, at least in principle, possible. The key question then becomes:
\emph{Is it possible to encode the functionality of a program in a quantum state while at the same time preserving the no-cloning property?}

To be precise, we are not satisfied with preventing an adversary from copying the state that encodes the program (this is certainly a necessary condition), but we also require that there is no other way for a (computationally bounded) user to process the state into two parts (not necessarily a copy of the original) so as to allow each half to recover the input-output behaviour of the encoded program.

Quantum copy-protection was first formally considered by Aaronson \cite{aaronson2009quantum}. One of the first observations there is that families of \emph{learnable} functions cannot be copy-protected: access to a copy-protected program, and hence its input-output behaviour, allows one to recover a classical description of the program itself, which can be copied. In \cite{aaronson2009quantum}, Aaronson provides some formal definitions and constructions of copy-protection schemes. More precisely, Aaronson describes:
\begin{itemize}
\item A provably secure scheme to copy-protect any family of efficiently computable functions which is not quantumly learnable, assuming a \textit{quantum} oracle implementing a certain family of unitaries. 
\item Two candidate schemes to copy-protect point functions in the plain model, although neither of the two has a proof of security.
\end{itemize}
In recent work \cite{aaronson2020quantum}, Aaronson, Liu and Zhang provide a scheme to copy-protect any family of efficiently computable functions which is not quantumly learnable, assuming access to a \textit{classical} oracle, i.e. an oracle (which can be queried in superposition) that implements a classical function. We emphasize, however, that this classical function is dependent on the function that one wishes to copy-protect. In particular, none of the oracles that these schemes rely upon have candidate realizations in the plain model, and some, in particular, are impossible to  realize. For example, the classical oracle used in the scheme from \cite{aaronson2020quantum} can be used to construct an ideal obfuscator for the function $f$ that is being copy-protected. Such an ideal obfuscator is impossible to realize in general---with the exception of \emph{learnable programs}~\cite{barak2012possibility} and highly restricted functionalities~\cite{10.5555/1760749.1760766}.
Some of the questions left open in \cite{aaronson2009quantum} and \cite{aaronson2020quantum} are:
\begin{itemize}
    \item[(i)] Does there exist a scheme to copy-protect any non-trivial family of functions (the simplest example being point functions) which we can prove secure in the plain model under some standard assumption? What about larger classes of programs?
    \item[(ii)] Can such a scheme exist that does not involve multi-qubit entanglement?
\end{itemize}

On the negative side, aside from the impossibility of copy protecting families of learnable functions, it has remained an open question to determine whether a more general impossibility result applies. In a recent result, Ananth and La Placa \cite{ananth2020secure} prove that a universal copy-protection scheme cannot exist, assuming the quantum hardness of the learning with errors problem \cite{Regev_LWE} and the existence of quantum fully homomorphic encryption~\cite{DBLP:conf/focs/Mahadev18a,10.1007/978-3-319-96878-0_3}.

\subsection{Our contributions}
In this work, we approach copy-protection from the positive side. Our main result is a copy-protection scheme for compute-and-compare programs, for which we prove non-trivial security in the quantum random oracle model. By non-trivial security we mean, informally, that a query-bounded adversary fails at pirating with at least some constant probability (which is approximately $10^{-4}$ for our scheme).

A desirable feature of our scheme is that the copy-protected program does not involve multi-qubit entanglement -- in fact it only involves BB84 states and computational and Hadamard basis measurements. This is in contrast to previous candidate schemes for point functions in \cite{aaronson2009quantum}, whose security is only conjectured, and which employ highly entangled states. 
The simple structure of the copy-protected program is advantageous for, e.g., error-corrected storage of the copy-protected program. We point out, however, that in a practical implementation of our scheme, where the oracle is replaced by a hash function, evaluation of the copy-protected program on an input requires \emph{coherently} computing the hash function in an auxiliary register. This operation requires universal quantum computation.

Our scheme is not in the plain model, and hence does not fully resolve questions (i) and (ii). The (quantum) random oracle model, however, enjoys widespread acceptance and popularity in (post-quantum/quantum) cryptography, and many schemes designed for, and deployed in, practical applications enjoy provable security in that model only. Our security definition is essentially analogous to the original definition in \cite{aaronson2009quantum} but differs more significantly from the more recent definition in \cite{aaronson2020quantum}, which is weaker.

Our techniques and construction are inspired by recent work on \textit{unclonable encryption} by Broadbent and Lord \cite{broadbent2019uncloneable}. The main technical ingredient on which their construction relies are \emph{monogamy of entanglement games}, introduced and studied extensively in \cite{tomamichel2013monogamy}, which they combine with an adaption of the one-way-to-hiding (O2H) lemma of \cite{Unruh15} for a security analysis in the quantum random oracle model. In a nutshell, (a special case of) the latter lemma allows one to upper bound the probability that an algorithm outputs $H(x)$, where $H$ is a random oracle and $x$ is any string in the domain, in terms of the probability that the algorithm ``queries'' at $x$ at some point during its execution. The adaption of \cite{broadbent2019uncloneable} extends the applicability of the O2H lemma to a setting that involves \emph{two players}, and upper bounds the probability that the two (possibly entangled) players \emph{simultaneously} guess $H(x)$ by the probability that they both query at $x$ at some point during the execution of their respective strategies.

Our main technical contribution in this work is that we augment the analysis of this ``simultaneous one-way-to-hiding'' lemma by a search-to-decision reduction. This allows us to overcome an important hurdle that is inherently present in the security analysis of copy-protection schemes, namely that security is based on a distinguishing game (the ``freeloaders'' who receive pirated copies have to return a single bit, namely the evaluation of the original program on a ``challenge'' input), rather than a guessing game (freeloaders who receive pirated copies have to guess some long string). Our reduction is ``lossy'': we show that the two players fail at least with some \emph{constant} probability at simultaneously evaluating correctly on their respective inputs. We give an informal description of this contribution at the end of the next subsection. We believe that obtaining a reduction in which the advantage of the two players at simultaneously deciding correctly is \emph{negligible} requires further novel techniques. Along the way, we generalize the analysis of Broadbent and Lord to certain random oracles with non-uniform distributions.

\subsubsection{A sketch of our copy-protection scheme}
We start by describing a scheme to copy-protect point functions, which is the crux of this work. Subsequently, we describe how to extend this to compute-and-compare programs. Our scheme is inspired by Broadbent and Lord's unclonable encryption scheme \cite{broadbent2019uncloneable}, which is itself rooted in Wiesner's conjugate coding scheme \cite{Wiesner83}. 

Let $\lambda \in \mathbb{N}$. The main idea is that it is possible to ``hide'' a string $v \in \{0,1\}^{\lambda}$ by making $\lambda$ uniformly random choices of basis (either computational or Hadamard) which we denote by $\theta \in \{0,1\}^{\lambda}$, and then encoding each bit of $v$ in either the computational or Hadamard basis, according to $\theta$. Formally, this amounts to preparing the following quantum state on $\lambda$ qubits:
$$\ket{\Psi} = \bigotimes_{i=1}^{\lambda} \ket{v_i^{\theta_i}},$$
where $\ket{b^{s}} = H^s \ket{b}$, for $b,s \in \{0,1\}$. Given the string of basis choices $\theta$ and the state $\ket{\Psi}$, one is able to ``decrypt'' and recover the string $v$ by measuring each qubit of $\ket{\Psi}$ in the basis specified by $\theta$.
We can bootstrap this idea to copy-protect point functions as follows. Let $P_y$ be a point function with marked input $y \in \{0,1\}^\lambda$, i.e.
$$ P_y (x) = \begin{cases} 1   & \text{if } x = y\,,\\
    0 &\text{if } x \neq y \,.   \end{cases}  $$
Our scheme rests on the following simple idea: we interpret the string $y$ as the basis choice that ``encrypts'' a uniformly random string $v$. The copy-protected version of $P_y$ then consists of the state $\ket{\Psi}$
together with some classical information that enables an evaluator to ``recognize'' $v$. One can take the latter information to be $H(v)$, for some hash function $H$ (or a uniformly random function $H$, if one works in the random oracle model). Then, to evaluate the program on some input $x$, the evaluator attempts to ``decrypt'' using $x$ as the basis choice, i.e. applies Hadamards $H^{x}=H^{x_1} \otimes \dots \otimes H^{x_\lambda}$ to $\ket{\Psi}$, followed by a measurement in the computational basis. Let $v' \in \{0,1\}^{\lambda}$ be the outcome of this measurement. The evaluator checks that $H(v') = H(v)$, and outputs $1$ if so, $0$ otherwise.\footnote{To ensure that the quantumly copy-protected program can be reused, $v'$ is not actually measured, rather, the check $H(v') = H(v)$ is performed coherently (only the result of the check is measured). For details, see Section \ref{sec: main}.} Except for a minor modification which we highlight in the next paragraph, this is our scheme (described in detail in Construction \ref{cons:cp}). We will now informally discuss the correctness and the copy-protection property of this scheme.

\paragraph{Correctness.} Informally, the scheme is correct since ``decrypting'' using $y$ will result in the correct string $v$ with certainty, whereas if one tries to ``decrypt'' using $x \neq y$, the outcome will most likely be a string $v' \neq v$ with $H(v') \neq H(v)$ (provided $H$ has a large enough range). There is a slight issue with this approach, namely that an $x$ which is, for instance, equal to $y$ everywhere except for a single bit, will result in an honest evaluator outputting $1$ (i.e the incorrect output) with probability $\frac12$. To circumvent this, instead of having $y$ itself be the choice of basis for the encoding, we have the latter be $G(y)$, where $G$ is hash function whose range is sufficiently larger than the domain\footnote{Such a hash function can, e.g., be obtained using the sponge construction as in SHA3, but by extending the so-called squeezing phase.}. For the rest of the discussion in this section we will omit $G$ for ease of exposition.

\paragraph{Copy-protection.} The copy-protection property crucially leverages the following property: it is impossible for any pirate who has $\ket{\Psi} = \bigotimes_{i=1}^{\lambda} \ket{v_i^{y_i}}$ but does not know $y$, to produce a state on two registers $\textsf{AB}$ such that two ``freeloaders'' Alice and Bob with access to registers $\textsf{A}$ and $\textsf{B}$ respectively, \emph{as well as} access to $y$, can simultaneously recover $v$. Note that the latter property crucially holds even when both Alice and Bob are \emph{simultaneously} receiving $y$. This property is essentially a consequence of the \emph{monogamy of entanglement}. At a high level, one can consider a purification of the state received by the pirate (when averaging over the choice of $v$ and $y$). Let \textsf{C} be the purifying register, which we can think of as being held by the vendor. Since the pirate does not have access to the register containing the choice of basis $y$, it is possible to argue that the only way for the state of register $\textsf{A}$ to allow for recovery of $v$ (with high probability over the basis choice) is if $\textsf{A}$ is (close to) maximally entangled with $\textsf{C}$. The same argument applies to $\textsf{B}$. Hence, the monogamy of entanglement prevents Alice and Bob from recovering $v$ simultaneously.
This property is captured formally in \cite{tomamichel2013monogamy}, via the study of \emph{monogamy of entanglement games} (Section \ref{sec: monogamy}). In particular, a rephrasing of the results of \cite{tomamichel2013monogamy} is that, for any (unbounded) strategy of the pirate, and Alice and Bob, the probability that both Alice and Bob are able to output $v$ is exponentially small. Unfortunately, our proof of security does not immediately follow from the above observations, mainly due to the fact that the encoded program also consists of the classical string $H(v)$, which further complicates the matter.

Before we expand on the technical hurdles we encounter when proving the copy-protection property of our scheme, let us first define security in a bit more detail. Informally, a quantum copy-protection scheme is \textit{secure} for a family of circuits $\mathcal{C}$ (as well as a distribution $D$ over $\mathcal{C}$) if no adversary---consisting of a triple of quantum polynomial time algorithms $(\mathcal{P},\mathcal{F}_1,\mathcal{F}_2)$, a ``pirate'' $\mathcal{P}$ and two ``freeloaders'' $\mathcal{F}_1$ and $\mathcal{F}_2$---can
succeed with sufficiently high (i.e., non-trivial) probability at the following game:
\begin{itemize}
    \item The pirate $\mathcal{P}$ receives a copy-protected program $\rho_C$ from the challenger (where the program $C \in \mathcal{C}$ is sampled from $D$). $\mathcal{P}$ then creates a bipartite state on registers $\textsf{A}$ and $\textsf{B}$, and sends $\textsf A$ to $\mathcal{F}_1$ and $\textsf{B}$ to $\mathcal{F}_2$.
    \item The challenger samples a pair $(x_1,x_2)$ of inputs to $C$ from a suitable distribution (which is allowed to depend on $C$), and sends $x_1$ to $\mathcal{F}_1$ and $x_2$ to $\mathcal{F}_2$.
    \item $\mathcal{F}_1$ and $\mathcal{F}_2$, who are not allowed to communicate, return bits $b_1$ and $b_2$ respectively. 
    \item $(\mathcal{P}, \mathcal{F}_1, \mathcal{F}_2)$ win if $b_1  = C(x_1)$ and $b_2  = C(x_2)$.
\end{itemize}
The crux in proving that our scheme satisfies this security definition is in arguing that a strategy that performs well enough in the security game must be such that $\mathcal{F}_1$ and $\mathcal{F}_2$ are simultaneously querying the oracle $H$ at $v$ with significant probability (at some point during their executions). This would allow to construct a strategy that simultaneously ``extracts'' $v$, and thus breaks the monogamy of entanglement property.

Unruh's one-way-to-hiding (O2H) lemma \cite{Unruh15} is the standard tool to argue the above. One variant of the O2H lemma provides an upper bound on the probability that an adversary distinguishes $H(v)$ from a uniformly random string in the co-domain of $H$, in terms of the probability that such an adversary queries the oracle at $v$. However, in our security proof, this analysis needs to be augmented: we need to account for possibly \emph{entangled} strategies that attempt to distinguish $H(v)$ from a uniformly random string.
\begin{itemize}
    \item The main technical contribution of \cite{broadbent2019uncloneable} is an important step in this direction. There, the authors bound the probability of entangled parties simultaneously guessing $H(v)$ in terms of the probability that the two parties simultaneously query the oracle at $v$.
    \item The above is not entirely sufficient for our purpose: in our security game, the freeloaders are not asked to guess $H(v)$, rather they are only required to return a single bit. Note that an adversary who wins our security game (with high probability) \emph{must} be able to distinguish $H(v)$ from a uniformly random string. This is because an adversary who receives a uniformly random string instead of $H(v)$ cannot do better than random guessing in the security game. Thus, we need to provide a \textit{search-to-decision} reduction: an adversary such that the freeloaders are both able to distinguish  $H(v)$ from a uniformly random string (sufficiently well) can be used to construct an adversary such that the freeloaders simultaneously extract $v$. This is our main technical contribution (captured by Lemmas \ref{lem: reduction to Gr}, \ref{lem: main bound} and \ref{lem: Gr to monogamy}).
\end{itemize} 

Naively, the difficulties encountered when attempting simultaneous extraction might seem somewhat surprising given the fact that the problem is solved by a straightforward union bound in the classical setting. Indeed, if two algorithms both distinguish $H(v)$ from a uniformly random string with probability at least $3/4+\epsilon$, for some $\epsilon >0$, the probability that the respective query transcripts contain $v$ is at least $1/2+2\varepsilon$ for each of them. This guarantees that both transcripts contain $v$ simultaneously with probability at least $4\varepsilon$. For quantum queries, however, there exists no transcript, and measuring a query for extraction disturbs the run of the algorithms. But the extraction probability is too small for a union bound even in the classical case, as extraction requires choosing a query at random, which suppresses the success probability by a factor of $O(q^\alpha)$, where $q$ is the total number of oracle queries, and $\alpha=-1$ in the classical and $\alpha=-2$ in the quantum case.

The dependence of the simple classical technique on query transcripts, and the lack of such in the quantum case, in some sense capture the essence of a major difficulty we encounter. Let us therefore elaborate a bit further by noting that the classical technique mentioned above crucially depends on \emph{separate} query transcripts for the two players. This implies that it depends on information that can be elusive in the quantum case, in a very strong sense. Suppose e.g. the two players share an entangled state $(\ket 0\ket 1+\ket 1\ket 0)/\sqrt{2}$. They proceed by making a single query each, on input $x_i$ controlled on their part of the entangled state being in state $i$, for two arbitrary inputs $x_0, x_1$. The final state is then
$$
\ket\psi=\frac 1{\sqrt{2}}\left(\ket 0\ket{x_0}\ket{H(x_0)}\ket 1\ket{x_1}\ket{H(x_1)}+\ket 1\ket{x_1}\ket{H(x_1)}\ket 0\ket{x_0}\ket{H(x_0)}\right),
$$
where the first, and the last, three registers are held by the first, and the second, player, respectively. 
At this point, the inputs $x_0$ and $x_1$ have both been queried with certainty, but the information of who has queried which of the two is, in fact, \emph{distributed quantum} information in the hand of the players, precluding any third-party knowledge about it due to the no-cloning theorem. The difference between global and individual query transcripts also becomes evident in the recently developed superposition oracle framework \cite{Zhandry-how}. In the described example, the fact that both $x_0$ and $x_1$ have been queried by somebody can be recovered using the superposition oracle framework. More generally, the superposition oracle framework can be used as a replacement for query transcripts, sometimes in a quite straight-forward manner (see e.g. \cite{Zhandry-how,bindel2019,Alagic20,czajkowski2019quantum,Hosoyamada19}). However, the described example illustrates that recording \emph{which input} was queried \emph{at which interface} is incompatible with the correctness of any quantum-accessible random oracle simulation.

\subsubsection{Extension to compute-and-compare programs.}\label{subsubsec:CC} The copy-protection scheme for point functions we described in the previous section can be straightforwardly extended to the more general class of compute-and-compare programs~\cite{Wichs_Zirdelis_Compute_and_Compare, goyal2017lockable}. A compute-and-compare program $\mathsf{CC}[f,y]$ is specified by an efficiently computable function $f: \{0,1\}^n \rightarrow \{0,1\}^m$ and a string $y \in \{0,1\}^m$ in its range, where
\begin{equation*}
    \mathsf{CC}[f,y](x) = \begin{cases} 1    & \text{if } f(x) = y\,,\\
    0 &\text{if } f(x) \neq y \,.   \end{cases}
\end{equation*}
Point functions are a special case of compute-and-compare programs where the function $f$ is the identity map.

In Section \ref{sec:CnC}, we show how to copy-protect $\mathsf{CC}[f,y]$ in the following simple way: the copy-protected program consists of (a description of a circuit computing) $f$ in the clear, together with a copy-protected version of the point function with marked input $y$.
The intuition is that it is enough to copy-protect the marked input $y$ in order to render $\mathsf{CC}[f,y]$ unclonable. At first, it might seem surprising that one can give $f$ in the clear while preserving unclonability, as the encoded program now leaks significantly more information than its input/output behavior alone. 

However, at a second thought, it is in fact quite natural that one can render a functionality ``unclonable'' by just making some sufficiently important component of it unclonable. 
In fact, it is straightforward to show that copy-protection security of the extended construction reduces to security of the original point function scheme.

\paragraph{Quantum copy-protection of multi-bit point functions in the QROM.} 

The copy-protection scheme we described earlier only considers the case of single-point functions. Notice, however, that a copy-protection scheme for multi-bit point-functions is somewhat \textit{weaker} (and possibly significantly easier to build) than a copy-protection scheme for regular point-functions (i.e. with single-bit output). This is because in the multi-bit case, the freeloaders need to find the output string in addition to the marked input.

Our second contribution is to give a quantum copy-protection scheme for multi-bit point functions with security in the QROM. Here, a multi-bit point function $P_{y,m}$, for strings $y,m \in \{0,1\}^\lambda$, is given by:
$$ P_{y,m} (x) = \begin{cases} m   & \text{if } x = y\,,\\
    0^\lambda &\text{if } x \neq y \,.   \end{cases}  $$
    
Our construction is inspired by recent work on \textit{unclonable encryption} by Broadbent and Lord \cite{broadbent2019uncloneable}. Informally, an unclonable encryption scheme is an encryption scheme in which the ciphertext is a quantum state. In addition to standard notions of security, an unclonable encryption scheme satisfies the following security guarantee. It is not possible for a pirate who receives a valid ciphertext to make two copies of it such that two separate parties, each with one copy, are simultaneously able to decrypt, even when given the (classical) secret key. Our simple observation is that an \textit{unclonable encryption} scheme can be turned into a copy-protection scheme for multi-bit point functions as follows. To copy-protect $P_{y,m}$ encrypt message $m$ with secret key $y$. Then, provided there exists a mechanism for \textit{wrong-key detection}, this already achieves our goal: to evaluate at point $x$, attempt to decrypt using $x$; if decryption succeeds output the decrypted message, if decryption fails, output $0^\lambda$. We then observe that any unclonable encryption scheme can be easily upgraded to achieve wrong-key detection in the QROM, thus yielding the desired copy-protection scheme.  

\subsubsection{Secure software leasing}
On top of proving the impossibility of a general copy-protection scheme for all unlearnable functions, Ananth and La Placa introduce in \cite{ananth2020secure} a weaker notion of copy-protection, which they call ``secure software leasing''
$(\SSL)$. The sense in which the latter is weaker than copy-protection is that one assumes that the freeloaders $\mathcal{F}_1$ and $\mathcal{F}_2$ (now a single adversary) are limited to performing the honest evaluation procedure only. Rather than emphasizing the impossibility of simultaneous evaluation on inputs chosen by a challenger, $\SSL$ captures the essence of quantum copy-protection in the following scenario. An authority (the lessor) wishes to lease a copy $\rho_C$ of a classical circuit $C \in \mathcal{C}$ to a user (the lessee) who is supposed to return back $\rho_C$ at a later point in time, as specified by the lease agreement. Once the supposed copy is returned and verified by the lessor, the security property requires that the adversary can no longer compute $C$. More formally, no adversary should be able to produce a (possibly entangled) quantum state such that:
\begin{itemize}
    \item One half of the state is deemed valid by the lessor, once it is returned.
    \item The other half can be used to honestly evaluate $C$ on every input of the adversary's choosing.
\end{itemize}
Surprisingly, Ananth and La Placa  were able to show in \cite{ananth2020secure} that a general $\SSL$ scheme is also impossible, despite having weaker security requirements compared to copy-protection.
On the positive side, the authors describe an $\SSL$ scheme for a \emph{searchable class} of circuits (i.e., a large class of general evasive circuits) assuming the existence of subspace-hiding obfuscators~\cite{Zhandry-Quantum_Lightning} and the quantum hardness of the learning with errors problem~\cite{Regev_LWE}. Because subspace-hiding obfuscators are currently only known to exist under indistinguishability obfuscation~\cite{Zhandry-Quantum_Lightning, Sahai_Waters_iO}, the same applies to the security of the scheme proposed in \cite{ananth2020secure}.

\paragraph{Our work: $\SSL$ revisited.}
As we mentioned earlier, the original definition of secure software leasing in \cite{ananth2020secure} is a weaker version of copy-protection in the following two ways:
\begin{itemize}
\item The lessor performs a prescribed verification procedure on a register returned by the lessee.
    \item The lessee is required to perform the honest evaluation procedure with respect to any post-verification registers in the lessee's possession.
\end{itemize}

We revisit the notion of secure software leasing from a similar perspective as in our copy-protection definition. Our main contributions are the following. First, we introduce a new and intuitive $\SSL$ definition (Section~\ref{sec:SSL}) by means of a cryptographic security game which does not limit the adversary to performing the honest evaluation on any post-verification registers.\footnote{The $\SSL$ definition in \cite{ananth2020secure} is not ``operational'' and cannot be directly phrased as a security game.} Our definition remains faithful to the idea of $\SSL$, while at the same time offering a stronger security guarantee. Second, we show that our definition of security is achievable with a standard negligible security bound in the quantum random oracle model for the class of compute-and-compare programs (Section~\ref{sec:SSL}). Our $\SSL$ scheme (Construction \ref{cons: pf to cc - SSL}) is virtually equivalent to our copy-protection scheme, but is adapted to the syntax of $\SSL$. Informally, any $\SSL$ scheme $(\SSL.\Gen, \SSL.\Lease,\SSL.\Eval,\SSL.\Verify)$ according to our definition should satisfy the following property. After receiving a leased copy of a classical circuit $C$, denoted by $\rho_C$ (and generated using $\SSL.\Lease$), and a circuit for $\SSL.\Eval$, no adversary should be able to produce a (possibly entangled) quantum state $\sigma$ on two registers $\mathsf{R}_1$ and $\mathsf{R}_2$ such that:
\begin{itemize}
    \item $\SSL.\Verify$ deems the contents of register $\mathsf{R}_1$ of  $\sigma_{\mathsf{R_1R_2}}$ to be valid, and
    \item the adversary can predict the output of circuit $C$ (on challenge inputs chosen by the lessor) using an arbitrary measurement of the post-verification state in register $\mathsf{R}_2$.
\end{itemize}
In Section~\ref{sec:SSL} we show the following key property about our $\SSL$ scheme in Construction \ref{cons: pf to cc - SSL}: once a leased copy is successfully returned to the lessor, no adversary can distinguish the marked input of a compute-and-compare program from a random (non-marked) input with probability better than $1/2$, except for a negligible advantage (in the security parameter). Our scheme can thus be thought of as having perfect security for a natural choice of ``input challenge distribution''. 
The result follows from a standard application of the O2H lemma and a custom ``uncertainty relation'' variant of the monogamy of entanglement property which appeared in a work of Unruh~\cite{Unruh15}. The latter appears in similar contexts in the quantum key-distribution literature. Note that the technical complications arising in the proof of security of our original copy-protection scheme do not appear in the $\SSL$ security proof. Crucially, this is because we can leverage the fact that the lessor is performing a prescribed verification procedure.

\subsection{Related notions}

\paragraph{Obfuscation.} Copy-protection is related to program obfuscation \cite{barak2012possibility,AlagicFefferman16}, although the extent to which they relate to each other is still unclear. It seems plausible that some degree of ``obfuscation'' is necessary in order to prevent a pirate from copying a program. A natural question is whether there exist schemes that satisfy both notions simultaneously.

We give an affirmative answer to this question by showing that our quantum copy-protection scheme for point functions (Construction \ref{cons:cp}) also satisfies the notion of quantum virtual-black box $(\mathsf{VBB})$ obfuscation \cite{AlagicFefferman16}. This results in the first provably secure scheme which is simultaneously a quantum copy-protection scheme as well as a quantum obfuscation scheme, in the quantum random oracle model. 
The main idea is the following: any computationally bounded adversary cannot distinguish between $\left((\bigotimes_{i} \ket{v_i^{\theta_i}}) , H(v) \right)$ and $\left((\bigotimes_{i} \ket{v_i^{\theta_i}}) , z \right)$, where $z$ is a uniformly random string of the same size as $H(v)$, unless the adversary queries the oracle at $v$, which can only happen with negligible probability. Notice that, when averaging over $H, v, \theta$ and $z$, the state $\left((\bigotimes_{i} \ket{v_i^{\theta_i}}) , z \right)$ is maximally mixed. Thus, the copy-protected program is computationally indistinguishable from a maximally mixed state. We give a detailed proof in Section \ref{subsec:VBB}.

Conversely, there exist, of course, program obfuscation schemes which are not copy-protection schemes (since obfuscators for certain classes of programs are possible classically, while copy-protection schemes are impossible). Moreover, Aaronson
\cite{aaronson2009quantum} previously observed that any family of \textit{learnable} programs cannot be copy-protected: access to a copy-protected program (and hence its input-output behaviour) allows the user to recover the classical description of the program itself, which can be copied. But for precisely the same reason, \textit{learnable} programs can be $\mathsf{VBB}$ obfuscated by definition.

Finally, we also provide a missing piece that allows us to separate the notions of quantum copy-protection and quantum $\mathsf{VBB}$ obfuscation (Figure \ref{fig:separation}). Namely, we show that there exists a provably secure scheme (Construction \ref{cons: pf to cc}) to copy-protect compute-and-compare programs (in the quantum random oracle model) that does not, in general, satisfy the notion of quantum $\mathsf{VBB}$ obfuscation (this follows easily in the case when $f$ comes from an ensemble that is not $\mathsf{VBB}$ obfuscatable). Our construction suggests that it is sometimes safe to publish parts of the program that is being copy-protected in the clear, as long as some crucial components are still ``obfuscated''.

In conclusion, we find that obfuscation and copy-protection are, in general, incomparable functionalities. We point out, however, that obfuscators have so far been employed in all proposed constructions of quantum copy-protection schemes, such as in \cite{aaronson2009quantum,aaronson2020quantum}, as well as in our scheme of Construction $\ref{cons:cp}$ (note that publishing $H(v)$ for a random oracle $H$ is an ideal obfuscator for the point function with marked input $v$).

\begin{figure}
\begin{center}
    \begin{tikzpicture}[scale=0.9]
      \draw \firstcircle 
      node at (-2.5,-3.75) {Quantum copy-protection};
      \draw node at (-2.1,-2.05) {$\bullet$};
      \draw node at (-2.1,-2.35) {Constr. \ref{cons: pf to cc}};
      \draw node at (0,-0.25) {$\bullet$};
      \draw node at (0,-0.55) {Constr. \ref{cons:cp}};
      \draw \secondcircle 
     node at (2.45,-3.75) {Quantum obfuscation};
     \draw[dotted] (2.3,-2.2) ellipse (1.2cm and 0.7cm);
     \draw node at (2.4,-2.08) {\small{learnable}};
     \draw node at (2.35,-2.48) {\small{programs}};
    \end{tikzpicture}
\end{center}
 \caption{\textbf{Separation between quantum copy-protection and quantum $\mathsf{VBB}$ obfuscation}. Construction \ref{cons:cp} features a quantum copy-protection scheme for point functions which satisfies both notions, while our second scheme for compute-and-compare programs in Construction \ref{cons: pf to cc} does not satisfy the notion of quantum $\mathsf{VBB}$ obfuscation.
 }
 \label{fig:separation}
\end{figure}
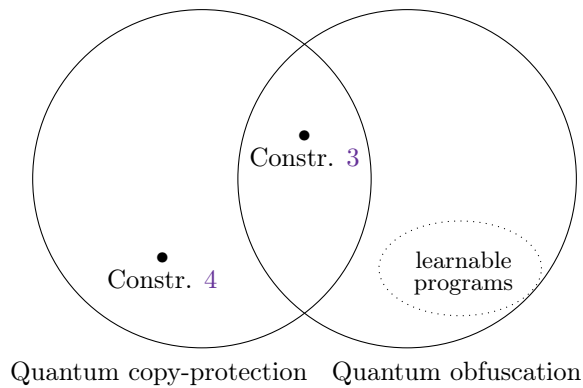

\paragraph{Unclonable encryption.} This cryptographic functionality was formalized recently by Broadbent and Lord \cite{broadbent2019uncloneable}, and informally introduced earlier by Gottesman \cite{gottesman2003uncloneable}. In an unclonable encryption scheme, one encrypts a \emph{classical} message in a \emph{quantum} ciphertext, in such a way that the latter cannot be processed and \emph{split} into two parts, each of which, together with a classical secret key, enables decryption. The setting of unclonable encryption is very similar to that of copy-protection, the main difference being that there is no ``functionality'' associated to the quantum ciphertext, other than it being used for recovering the encrypted message. As we mentioned earlier, our copy-protection scheme is inspired by the unclonable encryption scheme in \cite{broadbent2019uncloneable}, and our analysis extends some of the techniques developed there.

\paragraph{Revocable quantum timed-release encryption.} Timed-release encryption (also known as time-lock puzzles) is an encryption scheme that allows a recipient to decrypt only after a specified amount time, say $T$, has passed. Unruh \cite{Unruh15} gave the first quantum timed-release encryption scheme that is ``revocable'' in the sense that a user can return the timed-release encryption before time $T$, thereby losing all access to the data. It is easy to see that this notion is impossible to achieve classically for precisely the same reason copy-protection is impossible: any adversary can simply generate copies of the classical ciphertext or source code, respectively. From a technical point of view, revocable quantum timed-release encryption shares many similarities with the notion of ``secure software leasing'' in \cite{ananth2020secure}. Besides the fact that the former encodes a \textit{plaintext} and the latter encodes a \textit{program}, the security property essentially remains the same: once a quantum state is returned and successfully verified, the user is supposed to lose all relevant information.
Our proof of security for the $\SSL$ scheme in Construction \ref{const: ssl} is inspired by Unruh's proof for revocable one-way timed-release quantum encryption in \cite{Unruh15}.

\subsection{Open questions}
Our work is the first to construct a copy-protection scheme in a standard cryptographic model (the QROM) with non-trivial security against malicious adversaries. It leaves several questions open. The most pressing ones are the following. 
\begin{itemize}
    \item First of all, our security guarantee is very weak: it only ensures that no adversary can win with probability more than $1-\delta$, for a very small constant $\delta$ (approximately $10^{-4}$). Is it possible to boost the security to ensure negligible advantage? (The main technical hurdle seems to be the factor of $9$ in the bound of Lemma \ref{lem: main bound}.)
    \item Another open problem concerns upgrading security in a different way, by providing the pirate with $k$ independent copy-protected copies of a program and asking them to satisfy $k+1$ freeloaders. We believe that our scheme can achieve such a notion, but with a security that becomes worse as $k$ grows. The setting of $k \mapsto k+1$ unclonable security has been explored in subsequent works for constructions based on subspace coset states~\cite{10.1007/978-3-031-22318-1_11,cakan2023unclonable}. 
    \item Finally, is it possible to remove the requirement of a random oracle, and to achieve a scheme with non-trivial security against malicious adversary in the plain model? We think that this would require fundamentally different techniques.
\end{itemize}

\subsection{Subsequent work}

We remark that a series of subsequent works have meanwhile improved on some of our results on copy-protection and secure software leasing. 

Majenz et al.~\cite{https://doi.org/10.48550/arxiv.2103.14510} gave new \emph{negative results} for unclonable encryption schemes whose security proofs rely on the simultaneous one-way-to-hiding (O2H), such as the scheme of Broadbent and Lord \cite{broadbent2019uncloneable} and, by implication, our QCP scheme for point functions in Construction \ref{cons:cp}. The authors give an explicit
cloning-indistinguishable attack whose advantage is characterized by the largest eigenvalue of the ciphertext. This suggests that the security of our QCP scheme for point functions cannot be significantly improved.

Broadbent et al.~\cite{Broadbent_2021} showed how to construct an \emph{information-theoretic} SSL scheme for compute-and-compare programs from any quantum message authentication code. This improves on our $\SSL$ scheme in Construction \ref{const: ssl} which relies on the quantum random oracle heuristic.

Using the powerful notion of \emph{subspace coset states} rather than BB84 states,
Coladangelo et al.~\cite{10.1007/978-3-030-84242-0_20} construct copy-protection for pseudorandom functions from post-quantum indistinguishability obfuscation (iO). Subsequently,
Ananth et al.~\cite{https://doi.org/10.48550/arxiv.2207.06589} used subspace coset states to obtain a copy-protection for single-bit point functions which achieves the standard notion of \emph{negligible adversarial security} in the QROM.

Also using subspace coset states and iO, Liu et al.~\cite{10.1007/978-3-031-22318-1_11}
construct (bounded) collusion-resistant 
copy-protection for ``watermarkable functionalities'', such as pseudorandom functions and digital signatures. Moreover, they construct copy-protection secure against $k \mapsto k+1$ attacks. Later, a follow-up work by Cakan and Goyal~\cite{cakan2023unclonable} achieved unbounded-collusion resistance.

\subsection{Acknowledgements}
The authors thank Anne Broadbent, Yfke Dulek, Jiahui Liu, Fermi Ma, Christian Schaffner, Umesh Vazirani and Thomas Vidick for helpful discussions. In particular, the authors thank Thomas Vidick for valuable feedback on an earlier draft of this manuscript, and Umesh Vazirani for suggesting the extension of our scheme to compute-and-compare programs. The authors are grateful for the support of the Simons Institute for the Theory of Computing, where part of this research was carried out. A.C was supported by the Simons Institute for the Theory of Computing, and Vannevar Bush faculty fellowship N00014-17-1-3025. C.M. was funded by a NWO VENI grant (Project No. VI.Veni.192.159). A.P. is partially supported by AFOSR YIP award number FA9550-
16-1-0495, the Institute for Quantum Information
and Matter (an NSF Physics Frontiers Center; NSF
Grant PHY-1733907) and the Kortschak Scholars program.

\section{Preliminaries}

\subsection{Notation and background}

\textit{Basic notation.} For $n,m \in \mathbb{N}$, we denote the set of all functions $f: \{0,1\}^{n} \to\{0,1\}^m$,  as $\mathsf{Bool}(n,m)$. The notation $x \leftarrow \{0,1\}^n$ denotes sampling an element $x$ uniformly at random in $\{0,1\}^n$, whereas
$x \leftarrow D$ denotes sampling of an element $x$ from a distribution $D$.
We denote the expectation value of a random variable $X$ by $\mathbb{E}[X] = \sum_{x} x \Pr[ X = x] $. 
We call a non-negative real-valued function $\mu : \mathbb{N} \rightarrow \mathbb{R}^+$ negligible if $\mu(n) = o(1/p(n))$, for every polynomial $p(n)$.
The min-entropy of a random variable $X$ is defined as $\Hmin(X) = - \log(\max_{x} \Pr[X = x])$. The conditional min-entropy of a random variable $X$ conditioned on a correlated random variable $Y$ is defined as
$\Hmin(X|Y) = - \log\big(\mathbb{E}_{y \leftarrow Y} \big[\max_x \Pr[X = x | Y = y] \big] \big).$\ \\
\ \\
\textit{Quantum computation.}
A comprehensive introduction to quantum computation and quantum information can be found in \cite{NC} and \cite{WildeQIT}. We denote a finite-dimensional complex Hilbert space by $\mathcal{H}$, and we use subscripts to distinguish registers. For example $\mathcal{H}_{\mathsf{R}}$ is the Hilbert space of register $\mathsf{R}$. The Euclidean norm over a finite-dimensional complex Hilbert space $\mathcal{H}$ is denoted as $\|\cdot \|$. A quantum register over the Hilbert space $\mathcal{H} = \mathbb{C}^2$ is called a \emph{qubit}. For $n \in \mathbb{N}$, we refer to quantum registers over the Hilbert space $\mathcal{H} = \big(\mathbb{C}^2\big)^{\otimes n}$ as $n$-qubit states. We use the word ``quantum state'' to refer to both pure states (unit vectors $\ket{\psi} \in \mathcal{H}$) and density matrices (positive semidefinite matrices $\rho$ of unit trace in the space of density matrices $\mathcal{D}(\mathcal{H)}$).
When $n$ is clear from the context, we denote by $\ket{\phi^+}_{\mathsf{AB}}$ the maximally entangled $n$-qubit Einstein-Podolsky-Rosen $(\EPR)$ \cite{PhysRev.47.777} state on the Hilbert space $\mathcal{H}_{\mathsf{A}} \otimes \mathcal{H}_{\mathsf{B}}$:
\begin{align*}
\ket{\phi^+}_{\mathsf{AB}} = \frac{1}{\sqrt{2^n}}\sum_{v \in \{0,1\}^n} \ket{v}_\mathsf{A} \otimes \ket{v}_\mathsf{B}.
\end{align*}
A $\POVM$ is a finite set $\{M_i\}$ of positive semidefinite matrices with the property that $\sum_{i} M_i = \Id$, where $\Id$ is the identity matrix. The trace distance between two states $\rho$ and $\sigma$ is defined by $\mathsf{TD}(\rho,\sigma) = \frac{1}{2}\| \rho- \sigma\|_1$.
A quantum channel is a linear map $\Phi: \mathcal{H}_{\mathsf{A}} \rightarrow \mathcal{H}_{\mathsf{B}}$ between two Hilbert spaces $\mathcal{H}_{\mathsf{A}}$ and $\mathcal{H}_{\mathsf{B}}$. We say that a channel $\Phi$ is completely positive if, for any ancilla register of dimension $n$, the induced map $ \Id_n \otimes \Phi$ is positive, and we call it trace-preserving if it holds that $\mathrm{Tr}[\Phi(\rho)] = \mathrm{Tr}[\rho]$, for all $\rho \in \mathcal{D}(\mathcal{H}_\mathsf{A})$.
A quantum channel that is both completely positive and trace-preserving is called a quantum $\CPTP$ channel.
Let $\rho \in \mathcal{D}(\mathcal{H}_{\mathsf{A}} \otimes \mathcal{H}_{\mathsf{B}})$ be a bipartite state. Then, the min-entropy of $\mathsf{A}$ conditioned on system $\mathsf{B}$ is given by:
\begin{align*}
  \Hmin(\mathsf{A}|\mathsf{B})_\rho = - \inf_{\sigma \in \mathcal{D}(\mathcal{H}_\mathsf{B})} D_{\max} \big(\rho_\mathsf{AB} \, || \, \Id_\mathsf{A} \otimes \sigma_\mathsf{B} \big),
\end{align*}
where $D_{\max} \big(\rho \, || \, \sigma \big) = \inf_\lambda \{\lambda: \rho \leq 2^\lambda \sigma\}$ is the max-relative entropy.
A \textit{classical-quantum} state $\rho \in \mathcal{D}(\mathcal{H}_{\mathsf{X}} \otimes \mathcal{H}_{\mathsf{B}})$ is a quantum state on registers $\mathsf{X}$ and $\mathsf{B}$ that depends on a classical variable $X$ in system $\mathsf{X}$. If the variable $X$ is distributed according to $P_X$, the state $\rho$ can be expressed as
\begin{align*}
\rho_\mathsf{XB} = \sum_x P_X(x) \ket{x}\bra{x}_\mathsf{X} \otimes \rho_\mathsf{B}^x.
\end{align*}
Let $p_g(\mathsf{X}|\mathsf{B}) = \sum_x P_X(x) \mathrm{Tr}[\Lambda_x \rho_\mathsf{B}^x]$ be the optimal guessing probability for the variable $X$ when using an optimal measurement strategy $\{ \Lambda_x\}$ that maximizes the expression.
The quantity can then be expressed in terms of the min-entropy via $p_g(\mathsf{X}|\mathsf{B}) = 2^{-\Hmin(\mathsf{X}|\mathsf{B})}$ \cite{Koenig-Renner-Schaffner09}.
By a polynomial-time \textit{quantum algorithm} (or $\QPT$ algorithm) we mean a polynomial-time uniform family of quantum circuits, where each circuit in the circuit family is described by a sequence of unitary gates and measurements. A quantum algorithm may, in general, receive (mixed) quantum states as inputs and produce (mixed) quantum states as outputs. We oftentimes restrict $\QPT$ algorithms implicitly; for example, if we write $\Pr[\mathcal{A}(1^{\lambda}) = 1]$ for a $\QPT$ algorithm $\mathcal{A}$, it is implicit that $\mathcal{A}$ is a $\QPT$ algorithm that output a single classical bit.

\begin{definition}[Indistinguishability of ensembles of quantum states, \cite{Watrous06zero-knowledgeagainst}]
\label{def: indistinguishability}
Let $p: \mathbb{N} \rightarrow \mathbb{N}$ be a polynomially bounded function,
and let $\rho_\lambda$ and $\sigma_\lambda$
be $p(\lambda)$-qubit quantum states. We say that $\{\rho_{\lambda}\}_{\lambda \in \mathbb{N}}$ and $\{\sigma_\lambda\}_{\lambda \in \mathbb{N}}$ are quantum computationally indistinguishable ensembles of quantum states, denoted by $\rho_{\lambda} \approx_c \sigma_{\lambda}\,,$
if, for any $\QPT$ distinguisher $\mathcal{D}$ with single-bit output, any polynomially bounded $q: \mathbb{N} \rightarrow \mathbb{N}$, any family of $q(\lambda)$-qubit auxiliary states $\{\nu_{\lambda}\}_{\lambda \in \mathbb{N}}$, and every $\lambda \in \mathbb{N}$,
$$ \big| \Pr[\mathcal{D}(\rho_{\lambda} \otimes \nu_{\lambda})=1] - \Pr[\mathcal{D}(\sigma_{\lambda} \otimes \nu_{\lambda})=1] \big| \leq \negl(\lambda) \,.$$
\end{definition}

We also make use of the following standard results:

\begin{lem}[Ricochet property]\label{lem:ricochet}
Let $n \in \mathbb{N}$ and $M \in \mathbb{C}^{2^n \times 2^n}$ be any matrix, and let $\ket{\phi^+}_{\mathsf{AB}}$ an $\EPR$ state on registers $\mathsf{A}$ and $\mathsf{B}$ on $n$ qubits. Then,
$$
(M_\mathsf{A} \otimes \Id_\mathsf{B}) \ket{\phi^+}_{\mathsf{AB}} = (\Id_\mathsf{A} \otimes M^T_\mathsf{B}) \ket{\phi^+}_{\mathsf{AB}}.
$$
\end{lem}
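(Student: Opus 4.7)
The plan is to prove this by direct expansion in the computational basis, which is the standard approach for the ricochet identity. This lemma is a purely algebraic fact about the maximally entangled state, so there is no real obstacle — it reduces to matching matrix entries after an index relabeling.

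First, I would write $\ket{\phi^+}_{\mathsf{AB}} = 2^{-n/2} \sum_{v \in \{0,1\}^n} \ket{v}_\mathsf{A} \otimes \ket{v}_\mathsf{B}$ as in the preliminaries. Expanding $M$ in the computational basis as $M \ket{v} = \sum_u M_{u,v} \ket{u}$, the left-hand side becomes
\begin{equation*}
(M_\mathsf{A} \otimes \Id_\mathsf{B}) \ket{\phi^+}_{\mathsf{AB}} = \frac{1}{\sqrt{2^n}} \sum_{u,v} M_{u,v} \ket{u}_\mathsf{A} \otimes \ket{v}_\mathsf{B}.
\end{equation*}

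Next I would expand the right-hand side analogously using $(M^T)_{u,v} = M_{v,u}$:
\begin{equation*}
(\Id_\mathsf{A} \otimes M^T_\mathsf{B}) \ket{\phi^+}_{\mathsf{AB}} = \frac{1}{\sqrt{2^n}} \sum_{u,v} (M^T)_{u,v} \ket{v}_\mathsf{A} \otimes \ket{u}_\mathsf{B} = \frac{1}{\sqrt{2^n}} \sum_{u,v} M_{v,u} \ket{v}_\mathsf{A} \otimes \ket{u}_\mathsf{B}.
\end{equation*}

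Finally, I would swap the dummy summation indices $u \leftrightarrow v$ in the last expression, which turns it into $\frac{1}{\sqrt{2^n}} \sum_{u,v} M_{u,v} \ket{u}_\mathsf{A} \otimes \ket{v}_\mathsf{B}$, matching the left-hand side. This completes the proof. The main (and only) thing to be careful about is keeping the transpose and the index swap consistent; the identity would fail if one wrote $\overline{M}$ or $M^\dagger$ instead of $M^T$, since the ricochet identity genuinely uses the transpose in the computational basis in which $\ket{\phi^+}$ is defined.
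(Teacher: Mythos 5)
Your proof is correct. The paper states Lemma~\ref{lem:ricochet} as a standard background fact without supplying a proof, so there is no paper argument to compare against; your computational-basis expansion is the canonical way to establish it, the index bookkeeping checks out, and your closing remark about why the transpose (rather than the adjoint or conjugate) is the right operation is an accurate and worthwhile clarification.
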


\begin{lem}
[Gentle Measurement Lemma, \cite{Winter99,Aaronson_2005}]\label{lem:gentle} Let $\rho$ be a density matrix, let $\Lambda$, where $0 \leq \Lambda \leq \Id$, be an element of a $\POVM$ such that $\mathrm{Tr}[\Lambda \rho] \geq 1 - \epsilon$, for some $\epsilon > 0$ and let $\tilde{\rho}= \Lambda \rho \Lambda$. Then:
$$
\mathsf{TD}(\rho,\tilde \rho)  \, \leq \, \sqrt{\epsilon}.
$$
\end{lem}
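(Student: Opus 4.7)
The plan is to pass to purifications and then use monotonicity of the trace norm under the partial trace. First I would introduce an auxiliary register $\mathsf{B}$ and a purification $\ket{\psi}_{\mathsf{AB}}$ of $\rho$, so that $\rho = \mathrm{Tr}_{\mathsf{B}}[\ket{\psi}\bra{\psi}]$. Then I would define the subnormalized vector $\ket{\tilde\psi} = (\Lambda \otimes \Id_{\mathsf{B}})\ket{\psi}$, whose reduction on $\mathsf{A}$ is exactly $\Lambda \rho \Lambda = \tilde\rho$. Using $0 \leq \Lambda \leq \Id$, and hence $\Lambda^2 \leq \Lambda$ as operators, the two key quantities are
$$ \langle \psi | \tilde\psi \rangle = \bra{\psi}(\Lambda \otimes \Id)\ket{\psi} = \mathrm{Tr}[\Lambda \rho] \geq 1-\epsilon, $$
and
$$ \langle \tilde\psi | \tilde\psi \rangle = \bra{\psi}(\Lambda^2 \otimes \Id)\ket{\psi} \leq \langle \psi | \tilde\psi \rangle. $$
Combining these two bounds controls the Euclidean distance between the purifications:
$$ \bigl\| \ket{\psi} - \ket{\tilde\psi} \bigr\|^2 = 1 + \langle \tilde\psi | \tilde\psi \rangle - 2\,\mathrm{Re}\langle \psi | \tilde\psi \rangle \leq 1 - \langle \psi | \tilde\psi \rangle \leq \epsilon. $$

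Next I would translate this into a bound on the trace distance between the corresponding rank-one operators. Using the decomposition
$$ \ket{\psi}\bra{\psi} - \ket{\tilde\psi}\bra{\tilde\psi} = \bigl(\ket{\psi} - \ket{\tilde\psi}\bigr)\bra{\psi} + \ket{\tilde\psi}\bigl(\bra{\psi} - \bra{\tilde\psi}\bigr), $$
the triangle inequality for $\|\cdot\|_1$ and the identity $\|\ket{u}\bra{v}\|_1 = \|u\|\cdot\|v\|$ give
$$ \bigl\|\ket{\psi}\bra{\psi} - \ket{\tilde\psi}\bra{\tilde\psi}\bigr\|_1 \leq (\|\ket{\psi}\| + \|\ket{\tilde\psi}\|)\cdot \|\ket{\psi}-\ket{\tilde\psi}\| \leq 2\sqrt{\epsilon}, $$
since $\|\ket{\tilde\psi}\| \leq \|\ket{\psi}\| = 1$. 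Finally, invoking monotonicity of the trace norm under the partial-trace $\CPTP$ map $\mathrm{Tr}_{\mathsf{B}}$ yields $\|\rho - \tilde\rho\|_1 \leq 2\sqrt{\epsilon}$, that is, $\mathsf{TD}(\rho,\tilde\rho) \leq \sqrt{\epsilon}$, as claimed.

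The proof is essentially a routine calculation once one moves to purifications. The only small subtlety I anticipate is that $\ket{\tilde\psi}$ is subnormalized, so the usual Fuchs--van de Graaf bridge between fidelity and trace distance does not apply directly and one has to go through the rank-one decomposition above. I do not expect any genuine obstacle beyond that.
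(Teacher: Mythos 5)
The paper states this lemma by citation (to Winter and to Aaronson's ``Almost As Good As New'' lemma) and does not include its own proof, so there is nothing to compare against textually. Your argument is correct and is essentially the standard purification proof: lift to $\ket{\psi}$ and $\ket{\tilde\psi}=(\Lambda\otimes\Id)\ket{\psi}$, use $\Lambda^2\leq\Lambda$ to bound the Euclidean distance between purifications by $\sqrt{\epsilon}$, pass to the trace norm of the rank-one difference via the telescoping decomposition, and finish by contractivity of the partial trace. Each step checks out; in particular the rank-one bound correctly handles the fact that $\ket{\tilde\psi}$ is subnormalized, which is exactly the point where the Fuchs--van de Graaf route would need extra care, and the final constant $\|\rho-\tilde\rho\|_1\leq 2\sqrt{\epsilon}$ matches the claimed $\mathsf{TD}(\rho,\tilde\rho)\leq\sqrt\epsilon$. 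One cosmetic remark: since $\Lambda\geq 0$ is Hermitian, $\langle\psi|\tilde\psi\rangle$ is automatically real and nonnegative, so the $\mathrm{Re}$ is superfluous (though harmless).
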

And a slight variation of Lemma \ref{lem:gentle}.
\begin{lem}[Closeness to ideal states, \cite{Unruh15}]\label{lem:closeness_ideal}
Let $\rho$ be a density matrix, let $\Pi$ be a projector, and let $\epsilon = \mathrm{Tr}[\left(\mathds 1-\Pi\right) \rho]$. Then, there exists an ideal state $\rho^{\textsf{id}}$ such that:
\begin{itemize}
    \item $\mathsf{TD}(\rho,\rho^{\textsf{id}}) \leq \sqrt{\epsilon}$
    \item $\rho^{\textsf{id}}$ is a mixture in the image of $\Pi$, i.e. $\rho^{\textsf{id}} = \sum_i p_i \ket{\psi_i} \bra{\psi_i}$ is a normalized state with $\ket{\psi_i} \in \mathsf{im}(\Pi)$, $\sum_i p_i =1$ and $p_i \geq 0$, for all $i$.
\end{itemize}
\end{lem}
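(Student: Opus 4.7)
The plan is to prove this by projecting each eigenvector of $\rho$ onto $\mathsf{im}(\Pi)$ and combining the per-component errors via convexity and Jensen's inequality. First, I would diagonalize $\rho = \sum_j q_j \ket{\phi_j}\bra{\phi_j}$ with $q_j \geq 0$, $\sum_j q_j = 1$, and orthonormal $\ket{\phi_j}$. For each index $j$ with $\alpha_j := \bra{\phi_j}\Pi\ket{\phi_j} > 0$, set $\ket{\psi_j} := \Pi\ket{\phi_j}/\sqrt{\alpha_j}$, which lies in $\mathsf{im}(\Pi)$ and is a unit vector because $\alpha_j = \|\Pi\ket{\phi_j}\|^2$ (using $\Pi^2 = \Pi$); when $\alpha_j = 0$, pick $\ket{\psi_j}$ to be any unit vector in $\mathsf{im}(\Pi)$. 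Then $\rho^{\mathsf{id}} := \sum_j q_j \ket{\psi_j}\bra{\psi_j}$ is, by construction, a normalized mixture supported on $\mathsf{im}(\Pi)$, which is exactly the second bullet (with $p_j = q_j$).

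For the trace-distance bound, I would invoke the standard pure-state identity $\mathsf{TD}(\ket{\phi}\bra{\phi}, \ket{\psi}\bra{\psi}) = \sqrt{1 - |\braket{\phi|\psi}|^2}$. Whenever $\alpha_j > 0$ one has $\braket{\phi_j|\psi_j} = \bra{\phi_j}\Pi\ket{\phi_j}/\sqrt{\alpha_j} = \sqrt{\alpha_j}$, so the per-component trace distance is exactly $\sqrt{1 - \alpha_j}$; when $\alpha_j = 0$ the same bound survives trivially since $\mathsf{TD} \leq 1 = \sqrt{1-\alpha_j}$, regardless of the arbitrary choice of $\ket{\psi_j}$. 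Joint convexity of the trace distance, followed by Jensen's inequality applied to the concave square root, then yields
$$
\mathsf{TD}(\rho, \rho^{\mathsf{id}}) \leq \sum_j q_j \sqrt{1 - \alpha_j} \leq \sqrt{\sum_j q_j (1 - \alpha_j)} = \sqrt{\mathrm{Tr}\bigl[(\mathds 1 - \Pi)\rho\bigr]} = \sqrt{\epsilon},
$$
which gives the first bullet.

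I do not anticipate a genuine obstacle in this proof: it is essentially a pure-state, projector-adapted analogue of the gentle measurement lemma (Lemma \ref{lem:gentle}), and the only mild subtlety is the degenerate case $\alpha_j = 0$. That case is handled uniformly because $\bra{\phi_j}\Pi\ket{\phi_j} = \|\Pi\ket{\phi_j}\|^2 = 0$ forces $\Pi\ket{\phi_j} = 0$, so the inner product $\braket{\phi_j|\psi_j}$ vanishes for any in-image choice of $\ket{\psi_j}$ and the estimate $\sqrt{1 - \alpha_j}$ still controls the per-component distance.
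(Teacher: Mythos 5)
Your proof is correct. Note that the paper itself offers no proof of this lemma and simply cites \cite{Unruh15}; the statement there is obtained by a slightly different construction: one takes a purification $\ket{\Phi}$ of $\rho$, forms the subnormalized vector $(\Pi\otimes\Id)\ket{\Phi}$, renormalizes it to $\ket{\Phi^{\mathsf{id}}}$, and takes $\rho^{\mathsf{id}}$ to be the reduced state, which works out to $\rho^{\mathsf{id}} = \Pi\rho\Pi/\mathrm{Tr}[\Pi\rho]$. The bound then follows in a single stroke from $\mathsf{TD}(\ket{\Phi}\bra{\Phi},\ket{\Phi^{\mathsf{id}}}\bra{\Phi^{\mathsf{id}}}) = \sqrt{1-|\braket{\Phi|\Phi^{\mathsf{id}}}|^2} = \sqrt{1-\mathrm{Tr}[\Pi\rho]}$ together with monotonicity of trace distance under partial trace. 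You instead project each eigenvector of $\rho$ individually, normalize term by term, and then aggregate the per-term pure-state distances $\sqrt{1-\alpha_j}$ via joint convexity of trace distance followed by Jensen. This produces a different $\rho^{\mathsf{id}}$ (yours is $\sum_j q_j\, \Pi\ket{\phi_j}\bra{\phi_j}\Pi/\alpha_j$, not $\Pi\rho\Pi/\mathrm{Tr}[\Pi\rho]$ in general), but both satisfy the lemma. The purification route is shorter and gives the bound with one application of the pure-state fidelity--distance identity; your route avoids purification and ancilla altogether and is thus slightly more elementary, at the cost of one extra convexity-plus-Jensen step. Your handling of the degenerate case $\alpha_j = 0$ is also correct and indeed the only subtlety.
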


\subsection{Monogamy of entanglement games}
\label{sec: monogamy}
For a detailed introduction to monogamy-of-entanglement games, we refer the reader to the seminal paper on the topic \cite{tomamichel2013monogamy}, where they were introduced and studied extensively. In this section, we limit ourselves to introducing a version of a monogamy-of-entanglement game that suffices for our purpose. Let $\lambda \in \mathbb{N}$. The game is between a challenger and an adversary, specified by a triple of interactive quantum machines $(\mathcal{A}, \mathcal{B}, \mathcal{C})$ (for a formal definition of interactive quantum machine we refer the reader to \cite{unruh2012quantum}). For brevity, we use the notation $\ket{x^{\theta}} = H^{\theta} \ket{x}$, where $H^{\theta} = H^{\theta_1} \otimes \ldots \otimes H^{\theta_{\lambda}}$ and $\theta,x \in \{0,1\}^{\lambda}$.
The game takes place as follows:
\begin{itemize}
    \item The challenger samples $x,\theta \leftarrow \{0,1\}^{\lambda}$ and sends the state $\ket{x^{\theta}}$ to $\mathcal{A}$.
    \item $\mathcal{A}$ sends a quantum register to $\mathcal{B}$ and one to $\mathcal{C}$.
    \item The challenger sends $\theta$ to both $\mathcal{B}$ and $\mathcal{C}$.
    \item $\mathcal{B}$ and $\mathcal{C}$ return strings $x'$ and $x''$ to the challenger.
\end{itemize}
$\mathcal{A}$, $\mathcal{B}$ and $\mathcal{C}$ are not allowed to communicate other than where specified by the game. $\mathcal{A},\mathcal{B},\mathcal{C}$ win if $x = x' = x''$.

The following lemma, from \cite{tomamichel2013monogamy}, upper bounds the winning probability of an adversary in the game. As stated in the form below, this lemma appears in \cite{broadbent2019uncloneable}.

\begin{lem}
\label{lem: monogamy}
Let $\lambda \in \mathbb{N}$. For any Hilbert spaces $\mathcal{H}_\mathsf{B}$ and $\mathcal{H}_\mathsf{C}$, any families of $\mathsf{POVMs}$ on these Hilbert spaces respectively,
$$ \Big\{\Big\{B_{\theta}^x \Big\}_{x \in \{0,1\}^{\lambda}} \Big\}_{\theta \in \{0,1\}^{\lambda}} \, \text{ and }\, \,\, \Big\{\Big\{C_{\theta}^x \Big\}_{x \in \{0,1\}^{\lambda}} \Big\}_{\theta \in \{0,1\}^{\lambda}}\,,$$
and any $\CPTP$ map $\Phi: \mathcal{D}\left((\mathbb{C}^2)^{\otimes \lambda}\right) \rightarrow \mathcal{D}(\mathcal{H}_\mathsf{B} \otimes \mathcal{H}_\mathsf{C})$, we have:
\begin{equation}
\mathbb{E}_{\theta \in \{0,1\}^{\lambda}}\mathbb{E}_{x \in \{0,1\}^{\lambda}} \mathrm{Tr} \Big[B_{\theta}^x \otimes C_{\theta}^x \,\Phi\left(\ket{x^{\theta}}\bra{x^{\theta}}\right) \Big]\leq \left(\frac12 + \frac{1}{2\sqrt{2}}\right)^{\lambda}
\end{equation}

\end{lem}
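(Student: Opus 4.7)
My plan follows the operator-norm approach for monogamy of entanglement games developed by Tomamichel, Fehr, Kaniewski, and Wehner~\cite{tomamichel2013monogamy}. Denote the left-hand side by $p$. The first step is Stinespring dilation: replace $\Phi$ by an isometry $V : (\mathbb{C}^2)^{\otimes \lambda} \to \mathcal{H}_\mathsf{B} \otimes \mathcal{H}_\mathsf{C} \otimes \mathcal{H}_\mathsf{E}$ with $\Phi(\cdot) = \mathrm{Tr}_\mathsf{E}[V \cdot V^{\ast}]$, and pull the joint Bob/Charlie measurement back to the input register by defining $D_\theta^x := V^{\ast}\bigl(B_\theta^x \otimes C_\theta^x \otimes \Id_\mathsf{E}\bigr) V$. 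These operators are positive and satisfy $\sum_x D_\theta^x \leq \Id$ (using $\sum_x B_\theta^x \otimes C_\theta^x \leq \Id$). Crucially, the $D_\theta^x$ are \emph{not} arbitrary POVMs on the input register: they inherit a tensor-product factorisation from $\mathsf{B}\mathsf{C}$, which is the feature that will ultimately enforce monogamy.

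Rewriting in terms of the $D_\theta^x$,
\begin{equation*}
p \;=\; \mathbb{E}_\theta \mathbb{E}_x \bra{x^\theta} D_\theta^x \ket{x^\theta} \;=\; 2^{-\lambda}\,\mathrm{Tr}\!\left[\mathbb{E}_\theta\, \Pi_\theta\right], \qquad \Pi_\theta := \sum_x \ket{x^\theta}\bra{x^\theta}\,D_\theta^x\,\ket{x^\theta}\bra{x^\theta}.
\end{equation*}
Since $\Pi_\theta$ is diagonal in the $\theta$-basis with entries in $[0,1]$ (so $0 \leq \Pi_\theta \leq \Id$), bounding the trace by dimension times operator norm gives $p \leq \|\mathbb{E}_\theta \Pi_\theta\|_\infty$. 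I would then reduce the operator norm of the average to pairwise cross terms via a Schur-type inequality indexed by XOR-shifts $c \in \{0,1\}^\lambda$, which form a transitive abelian group action on the index set:
\begin{equation*}
\left\|\sum_\theta \Pi_\theta\right\|_\infty \;\leq\; \sum_{c \in \{0,1\}^\lambda} \max_\theta \bigl\|\sqrt{\Pi_\theta}\,\sqrt{\Pi_{\theta \oplus c}}\bigr\|_\infty.
\end{equation*}
Given the per-shift bound $\|\sqrt{\Pi_\theta}\sqrt{\Pi_{\theta \oplus c}}\|_\infty \leq 2^{-|c|/2}$, the binomial identity
\begin{equation*}
\sum_{c \in \{0,1\}^\lambda} 2^{-|c|/2} \;=\; \bigl(1 + 2^{-1/2}\bigr)^\lambda \;=\; 2^\lambda \left(\tfrac{1}{2} + \tfrac{1}{2\sqrt{2}}\right)^{\!\lambda}
\end{equation*}
and division by $2^\lambda$ then close the argument.

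The main obstacle is the per-shift bound $\|\sqrt{\Pi_\theta}\sqrt{\Pi_{\theta \oplus c}}\|_\infty \leq 2^{-|c|/2}$. Submultiplicativity $\|\sqrt{A}\sqrt{B}\|_\infty \leq \sqrt{\|A\|_\infty\|B\|_\infty}$ only gives $\leq 1$, and entrywise bounds on the off-diagonal block---which has up to $2^{|c|}$ nonzero entries per row, each of magnitude $2^{-|c|/2}$, coming from the BB84 overlap $|\braket{x^\theta \mid {x'}^{\theta \oplus c}}| \in \{0,\,2^{-|c|/2}\}$---do not suffice either, since these naive estimates ignore the POVM constraint. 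The resolution is to exploit that $D_\theta^x$ is pulled back from a product $B_\theta^x \otimes C_\theta^x$, i.e.\ from two non-communicating parties: if either $\Pi_\theta$ or $\Pi_{\theta \oplus c}$ concentrated too sharply on a basis element, it would force the adversary to transmit that information simultaneously to Bob and Charlie, contradicting monogamy. Turning this intuition into a quantitative singular-value estimate on $\sqrt{\Pi_\theta}\sqrt{\Pi_{\theta\oplus c}}$ that respects both the cross-basis overlap factorisation and the tensor-product structure is the heart of the TFKW argument, and by far the most delicate step.
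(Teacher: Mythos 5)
The paper does not actually prove this lemma — it imports it from \cite{tomamichel2013monogamy}, restated in \cite{broadbent2019uncloneable} — so I will compare your sketch against the intended TFKW argument. Your high-level roadmap is the right one (Stinespring/Naimark dilation, reduce to an operator norm, the TFKW permutation lemma over XOR shifts, a per-shift cross-term estimate from BB84 overlaps, the binomial sum), but the step where you define $\Pi_\theta$ compresses away exactly the structure that makes the cross-term estimate provable, and in fact the per-shift bound you posit is \emph{false} for your $\Pi_\theta$. TFKW's operators live on the \emph{full} space $\mathcal H_{\mathsf A}\otimes\mathcal H_{\mathsf B}\otimes\mathcal H_{\mathsf C}$, namely $\tilde\Pi_\theta=\sum_x\ket{x^\theta}\bra{x^\theta}_{\mathsf A}\otimes B^x_\theta\otimes C^x_\theta$ (with $\mathsf A$ the challenger's purifying register), and the per-shift bound is extracted by dominating $\tilde\Pi_\theta\le P_\theta:=\sum_x\ket{x^\theta}\bra{x^\theta}\otimes B^x_\theta\otimes\Id_{\mathsf C}$ and $\tilde\Pi_{\theta'}\le Q_{\theta'}:=\sum_x\ket{x^{\theta'}}\bra{x^{\theta'}}\otimes\Id_{\mathsf B}\otimes C^x_{\theta'}$. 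Since $P_\theta$ acts trivially on $\mathsf C$ and $Q_{\theta'}$ trivially on $\mathsf B$, the product $P_\theta Q_{\theta'}$ factorizes and $\|P_\theta Q_{\theta'}\|_\infty\le\max_{x,y}|\braket{x^\theta\mid y^{\theta'}}|=2^{-|\theta\oplus\theta'|/2}$. Once you compress onto the $\lambda$-qubit input register and dephase in the $\theta$ basis, there is no analogous $P/Q$ split, because $D^x_\theta=V^\ast(B^x_\theta\otimes C^x_\theta\otimes\Id_{\mathsf E})V$ is a single scalar per $(\theta,x)$ that mixes the two factors irretrievably.

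Concretely, take $\lambda=1$ and the optimal Breidbart strategy: $\Phi$ measures the input qubit in the intermediate eigenbasis $\{\ket{\mathrm{Br}_0},\ket{\mathrm{Br}_1}\}$ of $(X+Z)/\sqrt2$ and hands the classical outcome $b$ to both $\mathsf B$ and $\mathsf C$, who each output $b$. Then $D^x_\theta=\ket{\mathrm{Br}_x}\bra{\mathrm{Br}_x}$ on the input register, so
\begin{equation*}
d^x_\theta \;=\; \bra{x^\theta}D^x_\theta\ket{x^\theta} \;=\; |\braket{\mathrm{Br}_x\mid x^\theta}|^2 \;=\; \cos^2(\pi/8) \;=\; \tfrac12+\tfrac1{2\sqrt2}
\end{equation*}
for every $\theta,x\in\{0,1\}$. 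Hence $\Pi_0=\Pi_1=\cos^2(\pi/8)\,\Id$ and
\begin{equation*}
\bigl\|\sqrt{\Pi_0}\sqrt{\Pi_1}\bigr\|_\infty \;=\; \cos^2(\pi/8) \;\approx\; 0.854 \;>\; 2^{-1/2}\;\approx\;0.707,
\end{equation*}
violating the per-shift inequality you need. This is no accident: this is precisely the strategy that saturates the lemma's bound, and your compressed operators retain only the diagonal winning amplitudes $d^x_\theta$, from which the tensor-product constraint is not recoverable. (By contrast, for the uncompressed $\tilde\Pi_\theta$ one computes $\|\tilde\Pi_0\tilde\Pi_1\|_\infty=2^{-1/2}$ exactly for this same strategy.) The fix is not a sharper singular-value estimate on your $\Pi_\theta$ — it is to \emph{not} compress. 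Keep the operators on $\mathsf A\otimes\mathsf B\otimes\mathsf C\;(\otimes\,\mathsf E)$, use the $P_\theta/Q_{\theta'}$ decomposition, and the remainder of your outline goes through verbatim as in \cite{tomamichel2013monogamy}.
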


It is natural to wonder whether a similar guarantee holds also when $x$ is sampled from a distribution with high min-entropy, which is not necessarily uniform. This follows as a corollary of Lemma \ref{lem: monogamy}. 

\begin{cor}
\label{cor: monogamy entropy}
Let $\lambda \in \mathbb{N}$.  Let $X$ be a random variable over $\{0,1\}^{\lambda}$ with min-entropy $\Hmin(X)\geq h$. For any Hilbert spaces $\mathcal{H}_\mathsf{B}$ and $\mathcal{H}_\mathsf{C}$, any family of $\mathsf{POVMs}$ on these Hilbert spaces respectively,
$$ \Big\{\Big\{B_{\theta}^x \Big\}_{x \in \{0,1\}^{\lambda}} \Big\}_{\theta \in \{0,1\}^{\lambda}} \, \text{ and }\, \,\, \Big\{\Big\{C_{\theta}^x \Big\}_{x \in \{0,1\}^{\lambda}} \Big\}_{\theta \in \{0,1\}^{\lambda}}\,,$$
and any $\CPTP$ map $\Phi: \mathcal{D}\left((\mathbb{C}^2)^{\otimes \lambda}\right) \rightarrow \mathcal{D}(\mathcal{H}_\mathsf{B} \otimes \mathcal{H}_\mathsf{C})$, we have:
\begin{equation}
\label{eq: entropy}
\sum_{x \in \{0,1\}^{\lambda}} \Pr[X=x] \,\mathbb{E}_{\theta}\Tr{B_{\theta}^x \otimes C_{\theta}^x \,\Phi\left(\ket{x^{\theta}}\bra{x^{\theta}}\right)} \leq 2^{-h} \left(1 + \frac{1}{\sqrt{2}}\right)^{\lambda} \,.
\end{equation}
\end{cor}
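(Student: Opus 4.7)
The strategy is to directly reduce to Lemma \ref{lem: monogamy} by the trivial bound $\Pr[X=x] \le 2^{-h}$ for all $x$, which is the definition of min-entropy at least $h$.

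First, I would rewrite the conclusion of Lemma \ref{lem: monogamy} in a slightly different form. Since $x$ is uniform on $\{0,1\}^\lambda$, the expectation $\mathbb{E}_x$ equals $2^{-\lambda}\sum_x$, so multiplying both sides of the bound by $2^\lambda$ yields
$$
\sum_{x\in\{0,1\}^\lambda} \mathbb{E}_\theta \,\mathrm{Tr}\Bigl[B_\theta^x \otimes C_\theta^x\, \Phi\bigl(\ket{x^\theta}\bra{x^\theta}\bigr)\Bigr] \;\le\; 2^\lambda \left(\tfrac12+\tfrac{1}{2\sqrt{2}}\right)^{\!\lambda} = \left(1+\tfrac{1}{\sqrt{2}}\right)^{\!\lambda}.
$$

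Next, for any non-uniform distribution on $\{0,1\}^\lambda$ with $\Hmin(X)\ge h$, the defining property is $\Pr[X=x]\le 2^{-h}$ for every $x$. Since each summand on the left-hand side of \eqref{eq: entropy} is a non-negative real (it is the trace of a product of positive semidefinite operators), I can bound
$$
\sum_x \Pr[X=x]\,\mathbb{E}_\theta\,\mathrm{Tr}\Bigl[B_\theta^x \otimes C_\theta^x\, \Phi\bigl(\ket{x^\theta}\bra{x^\theta}\bigr)\Bigr] \;\le\; 2^{-h}\sum_x \mathbb{E}_\theta\,\mathrm{Tr}\Bigl[B_\theta^x \otimes C_\theta^x\, \Phi\bigl(\ket{x^\theta}\bra{x^\theta}\bigr)\Bigr],
$$
and then plug in the reformulated bound from the previous paragraph to obtain the stated inequality $2^{-h}(1+1/\sqrt{2})^\lambda$.

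There is no real obstacle here: the corollary is essentially a bookkeeping consequence of the fact that the adversary's winning probability bound from Lemma \ref{lem: monogamy} holds summand-by-summand in $x$ (once we drop the uniform prior and work with the unnormalized sum), so any prior distribution can be absorbed by its pointwise maximum. The only subtle point worth flagging is that one must verify the individual summands are non-negative before pulling out the $2^{-h}$ factor, which is immediate since each term is a probability of a measurement outcome on a valid density operator $\Phi(\ket{x^\theta}\bra{x^\theta})$.
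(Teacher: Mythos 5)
Your proof is correct and is essentially identical to the paper's: both replace $\Pr[X=x]$ by the pointwise bound $2^{-h}$ (using non-negativity of the summands), rewrite $\sum_x = 2^\lambda\,\mathbb{E}_x$, and then invoke Lemma~\ref{lem: monogamy}. The only difference is cosmetic --- you spell out the non-negativity check that the paper leaves implicit.
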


\begin{proof} This fact is easily verified as follows:
\begin{align}
&\sum_{x \in \{0,1\}^{\lambda}} \Pr[X=x] \,\mathbb{E}_{\theta}\Tr{B_{\theta}^x \otimes C_{\theta}^x \,\Phi\left(\ket{x^{\theta}}\bra{x^{\theta}}\right)} \nonumber\\
&\leq 2^{-h} 2^{\lambda} \mathbb{E}_x \mathbb{E}_{\theta}\Tr{B_{\theta}^x \otimes C_{\theta}^x \,\Phi\left(\ket{x^{\theta}}\bra{x^{\theta}}\right)} \nonumber\\ 
&\leq 2^{-h+\lambda} \, \left(\frac12 + \frac{1}{2\sqrt{2}}\right)^{\lambda}
= 2^{-h} \left(1 + \frac{1}{\sqrt{2}}\right)^{\lambda}.
\end{align}
Note that the second inequality follows from Lemma \ref{lem: monogamy}.
\end{proof}

In particular, notice that when $h > \frac{4}{5}\lambda$, the RHS of \eqref{eq: entropy} is less than $0.981^{\lambda}$, and thus negligible.

\subsection{The quantum random oracle model}

Oracles with quantum access have been studied extensively, for example in \cite{BBBV97, BDF11}. We say that a quantum algorithm $\mathcal{A}$ has oracle access to a classical function $H: \{0,1 \}^{\lambda} \rightarrow \{0,1 \}^m$, denoted by $\mathcal{A}^H$, if $\mathcal{A}$ is allowed to use a unitary gate $O^H$ at unit cost in time. The unitary $O^H$ acts as follows on the computational basis states of a Hilbert space $\mathcal{H}_\mathsf{A} \otimes \mathcal{H}_\mathsf{B}$ of $\lambda+m$ qubits:
$$
O^H: \quad
\ket{x}_\mathsf{A} \otimes \ket{y}_\mathsf{B} \longrightarrow \ket{x}_\mathsf{A} \otimes \ket{y \oplus H(x)}_\mathsf{B},
$$
where the operation $\oplus$ denotes bit-wise addition modulo $2$. In general, we can model the interaction of a quantum algorithm that makes $q$ queries to an oracle $H$ as $(UO^H)^q$, i.e. alternating unitary computations and queries to the oracle $H$, where $U$ is some operator acting on $\mathcal{H}_\mathsf{A} \otimes \mathcal{H}_\mathsf{B} \otimes \mathcal{H}_\mathsf{C}$,
where $\mathcal{H}_\mathsf{C}$ is some auxiliary Hilbert space \cite{BBBV97, BDF11, Unruh15}\footnote{We can chose the algorithm's unitaries between oracle calls to be all the same by introducing a ``clock register'' that keeps track of the number of oracle calls made so far.}.
We call a (possibly super-polynomial-time) quantum algorithm $\mathcal{A}$ with access to an oracle $O$ \textit{query-bounded} if $\mathcal{A}$ makes at most polynomially many (in the size of its input) queries to $O$.
The \textit{random oracle} model refers to a setting in which the function $H: \{0,1 \}^{\lambda} \rightarrow \{0,1 \}^m$ is sampled uniformly at random. Random oracles play an important role in cryptography as models for cryptographic hash functions in the so-called random oracle model (ROM) \cite{BR93}. For post-quantum and quantum cryptography, random oracles modelling hash functions need to be \emph{quantum} accessible (i.e. accessible as a unitary gate, and thus in superposition), resulting in what is known as the quantum random oracle model (QROM) \cite{BDF11}. Despite being uninstantiable in principle \cite{CGH04,Song20}, modeling hash functions in the (Q)ROM is considered a standard assumption in cryptography.

\subsubsection{Some technical lemmas}
Recall that we denote by $\textnormal{Bool}(\lambda,m)$ the set of functions from $\{0,1\}^{\lambda}$ to $\{0,1\}^m$.
\begin{lem}
\label{lem: basic}
Let $f: \textnormal{Bool}(\lambda,m) \rightarrow \mathbb{R}$, and $x \in \{0,1\}^{\lambda}$. For $H \in \textnormal{Bool}(\lambda,m)$ and $y \in \{0,1\}^m$, let  $H_{x,y} \in \textnormal{Bool}(\lambda,m)$ be such that

$$ H_{x,y} (s) = \begin{cases} H(s)    & \text{if } s\neq x\,,\\
    y &\text{if } s = x \,.   \end{cases}$$
    
Then, 
$$ \mathbb{E}_H f(H) = \mathbb{E}_H \mathbb{E}_y f(H_{x,y}) \,.$$ 
\end{lem}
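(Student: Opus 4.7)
The plan is a straightforward resampling argument. The key observation is that any function $H \in \textnormal{Bool}(\lambda, m)$ is uniquely determined by the pair $(h, z)$, where $h := H|_{\{0,1\}^\lambda \setminus \{x\}}$ is its restriction to inputs other than $x$ and $z := H(x) \in \{0,1\}^m$ is its value at $x$. Consequently, the uniform distribution on $\textnormal{Bool}(\lambda, m)$ factors as the product of the uniform distribution on restrictions $h$ and the uniform distribution on $\{0,1\}^m$. I would first record this as the identity $\mathbb{E}_H f(H) = \mathbb{E}_h\, \mathbb{E}_{z \in \{0,1\}^m} f((h,z))$, where $(h,z)$ denotes the function obtained by gluing $h$ with the value $z$ at $x$.

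For the right-hand side, I would next note that $H_{x,y}$ depends on $H$ only through the restriction $h$, since its value at $x$ is overwritten by $y$. Hence $\mathbb{E}_H \mathbb{E}_y f(H_{x,y}) = \mathbb{E}_h\, \mathbb{E}_{H(x)}\, \mathbb{E}_y f((h, y)) = \mathbb{E}_h\, \mathbb{E}_y f((h, y))$, where the inner average over $H(x)$ collapses because the integrand does not depend on it. After renaming the dummy variable $y$ to $z$, this matches the expression derived for $\mathbb{E}_H f(H)$ in the previous step, completing the argument.

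There is no real technical obstacle — the lemma simply expresses that resampling a uniformly random function at a single fixed point preserves its distribution. The only care required is to make the product structure of the uniform measure on $\textnormal{Bool}(\lambda, m)$ fully explicit via the bijection $H \leftrightarrow (H|_{\neq x}, H(x))$, so that the two expectations can be compared term by term.
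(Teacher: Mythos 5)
Your proof is correct, and it is the standard resampling argument — precisely the decomposition $H \leftrightarrow (H|_{\neq x}, H(x))$ that makes the uniform measure on $\textnormal{Bool}(\lambda,m)$ a product measure, from which both sides collapse to $\mathbb{E}_h \mathbb{E}_z f((h,z))$. The paper does not spell out a proof but defers to Lemma 19 of Broadbent and Lord, which rests on the same observation; there is no substantive difference in approach.
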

\begin{proof}
The proof is straightforward, and can be found in Lemma 19 of \cite{broadbent2019uncloneable}.
\end{proof}

The following is a technical lemma about a quantum adversary not being able to distinguish between samples from $H(U_{\lambda})$ and from $U_m$, even when given oracle access to $H$, where the function $H: \{0,1\}^{\lambda} \rightarrow \{0,1\}^m$ is sampled uniformly at random. 

Consider the following game between a challenger and a quantum adversary $\mathcal{A}$, specified by $\lambda, m \in \mathbb{N}$, and a distribution $X$ over $\{0,1\}^{\lambda}$,  
\begin{itemize}
    \item The challenger samples a uniformly random function $H: \{0,1\}^{\lambda} \rightarrow \{0,1\}^m$ and $b \leftarrow \{0,1\}$.
    \item If $b = 0$: the challenger samples $x \leftarrow X$, sends $H(x)$ to $\mathcal{A}$.\\ If $b = 1$: the challenger samples uniformly $z \leftarrow \{0,1\}^m$, sends $z$ to $\mathcal{A}$.
    \item $\mathcal{A}$ additionally gets oracle access to $H$. $\mathcal{A}$ returns a bit $b'$ to the challenger.
\end{itemize}
$\mathcal{A}$ wins if $b=b'$. Let $\textsf{Distinguish}(\mathcal{A}, \lambda , m, X)$ be a random variable for the outcome of the game. 

\begin{lem}
\label{lem: qrom distinguishing}
For any adversary $\mathcal{A}$ making $q$ oracle queries, any family of distributions ${\{X_{\lambda}: \lambda \in \mathbb{N}\}}$ where for all $\lambda$, $X_{\lambda}$ is a distribution over $\{0,1\}^{\lambda}$, for any polynomially bounded function $m: \mathbb{N} \rightarrow \mathbb{N}$, there exists a negligible function $\mu$ such that, for any $\lambda \in \mathbb{N}$, the following holds:
$$ \Pr[\textsf{Distinguish}(\mathcal{A}, \lambda, m(\lambda), X_{\lambda}) = 1] \leq \frac12 + (3q + 2) q M + \mu(\lambda) \,,$$
where $M$ is a quantity that is negligible in $\lambda$ if $2^{-\Hmin(X_{\lambda})/2}$ is negligible in $\lambda$.
\end{lem}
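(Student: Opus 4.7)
The plan is to reduce the distinguishing task to a search task via a one-way-to-hiding (O2H) argument, and then bound the search probability using the min-entropy of $X_\lambda$.

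First, I set up a hybrid via oracle reprogramming. Sample $z \leftarrow \{0,1\}^m$ uniformly, and consider the reprogrammed function $H_{x,z}$ of Lemma \ref{lem: basic}. By that lemma, the distribution of $(H_{x,z}, z)$ with $H$ uniformly random in $\textnormal{Bool}(\lambda,m)$ and $z$ independently uniform is identical to the distribution of $(H, H(x))$ with $H$ uniform. Hence the $b=0$ branch of the game is equivalent to the adversary being given the input $z$ (uniform) and oracle access to $H_{x,z}$, while in the $b=1$ branch the adversary is given the same uniform input $z$ but oracle access to $H$. The two branches differ only in whether the random oracle has been reprogrammed at $x$, so the adversary's distinguishing advantage equals the oracle-distinguishing advantage between $H$ and $H_{x,z}$, which differ at a single point.

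Second, I apply the O2H lemma of \cite{Unruh15} to control this oracle-distinguishing advantage. In the square-root form of the lemma, the advantage is upper-bounded (up to the $\mu(\lambda)$ slack absorbing lower-order terms) by a factor of order $q$ times $\sqrt{P_{\mathrm{find}}}$, where $P_{\mathrm{find}}$ is the probability that an extractor $\mathcal{B}$, obtained from $\mathcal{A}$ by picking a query position uniformly at random, measuring the corresponding query register in the computational basis, and outputting the outcome, produces the reprogramming point $x$. Crucially, in the unreprogrammed world, $\mathcal{B}$'s view consists of the uniform input $z$ and oracle access to the uniformly random $H$, both independent of $x$; so its output is independent of $x$ conditioned on its internal randomness. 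By definition of min-entropy, the probability that its output coincides with $x \leftarrow X_\lambda$ is at most $2^{-\Hmin(X_\lambda)}$, and taking the square root yields a quantity $M$ proportional to $2^{-\Hmin(X_\lambda)/2}$. The $(3q+2)q$ prefactor arises from combining the $q$-dependent constant in the O2H bound with the usual $\sqrt{q}$ loss from choosing a random query position, together with the conversion from distinguishing advantage to winning probability.

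The main obstacle is invoking the correct variant of the O2H lemma to obtain the stated $(3q+2)q$ prefactor while keeping $M$ of order $2^{-\Hmin(X_\lambda)/2}$: different O2H formulations yield different constants, and one must carefully combine the $b=0$ to ``reprogrammed twin'' hybrid (which is free by Lemma \ref{lem: basic}) with the O2H application to the reprogrammed-vs-original oracle pair. A secondary subtlety is that $X_\lambda$ need not be uniform, so one cannot simply average over $x$ inside expectations as in the uniform case; the argument instead exploits the fact that, in the extraction game, $\mathcal{B}$'s entire view is independent of $x$, which reduces the extraction bound cleanly to $2^{-\Hmin(X_\lambda)}$ without any loss depending on the shape of $X_\lambda$.
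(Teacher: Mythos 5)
Your proof takes the same route as the paper's: use Lemma~\ref{lem: basic} to recast the $b=0$ branch as a reprogrammed-oracle game, invoke a one-way-to-hiding lemma to reduce distinguishing to the probability of querying at $x$, and bound that probability by $2^{-\Hmin(X_\lambda)/2}$ using the fact that in the unreprogrammed world the adversary's state carries no information about $x$ (the paper gets the same conclusion via Jensen's inequality applied to $\mathbb E_{x\leftarrow X_\lambda}\|\ket x\bra x\,(UO^H)^k\ket z\|$). One minor inaccuracy: the $(3q+2)q$ prefactor in the paper is inherited from the amplitude-form O2H variant stated as Lemma~\ref{lem: qrom technical step} (following Lemma~18 of Broadbent--Lord), where it arises from a triangle-inequality decomposition of $(UO^H)^q$ into query-avoiding and query-hitting parts --- not from a ``$\sqrt q$ loss from choosing a random query position'' as you describe --- but this is a cosmetic point that does not affect the soundness of your argument.
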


\begin{cor}
\label{cor: dist entropy}
For any query-bounded adversary $\mathcal{A}$, any $\epsilon > 0$, any family of distributions $\{X_{\lambda}: \lambda \in \mathbb{N}\}$, where $X_{\lambda}$ is a distribution over $\{0,1\}^{\lambda}$ with $\Hmin(X_{\lambda}) > \lambda^{\epsilon}$ for every $\lambda$, for any polynomially bounded function $m: \mathbb{N} \rightarrow \mathbb{N}$, there exists a negligible function $\mu$ such that, for any $\lambda \in \mathbb{N}$, the following holds:
$$ \Pr[\textsf{Distinguish}(\mathcal{A}, \lambda, m(\lambda), X_{\lambda}) = 1] \leq \frac12 + \mu(\lambda)  \,.$$
\end{cor}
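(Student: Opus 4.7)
The plan is to derive Corollary~\ref{cor: dist entropy} as a direct consequence of Lemma~\ref{lem: qrom distinguishing}, essentially by substitution: the lemma already provides the bound
$$\Pr[\textsf{Distinguish}(\mathcal{A}, \lambda, m(\lambda), X_{\lambda}) = 1] \leq \frac12 + (3q + 2) q M + \mu(\lambda),$$
where $M$ is a quantity guaranteed to be negligible in $\lambda$ whenever $2^{-\Hmin(X_\lambda)/2}$ is negligible in $\lambda$. The whole argument will consist of verifying that each factor in this bound is negligible or combines with the others to produce something negligible.

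First I would observe that since $\mathcal{A}$ is query-bounded, the number of oracle queries $q = q(\lambda)$ is upper bounded by a polynomial in $\lambda$. Second, the hypothesis $\Hmin(X_\lambda) > \lambda^\epsilon$ for some fixed $\epsilon > 0$ yields
$$2^{-\Hmin(X_\lambda)/2} < 2^{-\lambda^\epsilon/2},$$
which decays sub-exponentially in $\lambda$ and is therefore negligible. By the conclusion of Lemma~\ref{lem: qrom distinguishing}, $M$ itself is then a negligible function of $\lambda$. A polynomial multiple of a negligible function remains negligible, so $(3q+2)q \cdot M$ is negligible, and adding the negligible $\mu(\lambda)$ gives a negligible $\mu'(\lambda)$. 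This yields the target bound $\tfrac12 + \mu'(\lambda)$.

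There is no real obstacle here beyond routine bookkeeping, since the heavy lifting lives inside Lemma~\ref{lem: qrom distinguishing}. The one point I would be careful about is the interpretation of ``query-bounded'': the lemma's polynomial factor $q$ is polynomial in the \emph{input size}, which here includes the security parameter $\lambda$, so $q(\lambda)$ is indeed polynomial in $\lambda$ and its square is dominated unambiguously by the super-polynomially decaying factor $2^{-\lambda^\epsilon/2}$ hidden inside $M$.
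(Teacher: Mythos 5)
Your proposal is correct and is exactly the routine derivation the paper has in mind (the paper states Corollary~\ref{cor: dist entropy} without proof, treating it as an immediate consequence of Lemma~\ref{lem: qrom distinguishing}). You correctly identify the three ingredients: $q(\lambda)$ is polynomial because $\mathcal{A}$ is query-bounded and its input size is $m(\lambda)=\poly(\lambda)$; the hypothesis $\Hmin(X_\lambda)>\lambda^\epsilon$ forces $2^{-\Hmin(X_\lambda)/2}<2^{-\lambda^\epsilon/2}$ to be negligible, so $M$ is negligible by the last clause of the lemma; and a polynomial times a negligible function, plus another negligible function, is negligible.
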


The key step in the proof of Lemma \ref{lem: qrom distinguishing} is captured by the one-way-to-hiding lemma \cite{Unruh15, ambainis2019quantum}\footnote{While additional improved variants of the one-way to hiding lemma were developed \cite{bindel2019,kuchta2020}, any of them suffices for our asymptotic analysis.}. We restate it here as Lemma \ref{lem: qrom technical step} following our notation (and provide a proof in the Appendix \ref{apx: lem 7 proof} for completeness). Informally, the lemma gives an upper bound on an adversary's advantage (when given access to a uniformly random function $H: \{0,1\}^n \rightarrow \{0,1\}^m$) at distinguishing between a sample drawn from $H(U_n)$ and a sample drawn from $U_m$. The upper bound is in terms of the probability that the adversary queries the oracle at the pre-image of the sample at some point during its execution. Equivalently, given two oracles that are identical except on a single input (or more generally on a subset of the inputs), the advantage of an adversary at distinguishing the two oracles is bounded above in terms of the probability that the adversary queries at the differing point (or at a point in the subset where they differ) at some point during its execution. We prove the following lemma in Appendix \ref{apx: lem 7 proof}.

\begin{lem}
\label{lem: qrom technical step}
Let $\lambda, m \in \mathbb{N}$. For any $q \in \mathbb{N}$, any unitaries $U$, any family of states $\{\ket{\psi_x}\}_{x \in \mathcal{X}}$, any complete pair of orthogonal projectors $(\Pi^0, \Pi^1)$ and any distribution $X$ on $\{0,1\}^{\lambda}$, it holds that:
\begin{align}
\label{eq: 43}
\frac12 \mathbb{E}_{H}\mathbb{E}_{x \leftarrow X}\| \Pi^0  (U O^H)^q\left( \ket{H(x)} \otimes \ket{\psi_x} \right)\|^2 + \frac12 \mathbb{E}_{H}\mathbb{E}_{z \leftarrow \{0,1\}^m}\| \Pi^1  (U O^H)^q\left( \ket{z} \otimes \ket{\psi_x} \right)\|^2 \nonumber \\
    \leq \frac12 + (3q+2)q M \,, \quad\quad\quad\quad
\end{align}
where $O^H$ is the oracle unitary for $H: \{0,1\}^{\lambda} \rightarrow \{0,1\}^m$, and $M$ is given by
\begin{align}
    M = &\frac12 \,\mathbb{E}_{H}\mathbb{E}_{x \leftarrow X} \mathbb{E}_{z \leftarrow \{0,1\}^m} \mathbb{E}_k \|  \ket{x}\bra{x} (U O^{{H_{x,z}}})^k \ket{z} \otimes \ket{\psi_x}\| \nonumber\\
+& \frac12 \,\mathbb{E}_{H}\mathbb{E}_{x \leftarrow X} \mathbb{E}_{z \leftarrow \{0,1\}^m} \mathbb{E}_k \|  \ket{x}\bra{x} (U O^H)^k \ket{z} \otimes \ket{\psi_x}\|  \,. \label{eq: 44}
\end{align}
Moreover, $M$ is negligible if and only if the second term in $M$ is negligible.
\end{lem}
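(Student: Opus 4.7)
The plan is to read the LHS as the winning probability of a distinguishing game: a bit $b$ is uniform, and in case $b=0$ the algorithm receives the state $\ket{H(x)}\otimes\ket{\psi_x}$, while in case $b=1$ it receives $\ket{z}\otimes\ket{\psi_x}$ with $z$ uniform; after applying $(UO^H)^q$ it measures $\{\Pi^0,\Pi^1\}$ and guesses $b$. If the algorithm had no information distinguishing the two cases, the sum would equal $\frac12$ since $\Pi^0+\Pi^1=\Id$; hence the task is to bound the residual advantage.

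The first step is to reprogram via Lemma~\ref{lem: basic}. Since $H(x)=H_{x,H(x)}(x)$ and $\mathbb{E}_H g(H) = \mathbb{E}_H\mathbb{E}_y g(H_{x,y})$, the $b=0$ term can be rewritten as
\[
\mathbb{E}_H\mathbb{E}_x \|\Pi^0(UO^H)^q(\ket{H(x)}\otimes\ket{\psi_x})\|^2 = \mathbb{E}_H\mathbb{E}_x\mathbb{E}_z \|\Pi^0(UO^{H_{x,z}})^q(\ket{z}\otimes\ket{\psi_x})\|^2.
\]
After this substitution, both terms are evaluated on the same input $\ket{z}\otimes\ket{\psi_x}$ with $z$ uniform; they differ only in whether the oracle is $O^{H_{x,z}}$ or $O^H$, and these two oracles agree everywhere except on the input $x$.

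The second step is a standard quantum hybrid argument in the spirit of \cite{Unruh15, ambainis2019quantum}. Define $\ket{\varphi_k} = (UO^H)^{q-k}(UO^{H_{x,z}})^k(\ket{z}\otimes\ket{\psi_x})$, so that $\ket{\varphi_q}$ and $\ket{\varphi_0}$ are precisely the two post-processed states appearing in the rewritten LHS. The difference $\ket{\varphi_{k+1}} - \ket{\varphi_k}$ is localised on the subspace where the query register holds $x$ immediately before query $k+1$, so its norm is controlled by $\|\ket{x}\bra{x}(UO^{H_{x,z}})^k(\ket{z}\otimes\ket{\psi_x})\|$. Summing over $k$ via the triangle inequality, using $\|\Pi^0\ket{\varphi_0}\|^2 + \|\Pi^1\ket{\varphi_0}\|^2 = 1$, and collecting the cross terms arising when bounding $\|\Pi^0\ket{\varphi_q}\|^2 - \|\Pi^0\ket{\varphi_0}\|^2$ via Cauchy--Schwarz produces the quadratic-in-$q$ prefactor and yields the bound $\frac12 + (3q+2)qM$, where the two summands of $M$ arise naturally from the symmetric roles of $O^H$ and $O^{H_{x,z}}$ in the hybrid.

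The ``moreover'' clause follows from a further application of the same hybrid idea: the two summands of $M$ differ only in whether the intermediate queries (before step $k$) use $O^H$ or $O^{H_{x,z}}$, and bounding this difference once more by a quantity proportional to $M$ itself shows that one summand is negligible iff the other is. The main obstacle I anticipate is the careful bookkeeping of the hybrid step --- tracking how the measurement weight $\|\ket{x}\bra{x}\cdot\|$ at step $k$ relates cleanly to the final-state distinguishing weight --- and obtaining the precise $(3q+2)q$ prefactor rather than the looser $4q^2$ one gets from a naive application of Cauchy--Schwarz; the rest is standard post-quantum O2H accounting.
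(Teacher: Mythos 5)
Your approach is correct in substance, but it takes a genuinely different route from the paper's. The paper does \emph{not} use a BBBV-style query-by-query hybrid between the two oracles. Instead it uses the Unruh/Broadbent--Lord style decomposition: it writes $(UO^{H})^q = V_x^H + W_x^H$ with $V_x^H = \big(UO^H(\Id-\ket x\!\bra x)\big)^q$, invokes Lemma 18 of \cite{broadbent2019uncloneable} (applied twice, once to the reprogrammed oracle $H_{x,z}$ for the $\Pi^0$ term and once to $H$ for the $\Pi^1$ term --- this is where \emph{both} summands of $M$ come from), observes that $V_x^{H_{x,z}}=V_x^H$ since $V_x$ never queries at $x$, and then uses $\|\Pi^0 V_x^H\ket\phi\|^2+\|\Pi^1 V_x^H\ket\phi\|^2=\|V_x^H\ket\phi\|^2\le 1$. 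The ``moreover'' clause is then cited to \cite{ambainis2019quantum}. Your hybrid $\ket{\varphi_k}=(UO^H)^{q-k}(UO^{H_{x,z}})^k\ket\phi$ is a legitimate alternative: a single application of the triangle inequality and $\big|\|\Pi\ket{\varphi_q}\|^2-\|\Pi\ket{\varphi_0}\|^2\big|\le 2\|\ket{\varphi_q}-\ket{\varphi_0}\|$ gives a bound of the form $\frac12 + O(q)\cdot\mathbb{E}_k\|\ket x\!\bra x(UO^{H_{x,z}})^k\ket\phi\|$, which is actually a \emph{tighter} (linear-in-$q$) bound but produces only one of the two summands of $M$ rather than both; that is perfectly acceptable since the conclusion only requires some polynomial-in-$q$ prefactor and a quantity whose negligibility is equivalent to that of $M$. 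What each approach buys: yours is more elementary and gives a sharper constant; the paper's modularizes everything into the existing Broadbent--Lord lemma (which is also what gets reused in their entangled two-player version, Lemma~\ref{lem: main bound}, so the paper has a uniformity reason for the $V/W$ route). Two small caveats. First, you should not expect to reproduce the literal $(3q+2)q$ prefactor --- that constant is an artifact of Lemma 18 of \cite{broadbent2019uncloneable}, not of the hybrid argument --- but as you note the exact polynomial is immaterial. Second, your phrasing of the ``moreover'' step is slightly off: what you actually prove by the secondary hybrid is $M_1\le(1+2q)M_2$ and $M_2\le(1+2q)M_1$ (with $M_1,M_2$ the two summands), i.e.\ each summand is bounded by a $\poly(q)$ multiple of the \emph{other}, not by ``a quantity proportional to $M$ itself''; the latter framing would not by itself close the loop. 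With that correction the argument goes through.
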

The lemma holds also when the states $\ket{\psi_x}$ are not necessarily pure (but we write them as pure states for ease of notation).

\begin{remark}
Previously, Chung, Guo, Liu and Qian~\cite{9317914} gave an improved bound of the form $\frac{1}{2}+q M$ in the special case that $X$ is uniform.   
\end{remark}

\begin{proof}[Proof of Lemma \ref{lem: qrom distinguishing}]
Without loss of generality, let $\mathcal{A}$ be specified by a unitary $U$, the oracle unitary $O^H$ and a measurement given by projectors $\Pi_0$ and $\Pi_1=\mathds 1-\Pi_0$, so that the unitary part of $\mathcal{A}$'s algorithm is $(U O^H)^q$, where $q$ the number of oracle queries made by $\mathcal{A}$. Then, $\mathcal{A}$'s winning probability is precisely  given by,
\begin{align}
    &\Pr[\textsf{Distinguish}(\mathcal{A}, \lambda, m, X_{\lambda}) = 1] \nonumber \\
    &=\frac12 \mathbb{E}_{H}\mathbb{E}_{x \leftarrow X_{\lambda}}\| \Pi^0  (U O^H)^q  \ket{H(x)} \|^2 + \frac12 \mathbb{E}_{H}\mathbb{E}_{z \leftarrow \{0,1\}^m}\| \Pi^1  (U O^H)^q \ket{z} \|^2 \,,
\end{align} 
where we omit writing ancilla qubits initialized in the zero state that $(UO^H)^q$ might be acting on.

Then, by Lemma \ref{lem: qrom technical step}, we have \begin{equation}
    \Pr[\textsf{Distinguish}(\mathcal{A}, \lambda, m, X_{\lambda}) = 1] \leq \frac12 + (3q+2)q \, M
\end{equation}
where $M$ is the quantity given by
\begin{align}
    M &= \frac12  \,\mathbb{E}_{H}\mathbb{E}_{x \leftarrow X_{\lambda}} \mathbb{E}_{z \leftarrow \{0,1\}^m} \mathbb{E}_k \|  \ket{x}\bra{x} (U O^{{H_{x,z}}})^k \ket{z} \| \nonumber \\
&+ \frac12 \,\mathbb{E}_{H}\mathbb{E}_{x \leftarrow X_{\lambda}} \mathbb{E}_{z \leftarrow \{0,1\}^m} \mathbb{E}_k \|  \ket{x}\bra{x} (U O^H)^k \ket{z} \| \,.  \label{eq: 12}
\end{align}

Moreover, by Lemma \ref{lem: qrom technical step}, $M$ is negligible if and only if the second term,
$$\mathbb{E}_{H}\mathbb{E}_{x \leftarrow X_{\lambda}} \mathbb{E}_{z \leftarrow \{0,1\}^m} \mathbb{E}_k \|  \ket{x}\bra{x} (U O^H)^k \ket{z} \|,$$ 
is negligible. Hence, it suffices to bound the above term.
Notice that for any fixed $H$, $z$ and $k$, 
\begin{align}
    \mathbb{E}_{x \leftarrow X_{\lambda}} & \|  \ket{x}\bra{x} (U O^H)^k \ket{z} \| \nonumber \\
    & \leq \sqrt{\mathbb{E}_{x \leftarrow X_{\lambda}} \| \ket{x}\bra{x} (U O^H)^k \ket{z} \|^2} \nonumber\\
    & \leq 2^{-\Hmin(X_{\lambda})/2} \,.
\end{align} 
where the first inequality follows from Jensen's inequality (for concave functions), and the second inequality uses the fact that the state $(U O^H)^k \ket{z}$ does not depend on $x$, and hence the quantity under the square root is bounded above by the optimal probability of correctly predicting a sample from $X$, which is, by definition, $2^{-\Hmin(X)}$. 
Therefore, $M$ is negligible so long as $2^{-\Hmin(X_{\lambda})}$ is negligible.
\end{proof}

Finally, we define the notion of indistinguishability of ensembles of quantum states \emph{in the QROM}. This is similar to Definition \ref{def: indistinguishability}.

\begin{definition}[Indistinguishability of ensembles of quantum states in the QROM]
\label{def: indistinguishability qrom}
Let $m: \mathbb{N} \rightarrow \mathbb{N}$ and $p: \mathbb{N} \rightarrow \mathbb{N}$ be polynomially bounded functions, and let $\rho^H_{\lambda}$ and $\sigma^H_{\lambda}$ be $p(\lambda)$-qubit states, for $H \in \mathsf{Bool}(\lambda, m(\lambda))$. We say that $\{\rho_{\lambda}^H\}_{\lambda \in \mathbb{N}, H \in \mathsf{Bool}(\lambda, m(\lambda))}$ and $\{\sigma^H_{\lambda}\}_{\lambda \in \mathbb{N}, H \in \mathsf{Bool}(\lambda, m(\lambda))}$ are quantum computationally indistinguishable ensembles of quantum states, denoted by $\rho_{\lambda}^H \approx_c \sigma_{\lambda}^H\,,$
if, for any $\QPT$ distinguisher $\mathcal{D}^H$ with single-bit output, any polynomially bounded $q: \mathbb{N} \rightarrow \mathbb{N}$, any family of $q(\lambda)$-qubit auxiliary states $\{\nu_{\lambda}\}_{\lambda \in \mathbb{N}}$, and every $\lambda \in \mathbb{N}$,
$$ \mathbb{E}_H \big| \Pr[\mathcal{D}^H(\rho_{\lambda}^H \otimes \nu_{\lambda})=1] - \Pr[\mathcal{D}^H(\sigma_{\lambda}^H \otimes \nu_{\lambda})=1] \big|\leq \negl(\lambda) \,.$$
\end{definition}

\subsection{Unclonable quantum encryption}

A quantum encryption scheme is a procedure that encodes a plaintext in the form of a quantum state in terms of a quantum \emph{ciphertext}.

\begin{definition}[$\SKQES$]\label{def:QECM}\ \\
Let $\lambda \in \mathbb{N}$ be the security parameter. A secret-key quantum encryption scheme $(\SKQES)$ is a triplet $\Pi = (\KeyGen,\Enc,\Dec)$ which consists of the following $\QPT$ algorithms:
\begin{itemize}
    \item (key generation) $\KeyGen: 1^\lambda \mapsto k \in \mathcal{K}$.
    \item (encryption) $\Enc: \mathcal{K} \times \mathcal{D}(\mathcal{H}_M) \rightarrow \mathcal{D}(\mathcal{H}_C)$.
    \item (decryption) $\Dec: \mathcal{K} \times \mathcal{D}(\mathcal{H}_C) \rightarrow \mathcal{D}(\mathcal{H}_M)$.
\end{itemize}
such that $\|\Dec_k \circ \Enc_k - \Id_M\|_\diamond \leq \negl(\lambda)$, for all $k \in \boldsymbol{supp}\, \KeyGen(1^\lambda)$, i.e., belonging to the set of all possible outcomes.
\end{definition}

In a similar fashion, we also define quantum encryption of classical messages $(\QECM)$ schemes $\Pi = (\KeyGen,\Enc,\Dec)$ for which the input to $\Enc$ is in the form of a classical key together with a classical message in a quantum register, for example given by $\ket{m}\bra{m} \in \mathcal{D}(\mathcal{H}_M)$ for a plaintext $m$. The following definition is due to Broadbent and Lord \cite{broadbent2019uncloneable}.

\begin{definition}[Unclonable Security]\label{def:unclonable_security}\ \\
A $\QECM$ scheme $\Pi = (\KeyGen,\Enc,\Dec)$ with message space $\{0,1\}^\lambda$ is $t(\lambda)$-unclonable secure if for all $\QPT$ (in terms of the parameter $\lambda$) cloning attacks $\mathcal{A} = (\mathcal{P},\mathcal{D}_{1},\mathcal{D}_{2})$ consisting of a triplet
\begin{itemize}
    \item (pirate) $\mathcal{P}: \mathcal{D}(\mathcal{H}_C) \rightarrow \mathcal{D}(\mathcal{H}_A \otimes \mathcal{H}_B)$
    \item (1st decoder) $\mathcal{D}_{1}: \mathcal{D}(\mathcal{H}_K \otimes \mathcal{H}_A) \rightarrow \mathcal{D}(\mathcal{H}_M)$
    \item (2nd decoder) $\mathcal{D}_{2}: \mathcal{D}(\mathcal{H}_K \otimes \mathcal{H}_B) \rightarrow \mathcal{D}(\mathcal{H}_M)$
\end{itemize}
against $\Pi$, there exists a negligible function $\mu(\lambda)$ such that:
\begin{align*}
\underset{m}{\mathbb{E}} \, \underset{k \leftarrow \mathcal
{K}}{\mathbb{E}} \,\, \Tr{(\ket{m}\bra{m} \otimes \ket{m}\bra{m})(\mathcal{D}_{1,k} \otimes \mathcal{D}_{2,k}) \circ \mathcal{P} \circ \Enc_{k}(\ket{m}\bra{m}) } \leq 2^{- \lambda + t(\lambda)} + \mu(\lambda).
\end{align*}
\end{definition}

\section{Quantum copy-protection}\label{sec:cp}
Except for some slight differences, our definition of a secure copy-protection scheme is identical to the notion in \cite{aaronson2009quantum}. We elaborate on the differences in Section \ref{sec: comparison}.

\subsection{Definition}

\begin{definition}[Quantum copy-protection scheme]\label{def:copy-protection-scheme}
Let $ \mathcal{C}$ be a family of classical circuits with a single bit output. A quantum copy-protection $(\textsf{CP})$ scheme for $\mathcal{C}$ is a pair of $\QPT$ algorithms $(\mathsf{CP.Protect}, \mathsf{CP.Eval})$ with the following properties:
\begin{itemize}
    \item $\mathsf{CP.Protect}$ takes as input a security parameter $\lambda \in \mathbb{N}$ and a classical circuit $C \in \mathcal{C}$, and outputs a quantum state $\rho$.
    \item $\mathsf{CP.Eval}$ takes as input a quantum state $\rho$ and a string $x$, and outputs a single bit.
\end{itemize}
We say that the scheme is correct if, for any $\lambda \in \mathbb{N}$, $C \in \mathcal{C}$, and any input string $x$ to $C$:
 $$\Pr[\mathsf{CP.Eval}( \rho, x) = C(x) : \rho \leftarrow \mathsf{CP.Protect}(1^{\lambda}, C)] \geq 1-\negl(\lambda).$$
(note that we think of $\mathsf{CP.Protect}$ as a deterministic procedure which outputs a mixed state; so the probability in the above expression comes from $\mathsf{CP.Eval}$.)
\end{definition}

Two remarks are in order about Definition \ref{def:copy-protection-scheme}. \begin{itemize}
    \item For ease of exposition, we define copy-protection for circuits outputting a single bit. It is straightforward to generalize the above definition to circuits with multi-bit output. 
    \item While our definition only requires the ability to compute the circuit $C$ on a single input using the copy-protection state $\rho$, polynomially many evaluations are possible by delaying any measurements in $\mathsf{CP.Eval}$, and uncomputing after copying the output bit to recover the approximate original state $\rho$.
\end{itemize}

We define security in terms of a game between a challenger and an adversary consisting of a triple of $\QPT$ algorithms $\mathcal{A} = (\mathcal{P},\mathcal{F}_1, \mathcal{F}_2)$ - a ``pirate'' $\mathcal{P}$ and two ``freeloaders'' $\mathcal{F}_1$ and $\mathcal{F}_2$. The game is specified by a security parameter $\lambda$, a distribution $D_{\lambda}$ over circuits in $\mathcal{C}$, an ensemble $\{D_C\}_{C \in \mathcal{C}}$ where $D_C$ is a distribution over pairs of inputs to $C \in \mathcal{C}$.
We refer to $ \{D_{\lambda}\}_{\lambda \in \mathbb{N}}$ as the \textit{program ensemble}, and to $\{D_{C}\}_{C \in \mathcal{C}}$ as the \textit{input challenge ensemble}. The security game proceeds as follows.
\begin{itemize}
    \item The challenger samples $C \leftarrow D_{\lambda}$ and sends $\rho \leftarrow \mathsf{CP.Protect}(1^\lambda, C)$ to $\mathcal{P}$.
    \item $\mathcal{P}$ creates a state on registers \textsf{A} and \textsf{B}, and sends \textsf{A} to $\mathcal{F}_1$ and \textsf{B} to $\mathcal{F}_2$.
    \item (\emph{input challenge phase:}) The challenger samples $(x_1, x_2) \leftarrow D_C$ and sends $x_1$ to $\mathcal{F}_1$ and $x_2$ to $\mathcal{F}_2$. ($\mathcal{F}_1$ and $\mathcal{F}_2$ are not allowed to communicate).
    \item $\mathcal{F}_1$ and $\mathcal{F}_2$ each return bits $b_1$ and $b_2$ to the challenger.
\end{itemize}
$\mathcal{A} = (\mathcal{P}, \mathcal{F}_1, \mathcal{F}_2)$ win if $b_1 = C(x_1)$ and $b_2 = C(x_2)$. We use the random variable $\mathsf{PiratingGame}(\lambda, \mathcal{P}, \mathcal{F}_1, \mathcal{F}_2, D_{\lambda}, \{D_C\})$ to denote whether the game is won.

Before defining security, we define $p^{\mathrm{triv}}_{D_\lambda,\{D_C\}_{C \in \mathcal{C}}}$ to be the winning probability that is trivially possible due to correctness: the pirate forwards the copy-protected program to one of the freeloaders, and leaves the other one with guessing as his best option. Formally, let $\hat D_C$ be the induced distribution of winning answer pairs, and let $\hat D_{C,i}$, for $i \in \{1,2\}$ be its marginals. We define the optimal guessing probability of any of the two freeloaders, 
$$
p^{\mathrm{triv}}_{D_\lambda,\{D_C\}_{C \in \mathcal{C}}}=\max_{i\in\{1,2\}}\max_{b\in\{0,1\}}\mathbb E_{C\leftarrow D_\lambda}\hat D_{C,i}(b).
$$ 

We define security as follows:

\begin{definition}[Security]
\label{def: security copy protection}
A quantum copy-protection scheme for a family of circuits $ \mathcal{C}$ is said to be $\delta$-secure with respect to the ensemble $\{D_{\lambda}\}_{\lambda \in \mathbb{N}}$ of distributions over circuits in $\mathcal{C}$, and with respect to the ensemble $\{D_{C}\}_{C \in \mathcal C}$, where $D_{C}$ is a distribution over pairs of inputs to program $C \in \mathcal{C}$, if for any $\QPT$ adversary $(\mathcal{P}, \mathcal{F}_1, \mathcal{F}_2)$, any $\lambda \in \mathbb{N}$,
$$ \Pr[\textsf{PiratingGame}(\lambda, \mathcal{P}, \mathcal{F}_1, \mathcal{F}_2, D_{\lambda}, \{D_{C}\}) =1 ] \leq 1-\delta(\lambda)+\negl(\lambda)\,.$$
If $\delta(\lambda)=1-p^{\mathrm{triv}}_{D_\lambda,\{D_C\}_{C \in \mathcal{C}}}$, we simply say that the copy-protection scheme is secure.
\end{definition}
Two remarks are in order about the above definition.
\begin{itemize}
    \item The definition can be generalized by quantifying over all challenge distributions. The acceptable adversarial winning probability then needs to be related to the optimal guessing probability for challenges drawn from the distribution. We refrain from such a generalization for ease of exposition.  
    \item We follow Aaronson \cite{Aaronson_2005} in that the parameter $\delta$ quantifies \emph{security}, not \emph{insecurity}. We decided in favor of this convention to maintain coherence with the previous literature on quantum copy protection, despite the fact that in cryptography, the $\epsilon$ in ``$\epsilon$-secure'' traditionally quantifies adversarial advantage which is a measure of \emph{in}security.
\end{itemize}

\subsection{Comparison with existing definitions of copy-protection}
\label{sec: comparison}

Our definition is very similar to the original security definition first proposed by Aaronson \cite{aaronson2009quantum}. The only difference is the following. In \cite{aaronson2009quantum}, a scheme is $\delta$-secure if for any bounded adversary who tries to create $k+1$ programs upon receiving $k$ copy-protected copies the average number of input challenges answered correctly is $k(1+\delta)$. In contrast, in our definition we say that the scheme is secure if no adversary can succeed with non-negligible advantage beyond the trivial guessing probability. In our work, we exclusively focus on the case of $k=1$.


\section{Quantum copy-protection from unclonable encryption}

In this section, we make a conceptual connection between unclonable quantum encryption and quantum copy-protection. Our main result is a quantum copy-protection scheme for multi-bit point functions which we obtain from any unclonable encryption scheme with a so-called ``wrong-key detection mechanism''. Canetti et al. \cite{tcc-2010-18737} previously introduced a similar property for classical encryption in the context of point function obfuscation. 

\subsection{Quantum encryption with wrong-key detection}

Let us first formalize the ``wrong-key detection mechanism'' for quantum encryption schemes.

\begin{definition}[Wrong-Key Detection] Let $(\KeyGen,\Enc,\Dec)$ be a $\SKQES$. We say that the scheme satisfies the wrong-key detection (WKD) property if, for every $k'\neq k \leftarrow \KeyGen(1^\lambda)$:
$$
\| \Dec_{k'} \circ \Enc_k - \langle {\ket{\bot}\bra{\bot} \rangle} \|_\diamond \leq \negl(\lambda).
$$
Here,  $\langle {\ket{\bot}\bra{\bot} \rangle}(\cdot) = \ket{\bot}\bra{\bot} \mathrm{Tr}[\cdot]$.
\end{definition}

Next, we give a simple transformation that achieves WKD in the QROM.

\begin{construction}[Generic Transformation for WKD in the QROM]\label{cons:key-detection}\ \\
Let $\Pi=(\KeyGen,\Enc,\Dec)$ be a $\QECM$, let $\lambda$ be the security parameter and fix a function $H: \{0,1\}^\lambda \rightarrow \{0,1\}^\ell$. The $\QECM$ $\Pi_H=(\KeyGen',\Enc',\Dec')$ scheme is defined by the following $\QPT$ algorithms:
\begin{itemize}
    \item  $\KeyGen'$: on input $1^\lambda$, run $\KeyGen( 1^\lambda)$ to output $k\in \mathcal{K}$.
    \item $\Enc'$: on input $m$, run $\Enc: \mathcal{K} \times \mathcal{D}(\mathcal{H}_M) \rightarrow \mathcal{D}(\mathcal{H}_C)$ and output $(\Enc_k(\ket{m}\bra{m}),H(k))$.
    \item $\Dec'$: on input $(\rho,c)$, first check whether $H(k)=c$. Output $\ket{\bot}\bra{\bot}$, if false. Otherwise, run $\Dec: \mathcal{K} \times \mathcal{D}(\mathcal{H}_C) \rightarrow \mathcal{D}(\mathcal{H}_M)$ and output $\Dec_k(\rho)$.
\end{itemize}
\end{construction}

\begin{lem}\label{lem:generic_transf}
Let $\Pi$ be any $t(\lambda)$-unclonable secure $\QECM$ and let $H: \{0,1\}^\lambda \rightarrow \{0,1\}^\ell$ be a hash function, for $ \ell = 2 \lambda$. Then, Constr. \ref{cons:key-detection} yields an $t(\lambda)$-unclonable secure $\QECM$ scheme $\Pi_H$ with WKD in the QROM.
\end{lem}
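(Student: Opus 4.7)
I would verify the two required properties of $\Pi_H$ separately: wrong-key detection and preservation of $t(\lambda)$-unclonable security.

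For the wrong-key detection property, I would invoke a simple birthday-style argument: for independently generated keys $k\neq k'\leftarrow \KeyGen(1^\lambda)$, the probability over a uniformly random $H$ that $H(k) = H(k')$ is at most $2^{-\ell}$, which is negligible in $\lambda$ by the assumption $\ell = \omega(\log\lambda)$. Consequently, on any honestly generated ciphertext $(\Enc_k(\ket{m}\bra{m}), H(k))$, the consistency check inside $\Dec'_{k'}$ fails with overwhelming probability, causing it to output $\ket{\bot}\bra{\bot}$ as needed, which gives the WKD bound in diamond norm.

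For unclonable security, I would set up a black-box reduction. Given any QPT cloning adversary $\mathcal{A}' = (\mathcal{P}', \mathcal{D}_1', \mathcal{D}_2')$ against $\Pi_H$ with success probability $p'$, define an attacker $\mathcal{A} = (\mathcal{P}, \mathcal{D}_1, \mathcal{D}_2)$ against $\Pi$ as follows: on input $c\leftarrow \Enc_k(\ket{m}\bra{m})$, the pirate $\mathcal{P}$ samples $r\leftarrow \{0,1\}^\ell$ uniformly at random and runs $\mathcal{P}'$ on $(c,r)$ with oracle access to $H$; the decoders $\mathcal{D}_i$ are set equal to $\mathcal{D}_i'$ unchanged (they can compute $H(k)$ themselves via the oracle if needed). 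Let $p$ denote the success probability of $\mathcal{A}$. The lemma will follow if I can show $|p-p'|\leq \negl(\lambda)$, since then the $t(\lambda)$-unclonable security of $\Pi$ applied to $\mathcal{A}$ gives $p'\leq p+\negl(\lambda)\leq 2^{-\lambda+t(\lambda)}+\negl(\lambda)$.

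To bound $|p-p'|$, since the post-pirate channel (the decoders applied to $k$ and the pirate's output registers) is CPTP and hence non-increasing in trace distance, it suffices to bound the expected trace distance between $\mathcal{P}'^H(c, H(k))$ and $\mathcal{P}'^H(c, r)$ averaged over $H, k, m, r$. A one-way-to-hiding argument in the spirit of Lemma \ref{lem: qrom technical step}, combined with the observation that reprogramming a uniformly random oracle at a single point to a fresh uniformly random value yields an identically distributed oracle, reduces this trace distance to the probability that $\mathcal{P}'$ queries $H$ at the secret key $k$. Under the standard assumption that $\KeyGen$ produces keys with super-logarithmic min-entropy conditioned on the ciphertext $c$ (which holds for any meaningful $\QECM$, and in particular for the BB84-based construction of \cite{broadbent2019uncloneable} whose key is uniform), Corollary \ref{cor: dist entropy} implies that this query probability is negligible. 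The main obstacle will be carefully localizing the O2H bound to the pirate's queries alone, since the decoders do know $k$ and could trivially query $H$ there; this is possible precisely because the pirate's output state on $\mathsf{A}\otimes\mathsf{B}$ is committed to before the decoders receive $k$, so the full statistical difference between the two games originates in the pirate's key-oblivious execution and cannot be amplified by the fixed post-pirate processing.
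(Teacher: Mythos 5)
Your proof mirrors the paper's in the WKD part (the same collision-probability bound yields the diamond-norm estimate), and the core of the security argument is also shared: both you and the paper reduce the $\Pi_H$-security game to the $\Pi$-security game by replacing $H(k)$ with an independent uniform string at the pirate's interface, and both bound the resulting statistical deviation via a one-way-to-hiding argument by the pirate's probability of querying $H$ at $k$. You correctly flag that the O2H bound must be localized to the pirate; the paper handles this by carrying the oracle register $\ket{H}\bra{H}$ through the closeness claim \eqref{eq: states almost equal 2_QECM}, so that the decoders' post-pirate processing is identical on both sides.

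The genuine gap is in your final step, the bound on the pirate's query probability at $k$. You invoke an assumption --- that $\KeyGen$ produces keys with super-logarithmic min-entropy conditioned on the ciphertext --- that the lemma does not state, and your proof does not derive it. The paper instead closes this step entirely within the lemma's hypotheses: if the pirate queried $H$ at $k$ with non-negligible probability, one could measure a random query to extract $k$, decrypt the challenge ciphertext, and forward the plaintext to both decoders, giving a cloning attack on $\Pi$ that contradicts its $t(\lambda)$-unclonable security. Your min-entropy property is in fact a consequence of that same observation (at least when $\lambda - t(\lambda) = \omega(\log\lambda)$), but that derivation is precisely what your argument owes and does not supply. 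A secondary issue: the quantity you actually need to bound is the extraction probability $\leq 2^{-\Hmin/2}$, which appears inside the proof of Lemma \ref{lem: qrom distinguishing}; Corollary \ref{cor: dist entropy}, which you cite, bounds a distinguishing advantage rather than the extraction probability, and moreover requires $\Hmin \geq \lambda^\epsilon$, which is strictly stronger than super-logarithmic.
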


\begin{proof}
Correctness is clearly preserved. Let us first verify the WKD property of the scheme $\Pi_H =(\KeyGen',\Enc',\Dec')$ in the QROM. It is not hard to see that the WKD property depends on the collision probability for the event that $H(k) = H(k')$, for some $k' \in \{0,1\}^\lambda \setminus \{k\}$. In fact, we can express the quantum channel $\Dec'_{k'} \circ \Enc'_k$ as follows:
$$
\Dec'_{k'} \circ \Enc'_k =  \Pr[\textsc{Coll}] \Dec_{k'} \circ \Enc_k + (1-\Pr[\textsc{Coll}]) \langle{\ket{\bot}\bra{\bot} \rangle}.
$$
Moreover, by the birthday bound, we have
\begin{align*}
\Pr[\textsc{Coll}]&=\Pr_H[\exists k' \in \{0,1\}^\lambda \setminus \{k\} \, : \, H(k) = H(k')]\\
&\leq \sum_{k' \in \{0,1\}^\lambda \setminus \{k\}}\Pr_H[H(k) = H(k')] = \frac{2^\lambda -1}{2^{2\lambda}}.
\end{align*}
Hence, we can readily verify the WKD property as follows:
\begin{align*}
&\| \Dec'_{k'} \circ \Enc'_k - \langle {\ket{\bot}\bra{\bot} \rangle} \|_\diamond \\ &= \underset{\rho_{MM'}}{\max} \| (\Dec'_{k'} \circ \Enc'_k \otimes \Id_{M'}) (\rho_{MM'}) - (\langle{\ket{\bot}\bra{\bot} \rangle} \otimes \Id_{M'}) (\rho_{MM'})  \|_1\\
&\leq   \underset{\rho_{MM'}}{\max} \left\| \Pr[\textsc{Coll}]  \Dec_{k'} \circ \Enc_k(\rho_{MM'}) -  \Pr[\textsc{Coll}](\langle{\ket{\bot}\bra{\bot} \rangle} \otimes \Id_{M'}) (\rho_{MM'}) \right\|_1 \\
&\leq \Pr[\textsc{Coll}]\underset{\rho_{MM'}}{\max} \Big(\|  \rho_{MM'} \|_1 + \| (\langle{\ket{\bot}\bra{\bot} \rangle} \otimes \Id_{M'}) (\rho_{MM'}) \|_1 \Big) \leq \frac{2^\lambda -1}{2^{2\lambda -1}} \, = \, \negl(\lambda).
\end{align*}
For security, let $\Pi=(\KeyGen,\Enc,\Dec)$ and recall that $\Enc'_k(\ket{m}\bra{m}) = (\Enc_k(\ket{m}\bra{m}),H(k))$ according to Constr. \ref{cons:key-detection}. Let $\mathcal{A} = (\mathcal{P},\mathcal{D}_{1},\mathcal{D}_{2})$ be an adversary against the unclonable security game with respect to $\Pi_H$. We give a reduction from the unclonable security of $\Pi$. Suppose that $\mathcal{P}$ receives access to a re-programmed oracle $H_{k,z}$, where $H(k)=z$, and that $\mathcal{P}$ makes at most $q=\poly(\lambda)$ queries in total.  Without loss of generality, we assume that $\mathcal{P}$ is specified by $(U O^H)^q$, for some unitary $U$. It is sufficient to argue that the following two global states are negligibly close in trace distance: 
\begin{align}& \mathbb{E}_H \mathbb{E}_k \mathbb{E}_{z} \ket{H}\bra{H} \otimes \ket{k}\bra{k} \otimes \left((U O^{H_{k,z}})^{q} \Enc_k(\ket{m}\bra{m}) \otimes \ket{z}\bra{z} \left((U O^{H_{k,z}})^{q}\right)^{\dagger}\right)\nonumber \\ \approx \,& \mathbb{E}_H \mathbb{E}_k \mathbb{E}_{z } \ket{H}\!\bra{H}\!\otimes\!\ket{k}\!\bra{k} \!\otimes\!\left((U O^{H})^{q} \Enc_k(\ket{m}\!\bra{m}) \!\otimes\!\ket{z}\!\bra{z}\left((U O^{H})^{q}\right)^{\dagger} \right)\,. \label{eq: states almost equal 2_QECM}
\end{align}
We use the one-way-to-hiding (Lemma \ref{lem: qrom technical step}) to deduce that the above distance is negligible, so long as the following quantity is negligible:
\begin{align}
\mathbb{E}_{H}\mathbb{E}_{k} \mathbb{E}_{z} \mathbb{E}_\nu \Tr{\ket{k}\!\bra{k} (U O^H)^\nu  \Enc_k(\ket{m}\!\bra{m}) \otimes \ket{z}\!\bra{z} \big((U O^H)^\nu\big)^\dag }\!.
\end{align}
Suppose for the sake of contradiction that the latter is non-negligible. Then, we can construct an adversary that wins at the unlconable security game against $\Pi$. The reduction is straightforward: the adversary for the unlconable security game runs $\mathcal{P}$ (by simulating the random oracle $H$) to extract $k$. The adversary then decrypts the challenge ciphertext using $k$, and forwards the appropriate message $m$ to the decoders $\mathcal{D}_{1}$ and $\mathcal{D}_{2}$. By the assumption that $\mathcal{P}$ succeeds with non-negligible probability, so does the adversary against the unclonable security of $\Pi$, yielding a contradiction.

Using Eq.~\ref{eq: states almost equal 2_QECM}, Lemma \ref{lem:TD_inequalities} and that $\Pi$ is $t(\lambda)$-unclonable secure it follows that, for all $\QPT$ cloning attacks $\mathcal{A} = (\mathcal{P},\mathcal{D}_{1},\mathcal{D}_{2})$
against $\Pi_H$, there exists a negligible function $\mu(\lambda)$ such that:
\begin{align*}
\underset{m}{\mathbb{E}} \, \underset{k \leftarrow \mathcal
{K}}{\mathbb{E}} \!\Tr{(\ket{m}\!\bra{m} \otimes \ket{m}\!\bra{m})(\mathcal{D}_{1,k} \otimes \mathcal{D}_{2,k}) \circ \mathcal{P} \circ \Enc_{k}(\ket{m}\!\bra{m}) } \leq 2^{- \lambda + t(\lambda)} + \mu(\lambda).
\end{align*}
We conclude that $\Pi_H$ is $t(\lambda)$-unclonable secure.
\qed\end{proof}

\subsection{Quantum copy-protection of multi-bit point functions from unclonable encryption schemes with wrong-key detection}\label{sec: multibit}

We are now ready to state our quantum copy-protection scheme for multi-bit point functions which we obtain from any unclonable encryption scheme with the aforementioned ``wrong-key detection mechanism''. Here, we consider multi-bit point functions $P_{y,m}$ of the form
$$ P_{y,m} (x) = \begin{cases} m   & \text{if } x = y\,,\\
    0^\lambda &\text{if } x \neq y \,,   \end{cases}  $$ 
where $y,m \in \{0,1\}^\lambda$. Our construction is the following

\begin{construction}[Quantum CP scheme for multi-bit point functions]\label{cons:cp_UQE}\ \\ To construct a CP scheme for multi-bit point functions with input and output sizes $\lambda$, respectively, let $\Pi = (\KeyGen,\Enc,\Dec)$ be a $\QECM$ with WKD, with security parameter and message length equal to $\lambda$. We define the CP scheme $(\mathsf{CP.Protect},\mathsf{CP.Eval})$ as follows:

\begin{itemize}
\item $\mathsf{CP.Protect}(1^{\lambda}, P_{y,m})$: Takes as input a security parameter $\lambda$ and a multi-bit point function $P_{y,m}$, succinctly specified by the marked input $y$ (of size $\lambda$) and message $m$ (of size $\lambda$), and  outputs the quantum ciphertext given by $\Enc_y(m)$.
\item $\mathsf{CP.Eval}(1^{\lambda}, \rho ; x)$: Takes as input a security parameter $\lambda$, an alleged copy-protected program $\rho$, and a string $x \in \{0,1\}^\lambda$ (the input on which the program is to be evaluated). Appends an ancillary qubit in the $\ket{0}$ state. Then, coherently\footnote{If $\Dec$ is not unitary, performing this measurement coherently requires purifying, or dilating, it} performs a two-outcome measurement to check whether $\Dec_x(\rho)$ is in the state $\ket{\bot}\bra{\bot}$, or not, and stores the resulting bit in the ancilla. If true, output $0^\lambda$. Otherwise, rewind the procedure and measure in the standard basis to obtain a message $m'$.
\end{itemize}
\end{construction}

Before stating our main theorem on the security of Constr. \ref{cons:cp_UQE}, we define the following two classes of distributions with respect to multi-bit point functions $P_{y,m}$ of the form
$$ P_{y,m} (x) = \begin{cases} m   & \text{if } x = y\,,\\
    0^\lambda &\text{if } x \neq y \,.   \end{cases}  $$
First, we let $D = \{D_{\lambda}\}$ be an ensemble of distributions $D_{\lambda}$ over multi-bit point functions $P_{y,m}$ over $\{0,1\}^{\lambda}$ that sample a marked input $y$ as well an output message $m$ uniformly at random with respect to $\{0,1\}^{\lambda}$. Further, by $D' = \{D_y\}$ we denote an arbitrary ensemble of challenge distributions, where each $D_y$ is a distribution of challenge input pairs to the program.
We now prove the security of Construction~\ref{cons:cp_UQE} with respect to the aforementioned distributions $D,D'$.

\begin{theorem}\label{thm:point-function-CP-from-UCE}
Let $\Pi = (\KeyGen,\Enc,\Dec)$ be any $t(\lambda)$-unclonable secure $\QECM$ with WKD such that $
\lambda - t = \omega(\log \lambda)$. Then, Construction \ref{cons:cp_UQE} yields a secure quantum copy-protection for multi-bit point functions with respect to the pair of ensembles $(D, D')$, against  computationally-bounded adversaries.
\end{theorem}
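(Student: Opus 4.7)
The plan is to reduce the copy-protection security of Construction \ref{cons:cp_UQE} to the $t(\lambda)$-unclonable security of the underlying $\QECM$ $\Pi$, via a black-box reduction that identifies the ``hard'' sub-event of the pirating game, namely $\{x_1 = x_2 = y\}$, with the cloning attack of Definition \ref{def:unclonable_security}.

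Given any $\QPT$ adversary $(\mathcal{P}, \mathcal{F}_1, \mathcal{F}_2)$ for the pirating game, I build a cloning attack $(\mathcal{P}', \mathcal{D}_1, \mathcal{D}_2)$ against $\Pi$ as follows. Since $\mathsf{CP.Protect}(1^\lambda, P_{y,m})$ is by construction equal to $\Enc_y(m)$, the pirate $\mathcal{P}' := \mathcal{P}$ is used unchanged. Each decoder $\mathcal{D}_i$, upon receiving the secret key $y$ and its register $\mathsf{R}_i$, runs $\mathcal{F}_i$ with challenge input $x_i := y$ on $\mathsf{R}_i$ and returns $\mathcal{F}_i$'s answer verbatim. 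By the $t(\lambda)$-unclonable security of $\Pi$ and the hypothesis $\lambda - t(\lambda) = \omega(\log \lambda)$, the success probability of this cloning attack,
\[
\mathbb{E}_{y, m \leftarrow U_\lambda}\bigl[\Pr[\mathcal{F}_1(\rho_1, y) = m \wedge \mathcal{F}_2(\rho_2, y) = m]\bigr],
\]
is at most $2^{-\lambda + t(\lambda)} + \negl(\lambda) = \negl(\lambda)$.

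Next, I bound the pirating-game winning probability $W$ by decomposing the expectation defining $W$ into four cases according to whether $x_1 = y$ and whether $x_2 = y$, under the joint sampling $y, m \leftarrow U_\lambda$ and $(x_1, x_2) \leftarrow D_y$. The contribution of $\{x_1 = x_2 = y\}$ forces both freeloaders to output the string $m$ on input $y$ and is therefore bounded above by the cloning success probability, which is negligible. In the three remaining cases at least one freeloader faces a non-marked input whose correct evaluation is the constant string $0^\lambda$; I then argue that the contribution of these cases to $W$ cannot exceed the trivial ``forward-the-program-and-guess-$0^\lambda$'' strategy, whose winning probability is exactly $p^{\mathrm{triv}}$.

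The main obstacle is handling the mixed cases, in which exactly one of $x_1, x_2$ equals $y$: a priori, the adversary might use $\Enc_y(m)$ to correlate $\mathcal{F}_i$'s answer on a non-marked input with the hidden message $m$ and thereby beat the trivial $0^\lambda$ guess on the non-marked side. I plan to rule this out by a short hybrid argument switching $\Enc_y(m)$ for $\Enc_y(m')$ with a fresh $m' \leftarrow U_\lambda$, where the hybrid indistinguishability on the non-marked side follows from the WKD property (which ensures that honest decryption with any $x \neq y$ yields $\ket{\bot}\bra{\bot}$ up to negligible error) together with a lopsided application of unclonable security in which one decoder is set to the trivial constant-output one. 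After the switch the freeloader's output on a non-marked input is statistically independent of $m$, and combining with the case analysis above yields $W \le p^{\mathrm{triv}} + \negl(\lambda)$, which is the desired security bound.
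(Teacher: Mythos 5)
Your construction of the cloning attack from the ``both inputs equal $y$'' sub-event matches the paper's reduction exactly, and your high-level four-case decomposition is the right starting point. The gap is in the mixed cases, and the hybrid argument you sketch does not close it.

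First, switching $\Enc_y(m)$ for $\Enc_y(m')$ requires some form of ciphertext indistinguishability, which is not implied by $t(\lambda)$-unclonable security alone. The WKD property constrains only the \emph{honest} $\Dec$ procedure (it says nothing about what an arbitrary $\mathcal{F}_2$ computes from its register), and ``unclonable security with one decoder set to a constant'' does not yield any indistinguishability guarantee; unclonable security is a statement about simultaneous decryption, not about hiding $m$ from a single party. Second, and more fundamentally, you cannot apply the hybrid ``only on the non-marked side'': the pirate and both freeloaders all process descendants of the single ciphertext $\Enc_y(m)$, and in the mixed case $x_1 = y \neq x_2$ the freeloader $\mathcal{F}_1$ holds $y$ and can genuinely decrypt. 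Replacing $m$ by $m'$ therefore changes $\mathcal{F}_1$'s view as much as $\mathcal{F}_2$'s, so the hybrid is distinguishable and the argument collapses.

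The paper avoids this entirely by a purely information-theoretic trick with no additional cryptographic assumption. By non-signalling, $\Pr[\mathcal{F}_i \text{ succeeds} \mid x_1 = y = x_2] = \Pr[\mathcal{F}_i \text{ succeeds} \mid x_i = y \neq x_{3-i}]$, and by the union bound
\[
\Pr[\mathcal{A} \text{ wins} \mid x_1 = y = x_2] \;\geq\; \Pr[\mathcal{A} \text{ wins} \mid x_1 = y \neq x_2] + \Pr[\mathcal{A} \text{ wins} \mid x_1 \neq y = x_2] - 1.
\]
Combined with a WLOG ordering $\Pr[x_1 = y \neq x_2] \leq \Pr[x_1 \neq y = x_2]$ this yields $\Pr[\mathcal{A} \text{ wins}] \leq p^{\mathrm{triv}} + 2\Pr[\mathcal{A} \text{ wins} \mid x_1 = y = x_2]$, after which only your reduction for the both-marked case is needed. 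You should replace the hybrid sketch with this inequality; otherwise the mixed cases remain unbounded.
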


\begin{proof} The correctness of the CP scheme follows directly from the WKD property of the $\QECM$. Let $\mathcal{A} = (\mathcal{P}_{\mathcal{A}},\mathcal{F}_{1},\mathcal{F}_{2})$ denote the adversary for $\mathsf{PiratingGame}$. We consider two cases, namely when $p^{\mathrm{triv}}_{D_\lambda,D_y}=1$ and when $p^{\mathrm{triv}}_{D_\lambda,D_y} <1$ (depending on the challenge distribution $D' = \{D_y\}$). In the former case, the scheme is trivially secure by definition and we are done. Hence, we will assume that $p^{\mathrm{triv}}_{D_\lambda,D_y}<1$ for the remainder of the proof. Note that, in this case, the distribution $D_y$ has non-zero weight on the marked input $y$.

Let $(x_1,x_2) \leftarrow D_y$ denote the inputs received by the freeloaders $\mathcal{F}_{1}$ and $\mathcal{F}_{2}$ during the challenge phase. 
We can express the probability that $\mathcal{A}$ succeeds at $\mathsf{PiratingGame}$ as follows:
\begin{align}
&\Pr[ \mathcal{A} \text{ wins}] \nonumber\\
&\! =\! \Pr[ \mathcal{A} \text{ wins} \,|\, x_1 \!\neq\! y \!\neq\! x_2] \!\cdot\!\Pr[x_1 \!\neq\! y \!\neq\! x_2]\!+\!\Pr[ \mathcal{A} \text{ wins} \,|\, x_1 \!=\! y \!\neq\! x_2] \!\cdot\!\Pr[x_1 \!=\! y \!\neq\! x_2]\nonumber\\
& \!+\!\Pr[ \mathcal{A} \text{ wins} \,|\, x_1 \!\neq\! y \!=\! x_2] \!\cdot\!\Pr[x_1 \!\neq\! y \!=\! x_2]\!+\!\Pr[ \mathcal{A} \text{ wins} \,|\, x_1 \!=\! y \!=\! x_2] \!\cdot\!\Pr[x_1 \!=\! y \!=\! x_2].
\end{align}
Without loss of generality, we assume that $\Pr[x_1 = y \!\neq\! x_2] \leq \Pr[x_1 \!\neq\! y = x_2]$. Hence,
\begin{align}
&\Pr[ \mathcal{A} \text{ wins}] \leq  \Pr[ \mathcal{A} \text{ wins} \,|\, x_1 \!\neq\! y \!\neq\! x_2] \cdot \Pr[x_1 \!\neq\! y \!\neq\! x_2]\nonumber\\
& +\big(\Pr[ \mathcal{A} \text{ wins} \,|\, x_1 = y \!\neq\! x_2]+\Pr[ \mathcal{A} \text{ wins} \,|\, x_1 \!\neq\! y = x_2]\big) \cdot \Pr[x_1 \!\neq\! y = x_2]\nonumber\\
& +\Pr[ \mathcal{A} \text{ wins} \,|\, x_1 = y = x_2] \cdot \Pr[x_1 = y = x_2]. \label{eq:A_wins}
\end{align}
Let us now state the following simple inequality. By first applying the union bound and then using that $\mathcal{F}_{1}$ and $\mathcal{F}_{2}$ are non-signalling, we find that:
\begin{align}
&\Pr[ \mathcal{A} \text{ wins} \,|\, x_1 = y =  x_2]\nonumber\\
&= \Pr[\mathcal{F}_{1} \text{ succeeds} \land \mathcal{F}_{2} \text{ succeeds} \,|\, x_1 = y = x_2]\nonumber\\
&\geq \Pr[\mathcal{F}_{1} \text{ succeeds} \,|\, x_1 = y = x_2] + \Pr[ \mathcal{F}_{2} \text{ succeeds} \,|\, x_1 = y = x_2] -1\nonumber\\
&= \Pr[\mathcal{F}_{1} \text{ succeeds} \,|\, x_1 = y \neq x_2] + \Pr[ \mathcal{F}_{2} \text{ succeeds} \,|\, x_1 \neq y = x_2] -1\nonumber\\
&\geq \Pr[\mathcal{A} \text{ wins} \,|\, x_1 = y \neq x_2] + \Pr[ \mathcal{A} \text{ wins} \,|\, x_1 \neq y = x_2] -1.
\end{align}
Plugging this into Eq.~\eqref{eq:A_wins}, we obtain the following upper bound:
\begin{align}
\Pr[ \mathcal{A} \text{ wins}] & \leq  \Pr[ \mathcal{A} \text{ wins} \,|\, x_1 \neq y \neq x_2] \cdot \Pr[x_1 \neq y \neq x_2]\nonumber\\
& +\big(1 + \Pr[ \mathcal{A} \text{ wins} \,|\, x_1 = y =  x_2]\big) \cdot \Pr[x_1 \neq y = x_2]\nonumber\\
& +\Pr[ \mathcal{A} \text{ wins} \,|\, x_1 = y = x_2] \cdot \Pr[x_1 = y = x_2]\nonumber\\
& \leq  \Pr[x_1 \neq y \neq x_2] + \Pr[x_1 \neq y = x_2]+2\Pr[ \mathcal{A} \text{ wins} \,|\, x_1 = y = x_2]\nonumber\\
&= p^{\mathrm{triv}}_{D_\lambda,D_y} + 2\Pr[ \mathcal{A} \text{ wins} \,|\, x_1 = y = x_2].\label{eq:A_wins3}
\end{align}
In the last line, we used the assumption that $\Pr[x_1 = y \neq x_2] \leq \Pr[x_1 \neq y = x_2]$ together with the following simple identity for the trivial guessing probability:
$$
p^{\mathrm{triv}}_{D_\lambda,D_y} = \Pr[x_1 \neq y \neq x_2] + \max\big\{\Pr[x_1 \neq y = x_2],\Pr[x_1 = y \neq x_2]\big\}.
$$
We complete the proof by showing that $\Pr[ \mathcal{A} \text{ wins} \,|\, x_1 = y = x_2] \leq \negl(\lambda)$. This implies that
\begin{align}
\Pr[ \mathcal{A} \text{ wins}] \leq p^{\mathrm{triv}}_{D_\lambda,D_y} + \negl(\lambda).\label{eq:CP_security_bound}
\end{align}
Suppose that $\mathcal{A} = (\mathcal{P}_{\mathcal{A}},\mathcal{F}_{1},\mathcal{F}_{2})$ succeeds with probability $p = \Pr[ \mathcal{A} \text{ wins} \,|\, x_1 = y = x_2]$ on the challenge pair consisting of $x_1 =y$ and $x_2=y$.
We will use $\mathcal{A}$ to construct an adversary against the unclonable security of the $\QECM$ scheme $\Pi$. Consider the $\QPT$ adversary $\mathcal{B} = (\mathcal{P}_{\mathcal{B}},\mathcal{D}_{1},\mathcal{D}_{2})$ against $\Pi$, which we define as follows:
\begin{itemize}
    \item $\mathcal{P}_{\mathcal{B}}$ receives the state $\rho = \Enc_y(\ket{m}\bra{m})$ and runs the pirate $\mathcal{P}_{\mathcal{A}}$ on $\rho$. Next, $\mathcal{P}_{\mathcal{B}}$ passes the two registers output by $\mathcal{P}_{\mathcal{A}}$ to the decoders $\mathcal{D}_{1}$ and $\mathcal{D}_{2}$.
    \item The decoders $\mathcal{D}_{1}$ and $\mathcal{D}_{2}$ each receive the marked input $y$ and then run the freeloaders $\mathcal{F}_{1}$ and $\mathcal{F}_{2}$, respectively, on the two registers prepared by $\mathcal{P}_{\mathcal{A}}$. Finally, the decoders output the outcomes obtained from running the freeloaders.
\end{itemize}
Since $\Pi$ is $t(\lambda)$-unclonable secure, there exists a negligible $\mu(\lambda)$ such that:
\begin{align*}
\underset{m}{\mathbb{E}} \, \underset{y}{\mathbb{E}} \,\, \Tr{(\ket{m}\bra{m} \otimes \ket{m}\bra{m})(\mathcal{D}_{1,y} \otimes \mathcal{D}_{2,y}) \circ \mathcal{P}_{\mathcal{B}} \circ \Enc_{k}(\ket{m}\bra{m}) } \leq 2^{- \lambda + t(\lambda)} + \mu(\lambda).
\end{align*}
Since $
\lambda - t = \omega(\log \lambda)$, we conclude that $\mathcal{A}$ succeeds with probability $p \leq \negl(\lambda)$.
This completes the proof of Eq.~\eqref{eq:CP_security_bound}, and thus the proof of Thm. \ref{thm:point-function-CP-from-UCE}.
\end{proof}

Finally, when applying the WKD transformation from Construction \ref{cons:key-detection} to the $\log_2(9)$-unclonable encryption scheme in~\cite{broadbent2019uncloneable}, we obtain the following theorem  with respect to the aforementioned distributions $D,D'$.

\begin{theorem}\label{thm:point-function-CP-from-UCE-general}
There exists a $\log_2(9)$-unclonable secure $\QECM$ scheme with WKD for which Construction \ref{cons:cp_UQE} yields a secure quantum copy-protection scheme for multi-bit point functions with respect to the pair of ensembles $(D, D')$, against  query-bounded (computationally bounded) adversaries in the QROM.
\end{theorem}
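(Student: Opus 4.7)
The plan is to combine the three main ingredients already established in the excerpt. First, I would invoke the Broadbent--Lord unclonable encryption scheme \cite{broadbent2019uncloneable}, which is a $\QECM$ scheme shown to be $\log_2(9)$-unclonable secure against $\QPT$ adversaries. Next, I would instantiate Construction \ref{cons:key-detection} with this scheme, appending a hash $H(k)$ of the secret key to the ciphertext, using a random oracle $H: \{0,1\}^\lambda \rightarrow \{0,1\}^\ell$ with $\ell = \omega(\log \lambda)$. By Lemma \ref{lem:generic_transf}, the resulting scheme $\Pi_H$ is a $\log_2(9)$-unclonable secure $\QECM$ with WKD in the QROM. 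This establishes the existence of the $\QECM$ guaranteed by the theorem.

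Second, I would verify that the parameter condition of Theorem \ref{thm:point-function-CP-from-UCE} is met: since $t(\lambda) = \log_2(9)$ is a constant, one has $\lambda - t(\lambda) = \lambda - O(1) = \omega(\log \lambda)$, which is easily sufficient. Plugging $\Pi_H$ into Construction \ref{cons:cp_UQE} and applying Theorem \ref{thm:point-function-CP-from-UCE} then yields a quantum copy-protection scheme for multi-bit point functions that is secure with respect to the ensemble pair $(D, D')$ against computationally bounded adversaries.

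The only subtlety is that Theorem \ref{thm:point-function-CP-from-UCE} is phrased in a plain-model sense (against computationally bounded adversaries), whereas the scheme $\Pi_H$ only achieves unclonable security in the QROM. I would therefore note that the reduction in the proof of Theorem \ref{thm:point-function-CP-from-UCE} relativizes: the adversary $\mathcal{B} = (\mathcal{P}_\mathcal{B}, \mathcal{D}_1, \mathcal{D}_2)$ constructed from the CP adversary $\mathcal{A}$ simply forwards any oracle queries to its own random oracle, and is query-bounded whenever $\mathcal{A}$ is. Hence the reduction goes through verbatim in the QROM, yielding security against query-bounded (and computationally bounded) adversaries with access to $H$.

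I expect no genuine obstacle here, since the heavy lifting was done in Lemma \ref{lem:generic_transf} and Theorem \ref{thm:point-function-CP-from-UCE}; the main task is just to check that the concrete $\log_2(9)$-unclonable Broadbent--Lord scheme satisfies the hypotheses of those results and that the reduction is compatible with the QROM. The mild care needed is in making sure no oracle query is lost in the reduction and that the trivial-guessing quantity $p^{\mathrm{triv}}_{D_\lambda, D_y}$ appears on the right-hand side of the final bound, exactly as in the proof of Theorem \ref{thm:point-function-CP-from-UCE}.
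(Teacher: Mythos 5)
Your proposal is correct and follows essentially the same route the paper takes: instantiate Construction \ref{cons:key-detection} with the Broadbent--Lord $\log_2(9)$-unclonable $\QECM$, invoke Lemma \ref{lem:generic_transf} to get a $\log_2(9)$-unclonable $\QECM$ with WKD in the QROM, check the (trivially satisfied) condition $\lambda - \log_2(9)=\omega(\log\lambda)$, and apply Theorem \ref{thm:point-function-CP-from-UCE}. Your remark that the reduction in Theorem \ref{thm:point-function-CP-from-UCE} relativizes to the QROM is a useful explicit observation that the paper leaves implicit.
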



\section{Quantum copy-protection of point functions}
\label{sec: main}
In this section, we present a quantum copy-protection scheme for point functions, and prove its security in the quantum random oracle model (QROM).

In what follows, we consider a copy-protection scheme for the class of point functions $P_y$ with marked input $y \in \{0,1\}^n$. Note that, for simplicity, we hand $y$ to the copy-protection algorithm as an input, rather than as a circuit for the point function $P_y$ itself. 

\begin{remark}\label{remark:lambda-n}
In the following description, the size of the marked input $n$ and the security parameter $\lambda$ are distinct variables (as they should be in principle). However, the security guarantee that we will prove is with respect to ensembles of programs for which $n=\lambda$. When copy-protecting an ensemble of programs, the level of security cannot be independent of the size of the inputs to the programs, since a pirate with access to the copy-protected program can always determine the whole truth table of the program in time $2^n$.
\end{remark}

Recall that for $v, \theta \in \{0,1\}^{\lambda}$, we use the notation $\ket{v^\theta} = H^{\theta} \ket{v}$, where $H^{\theta} = H^{\theta_1} \otimes \ldots \otimes H^{\theta_{\lambda}}$ is the Hadamard operator on $\lambda$ many qubits. Our construction is the following:

\begin{construction}[Copy-protection scheme for point functions]\label{cons:cp} Let $\lambda$ be the security parameter, and let $G: \{0,1\}^{n} \rightarrow \{0,1\}^{m(\lambda)}$ and $H: \{0,1\}^{m(\lambda)} \rightarrow  \{0,1\}^{\lambda}$ be hash functions, where $m(\lambda) > \lambda$, and $n$ is the input size of the point function. We define $(\mathsf{CP.Protect},\mathsf{CP.Eval})$ as follows:

\begin{itemize}
\item $\mathsf{CP.Protect}(1^{\lambda}, y)$: Takes as input a security parameter $\lambda$ and a point function $P_y$, succinctly specified by the marked input $y$ (of size $n$).
\begin{itemize}
    \item Set $\theta = G(y)$.
    \item Sample $v \leftarrow \{0,1\}^{m(\lambda)}$ uniformly at random. Let $z = H(v)$.
        \item Output $(\ket{v^{\theta}}, z)$.
\end{itemize}
\item $\mathsf{CP.Eval}(1^{\lambda}, (\rho, z) ; x)$: Takes as input a security parameter $\lambda$, an alleged copy-protected program $(\rho, z)$, and a string $x \in \{0,1\}^n$ (the input on which the program is to be evaluated).
\begin{itemize}
    \item Set $\theta' = G(x)$.
    \item Apply the Hadamard operator $H^{\theta'}$ to $\rho$. Append $n+1$ ancillary qubits, all in state $\ket 0$, and compute the hash function $H$ with input $\rho$ into the first $n$ of them (possibly making use of additional ancillary qubits). Then, coherently measure whether the first $n$ ancilla qubits are in state $\ket z$, recording the result in the last ancilla qubit, uncompute the hash function $H$ and undo the Hadamards $H^{\theta'}$. Finally, measure the last ancilla qubit to obtain a bit $b$ as output.
\end{itemize}
\end{itemize}
\end{construction}

In what follows, we will model both $G$ and $H$ in Construction \ref{cons:cp} as random oracles on the the appropriate domain and co-domains, i.e. we operate in the quantum random oracle model (QROM).
Before stating our main theorem about the security of Construction \ref{cons:cp}, we define the following class of distributions over programs.

\begin{itemize}
    \item $\mathcal{D}_{\mathsf{PF}\mbox{-}\mathsf{UNP}}$. The class of \textit{unpredictable point function distributions} $\mathcal{D}_{\mathsf{PF}\mbox{-}\mathsf{UNP}}$ consists of ensembles $D = \{D_{\lambda}\}$ where $D_{\lambda}$ is a distribution over point functions on $\{0,1\}^{\lambda}$ such that $P_y \leftarrow D_{\lambda}$ satisfies $\Hmin(y) \geq \lambda^\epsilon$ for some $\epsilon>0$.
\end{itemize}
We also define the following class of distributions over input challenges.
\begin{itemize}
\item $\mathcal{D}_{\mathsf{PF}\mbox{-}\mathsf{Chall}}$. An ensemble $D = \{D_y\}$, where each $D_y$ is a distribution over $\{0,1\}^{|y|} \times \{0,1\}^{|y|}$, belongs to the class $\mathcal{D}_{\mathsf{PF}\mbox{-}\mathsf{Chall}}$ if there exists an efficiently sampleable family $\{X_{\lambda}\}$  of distributions over $\{0,1\}^{\lambda}$ with $\Hmin(X_{\lambda}) \geq \lambda^\epsilon$, for some $\epsilon >0$, such that $D_y$ is the following distribution (where $\lambda = |y|$):
    \begin{itemize}
        \item with probability $1/3$, sample $x \leftarrow X_{\lambda}$, and output $(x,y)$. 
    \item with probability $1/3$, sample $x \leftarrow X_{\lambda}$, and output $(y,x)$. 
    \item with probability $1/3$, sample $x,x' \leftarrow X_{\lambda}$, and output $(x,x')$.  
    \end{itemize}
We say the ensemble $D$ is \emph{specified} by the ensemble $X_{\lambda}$.
\end{itemize}

We finally define two classes of distributions over pairs of programs and challenges.
\begin{itemize}
\item $\mathcal{D}_{\mathsf{PF}\mbox{-}\mathsf{pairs}\mbox{-}\mathsf{stat}}.$ This consists of pairs of ensembles $\left(D = \{D_{\lambda}\}, D' = \{D'_{y}\} \right)$ such that:
\begin{itemize}
    \item $D \in \mathcal{D}_{\mathsf{PF}\mbox{-}\mathsf{UNP}}$ and $D' \in \mathcal{D}_{\mathsf{PF}\mbox{-}\mathsf{Chall}}$.
    \item Let $D'$ be specified by the family  $\{X_{\lambda}\}$, and denote by  $\mathsf{MarkedInput}(D_{\lambda})$ the following distribution 
    over $\{0,1\}^{\lambda}$: sample $P_y \leftarrow D_{\lambda}$, and ouput $y$. We require the families $\{X_{\lambda}\}$ and $\{\mathsf{MarkedInput}(D_{\lambda})\}$ to be statistically indistinguishable. 
\end{itemize}  
\item $\mathcal{D}_{\mathsf{PF}\mbox{-}\mathsf{pairs}\mbox{-}\mathsf{comp}}.$ This is defined in the same way as $\mathcal{D}_{\mathsf{PF}\mbox{-}\mathsf{pairs}\mbox{-}\mathsf{stat}}$, except that we only require $\{X_{\lambda}\}$ and $\{\mathsf{MarkedInput}(D_{\lambda})\}$ to be \emph{computationally} indistinguishable. 
\end{itemize}

We are ready to state our main theorem about security of Construction \ref{cons:cp}.
\begin{theorem}
\label{thm: security}
There exists a constant $\delta^*>0$ such that, the scheme of Construction \ref{cons:cp}, with $m(\lambda) > 5\lambda$, is a $\delta^*$-secure quantum copy-protection scheme for point functions with respect to any pair of ensembles $(D, D') \in \mathcal{D}_{\mathsf{PF}\mbox{-}\mathsf{pairs}\mbox{-}\mathsf{stat}}$ ($\in \mathcal{D}_{\mathsf{PF}\mbox{-}\mathsf{pairs}\mbox{-}\mathsf{comp}}$), against query-bounded (computationally bounded) adversaries in the quantum random oracle model (assuming the existence of quantum-secure one-way functions).
\end{theorem}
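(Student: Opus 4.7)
The plan is to carry out the three-step argument sketched at the end of Section~1.2, which is the content of Lemmas \ref{lem: reduction to Gr}, \ref{lem: main bound} and \ref{lem: Gr to monogamy}. I will first treat the statistical case $(D,D')\in\mathcal{D}_{\mathsf{PF}\mbox{-}\mathsf{pairs}\mbox{-}\mathsf{stat}}$ and extend to the computational case by a single hybrid at the end.

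The first step reduces the pirating game to a two-player decision game $G_r$ of the following form: the pirate receives a state that looks like $(\ket{v^\theta},z)$, with $v$ uniform over $\{0,1\}^{m(\lambda)}$, $\theta$ uniform over $\{0,1\}^{m(\lambda)}$, and $z\in\{0,1\}^\lambda$ either equal to $H(v)$ or to a fresh uniform string; each freeloader, given its piece of $\mathcal{P}$'s output together with a challenge input $x_i$, must decide whether $x_i=y$. Replacing $\theta=G(y)$ by a uniform string is justified by Corollary \ref{cor: dist entropy} applied to the oracle $G$ and the high-min-entropy distribution of $y$, which is why we need $D\in\mathcal{D}_{\mathsf{PF}\mbox{-}\mathsf{UNP}}$. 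The ``both $x_i$ drawn from $X_\lambda$'' branch of $D_y$ contributes only additively, since on such challenges honestly outputting $0$ is correct with overwhelming probability once $\theta$ has been re-randomized.

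The core of the proof is Lemma \ref{lem: main bound}, a quantum two-player search-to-decision reduction: from any strategy in $G_r$ whose joint success probability exceeds $1-\delta^*$ for a suitable positive constant $\delta^*$, one constructs a modification in which the two freeloaders \emph{simultaneously} query $H$ at the secret preimage $v$ with non-negligible probability. This adapts the two-player O2H bound of \cite{broadbent2019uncloneable}, which gave a guessing-to-querying reduction, to the decision-to-querying setting that we need. As emphasized in the introduction this is the main technical obstacle: no classical union bound over query transcripts is available, because quantum query transcripts are not well defined and measuring a random query to extract the preimage loses a factor of order $q^{-2}$. The bound we obtain is necessarily lossy and includes the factor of $9$ which is what ultimately caps the security at the value $\delta^*\approx 10^{-4}$ quoted in the abstract.

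Finally, Lemma \ref{lem: Gr to monogamy} converts a simultaneous extraction of $v$ into a winning strategy for the monogamy-of-entanglement game of Section \ref{sec: monogamy}: the pirate's channel plays the role of the $\CPTP$ map $\Phi$, and each freeloader, after receiving $x_i=y$ and hence being able to compute $\theta=G(y)$ by a single call to $G$, simulates the random oracle $H$, picks a uniformly random query index and measures the corresponding query register to output a candidate for $v$. Because $v$ is uniform over $\{0,1\}^{m(\lambda)}$ and the only classical side information about it present in the pirate's view is $z=H(v)$ of length $\lambda$, we have $\Hmin(v|z)\geq m(\lambda)-\lambda$; the condition $m(\lambda)>5\lambda$ is exactly what is needed so that $\Hmin(v|z)>\tfrac{4}{5}m(\lambda)$, which by Corollary \ref{cor: monogamy entropy} applied with its $\lambda$ replaced by $m(\lambda)$ makes the extraction probability negligible. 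Combining with Step~2 yields $\Pr[\mathsf{PiratingGame}=1]\leq 1-\delta^*+\negl(\lambda)$. For $(D,D')\in\mathcal{D}_{\mathsf{PF}\mbox{-}\mathsf{pairs}\mbox{-}\mathsf{comp}}$, a single hybrid swapping $\mathsf{MarkedInput}(D_\lambda)$ for $X_\lambda$ via the computational indistinguishability assumption (with quantum-secure OWFs used to instantiate any pseudorandom primitives needed in the reduction) reduces to the statistical case.
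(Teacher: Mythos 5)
Your proposal correctly identifies the three lemmas that carry the main technical load (Lemmas~\ref{lem: reduction to Gr}, \ref{lem: main bound}, \ref{lem: Gr to monogamy}) and the overall strategy of (i) re-randomizing $\theta$ via the min-entropy of $y$ under the oracle $G$, (ii) passing through a search-to-decision argument in the QROM, and (iii) invoking monogamy of entanglement via Corollary~\ref{cor: monogamy entropy} with the parameter counting $m(\lambda) > 5\lambda$. Your explanation of where the condition $m(\lambda)>5\lambda$ comes from is correct. However, there are two concrete gaps that would prevent this outline from compiling into a proof.

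First, your description of the game $\mathsf{G}_r$ is not the one the paper uses, and the difference is not cosmetic. You describe a single-bit decision game (decide whether $z$ is real or fake), but the paper's $\mathsf{G}_r$ is an $r$-fold \emph{parallel repetition} in which the challenger samples $w\leftarrow\{0,1\}^r$, gives oracle access to $r$ oracles $H_1,\dots,H_r$ each independently re-programmed or not at $v$ according to $w_i$, and asks each freeloader to recover the full string $w$. This parallel repetition is not optional: Lemma~\ref{lem: main bound} gives an upper bound of $9p_{\max}+\poly(q)\sqrt{M}$, and with the natural $r=1$ game one has $p_{\max}=1/2$, so $9p_{\max}>1$ and the bound is vacuous. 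Only with $r\ge 4$, when $p_{\max}=2^{-r}$, does $9/2^r$ drop below $1$, and the optimization $1-4r\sqrt{\epsilon^*}>10/2^r$ with $r=5$ is precisely what produces $\epsilon^*=10^{-4}$. Your proposal never explains how the pirate in $\mathsf{G}_r$ is to use the single oracle it is given in the pirating game to simulate $r$ independently re-programmed oracles for the freeloaders (the paper does this by rewinding $\mathcal{F}_b$ via the gentle-measurement Lemma~\ref{lem:multi-gentle}), nor does it explain the transition from a single-bit distinguishing advantage to an $r$-bit recovery game.

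Second, the bridge from ``one freeloader's distinguishing advantage is bounded'' (Lemma~\ref{lem: 8}) to ``the pirating probability is bounded by $1-\delta^*$'' (Lemma~\ref{lem: 4}) is entirely absent from your sketch. This requires a non-trivial case analysis conditioned on which challenge type $(\theta_1,\theta_2)$ is sent (both random, first correct/second random, first random/second correct), a non-signaling argument to decouple the two freeloaders' conditional success probabilities, and a further case distinction (cases (a) and (b) in the paper) to handle the fact that Lemma~\ref{lem: 8} only constrains \emph{one} of the two freeloaders. This is what introduces the factor of $9$ between $\epsilon^*$ and $\delta^*$: the paper obtains $\delta^*=\epsilon^*/9$, so the actual constant is $\approx 1.1\times 10^{-5}$, slightly smaller than the $10^{-4}$ you quote. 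You also need a final hybrid ($H_5$ in the paper, where the pirate receives both basis challenges upfront) to convert the problem into a three-way indistinguishability statement that yields the baseline $1/3$; without it there is no clean bound $p(H_5)\le 1/3$ to anchor the chain of hybrids.
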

We emphasize that quantum-secure one-way functions are only needed for the computational version of Theorem \ref{thm: security}.

Correctness of Construction \ref{cons:cp} is immediate to verify, and makes use the fact that $G$ is a random oracle with a range that is sufficiently larger than its domain. The next section is devoted to proving security. 

\subsection{Proof of security}
\label{sec: security}

We assume that $(D, D') \in \mathcal{D}_{\mathsf{PF}\mbox{-}\mathsf{pairs}\mbox{-}\mathsf{stat}}$ (the case of $(D, D') \in \mathcal{D}_{\mathsf{PF}\mbox{-}\mathsf{pairs}\mbox{-}\mathsf{comp}}$ is analogous except for a slight difference in the proof of Lemma \ref{lem: new hybrids 1}, which we will point out). Moreover, following the notation for ensembles in $(D,D') \in \mathcal{D}_{\mathsf{PF}\mbox{-}\mathsf{pairs}\mbox{-}\mathsf{stat}}$, we let $D'$ be specified by an efficiently sampleable family $\{X_{\lambda}\}$  of distributions over $\{0,1\}^{\lambda}$. As discussed in Remark \ref{remark:lambda-n}, we assume that $n=\lambda$ in Construction \ref{cons:cp}.

We will prove Theorem \ref{thm: security} through a sequence of hybrids. 
\vspace{2mm}

\noindent $H_0$: This is the security game $\mathsf{PiratingGame}$ for the copy-protection scheme of Construction \ref{cons:cp}.
\begin{itemize}
    \item The challenger samples a point function $P_y \leftarrow D_{\lambda}$ with $y \in \{0,1\}^\lambda$, and sends the state $(\ket{v^\theta},z) \leftarrow \mathsf{CP.Protect}(1^\lambda, y)$, where $\theta=G(y)$ and $z = H(v)$, to the pirate $\mathcal{P}$.
    \item $\mathcal{P}$ creates a state on registers \textsf{A} and \textsf{B}, and sends \textsf{A} to $\mathcal{F}_1$ and \textsf{B} to $\mathcal{F}_2$.
    \item (\emph{input challenge phase:}) The challenger samples $(x_1, x_2) \leftarrow D_y'$ and sends $x_1$ to $\mathcal{F}_1$ and $x_2$ to $\mathcal{F}_2$. ($\mathcal{F}_1$ and $\mathcal{F}_2$ are not allowed to communicate).
    \item $\mathcal{F}_1$ and $\mathcal{F}_2$ return $b_1$ and $b_2$, respectively, and win if $b_1 = P_y(x_1)$ and $b_2 = P_y(x_2)$.
\end{itemize}
\vspace{2mm}

\noindent $H_1$: Same as $H_0$, except that in the input challenge phase the challenger samples $(x_1, x_2) \leftarrow D'_y$. Then, it sends $G(x_1)$ and $G(x_2)$ to $\mathcal{F}_1$ and $\mathcal{F}_2$ respectively (instead of sending $x_1$ and $x_2$ directly).
\vspace{2mm}

\noindent $H_2$: Same as $H_1$, except for the following. During the input challenge phase, the challenger samples $(x_1, x_2) \leftarrow D'_y$. Then, for $i \in \{1,2\}$, if $x_i \neq y$, the challenger samples $\theta_i' \leftarrow \{0,1\}^{m(\lambda)}$, and sends $\theta_i'$ to $\mathcal{F}_i$ instead of $G(x_i)$. 
\vspace{2mm}

\noindent $H_3$: Same as $H_2$, except that in the first step of the security game, the challenger samples $\theta \leftarrow \{0,1\}^{m(\lambda)}$ (as opposed to sampling $P_y\leftarrow D_{\lambda}$ and setting $\theta = G(y)$). Then, in the input challenge phase, if $x_i = y$, the challenger sends $\theta_i := \theta$ to $\mathcal{F}_i$, and sends a uniformly random $\theta_i'$ otherwise.
\vspace{2mm}

\noindent $H_4$: Same as $H_3$, except that the challenger samples $z \leftarrow \{0,1\}^{\lambda}$ instead of choosing $z=H(v)$. 
\vspace{2mm}

\noindent $H_5$: Same as $H_4$, except the pirate gets the challenge inputs $\theta_1$ and $\theta_2$ together with the copy-protected program.
\vspace{2mm}

In the rest of the section, we say that $x$ is a $0$-input ($1$-input) to a boolean function $f$, if $f(x)=0$ ($f(x) = 1$). We prove the following two lemmas, which together give Theorem \ref{thm: security}.

\begin{lem}
\label{lem: first}
For any adversary $\mathcal{A}$, 
$$ \Pr[\mathcal{A} \text{ wins } H_5] \leq \frac13 \,.$$
\end{lem}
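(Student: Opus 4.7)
The plan is to observe that in $H_5$ the pirate's entire view---quantum and classical---is marginally identical across the three sub-cases of $D'_y$, so the joint output of the adversary cannot match the winning pair specific to any one sub-case better than chance. The key fact is that $v\leftarrow\{0,1\}^{m(\lambda)}$ is uniform and never revealed to the adversary, so for every $\theta$,
$$
\mathbb{E}_{v}\bigl[\ket{v^{\theta}}\bra{v^{\theta}}\bigr]\;=\;\Id/2^{m(\lambda)},
$$
and likewise $z$ is an independent uniform string in $\{0,1\}^{\lambda}$ (since we are past $H_4$).

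First I would enumerate the three sub-cases of $D'_y$, each of probability $1/3$, and condition on the ``good'' event $\{x\neq y\}$ (and $\{x'\neq y\}$ in the third case), which holds with all but negligible probability by the assumption $\Hmin(X_\lambda)\geq \lambda^{\epsilon}$. The three cases require the freeloader pair $(b_1,b_2)$ to equal, respectively, $(0,1)$, $(1,0)$, and $(0,0)$, and produce challenge strings distributed as follows: (a) $\theta_1$ uniform independent of $\theta$, $\theta_2=\theta$; (b) $\theta_1=\theta$, $\theta_2$ uniform independent of $\theta$; (c) $\theta_1,\theta_2$ both uniform independent of $\theta$. A short calculation, marginalising over $v$ in cases (a) and (b) and over both $v$ and $\theta$ in case (c), then shows that in all three cases the joint distribution of the tuple $(\theta_1,\theta_2,z,\,\rho_{\mathsf{quantum}})$ available to the pirate is the same: three independent uniform strings together with an independently maximally mixed register on $m(\lambda)$ qubits.

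Because the full adversary $(\mathcal P,\mathcal F_1,\mathcal F_2)$ is a fixed (quantum) procedure applied to this common input distribution---with $\mathcal F_i$ simply receiving their own $\theta_i$ and a share of the pirate's output---the joint output $(b_1,b_2)$ has the same distribution across the three sub-cases. Writing $q$ for this common distribution,
$$
\Pr[\mathcal A\text{ wins }H_5]\;\leq\;\tfrac{1}{3}\bigl(q(0,1)+q(1,0)+q(0,0)\bigr)+\negl(\lambda)\;\leq\;\tfrac{1}{3}+\negl(\lambda),
$$
since the three distinct bit pairs have probabilities summing to at most $1$; the negligible slack is absorbed into the $\negl(\lambda)$ term of Definition~\ref{def: security copy protection}.

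\textbf{Main obstacle.} The argument is essentially information-theoretic once one marginalises over $v$ correctly, and the only subtlety is the degenerate event $x=y$ (or $x'=y$), in which the distribution of $(\theta_1,\theta_2)$ in one sub-case coincides with that of another. The min-entropy assumption on $X_\lambda$ bounds this event by a negligible quantity, which is harmlessly absorbed into the $\negl$ slack of the security definition; no further technique is needed for this lemma.
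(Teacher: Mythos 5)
Your proof is correct and follows the paper's approach: after averaging over $v$, the pirate's view in $H_5$ is maximally mixed (hence identical in distribution) across the three challenge sub-cases, so the adversary's joint output distribution over $(b_1,b_2)$ cannot depend on the case, and the three pairwise-disjoint required answers $(0,1)$, $(1,0)$, $(0,0)$ give the bound $\tfrac13$. Your explicit conditioning on the degenerate event $x=y$ (or $x'=y$) is in fact slightly more careful than the paper's own proof, which silently neglects this negligibly likely sub-case even though the lemma is stated without a slack term.
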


In the rest of the section, for any function $g: [0,1] \rightarrow [0,1]$, we use the notation 
$$ p(H_i) > g (p(H_j)) - \negl(\lambda)$$
as a shorthand for the following: for any adversary $\mathcal A$ for $H_j$, there exists an adversary $\mathcal{A}'$ for $H_i$ and a negligible function $\mu$ such that 
$$\Pr[\mathcal{A}' \text{ wins } H_i] > g(\Pr[\mathcal A \text{ wins } H_j]) - \mu(\lambda)\,.$$ 

Informally, one can think of $p(H_i)$ the optimal winning probability in hybrid $H_i$ (up to negligible functions in the security parameter).
\begin{lem}
\label{lem: h3 and h1}
There exists a constant $\delta^*>0$ such that \begin{equation}
    p(H_5) >  p(H_0) - 2/3 + \delta^* - \negl(\lambda) \,. \label{eq: lem 7}
\end{equation}
\end{lem}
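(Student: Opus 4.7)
The plan is to chain inequalities $p(H_i) \le p(H_{i+1}) + \epsilon_i$ along the hybrid sequence $H_0 \to H_1 \to H_2 \to H_3 \to H_4 \to H_5$, with the goal that $\sum_i \epsilon_i \le 2/3 - \delta^* + \negl(\lambda)$, so that composing the bounds yields the claimed inequality $p(H_5) \ge p(H_0) - 2/3 + \delta^* - \negl(\lambda)$. I aim to argue that the four transitions $H_0\to H_1$, $H_1\to H_2$, $H_2\to H_3$, and $H_4\to H_5$ each cost only $\negl(\lambda)$, while the single remaining transition $H_3\to H_4$ is where the full constant loss is concentrated.

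For the cheap transitions I would use standard QROM reasoning. Steps $H_1 \to H_2$ and $H_2 \to H_3$ each replace a random-oracle value $G(\cdot)$ on a high-min-entropy input (either $x_i \leftarrow X_\lambda$ with $x_i \ne y$, or the marked input $y$ whose distribution $\mathsf{MarkedInput}(D_\lambda)$ is statistically, or computationally, indistinguishable from $X_\lambda$) by a uniformly random string; Corollary \ref{cor: dist entropy} then bounds the loss by $\negl(\lambda)$, using the hypothesis $\Hmin(X_\lambda) \ge \lambda^\epsilon$. The transition $H_0\to H_1$, which replaces the freeloader input $x_i$ by $G(x_i)$, can be reduced to the same corollary via an intermediate hybrid in which the freeloader additionally receives an independent fresh $\tilde x_i \leftarrow X_\lambda$; for a query-bounded (resp.\ $\QPT$) adversary this swap is only negligibly different because $G$ acts as a QROM one-way function on high-entropy inputs, and this is where the quantum-secure OWF assumption enters in the computational case. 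Finally, $H_4 \to H_5$ is essentially free: an $H_5$ pirate that receives $(\ket{v^\theta},z,\theta_1,\theta_2)$ upfront runs the $H_4$ pirate on $(\ket{v^\theta},z)$ to produce a bipartite state $\sigma_{\mathsf{AB}}$, classically appends $\theta_i$ to the register destined for freeloader $i$, and hands those registers to the unchanged $H_4$ freeloaders. This preserves the $H_4$ winning probability exactly, giving $p(H_5)\ge p(H_4)$.

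The main obstacle is the remaining transition $H_3 \to H_4$, in which $z = H(v)$ is replaced by a uniformly random string. A naive application of the O2H lemma does not suffice here because the freeloaders output single bits rather than $H(v)$, so a genuine search-to-decision reduction is needed. My plan is to invoke the three technical lemmas flagged in the introduction: Lemma \ref{lem: reduction to Gr} recasts the $(H_3,H_4)$-distinguishing problem as winning an auxiliary ``simultaneous O2H'' game $G_r$; Lemma \ref{lem: main bound} provides the search-to-decision reduction, showing that any non-trivial advantage in $G_r$ forces both freeloaders to simultaneously query the oracle $H$ at $v$, with the inherent constant factor (the factor of $9$ flagged in the open problems) that produces the positive constant $\delta^*$; and Lemma \ref{lem: Gr to monogamy} converts such simultaneous querying at $v$ into a simultaneous extraction adversary for $v$ against the BB84 state $\ket{v^\theta}$, whose success probability Corollary \ref{cor: monogamy entropy} bounds by $\negl(\lambda)$ when applied with $h = m(\lambda) > 5\lambda$, using that $2^{-h}(1+1/\sqrt 2)^{m(\lambda)}$ is negligible in $\lambda$. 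The hardest step is Lemma \ref{lem: main bound}: the classical union-bound search-to-decision argument depends on per-party query transcripts that do not exist in the quantum setting (see the entangled-query example in the introduction), which is why the reduction is inherently lossy and the resulting $\delta^*$ is only a small positive constant rather than a negligible function.
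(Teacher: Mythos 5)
Your hybrid chain and allocation of costs --- negligible at $H_0\to H_1$, $H_1\to H_2$, $H_2\to H_3$, and $H_4\to H_5$, with the constant loss concentrated in $H_3\to H_4$ and discharged via Lemmas \ref{lem: reduction to Gr}, \ref{lem: main bound}, and \ref{lem: Gr to monogamy} --- are exactly the paper's proof (Lemmas \ref{lem: new hybrids 1}--\ref{lem: 5} feeding into Lemma \ref{lem: h3 and h1}), so the overall route is the same.

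Your justification for $H_0 \to H_1$, however, would not close the gap as written. In $H_1$ the freeloader receives \emph{less} information ($G(x_i)$ instead of $x_i$), so the content of Lemma \ref{lem: new hybrids 1} is that an $H_1$ adversary can \emph{simulate} an $H_0$ run; the paper's reduction has $\mathcal P'$ sample its own oracle $\hat G$, and has $\mathcal F_i'$ sample $x_i' \leftarrow X_\lambda$ and reprogram $\hat G$ at $x_i'$ to the received value $w_i = G(x_i)$, then argues the invoked adversary's view is negligibly far from an $H_0$ view because none of $\mathcal P, \mathcal F_1, \mathcal F_2$ queries $\hat G$ at $x_1'$ or $x_2'$ with non-negligible weight (by the min-entropy of $X_\lambda$), and because $X_\lambda$ is statistically (or computationally) indistinguishable from $\mathsf{MarkedInput}(D_\lambda)$. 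Appealing to ``$G$ acts as a QROM one-way function'' is not the right mechanism --- if anything, one-wayness cuts \emph{against} recovering $x_i$ from $G(x_i)$ --- and Corollary \ref{cor: dist entropy} is not what is used here. Relatedly, the quantum-secure OWF assumption in the computational case is not there to make $G$ one-way: it is used to replace the exponentially long description of $\hat G$ with a quantum-secure PRF so that $\mathcal P'$ can efficiently hand a description of $\hat G$ to the freeloaders. These are repairable mistakes, but the sketch you gave for the $H_0\to H_1$ step would not, on its own, yield the lemma.
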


Lemmas \ref{lem: first} and \ref{lem: h3 and h1} immediately imply that, for any adversary $\mathcal{A}$ for $H_0$,
$$\Pr[\mathcal{A} \text{ wins } H_0] < 1-\delta^* + \negl(\lambda)\,,$$ which gives Theorem \ref{thm: security}.

\begin{proof}[Proof of Lemma \ref{lem: first}]
Denote by $x_1$ and $x_2$ the inputs sampled by the challenger. There are three cases: $x_1$ and $x_2$ are both 0-inputs; $x_1$ is a 0-input and $x_2$ is a 1-input; $x_1$ is a 1-input and $x_2$ is a 0-input.
We argue that the density matrix $\rho$ that is handed to the pirate in all three cases is a maximally mixed state. More precisely,
when $P_y(x_1) = 0, P_y(x_2) = 0$, the state $\rho$ that the pirate receives is the following, which is completely independent of the oracle $H$: 
\begin{align}
&\mathbb{E}_{v, \theta, \theta', \theta'',z} \ket{v^{\theta}}\bra{v^{\theta}} \otimes \ket{\theta'}\bra{\theta'} \otimes \ket{\theta''}\bra{\theta''} \otimes \ket{z}\bra{z} \nonumber\\
&= \frac{\mathds{1}}{2^{3m+{\lambda}}}  \,, \label{eq: 15}
\end{align}

When $P_y(x_1) = 0, P_y(x_2) = 1$, the state $\rho$ that the pirate receives is again completely independent of the oracle, and is the following:
\begin{align}
&\mathbb{E}_{v, \theta, \theta',z} \ket{v^{\theta}}\bra{v^{\theta}} \otimes \ket{\theta'}\bra{\theta'} \otimes \ket{\theta}\bra{\theta}  \otimes \ket{z}\bra{z} \nonumber \\
&= \frac{\mathds{1}}{2^{3m+\lambda}} \,, \label{eq: 17}
\end{align}
where crucially that the state is still maximally mixed.

The third case is analogous to the second. Thus it is impossible, even for an unbounded pirate to distinguish the three cases with any advantage over random guessing.
\end{proof}

We prove Lemma \ref{lem: h3 and h1} by keeping track of how the optimal winning probability changes across hybrids. We break down the proof into the following lemmas.

\begin{lem}
\label{lem: new hybrids 1}
$p(H_1) \geq p(H_0) - \negl(\lambda)$.
\end{lem}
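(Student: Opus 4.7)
The plan is to reduce $H_0$ to $H_1$ via an oracle-reprogramming argument in the quantum random oracle model. Given an $H_0$ adversary $\mathcal{A} = (\mathcal{P}, \mathcal{F}_1, \mathcal{F}_2)$, I will construct an $H_1$ adversary $\mathcal{A}' = (\mathcal{P}, \mathcal{F}_1', \mathcal{F}_2')$ where the pirate is unchanged and each freeloader $\mathcal{F}_i'$, upon receiving $\theta_i = G(x_i)$ from the $H_1$ challenger, samples a fresh $\tilde{x}_i \leftarrow X_\lambda$ (efficiently, since $\mathcal{D}_{\mathsf{PF}\mbox{-}\mathsf{Chall}}$ requires $X_\lambda$ to be efficiently sampleable), reprograms its simulated copy of $G$ so that queries at $\tilde{x}_i$ return $\theta_i$ while all other queries are forwarded to the real oracle, and then runs the original $\mathcal{F}_i$ on classical input $\tilde{x}_i$ with this reprogrammed oracle, forwarding the resulting bit.

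For the analysis, I would check, case by case in $x_i$, that each freeloader's view in the simulation is close in total variation to its view in $H_0$. In the case $x_i \neq y$, the conditional distribution of $x_i$ is essentially $X_\lambda$ (up to negligible error from the event that $X_\lambda$ samples $y$), which already matches $\tilde{x}_i$'s distribution; in the case $x_i = y$, the conditional distribution is $\mathsf{MarkedInput}(D_\lambda)$, which is statistically close to $X_\lambda$ by the defining property of $\mathcal{D}_{\mathsf{PF}\mbox{-}\mathsf{pairs}\mbox{-}\mathsf{stat}}$. The oracle response at the freeloader's input equals $\theta_i$ in both worlds: in $H_0$ this is automatic since $\theta_i = G(x_i)$, while in the simulation it is enforced by the reprogramming. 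At any other input $z$, the simulation's oracle returns $G(z)$, matching the real oracle; the only potential mismatch would come from a query made by $\mathcal{F}_i$ at the point $\tilde{x}_i$ during an $H_0$ run (where $\mathcal{F}_i$ was never given $\tilde{x}_i$ as input), and such a query hits a specific high-entropy string only with negligible probability for any query-bounded $\mathcal{F}_i$. Since the pirate's output state is produced by the same $\mathcal{P}$ and the same encoded program in both experiments, all correlations between the two freeloaders that flow through the shared quantum register are preserved, so the joint success probability of $(\mathcal{F}_1', \mathcal{F}_2')$ in $H_1$ matches that of $(\mathcal{F}_1, \mathcal{F}_2)$ in $H_0$ up to a $\negl(\lambda)$ term.

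The main subtlety, and the reason the reprogramming step is necessary, is the coupling in $H_0$ between the hidden marking $y$, the challenge $x_i$ (which equals $y$ with probability $1/3$ under $D_y$), and the pirate's quantum state (which already depends on $y$ through $\theta = G(y)$). A naive reduction that simply handed a fresh $\tilde{x}_i$ to $\mathcal{F}_i$ without reprogramming would fail on the event $x_i = y$: the original $\mathcal{F}_i$, upon querying the oracle at its input $\tilde{x}_i$, would observe $G(\tilde{x}_i) \neq \theta$, so an honest-evaluation style check would incorrectly classify the input as a zero-input precisely in the scenario where winning matters most. Programming $G'(\tilde{x}_i) = \theta_i$ restores exactly the identity that the oracle at the input equals $\theta$, which is what $\mathcal{F}_i$ would have seen in the original game, and this is the unique repair that makes the joint distribution of the input, the oracle response at the input, and the pirate state agree between $H_0$ and the simulation.

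For the computational variant $(D, D') \in \mathcal{D}_{\mathsf{PF}\mbox{-}\mathsf{pairs}\mbox{-}\mathsf{comp}}$, the only change in the argument is that the statistical closeness of $X_\lambda$ and $\mathsf{MarkedInput}(D_\lambda)$ used above is replaced by computational closeness; the view-matching step then becomes a computational reduction to this indistinguishability, which ultimately relies on quantum-secure one-way functions as noted in Theorem~\ref{thm: security}. Apart from this substitution, both the construction of $\mathcal{A}'$ and its analysis are identical to the statistical case; this is the ``slight difference'' between the two settings flagged at the start of Section~\ref{sec: security}.
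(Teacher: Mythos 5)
Your proposal is a genuine alternative to the paper's reduction, and it is broadly sound, but a few points deserve flagging.

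\textbf{What you do differently.} The paper's $\mathcal{P}'$ replaces the pirate's oracle entirely: it draws a \emph{fresh} $\hat{G}$ (a truly random function in the statistical case, a PRF in the computational case), runs $\mathcal{P}$ against $\hat{G}$, forwards a description of $\hat{G}$ to the freeloaders, and the freeloaders reprogram $\hat{G}$ at $x_i'$ to $w_i = G(x_i)$. The proof then shows (i) one can merge all parties' oracles into a single $\hat{G}_{(x_1',w_1),(x_2',w_2)}$ because no party queries at the other's reprogrammed point, and (ii) one may further reprogram at $y$ before relabelling, since no party queries at $y$ either. Your construction instead keeps $\mathcal{P}$ unchanged (it uses the real $G$) and only wraps each freeloader's oracle to $G_{\tilde{x}_i, \theta_i}$. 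This is a cleaner simulation in one respect: there is no fresh $\hat{G}$ to distribute, so the wrapping is trivially efficient, and in the computational variant your reduction would not seem to require a quantum-secure PRF at all. In fact your final paragraph inherits the paper's remark that quantum-secure one-way functions are needed, but that requirement arises specifically because the paper's $\mathcal{P}'$ must pass an efficiently-describable $\hat{G}$ to the freeloaders; your construction avoids this, so you likely give up a potential strengthening by invoking one-way functions reflexively.

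\textbf{Where your analysis is thinner than it needs to be.} Your view-matching paragraph does not fully account for the ways your simulated world differs from $H_0$. First, your pirate and your two freeloaders see three \emph{different} oracles ($G$, $G_{\tilde{x}_1,\theta_1}$, $G_{\tilde{x}_2,\theta_2}$), whereas in $H_0$ they share the single oracle $G$; reconciling these requires the observation, à la one-way-to-hiding, that $\mathcal{P}$ never queries $\tilde{x}_1,\tilde{x}_2$ and $\mathcal{F}_i$ never queries $\tilde{x}_j$ for $j\neq i$ except with negligible probability. Second, and more subtly, your simulated oracle has ``hidden preimages'' absent from $H_0$: even after reprogramming, $G(y)=\theta$ still holds and $G(x_i)=\theta_i$ still holds at the points $y,x_i$ that the freeloaders never see. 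So conditioned on $x_i=y$, your freeloader's oracle has two preimages of the encoding basis $\theta$ (namely $\tilde{x}_i$ and $y$) where $H_0$'s oracle has (essentially) one. This discrepancy is harmless only because $y$ and $x_i$ retain high conditional min-entropy given everything the adversary sees, so they are never queried --- but that is an additional O2H step your writeup glides past. Your sentence attributing the ``only potential mismatch'' to $\mathcal{F}_i$ querying $\tilde{x}_i$ \emph{during an $H_0$ run} is confusing: $\tilde{x}_i$ does not exist in $H_0$, and the genuine concerns are queries at $y$ and $x_i$ in the simulated run and the cross-oracle inconsistency above. These gaps are all repairable with the same machinery the paper uses, and the underlying idea is correct, but as written the proof does not carry the load it claims.
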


\begin{lem}
\label{lem: new hybrids 2}
$|p(H_1) - p(H_2)| = \negl(\lambda)$.
\end{lem}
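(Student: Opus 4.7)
I would prove $|p(H_1) - p(H_2)| = \negl(\lambda)$ by a two-step hybrid argument that exploits the randomness of the oracle $G$ together with the high min-entropy of $X_\lambda$. Since $H_1$ and $H_2$ differ only in that $G(x_i)$ is replaced by a fresh uniformly random string whenever $x_i \neq y$, the natural tool is Corollary \ref{cor: dist entropy}: for any query-bounded (or computationally bounded) adversary with oracle access to $G$, the value $G(x)$ is indistinguishable from a uniformly random string of the appropriate length, provided $x$ is drawn from a distribution of min-entropy $\geq \lambda^{\epsilon}$.

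First I would introduce an intermediate hybrid $H_{1.5}$, identical to $H_1$ except that, during the challenge phase, if $x_1 \neq y$, the challenger sends a uniformly random $\theta_1' \in \{0,1\}^{m(\lambda)}$ to $\mathcal F_1$ in place of $G(x_1)$; $\mathcal F_2$ still receives $G(x_2)$ as in $H_1$. To bound $|p(H_1) - p(H_{1.5})|$, I would construct, from any adversary $\mathcal A = (\mathcal P, \mathcal F_1, \mathcal F_2)$, a distinguisher $\mathcal{D}^G$ for the game of Corollary \ref{cor: dist entropy}. $\mathcal D$ receives an external string $w$ (equal to $G(x^\ast)$ for $x^\ast \leftarrow X_\lambda$, or $w$ uniform) together with oracle access to $G$. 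It samples $P_y \leftarrow D_\lambda$, queries $G$ at $y$ to obtain $\theta = G(y)$, samples $v$ and simulates an independent random oracle $H$ internally (via $2q$-wise independence, which suffices against a $q$-query adversary) to compute $z = H(v)$, and runs $\mathcal P$ on $(\ket{v^\theta}, z)$. To generate the challenge, $\mathcal D$ samples $(x_1, x_2) \leftarrow D_y'$ but substitutes the external $x^\ast$ for $x_1$ in the two cases of $D_y'$ that would draw $x_1 \leftarrow X_\lambda$; in those cases it feeds $w$ to $\mathcal F_1$, while when $x_1 = y$ it feeds $\theta$. It obtains $\mathcal F_2$'s input via one additional $G$-query, runs the freeloaders, and outputs the indicator that both answered correctly.

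When $w = G(x^\ast)$, $\mathcal D$ simulates $H_1$ exactly; when $w$ is uniform, $\mathcal D$ simulates $H_{1.5}$ except on the event $x^\ast = y$, which has probability at most $2^{-\lambda^{\epsilon}}$ since $x^\ast$ and $y$ are independently sampled from distributions of min-entropy $\geq \lambda^{\epsilon}$. Consequently, the distinguishing advantage of $\mathcal D$ is at least $\tfrac 12 |p(H_1) - p(H_{1.5})| - \negl(\lambda)$, and Corollary \ref{cor: dist entropy} forces this to be negligible. A symmetric reduction, now replacing $G(x_2)$ by a uniformly random string for $\mathcal F_2$, yields $|p(H_{1.5}) - p(H_2)| = \negl(\lambda)$, and the triangle inequality closes the argument. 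The only delicate points in the plan are matching the case analysis of $D_y'$ so that $x^\ast$ appears in precisely the scenarios being modified, and verifying that $\mathcal D$ remains query-bounded; both are routine but must be spelled out to ensure the two simulated games differ only in the swap of $w$ for $G(x^\ast)$.
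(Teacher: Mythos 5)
Your argument is correct and uses the same key lemma as the paper (Corollary \ref{cor: dist entropy}), so it is in essence the same reduction: an adversary separating $H_1$ from $H_2$ yields a distinguisher between $G(X_\lambda)$ and $U_{m(\lambda)}$. The paper replaces $G(x_1)$ and $G(x_2)$ simultaneously, so its distinguisher consumes two samples and must invoke, without proof, the equivalence of single-sample and many-sample distinguishing for efficiently sampleable distributions. Your intermediate hybrid $H_{1.5}$ replaces one freeloader's input at a time, so each step needs only one sample and avoids that invocation; you also make explicit a few bookkeeping points the paper leaves implicit (internally simulating $H$ for the distinguisher, the negligible-probability collision $x^\ast=y$, and the extra $G$-queries needed to set up the remaining inputs). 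Both versions are sound; yours is slightly more self-contained at the cost of one extra hybrid.
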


\begin{lem}
\label{lem: new hybrids 3}
$|p(H_2) - p(H_3)| = \negl(\lambda)$.
\end{lem}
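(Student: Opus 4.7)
My plan is to reduce the indistinguishability of $H_2$ and $H_3$ to the one-way-to-hiding lemma in the form of Lemma \ref{lem: qrom technical step}. The only difference between the two hybrids is the distribution of the string $\theta$ appearing in the pirate's state $\ket{v^\theta}$ and in the freeloaders' inputs $\theta_i$ whenever $x_i = y$: in $H_2$ it is set to $G(y)$, whereas in $H_3$ it is sampled uniformly and independently of $G$. Because no component of $\mathcal{A} = (\mathcal{P}, \mathcal{F}_1, \mathcal{F}_2)$ ever sees $y$ directly---the pirate receives only $(\ket{v^\theta}, H(v))$ together with oracle access to $G, H$, while each freeloader receives $\theta_i$ (either $\theta$ or a fresh uniform string) and its register from the pirate---the two hybrids should be indistinguishable, leveraging that $y$, drawn from $\mathsf{MarkedInput}(D_\lambda)$ (statistically close to an ensemble $X_\lambda$ with $\Hmin(X_\lambda) \geq \lambda^\epsilon$), is hard to predict for any query-bounded algorithm.

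Concretely, I would invoke Lemma \ref{lem: qrom technical step} with oracle $H:=G$ and challenge distribution $X:=\mathsf{MarkedInput}(D_\lambda)$. The auxiliary state $\ket{\psi_y}$ would contain $y$ itself together with all the randomness needed to simulate the full challenger of $H_2/H_3$: coins to sample $v$, a representation of the random oracle $H$ (via, e.g., $2q$-wise independent hashing), coins for sampling $(x_1, x_2) \leftarrow D'_y$, and a workspace for the final check $b_i \stackrel{?}{=} P_y(x_i)$. The super-unitary $U$ would then prepare $(\ket{v^{\theta^\ast}}, H(v))$ from the ``input'' register holding $\theta^\ast$, run $\mathcal{A}$ while routing its $G$-queries through $O^G$ and simulating $H$ internally, use $y$ from $\ket{\psi_y}$ to sample $(x_1,x_2)\leftarrow D'_y$ and to send $\theta_i \in \{\theta^\ast, \theta'_i\}$ to each $\mathcal{F}_i$, and finally compute the winning bit. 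With $\Pi^0, \Pi^1$ chosen as the ``win''/``lose'' projectors on this bit, the left-hand side of Lemma \ref{lem: qrom technical step} equals $\tfrac{1}{2} + \tfrac{1}{2}(p(H_2) - p(H_3))$, yielding $|p(H_2) - p(H_3)| \leq 2(3q+2)q \cdot M$.

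The main obstacle is bounding $M$. Recall that $M$ is an expected norm of $\ket{y}\bra{y}$ applied to the \emph{oracle query register} at some intermediate step $k \leq q$ of the execution of $U$. The crucial point is that $\mathcal{A}$ itself never has access to $y$, and the challenger-simulation portion of $U$ uses $y$ only in registers \emph{disjoint} from the oracle query register (sampling $D'_y$ and evaluating $P_y(x_i)$ are classical computations that never write $y$ into the register fed to $O^G$). With this decomposition of $U$, the reduced state of the query register is independent of $y$, and the same averaging-over-$y$ argument used in the proof of Lemma \ref{lem: qrom distinguishing} (Jensen's inequality combined with $\Hmin(X_\lambda) \geq \lambda^\epsilon$) yields $M \leq \poly(q) \cdot 2^{-\lambda^\epsilon/2}$, which is negligible. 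Hence $|p(H_2) - p(H_3)| = \negl(\lambda)$.

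For the case $(D, D') \in \mathcal{D}_{\mathsf{PF}\mbox{-}\mathsf{pairs}\mbox{-}\mathsf{comp}}$, the same reduction goes through provided we wrap it in an outer hybrid invoking the computational indistinguishability between $\mathsf{MarkedInput}(D_\lambda)$ and $X_\lambda$, so that the min-entropy bound can be applied to the latter; the overall simulation is efficient, using quantum-secure pseudo-random functions (as afforded by the assumed existence of quantum-secure one-way functions) to simulate $H$ in place of $2q$-wise independent hashing.
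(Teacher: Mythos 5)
Your proposal is correct and rests on the same core idea as the paper's proof, namely applying one-way-to-hiding to the unpredictability of the marked input $y$ under the oracle $G$. The execution, however, is at a noticeably lower level of abstraction. The paper simply reduces to the pre-packaged distinguishing game of Corollary~\ref{cor: dist entropy}: the reduction $\mathcal{A}'$ simulates the entire challenger of $H_2/H_3$ \emph{outside} the O2H application, receives $\theta$ as an external challenge (either $G(x)$ for $x\leftarrow X_\lambda$ or uniform), and plugs it in as the basis choice; since the simulated challenger lives outside the box, Corollary~\ref{cor: dist entropy} can be applied with a $y$-independent auxiliary state and the whole $M$-bound machinery is invisible. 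You instead invoke Lemma~\ref{lem: qrom technical step} directly with a super-unitary $U$ that simulates the \emph{entire} game, which forces $y$ into the $x$-dependent auxiliary state $\ket{\psi_y}$. This pushes onto you the burden of showing that $M$ is still small even though the input state depends on $y$: you must argue that the $y$-dependent parts of $U$ (the case selector of $D'_y$, the equality checks $x_i\stackrel{?}{=}y$, the scoring) only ever touch registers disjoint from the oracle-query register, so that for each fixed $y$ the query register has a $y$-independent marginal and the same $\Hmin$ argument as in Lemma~\ref{lem: qrom distinguishing} applies. That observation is true for the specific $U$ you describe, but it is delicate (you cannot read it off from Lemma~\ref{lem: qrom technical step}, which places no such restriction on how $U$ may use $\ket{\psi_x}$), and it is precisely the kind of reasoning the paper avoids by modularizing into Corollary~\ref{cor: dist entropy}. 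One small inaccuracy: your closing remark about the computational case is not needed here. The bound on $M$ already follows from $\Hmin(\mathsf{MarkedInput}(D_\lambda))\geq\lambda^\epsilon$, which holds by definition of $\mathcal{D}_{\mathsf{PF}\mbox{-}\mathsf{UNP}}$ in both the statistical and computational settings; as the paper notes, the statistical/computational indistinguishability of $\mathsf{MarkedInput}(D_\lambda)$ and $X_\lambda$ is used only in Lemma~\ref{lem: new hybrids 1}.
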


\begin{lem}
\label{lem: 4}
$p(H_4) > p(H_3) - 2/3 + \delta^* - \negl(\lambda)$, for some constant $\delta^*>0$.
\end{lem}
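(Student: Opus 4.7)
The plan is to fix an $H_3$-adversary $\mathcal{A} = (\mathcal{P}, \mathcal{F}_1, \mathcal{F}_2)$ with winning probability $p_3$, and use $\mathcal{A}' = \mathcal{A}$ itself as an $H_4$-adversary; let $\tilde p_4$ denote its winning probability in $H_4$. Since $p(H_4) \geq \tilde p_4$, it suffices to prove the distinguishing-style bound $p_3 - \tilde p_4 \leq 2/3 - \delta^* + \negl(\lambda)$. The only difference between $H_3$ and $H_4$ is whether the classical string $z$ attached to the copy-protected state equals $H(v)$ or is independent uniform; thus $p_3 - \tilde p_4$ is exactly the advantage of the joint protocol (pirate plus freeloaders with their challenges) at distinguishing $z = H(v)$ from $z$ uniform.

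To bound this advantage I would proceed through the three-step chain advertised in the introduction. First, I would reduce the distinguishing advantage to a simultaneous-decision advantage of the two freeloaders: in the cases drawn from $D'_y$ where at least one freeloader receives $\theta_i = \theta$, it is that freeloader who must effectively check whether $z$ is consistent with $v$ in order to output the correct bit, so the advantage over the $z$-uniform baseline is carried by the two freeloaders' joint ability to recognize $H(v)$. Second, I would apply a two-party, entangled version of the one-way-to-hiding lemma (a simultaneous lift of Lemma \ref{lem: qrom technical step}) combined with a quantum search-to-decision reduction to convert this simultaneous-decision advantage into a simultaneous-extraction strategy: two algorithms, derived from $\mathcal{F}_1$ and $\mathcal{F}_2$ by stopping at a uniformly chosen query step and measuring, each produce $v$. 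Third, I would recognize this simultaneous extraction as an instance of the monogamy-of-entanglement game of Section \ref{sec: monogamy} with uniform basis $\theta \in \{0,1\}^{m(\lambda)}$ and message $v \in \{0,1\}^{m(\lambda)}$ of full min-entropy $m(\lambda) > 5\lambda$. Corollary \ref{cor: monogamy entropy} with $h = m(\lambda)$ then bounds the simultaneous extraction probability by $2^{-m(\lambda)}(1 + 1/\sqrt{2})^{m(\lambda)} \leq 0.981^{m(\lambda)}$, which is negligible in $\lambda$. Threading the quantitative estimates through the three steps and tracking the constants yields $p_3 - \tilde p_4 \leq 2/3 - \delta^* + \negl(\lambda)$ for a concrete positive constant $\delta^*$.

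The main obstacle is step two. Classically, a union bound over the two parties' separate oracle-query transcripts converts a simultaneous distinguishing advantage $\epsilon$ into a simultaneous extraction probability of order $\epsilon^2/q^2$, but in the quantum setting no per-party query transcript exists -- as the authors emphasize in the introduction via the superposition-oracle example, the information ``who queried $v$'' can be genuinely distributed between entangled registers of the two freeloaders. The lossy argument that replaces the classical transcript analysis therefore incurs an unavoidable multiplicative constant loss (the factor of $9$ flagged in the open-question list), and it is this loss that prevents the scheme from achieving negligible security and pins down the achievable value of $\delta^*$ at a small but strictly positive constant. Calibrating $\delta^*$ just below the threshold where the chained extraction bound ceases to be negligible gives the constant-security statement of the lemma.
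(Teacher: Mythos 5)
Your three-step chain (distinguishing $\to$ simultaneous O2H / search-to-decision $\to$ monogamy) matches the paper's high-level strategy, but the plan is missing the ingredient that makes step two actually yield a non-trivial bound: the $r$-fold parallel repetition of the distinguishing game. The simultaneous O2H lemma (Lemma~\ref{lem: main bound}) bounds the probability that the two freeloaders \emph{jointly decide correctly} by $9\,p_{\max}+\poly(q)\sqrt M$, where $p_{\max}$ is the trivial guessing probability for the decision task and $M$ is the simultaneous-query-at-$v$ quantity. If you apply this to the natural single-instance decision game (``is $z=H(v)$ or uniform?''), then $p_{\max}=1/2$ and the leading term is $9/2>1$, so the inequality tells you nothing: you cannot conclude that $M$ must be non-negligible, and the monogamy contradiction never fires. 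The paper gets around this by defining the $r$-fold game $\mathsf{G}_r$, where each party must report the bit pattern $w\in\{0,1\}^r$ for $r$ independently reprogrammed oracles; there $p_{\max}=2^{-r}$, and taking $r\ge 4$ makes $9/2^r<1$. This parallel repetition is not free: the reduction from the hybrid game to $\mathsf{G}_r$ requires running each freeloader's decision procedure $r$ times on the same residual state, which is where Lemma~\ref{lem:multi-gentle} (the iterated gentle-measurement bound) and the $1-4r\sqrt\epsilon$ loss in Lemma~\ref{lem: reduction to Gr} come from. Your description treats the factor of $9$ as a quantitative nuisance that merely degrades the achievable $\delta^*$; in fact, without the $r$-fold trick the argument collapses entirely, so ``calibrating $\delta^*$ below the threshold'' has no base case to calibrate against.

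A second, smaller omission: the paper's proof of Lemma~\ref{lem: 4} does not directly bound $p_3-\tilde p_4$ by one application of the extraction machinery. It first splits on the three challenge cases (both $\theta_i$ random; $\theta_1$ correct; $\theta_2$ correct). Case (i) is handled by a straightforward single-party O2H plus monogamy argument (negligible difference). For the correct-$\theta$ cases, the paper invokes Lemma~\ref{lem: 8} to get a bound $1-\epsilon^*$ on the \emph{per-freeloader} decision-difference for at least one $i$, then uses non-signalling and a union bound over sub-cases to turn this into a bound on the conditional winning-probability difference. The $1/3$ weights on the three cases together with the loss of $\epsilon^*/3$ in one case produce the final $\delta^*=\epsilon^*/9$. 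Your sketch acknowledges the case split implicitly but does not account for the $1/3\cdot\epsilon^*/3$ dilution, so even granting a fixed extraction bound it does not explain where the $2/3$ and the precise $\delta^*$ come from.
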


\begin{lem}
\label{lem: 5}
$p(H_5) \geq p(H_4)$.
\end{lem}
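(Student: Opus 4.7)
The plan is to observe that $H_5$ is obtained from $H_4$ purely by enlarging the pirate's view — the pirate now receives the challenge values $\theta_1,\theta_2$ together with the copy-protected program — while the behaviour of the challenger and freeloaders is otherwise unchanged. Because the sampling of $\theta_1,\theta_2$ depends only on $y$ and on $(x_1,x_2)\leftarrow D'_y$ (via the rule $\theta_i=\theta$ if $x_i=y$ and a fresh uniform string otherwise), and not on anything the pirate does, the challenger may equivalently perform this sampling up front. Thus $H_5$ is a well-defined variant of $H_4$ obtained by re-ordering independent randomness and appending $(\theta_1,\theta_2)$ to the pirate's input.

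Given any $H_4$-adversary $\mathcal A=(\mathcal P,\mathcal F_1,\mathcal F_2)$, I would define an $H_5$-adversary $\mathcal A'=(\mathcal P',\mathcal F_1,\mathcal F_2)$ where $\mathcal P'$ discards the extra inputs $\theta_1,\theta_2$ and runs $\mathcal P$ on the copy-protected program $(\ket{v^\theta},z)$, producing the same bipartite state on registers $\mathsf A,\mathsf B$. The freeloaders then receive the same registers and the same challenge values from the challenger as in $H_4$, so the joint distribution of outputs $(b_1,b_2)$ is identical. In particular,
\[
\Pr[\mathcal A' \text{ wins } H_5] \;=\; \Pr[\mathcal A \text{ wins } H_4],
\]
which yields $p(H_5)\geq p(H_4)$ after taking a supremum over adversaries.

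There is no real technical obstacle; the lemma is a monotonicity statement, expressing that additional input cannot hurt an adversary. The only point that merits mention is the justification for moving the sampling of $\theta_1,\theta_2$ to the beginning of the experiment, which is immediate from the independence of this sampling from the pirate's actions conditional on $y$ and $(x_1,x_2)$. I would therefore present the proof as a short two-sentence reduction preceded by this observation about independent randomness.
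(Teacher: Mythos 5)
Your proof is correct and takes essentially the same approach as the paper's one-line argument: the pirate in $H_5$ simply disregards the extra inputs $(\theta_1,\theta_2)$ and simulates the $H_4$ pirate, so the joint output distribution is unchanged. The one cosmetic difference is that the paper's reduction has $\mathcal P'$ \emph{forward} $\theta_1,\theta_2$ to $\mathcal F_1,\mathcal F_2$ rather than discard them — a choice that is insensitive to whether one reads $H_5$ as still having the challenger deliver the challenges to the freeloaders — but since in either reading the pirate holds $(\theta_1,\theta_2)$ and could forward them anyway, both versions are valid and the winning probabilities coincide.
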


Lemma \ref{lem: h3 and h1} follows immediately from Lemmas \ref{lem: new hybrids 1}-\ref{lem: 5}. The crux is Lemma \ref{lem: 4}.

\begin{proof}[Proof of Lemma \ref{lem: new hybrids 1}]
Let $\mathcal A = (\mathcal{P}, \mathcal{F}_1, \mathcal{F}_2)$ be a query-bounded adversary that wins with probability $p$ in $H_0$. We will construct an adversary $\mathcal{A}'$ that wins with probability at least $p-\negl(\lambda)$ in hybrid $H_1$. The adversary $\mathcal{A}' = (\mathcal{P}', \mathcal{F}_1', \mathcal{F}_2')$ is the following:
\begin{itemize}
    \item Upon receiving $(\ket{\Psi}, z)$ from the challenger, where $\ket{\Psi} = \ket{v^{G(y)}}$ for some string $y \in \{0,1\}^{\lambda}$, $\mathcal{P}'$ samples a uniformly random function $\hat{G}: \{0,1\}^{\lambda} \rightarrow \{0,1\}^{m(\lambda)}$ (in the computational version of the argument, $\mathcal{P}'$ samples instead a function $\hat{G}$ from a quantum-secure pseudo-random function ($\mathsf{PRF}$) family. Note that a quantum-secure $\mathsf{PRF}$ family can be constructed from any quantum-secure one-way function \cite{zhandry2012construct}). Subsequently, $\mathcal{P}'$ runs $\mathcal{P}$ on input $(\ket{\Psi}, z)$, using $\hat{G}$ to respond to queries to $G$. $\mathcal{P}'$ forwards the two registers $\textsf{A}$ and $\textsf{B}$ output by $\mathcal{P}$ to $\mathcal{F}_1'$ and $\mathcal{F}_2'$, respectively, together with a description of $\hat{G}$.
    \item $\mathcal{F}_1'$ and $\mathcal{F}_2'$ receive $w_1 = G(x_1)$ and $w_2 = G(x_2)$ respectively from the challenger, for some $x_1$ and $x_2$. $\mathcal{F}_1'$ and $\mathcal{F}_2'$ sample $x_1', x_2' \leftarrow X_{\lambda}$. $\mathcal{F}_1'$ runs $\mathcal{F}_1$ on input $(x_1', \textsf{A})$ and responds to oracle queries to $G$ using $\hat{G}_{x_1', w_1}$, where 
    $$
    \hat{G}_{x_1', w_1}(x) =  \begin{cases} \hat{G}(x)    & \text{if } x \neq x'\,,\\
    w_1 &\text{if } x=x_1'\,.   \end{cases}
    $$
    $\mathcal{F}_1'$ returns to the challenger the output of $\mathcal{F}_1$.
    $\mathcal{F}_2'$ proceeds analogously.
\end{itemize}

We claim that the success probability of $(\mathcal{P}',\mathcal{F}_1',\mathcal{F}_2')$ is at least $p - \negl(\lambda)$. We leave the full details of the proof to Appendix \ref{app: 1}, as these are lengthy but not particularly enlightening. Here, why provide some intuition about the reduction. The crucial observation is that there is nothing special about the the marked input $y$ in $H_0$ other than the fact that the oracle $G$ maps it to the correct basis choice used by the challenger. What $\mathcal{P}'$ does (without ever seeing $y$) is come up with $x_1'$ and $x_2'$, and then reprogram the oracle so that $x_1'$ and $x_2'$ map to the (correct or incorrect) basis choices he received as part of the $H_1$ game. Crucially, $x_1$ and $x_2$ are not sampled uniformly at random, but from the distribution $X_{\lambda}$ which is statistically indistinguishable from the distribution of marked inputs from which $y$ is sampled (by definition of the challenge distribution $\mathcal{D}_{\mathsf{PF}\mbox{-}\mathsf{pairs}\mbox{-}\mathsf{stat}}$). This makes the distribution of the game played by the invoked adversary $(\mathcal{P}, \mathcal{F}_1, \mathcal{F}_2)$ essentially the same as in the game $H_0$. The only difference between the two games is that they involve slightly different oracles, respectively $\hat{G}$ and $G$. Replacing one with the other can be done without affecting the winning probability by more than a negligible amount as long as $P$ does not query at $x_1'$ and $x_2'$ with non-negligible probability, which is the case.

We point out that the proof of this lemma is the only step in the proof of Theorem \ref{thm: security} in which one uses the indistinguishability condition in the definition of  $\mathcal{D}_{\mathsf{PF}\mbox{-}\mathsf{pairs}\mbox{-}\mathsf{stat}}$ (and of $\mathcal{D}_{\mathsf{PF}\mbox{-}\mathsf{pairs}\mbox{-}\mathsf{comp}}$ for the computational version of the result). This completes the proof of Lemma \ref{lem: new hybrids 1}. We point out that the first step in the proof of the computational version of Lemma \ref{lem: new hybrids 1} is to argue that using a pseudorandom function ($\mathsf{PRF}$) instead of a uniformly random function $\hat{G}$ does not change the success probability of $(\mathcal{P}', \mathcal{F}_1', \mathcal{F}_2')$ more than negligibly, which is straightforward. The rest of the proof for the version of Theorem \ref{thm: security} with $\mathcal{D}_{\mathsf{PF}\mbox{-}\mathsf{pairs}\mbox{-}\mathsf{comp}}$ is analogous to the proof for the version with  $\mathcal{D}_{\mathsf{PF}\mbox{-}\mathsf{pairs}\mbox{-}\mathsf{stat}}$.
\vspace{1mm}
\end{proof}

\begin{proof}[Proof of Lemma \ref{lem: new hybrids 2}]
Suppose for a contradiction that the lemma is false. Let $\mathcal{A} = (\mathcal{P}, \mathcal{F}_1, \mathcal{F}_2)$ be a query-bounded adversary that wins $H_1$ with probability noticeably higher than $H_2$ (the reverse case being similar). We construct a query-bounded adversary $\mathcal{A}'$ that distinguishes samples from the distributions $G(X_{\lambda})$ and $U_{m(\lambda)}$, which cannot exist by Corollary \ref{cor: dist entropy}. We assume here that $\mathcal A'$ has access to two samples from either $G(X_{\lambda})$ or $U_{m(\lambda)}$ (this assumption is justified by the equivalence of single-sample and polynomially many-sample distinguishing tasks for efficiently sampleable distributions). $\mathcal{A}'$ receives as input two samples $\theta_1, \theta_2 \in \{0,1\}^{m(\lambda)}$. $\mathcal{A}'$ then simulates the challenger in a copy-protection game of hybrid $H_1$ with $\mathcal{A}$, in the following way: in the input challenge phase, it samples $(x_1,x_2) \leftarrow D'_y$. For each $i \in \{1,2\}$, if $x_i \neq y$, $\mathcal A'$ sends $\theta_i$ to $\mathcal{F}_i$ (otherwise sends $G(y)$). If $\mathcal A$ wins the game, $\mathcal{A}'$ guesses that the sample was from $G(X_{\lambda})$, otherwise that it was from $U_{m(\lambda)}$. 
\end{proof}

\begin{proof}[Proof of Lemma \ref{lem: new hybrids 3}]
The proof is similar to the previous lemma. Suppose for a contradiction that the lemma is false. Let $\mathcal{A} = (\mathcal{P}, \mathcal{F}_1, \mathcal{F}_2)$ be a query-bounded adversary that wins $H_2$ with probability noticeably higher than $H_3$ (the reverse case being similar). We construct a query-bounded adversary $\mathcal{A'}$ that distinguishes samples from the distributions $G(X_{\lambda})$ and $U_{m(\lambda)}$, which cannot exist by Corollary \ref{cor: dist entropy}. $\mathcal{A}'$ receives as input a challenge $\theta \in \{0,1\}^{m(\lambda)}$. $\mathcal{A}'$ then simulates the challenger in a copy-protection game of hybrid $H_2$ with $\mathcal{A}$ as follows. In the input challenge phase, it samples $(x_1,x_2) \leftarrow D_y'$. For each $i \in \{1,2\}$, if $x_i = y$, $\mathcal A'$ sends $\theta$ to $\mathcal{F}_i$. If $\mathcal A$ wins the game, $\mathcal{A}'$ guesses that the sample was from $G(X_{\lambda})$, otherwise that it was from $U_{m(\lambda)}$. 
\end{proof}

\begin{proof}[Proof of Lemma \ref{lem: 5}] Given an adversary $\mathcal{A}$ that wins with probability $p$ in $H_4$, the adversary $\mathcal{A}'$ for $H_5$ which acts identically to $\mathcal{A}$ (i.e. $\mathcal{P}$ ignores the additional inputs $\theta_1$ and $\theta_2$ received at the start, and simply forwards them to $\mathcal{F}_1$ and $\mathcal{F}_2$ respectively) clearly wins with probability $p$.

\end{proof}

We move to the crux of the proof, Lemma \ref{lem: 4}. We break down the proof of Lemma \ref{lem: 4} into a few technical lemmas.

\begin{lem}
\begin{align}
\Big|\Pr \Big[(\mathcal{P}, \mathcal{F}_1, \mathcal{F}_2)  &\text{ win } H_4 \,|\, x_1 \text{ is a 0-input and } x_2 \text{ is a 0-input} \Big] \nonumber\\&- \Pr\Big[(\mathcal{P}, \mathcal{F}_1, \mathcal{F}_2) \text{ win } H_3 \,|\, x_1 \text{ is a 0-input and } x_2 \text{ is a 0-input}\Big] \Big| = \negl(\lambda)\label{eq: lem 3}
\end{align}

\end{lem}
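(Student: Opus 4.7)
The plan is to reduce the claim to a standard application of the one-way-to-hiding (O2H) machinery developed earlier. Conditioned on $x_1 \neq y$ and $x_2 \neq y$, the only syntactic difference between $H_3$ and $H_4$ is whether the pirate receives $z = H(v)$ or a fresh uniform $z \in \{0,1\}^{\lambda}$; importantly, under this conditioning, both freeloaders receive uniformly random $\theta_1', \theta_2'$ that are independent of $(v, \theta, y, H)$. I will absorb the challenger's sampling of $y$, of the conditional pair $(x_1, x_2)$ from $D'_y$, and of the freeloaders' random labels $\theta_1', \theta_2'$, together with the pirate and both freeloaders, into a single query-bounded meta-distinguisher $\mathcal{D}^H$ that accepts if and only if $b_1 = b_2 = 0$. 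With this packaging, the difference of the two conditional winning probabilities is precisely the distinguishing advantage of $\mathcal{D}^H$ between the inputs $(\ket{v^\theta}, H(v))$ and $(\ket{v^\theta}, z)$, given oracle access to $H$.

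Next, I would invoke Lemma \ref{lem: qrom technical step} with the role of $x$ played by $v$, and with (mixed) auxiliary state $\rho_v = \mathbb{E}_{\theta} \ket{v^\theta}\bra{v^\theta}$ (the remark immediately following the lemma permits this). This bounds the advantage of $\mathcal{D}^H$ by $(3q+2)q\, M$, where $M$ is, by the second statement of the lemma, controlled up to equivalence by the expectation
\[
\mathbb{E}_{H}\mathbb{E}_{v}\mathbb{E}_{z}\mathbb{E}_{k} \, \bigl\| \ket{v}\bra{v}\, (U O^H)^k \bigl(\ket{z} \otimes \rho_v\bigr) \bigr\| ,
\]
i.e.\ the probability that the algorithm ``extracts'' $v$ upon measuring a uniformly chosen intermediate query register.

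The key estimate is that this extraction probability is negligible. Because $v$ is drawn uniformly and independently of both the random oracle $H$ and the string $z$, neither provides any information about $v$; hence the extraction probability is upper bounded by the optimal guessing probability for $v$ given only $\rho_v$. Since $\rho_v = \bigotimes_{i=1}^{m(\lambda)} \tfrac{1}{2}\bigl(\ket{v_i}\bra{v_i} + \ket{v_i^1}\bra{v_i^1}\bigr)$ is a product state, the optimal guessing probability factorizes, and the single-qubit value equals $\tfrac{1}{2} + \tfrac{1}{2\sqrt{2}}$ (attained by the Breidbart basis). Thus the extraction probability is at most $\bigl(\tfrac{1}{2} + \tfrac{1}{2\sqrt{2}}\bigr)^{m(\lambda)}$, which is negligible in $\lambda$ whenever $m(\lambda) > 5\lambda$. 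Multiplying by the polynomial $(3q+2)q$ preserves negligibility, yielding the claim.

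The main obstacle I foresee is the bookkeeping in the first step: ensuring that every quantity depending on $v$ is packaged inside $\rho_v$, while all other randomness ($y$, the conditional $(x_1,x_2)$, and the random $\theta_i'$) is simulated internally by $\mathcal{D}^H$. The 0-input conditioning is crucial here, since it is exactly what prevents $\theta$ from leaking to any freeloader; without it, the auxiliary state in the reduction would depend on $v$ through the revealed $\theta$, and the clean bound on the single-qubit guessing probability would no longer apply.
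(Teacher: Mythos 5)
Your proposal is correct and reaches the same conclusion as the paper, but via a noticeably more elementary final step. The paper's proof and yours both begin by invoking the O2H lemma (Lemma~\ref{lem: qrom technical step}) to reduce the distinguishing advantage between $H_3$ and $H_4$, conditioned on both challenges being $0$-inputs, to the probability of querying the oracle at $v$ at a random point of the execution (the quantity $M$). The paper then bounds $M$ by reducing to the monogamy-of-entanglement game (Lemma~\ref{lem: monogamy}): an extractor for $v$ is run by the referee $\mathcal{A}$, who forwards $v$ to both $\mathcal{B}$ and $\mathcal{C}$, contradicting the $\left(\tfrac12+\tfrac{1}{2\sqrt 2}\right)^{\lambda}$ bound. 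You instead bound $M$ directly by the \emph{single-party} guessing probability for $v$ given only $\rho_v = \mathbb{E}_\theta \ket{v^\theta}\bra{v^\theta}$, using the fact that $\rho_v$ is a tensor product so the conditional min-entropy is additive (giving the familiar Breidbart bound per qubit). Both routes produce the same numerical bound, but yours only needs the conjugate-coding uncertainty relation for a single observer rather than the full two-party monogamy theorem of \cite{tomamichel2013monogamy}; this makes the argument more self-contained at the cost of restating a standard guessing computation. One small imprecision: the quantity $M$ in Lemma~\ref{lem: qrom technical step} is an expectation of a norm, not a norm-squared, so it equals the square root of the extraction probability (via Jensen), not the extraction probability itself; this changes the exponent by a factor of two but preserves negligibility. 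Also worth noting: the $\left(\tfrac12+\tfrac{1}{2\sqrt 2}\right)^{m(\lambda)}$ bound is already negligible whenever $m(\lambda)=\omega(\log\lambda)$; the condition $m(\lambda)>5\lambda$ is needed elsewhere (e.g. in applications of Corollary~\ref{cor: monogamy entropy}) but is not actually required for this lemma.
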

\begin{proof}
Notice that the task of distinguishing $H_3$ and $H_4$ is precisely amenable to the one-way-to-hiding lemma (in its form of Lemma \ref{lem: qrom technical step}). By an application Lemma \ref{lem: qrom technical step}, it is sufficient to show that the probability that an adversary, with access to all of the information received in $H_4$ (including the input phase challenges, which in this case are independent of the correct basis choice), queries the oracle at $v$ is negligible. The latter is true, since an adversary violating this straightforwardly gives an adversary which contradicts Lemma \ref{lem: monogamy}. 
\end{proof}
The next lemma is the crucial step in the proof of Lemma \ref{lem: 4}.

\begin{lem}
\label{lem: 8}
For any bounded strategy $(\mathcal{P}, \mathcal{F}_1, \mathcal{F}_2)$, there exists a negligible function $\mu$ such that, for all $\lambda$, there exists an $i \in \{1,2\}$ such that: 
\begin{align}\Big|\Pr&\big[\mathcal{F}_i \text{ returns 1 in } H_4 \,|\, x_i \text{ is a 1-input } \big] \nonumber \\
& -\Pr\big[\mathcal{F}_i \text{ returns 1 in } H_3 \,|\, x_i \text{ is a 1-input } \big] \Big| < 1-\epsilon^*  + \mu(\lambda)\,, \label{eq: lem 44}
\end{align}
where we can take $\epsilon^* = 10^{-4}$.

\end{lem}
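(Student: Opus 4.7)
The plan is to prove the contrapositive by contradiction: suppose that for some bounded strategy $(\mathcal{P},\mathcal{F}_1,\mathcal{F}_2)$ there exists a non-negligible $\eta>0$ such that for \emph{both} $i\in\{1,2\}$,
$$ \Big|\Pr[\mathcal{F}_i \text{ returns } 1 \text{ in } H_4 \mid x_i \text{ is a 1-input}] - \Pr[\mathcal{F}_i \text{ returns } 1 \text{ in } H_3 \mid x_i \text{ is a 1-input}]\Big| \;\geq\; 1-\epsilon^*+\eta $$
for infinitely many $\lambda$. I will derive a contradiction with the monogamy-of-entanglement bound of Lemma \ref{lem: monogamy}/Corollary \ref{cor: monogamy entropy} applied to strings of length $m(\lambda)>5\lambda$, where the MoE winning probability is at most $\left(\tfrac12+\tfrac{1}{2\sqrt{2}}\right)^{m(\lambda)}$, which is negligible in $\lambda$.

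The first technical step is a \emph{simultaneous search-to-decision reduction}. Conditioned on $x_i$ being a 1-input, $\mathcal{F}_i$ receives the correct basis $\theta$, so the difference between $H_3$ and $H_4$ reduces to distinguishing $z=H(v)$ from a uniformly random $z$, where both freeloaders share quantum information produced by $\mathcal{P}$ from $\ket{v^\theta}$. The standard one-way-to-hiding lemma (Lemma \ref{lem: qrom technical step}) applied to a single freeloader would upper bound its decision advantage by a single-player query probability at $v$, but what I need is a statement saying that if both $\mathcal{F}_i$ distinguish well, then both must be querying $H$ at $v$ \emph{simultaneously} at some step in their respective runs. I will capture this by introducing an intermediate game $G_r$ (Lemma \ref{lem: reduction to Gr}) in which a uniformly chosen query round of each $\mathcal{F}_i$ is measured to extract $v$. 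Lemma \ref{lem: main bound} will then give the core inequality of the reduction: the simultaneous distinguishing advantage is controlled, up to a constant multiplicative loss of $9$, by the simultaneous $v$-extraction probability in $G_r$.

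The second step is to repackage the simultaneous extraction as an MoE adversary. From $\mathcal{P}$ together with the query-measurements of $\mathcal{F}_1,\mathcal{F}_2$, I build (Lemma \ref{lem: Gr to monogamy}) a $\CPTP$ map $\Phi$ sending the $m(\lambda)$-qubit BB84 state $\ket{v^\theta}\!\bra{v^\theta}$ into registers $\mathsf{B},\mathsf{C}$, together with POVMs $\{B_\theta^x\}$, $\{C_\theta^x\}$ that, on input $\theta$, output $v$ on both sides with probability comparable to the $G_r$-extraction probability. Since $v$ is sampled uniformly on $\{0,1\}^{m(\lambda)}$ with min-entropy $m(\lambda)>5\lambda$, Corollary \ref{cor: monogamy entropy} forces this joint probability to be at most $\bigl(1+1/\sqrt{2}\bigr)^{m(\lambda)}\cdot 2^{-m(\lambda)} = \negl(\lambda)$. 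Chaining the two steps, the assumed non-negligible distinguishing advantages of both freeloaders would imply a non-negligible joint extraction probability, contradicting the MoE bound. The only arithmetic left is to optimize the resulting constant, which yields $\epsilon^*=10^{-4}$.

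The main obstacle is precisely the simultaneous search-to-decision reduction underlying Lemma \ref{lem: main bound}. A naive approach — take the classical argument described in the introduction and apply it independently to each freeloader — fails for two independent reasons. First, in the quantum case a random measured query extracts $v$ only with probability suppressed by $O(q^{-2})$, so a union bound on two-sided extraction, starting from a two-sided decision advantage $1-\epsilon^*$ that is only barely below $1$, loses too much. Second, as emphasized in the introduction, after entangled parallel runs the information of \emph{who queried which input} is itself distributed quantum information, and so one cannot access separate query transcripts; hence even the superposition-oracle framework does not give a black-box search-to-decision reduction here. Avoiding both of these pitfalls, at the cost of the constant factor $9$, is exactly the content of the simultaneous O2H-with-extraction argument I invoke from Lemma \ref{lem: main bound}, and it is the sole reason for the pessimistic value of $\epsilon^*$.
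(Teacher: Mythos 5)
Your proposal follows exactly the paper's route: reduce, via Lemma~\ref{lem: reduction to Gr}, a simultaneous distinguishing advantage $1-\epsilon^*$ for both freeloaders to a $\mathsf{G}_r$ adversary winning with probability $1-4r\sqrt{\epsilon^*}$; bound the $\mathsf{G}_r$ winning probability by $9/2^r + \negl$ using the simultaneous O2H of Lemma~\ref{lem: main bound} together with the monogamy-of-entanglement corollary (Lemma~\ref{lem: Gr to monogamy}); then pick $r=5$ to get $\epsilon^*=10^{-4}$. This matches the paper's proof of Lemma~\ref{lem: 8}; the only small slip is describing $\mathsf{G}_r$ itself as "measuring a random query round" — that measurement actually belongs to the subsequent reduction from $\mathsf{G}_r$ to the monogamy game, while $\mathsf{G}_r$ is the $r$-fold oracle-distinguishing game.
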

The proof of Lemma \ref{lem: 8} is fairly involved, and constitutes the main technical work in this paper. We break the proof down into two parts: 
\begin{itemize}
    \item First, we show that an adversary violating Lemma \ref{lem: 8} can be used to construct an adversary that wins at a certain $r$-fold parallel repetition of an appropriate distinguishing game (where one should think of $r$ as a small constant), which we call $\mathsf{G}_r$ (Lemma \ref{lem: reduction to Gr}).
    \item Next, we prove a technical lemma (Lemma \ref{lem: main bound}, a generalization of a similar lemma in \cite{broadbent2019uncloneable}), which allows us to upper bound the probability of a query-bounded adversary winnning in $\mathsf{G}_r$ (Lemma \ref{lem: Gr to monogamy}). Such an upper bound relies on Lemma \ref{lem: main bound} as well as on properties of monogamy of entanglement games.
\end{itemize}
We discuss informally below, after the description of the game $\mathsf{G}_r$, the reason why we consider the $r$-fold parallel repetition of a natural distinguishing game.

Let $\epsilon >0$. Suppose for a contradiction that there exists a strategy $(\mathcal{P},\mathcal{F}_1, \mathcal{F}_2)$ such that, for all negligible functions $\mu$, there exists some $\lambda$ such that, for all $i \in \{1,2\}$,
\begin{align}\Big|\Pr&[\mathcal{F}_i \text{ returns 1 in } H_4 | x_i \text{ is a 1-input } ] \nonumber \\
& -\Pr[\mathcal{F}_i \text{ returns 1 in } H_3 | x_i \text{ is a 1-input } ] \Big| \geq 1-\epsilon + \mu(\lambda) \,. \label{eq: lem -7}
\end{align}

Notice, in particular, that the above straightforwardly implies that, for all negligible functions $\mu$, there exist \emph{infinitely many} $\lambda$'s such that, for all $i \in \{1,2\}$,
\begin{align}\Big|\Pr&[\mathcal{F}_i \text{ returns 1 in } H_4 | x_i \text{ is a 1-input } ] \nonumber \\
& -\Pr[\mathcal{F}_i \text{ returns 1 in } H_3 | x_i \text{ is a 1-input } ] \Big| \geq 1-\epsilon + \mu(\lambda) \,. \label{eq: lem -3}
\end{align}

Let $r \in \mathbb{N}$. Given such $(\mathcal{P},\mathcal{F}_1,\mathcal{F}_2)$, we will construct an adversary $(\mathcal{P}', \mathcal{F}_1', \mathcal{F}_2')$ that succeeds at the following game $\mathsf{G}_r$ between a challenger and an adversary $(\mathcal{P}', \mathcal{F}_1', \mathcal{F}_2')$ with probability at least $1 - g(\epsilon)$, for some continuous, monotonically increasing function $g(\epsilon)$ such that $g(0)=0$. Given a function $H$, we denote by $H_{x,y}$ the function such that $H_{x,y}(x') = H (x')$ for all $x' \neq x$, and $H_{x,y}(x) = y$. We describe game $\mathsf{G}_r$ as follows:

\begin{itemize}
    \item[(i)] The challenger samples $\theta, v \leftarrow \{0,1\}^{m(\lambda)}$, $z' \leftarrow \{0,1\}^{\lambda}$, and $w \leftarrow \{0,1\}^r$ as well as a random oracle $H: \{0,1\}^{m(\lambda)} \rightarrow \{0,1\}^{\lambda}$. The challenger then sends $(\ket{v^{\theta}}, z)$ with $z=H(v)$ to $\mathcal{P}'$ who gets oracle access to $H_1,..,H_r$, where
    $$H_i = \begin{cases} H    & \text{if } w_i=0\,,\\
    H_{v,z'} &\text{if } w_i=1 \,.   \end{cases}
    $$ 
    \item[(ii)] $\mathcal{P}'$ creates a bipartite state $\sigma$, and sends its two subsystems to $\mathcal{F}'_1$ and $\mathcal{F}'_2$, respectively.
    \item[(iii)] The challenger sends $\theta$ to both $\mathcal{F}'_1$ and $\mathcal{F}'_2$.
    \item[(iv)] $\mathcal{F}'_1$ and $\mathcal{F}'_2$ (who cannot communicate) return $w'$ and $w''$ in $\{0,1\}^r$ respectively to the challenger.
\end{itemize}

$(\mathcal{P}', \mathcal{F}'_1, \mathcal{F}'_2)$ win if $w' = w'' = w$.
\vspace{2mm}

It might seem mysterious why we consider such $r$-fold parallel repetition of the more natural distinguishing game with $r=1$. The reason will become more apparent later in the proof. For now, informally the reason is the following: in order to complete the proof of security, we will need to employ a decision version of a one-way-to-hiding lemma for entangled parties. This does not straightforwardly follow from the search version in \cite{broadbent2019uncloneable}, because of a factor of $9$ loss in the security compared to the non-entangled version the lemma. The parallel repetition allows to overcome the security loss, and to obtain a non-trivial security statement.

\vspace{2mm}
The reduction we use to construct $(\mathcal{P}', \mathcal{F}_1', \mathcal{F}_2')$ winning at game $\mathsf{G}_r$ uses the so-called ``Gentle Measurement Lemma'' or ``Almost As Good As New Lemma'' \cite{Winter99,Aaronson_2005}: if a measurement succeeds with high probability, then the post-measurement state conditioned on success, 
is close to the initial state. Specifically, we need the following lemma, which is a consequence of the aforementioned lemmas.

\begin{lem}\label{lem:multi-gentle}
Let $r \in \mathbb{N}$, let $\ket\psi$ to be a unit vector, and let $\{\Pi_i\}_{i \in [r]}$ be such that for every $i \in [r]$: $$\|\Pi_i\ket\psi\|_2^2\ge1-\epsilon.$$ 
Then, it follows that:
\begin{equation*}
    \|\Pi_r\Pi_{r-1}...\Pi_1\ket\psi\|_2^2\ge 1- 2r\sqrt{\epsilon}.
\end{equation*}
\end{lem}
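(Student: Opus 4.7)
The plan is to iterate a single-step estimate: if a projector $\Pi$ satisfies $\|\Pi\ket\psi\|^2 \geq 1-\epsilon$, then $\|(I-\Pi)\ket\psi\|^2 \leq \epsilon$, hence $\|\ket\psi - \Pi\ket\psi\| \leq \sqrt{\epsilon}$. This is essentially the content of Lemma \ref{lem:gentle} specialized to pure states and projective measurements, and it is where the $\sqrt{\epsilon}$ appears.

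Next, I would define the intermediate (subnormalized) vectors $\ket{\psi_i} = \Pi_i \Pi_{i-1} \cdots \Pi_1 \ket\psi$ for $i = 0, 1, \dots, r$, with the convention $\ket{\psi_0} = \ket\psi$, and prove by induction on $i$ that
\begin{equation*}
\bigl\|\ket\psi - \ket{\psi_i}\bigr\| \leq i\sqrt{\epsilon}.
\end{equation*}
The inductive step uses the triangle inequality together with the fact that projectors are contractive:
\begin{equation*}
\bigl\|\ket\psi - \Pi_i\ket{\psi_{i-1}}\bigr\| \leq \bigl\|\ket\psi - \Pi_i\ket\psi\bigr\| + \bigl\|\Pi_i(\ket\psi - \ket{\psi_{i-1}})\bigr\| \leq \sqrt{\epsilon} + \bigl\|\ket\psi - \ket{\psi_{i-1}}\bigr\|,
\end{equation*}
which together with the hypothesis $\|(I-\Pi_i)\ket\psi\| \leq \sqrt{\epsilon}$ from the previous paragraph closes the induction.

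Finally, the conclusion follows by a reverse triangle inequality: since $\|\ket\psi\| = 1$ and $\|\ket\psi - \ket{\psi_r}\| \leq r\sqrt{\epsilon}$,
\begin{equation*}
\|\ket{\psi_r}\| \geq 1 - r\sqrt{\epsilon}, \qquad \text{so} \qquad \|\ket{\psi_r}\|^2 \geq (1 - r\sqrt{\epsilon})^2 \geq 1 - 2r\sqrt{\epsilon}.
\end{equation*}
There is no real obstacle here; the only subtlety is to keep the errors additive (rather than multiplicative) by comparing each $\ket{\psi_i}$ directly to $\ket\psi$, which is what makes the contractivity of the $\Pi_i$'s usable and yields the linear-in-$r$ dependence rather than something worse.
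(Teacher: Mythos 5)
Your proof is correct and follows essentially the same route as the paper's: the same induction showing that the partial product stays within $i\sqrt\epsilon$ of $\ket\psi$ after $i$ projections (the paper writes this as $\Pi_i\cdots\Pi_1\ket\psi = \ket\psi + \ket{\eta_i}$ with $\|\ket{\eta_i}\| \le i\sqrt\epsilon$, which is exactly your $\|\ket\psi - \ket{\psi_i}\| \le i\sqrt\epsilon$), the same use of contractivity of projectors to close the induction, and the same reverse-triangle-plus-squaring finish.
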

\begin{proof}
Let $\ket{\delta_i}=(\mathds 1-\Pi_i)\ket \psi.$
We prove by induction that
\begin{equation*}
    \Pi_{\ell}\Pi_{\ell-1}...\Pi_1\ket\psi=\ket\psi+\ket{\eta_\ell},
\end{equation*}
for some vector $\ket{\eta_\ell}$ such that $\|\ket{\eta_\ell}\|_2\le \ell\sqrt\epsilon$. The statement clearly holds for $\ell=0$. Assuming it holds up to $\ell-1$, we get
\begin{align*}
    \Pi_{\ell}\Pi_{\ell-1}...\Pi_1\ket\psi&=\Pi_\ell\left(\ket\psi+\ket{\eta_{\ell-1}}\right)\\
    &=\ket\psi-\ket{\delta_\ell}+\Pi_\ell\ket{\eta_{\ell-1}}\\
    &\eqqcolon \ket \psi+\ket{\eta_{\ell}},
\end{align*}
where $\|\ket{\eta_\ell}\|_2\le \ell\sqrt\epsilon$ by the triangle inequality, the fact that projectors have unit operator norm, and the inductive hypothesis. One more application of the triangle inequality shows that:
\begin{align*}
    \|\Pi_r\Pi_{r-1}...\Pi_1\ket\psi\|_2^2&= \|\ket \psi+\ket{\eta_{r}}\|_2^2\\
    &\ge \left(1-r\sqrt\epsilon\right)^2\\
    &\ge 1-2r\sqrt \varepsilon.
\end{align*}
\end{proof}

We apply the above lemma to construct a strategy that wins at $\mathsf{G}_r$.

\begin{lem}
\label{lem: reduction to Gr}
Suppose $(\mathcal{P}, \mathcal{F}_1, \mathcal{F}_2)$ satisfies \eqref{eq: lem -3} for some $\epsilon>0$. Then, there exists $(\mathcal{P}', \mathcal{F}_1', \mathcal{F}_2')$ such that, for any negligible function $\nu$, there exist infinitely many $\lambda$, such that $(\mathcal{P}', \mathcal{F}_1', \mathcal{F}_2')$ wins game $\mathsf{G}_r$ with security parameter $\lambda$ with probability at least $1-4r\sqrt\epsilon +\nu(\lambda)$. 
\end{lem}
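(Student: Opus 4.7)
My plan is to construct the reduction by having $\mathcal{P}'$ simulate $\mathcal{P}$ once with oracle $H_1$, and then having each freeloader $\mathcal{F}'_i$ sequentially run $\mathcal{F}_i$ with oracles $H_1, H_2, \ldots, H_r$ in turn, reading off one bit of $w$ per iteration and uncomputing between iterations so that Lemma \ref{lem:multi-gentle} controls the accumulated state disturbance. Concretely, $\mathcal{P}'$ receives $(\ket{v^\theta}, z = H(v))$, runs $\mathcal{P}$ on this ciphertext while routing $\mathcal{P}$'s queries to $H_1$, and forwards the two registers $\textsf{A}, \textsf{B}$ (together with $z$) that $\mathcal{P}$ outputs to the two freeloaders. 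Each $\mathcal{F}'_i$, on input $\theta$, its register, and oracle access to $H_1, \ldots, H_r$, iterates for $j = 1, \ldots, r$: coherently simulate $\mathcal{F}_i(\theta)$ with oracle $H_j$ on a unitary dilation of the state, coherently copy $\mathcal{F}_i$'s single-bit verdict into a fresh ancilla, record the guess $w'_j$ according to the sign convention of \eqref{eq: lem -3} (chosen so that the ``correct'' bit corresponds to the matching $H_3$/$H_4$ output), and then uncompute $\mathcal{F}_i$'s circuit to nearly restore the state before moving to iteration $j+1$. Finally, $\mathcal{F}'_i$ outputs $(w'_1, \ldots, w'_r)$.

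The analysis rests on the following correspondence. For each $j$, the case $w_j = 0$ sets $H_j = H$ with the ciphertext containing $z = H(v)$, which is literally the $H_3$ distribution for the freeloader, while the case $w_j = 1$ sets $H_j(v) = z'$ independent of $z = H(v)$, matching the $H_4$ distribution after relabeling the two independent uniformly random strings $z'$ and $z$. Assumption \eqref{eq: lem -3} (up to a WLOG sign flip, possibly different for $i = 1$ and $i = 2$) then guarantees that at each iteration the probability that $\mathcal{F}_i$'s measured bit matches $w_j$ is at least $1 - \epsilon$ in expectation over the $\mathsf{G}_r$ randomness, modulo a negligible correction addressed below.

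To combine the $r$ sequential measurements, let $\Pi_j^{(k)}$ denote the projector onto ``$\mathcal{F}_i$'s bit matches $w_j$'' at iteration $j$, acting on the state $\ket{\psi^{(k)}}$ indexed by the randomness $k$ of the experiment. An immediate generalization of the inductive argument in the proof of Lemma \ref{lem:multi-gentle} to distinct $\epsilon_j$'s yields the pointwise bound $\|\Pi_r^{(k)} \cdots \Pi_1^{(k)} \ket{\psi^{(k)}}\|_2^2 \geq 1 - 2\sum_j \sqrt{1 - \|\Pi_j^{(k)} \ket{\psi^{(k)}}\|_2^2}$. Taking expectation over $k$ and applying Jensen's inequality inside each square root,
\begin{equation*}
\mathbb{E}_k \, \bigl\|\Pi_r^{(k)} \cdots \Pi_1^{(k)} \ket{\psi^{(k)}}\bigr\|_2^2 \;\geq\; 1 - 2\sum_{j=1}^{r} \sqrt{\mathbb{E}_k \bigl(1 - \|\Pi_j^{(k)} \ket{\psi^{(k)}}\|_2^2\bigr)} \;\geq\; 1 - 2r\sqrt{\epsilon}.
\end{equation*}
The left-hand side is exactly $\Pr[w'_i = w]$, so each freeloader individually outputs $w$ with probability at least $1 - 2r\sqrt{\epsilon}$, and a union bound over $i \in \{1, 2\}$ gives $\Pr[w' = w'' = w] \geq 1 - 4r\sqrt{\epsilon}$.

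The main obstacle I anticipate is the subtle oracle mismatch arising from $\mathcal{P}'$ using a single oracle $H_1$ while each freeloader iteration uses $H_j$: when $w_1 \neq w_j$ the pirate and freeloader are effectively using oracles that disagree at $v$, so the setup does not literally reproduce $H_3$ or $H_4$ for the freeloader. However, the $r$ oracles agree on every input except $v$, so this discrepancy only matters to the extent that $\mathcal{P}$'s query weight at $v$ is non-negligible. Since $\mathcal{P}$ sees $v$ only through the hidden-basis state $\ket{v^\theta}$ and the classical hash $H(v)$, a monogamy-of-entanglement argument (Lemma \ref{lem: monogamy}, extended by Corollary \ref{cor: monogamy entropy} to account for the side information $H(v)$) bounds $\mathcal{P}$'s extraction probability for $v$, and hence its query weight at $v$, by a negligible function of $\lambda$. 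The induced trace-distance error propagates to at most a negligible loss in each iteration's success probability and can be absorbed into the $\nu(\lambda)$ slack promised by the lemma. Making this oracle-consistency step precise is what I expect to be the subtlest part of the full proof.
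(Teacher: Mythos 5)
Your proposal is correct and follows essentially the same route as the paper: $\mathcal{P}'$ runs $\mathcal{P}$ once with oracle $H_1$, each $\mathcal{F}'_b$ sequentially runs $\mathcal{F}_b$ with $H_1,\dots,H_r$, uncomputing between iterations and invoking Lemma \ref{lem:multi-gentle} plus a union bound, with the oracle mismatch resolved via the one-way-to-hiding/monogamy argument (the paper's Eq.~\eqref{eq: lemma 8} and \eqref{eq: 29 crucial}). Your treatment of the averaging (carrying the randomness index and applying Jensen inside each $\sqrt{\cdot}$) and of the per-freeloader sign ambiguity in \eqref{eq: lem -3} is actually slightly more careful than the paper's exposition.
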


\begin{proof}
Note that we can equivalently recast the game in hybrid $H_3$ as a game in which $\mathcal{P}$ receives a uniformly random string $z$ (instead of $H(v)$), but then the oracle is reprogrammed at $v$, so that $\mathcal{P}$ has oracle access to $H_{v,z}$ instead of $H$, where $H_{v,z}(x) = H(x)$ if $x \neq v$, and $H(v) = z$.

Let $q$ be the number of queries $\mathcal{P}$ makes. Without loss of generality, $\mathcal{P}$'s strategy is specified by a unitary $U$ and takes the form $(UO^H)^q$.

Notice that 
\begin{equation}
\label{eq: lemma 8}
    \mathbb{E}_H \mathbb{E}_v \mathbb{E}_{\theta} \mathbb{E}_{z} \mathbb{E}_{k \in [q]} \left\| \ket{v}\bra{v} (U O^{H_{v,z}})^k \ket{v^{\theta}} \otimes \ket{z} \right\|^2  = \negl(\lambda)\,.
\end{equation}

Suppose for a contradiction that the latter was not the case, then the pirate could recover the classical string $v$ with non-negligible probability and send $v$ to both $\mathcal{F}_1$ and $\mathcal{F}_2$. This would give a strategy that wins the monogamy game (more precisely the variant of Lemma \ref{lem: monogamy}).

By an application of a suitable variation of the one-way-to-hiding lemma (Lemma \ref{lem: qrom technical step}), the global state after $\mathcal{P}$'s action, including the challenger's registers, is negligibly close in hybrids $H_3$ and $H_4$, i.e. there exists a negligible function $\nu'$, such that 
\begin{align}& \Big\| \mathbb{E}_H \mathbb{E}_v \mathbb{E}_{\theta} \mathbb{E}_z \ket{H}\bra{H} \otimes \ket{v}\bra{v} \otimes \ket{\theta}\bra{\theta} \otimes \left((U O^{H_{v,z}})^q \ket{v^{\theta}} \bra{v^{\theta}} \otimes \ket{z}\bra{z}\left((U O^{H_{v,z}})^q\right)^{\dagger} \right)\nonumber \\ \, &-\mathbb{E}_H \mathbb{E}_v \mathbb{E}_{\theta} \mathbb{E}_{z} \ket{H}\bra{H} \otimes \ket{v}\bra{v} \otimes \ket{\theta}\bra{\theta} \otimes \left((U O^{H})^q \ket{v^{\theta}} \bra{v^{\theta}} \otimes \ket{z}\bra{z}\left((U O^{H})^q\right)^{\dagger}\right) \Big\|_1 \nonumber\\
&= \nu'(\lambda)\,. \label{eq: 29 crucial}
\end{align}

Now, by hypothesis, for any negligible function $\mu$, for all $i \in \{1,2\}$ there exist infinitely many $\lambda$ such that $(\mathcal{P}, \mathcal{F}_1, \mathcal{F}_2)$ satisfies expression \eqref{eq: lem -3}. In particular, we will use this hypothesis for a choice of $\mu$ sufficiently large (with respect to the negligible functions $\nu$ and $\nu'$).

We will also assume that, in \eqref{eq: lem -3}, it is 
$$ \Pr[\mathcal{F}_i \text{ returns 1 in } H_4 | x_i \text{ is a 1-input } ] > \Pr[\mathcal{F}_i \text{ returns 1 in } H_3 | x_i \text{ is a 1-input } ] \,,
$$
for infinitely many such $\lambda$'s (the reverse case being similar). Let $\Lambda_{\mathsf{good}}$ be the corresponding set of such $\lambda$'s.

We will construct an adversary $(\mathcal{P}', \mathcal{F}_1', \mathcal{F}_2')$ for the game $\mathsf{G}_r$ with security parameter $\lambda \in \Lambda_{\mathsf{good}}$, which wins with probability at least $1-4r\sqrt{\epsilon} + \nu(\lambda)$. The adversary $(\mathcal{P}', \mathcal{F}_1', \mathcal{F}_2')$ is as follows:
\begin{itemize}
    \item $\mathcal{P}'$ receives $\rho$ and $z$ from the challenger and runs $\mathcal{P}$ on input $\rho$ and $z$ and using oracle $H_1$ to answer any query by $\mathcal{P}$ (although any other $H_i$ would be fine). Let $\tilde{\rho}$ (a bipartite state) be the output. $\mathcal{P}'$ sends the first subsystem to $\mathcal{F}_1'$ and the second subsystem to $\mathcal{F}_2'$.
    \item $\mathcal{F}_1'$ and $\mathcal{F}_2'$ receive $\theta$ from the challenger. For $i \in [r]$, they do the following:
    \begin{itemize}
        \item Let $b \in \{1,2\}$, $\mathcal{F}_b'$ runs $\mathcal{F}_b$ on input the system received from $\mathcal{P}'$ and $\theta$, using oracle $H_i$. Let $w_i^{(b)}$ be the outcome. $\mathcal{F}_b'$ runs the inverse of the computation run by $\mathcal{F}_b$ before the measurement.
    \end{itemize}
    \item $\mathcal{F}_b'$ returns the string $(w^{(b)}_1, \ldots, w^{(b)}_r)$ to the challenger.
\end{itemize}
For $i \in [r]$, let $U_i^b$ be the unitary implemented by $\mathcal{F}_b$ when run with oracle $H_i$, and let $\Pi^b_i$ be the projector corresponding to the correct outcome in the final measurement. Define the projectors $\widetilde{\Pi}^b_i=\left(U_i^b\right)^\dagger \Pi^b_iU_i^b$. Crucially, Equation \eqref{eq: 29 crucial}, together with the hypothesis of Lemma \ref{lem: reduction to Gr} (used with a sufficiently large negligible function $\mu$), imply the following: 
\begin{itemize}
    \item If $\ket{\psi}$ is a purification of the state $\tilde{\rho}$ returned by $\mathcal{P}$ to $\mathcal{P}'$, then, for all $b \in \{1,2\}$, $i \in [r]$:
    \begin{equation}
        \| \widetilde{\Pi}^b_i \ket{\psi}\|^2 \geq 1 - \epsilon + \mu'(\lambda) \,
    \end{equation}
\end{itemize}
for a sufficiently large negligible function $\mu'$, and for infinitely many $\lambda$.

Applying the variant of the ``Gentle Measurement Lemma'' from Lemma \ref{lem:multi-gentle} implies that $\mathcal{F}_b'$ succeeds with probability at least $1-2r \sqrt\epsilon + \frac12 \nu(\lambda)$. By a union bound, we conclude that  $(\mathcal{P}', \mathcal{F}_1', \mathcal{F}_2')$ win with probability at least $1-4r\sqrt\epsilon + \nu(\lambda)$, for infinitely many $\lambda$, as desired.
\end{proof}

Next, our goal is to upper bound the probability that a query-bounded adversary $(\mathcal{P},\mathcal{F}_1,\mathcal{F}_2)$ wins $\mathsf{G}_r$ by the probability that $\mathcal{F}_1$ and $\mathcal{F}_2$ simultaneously query the oracle at point $v$. This is captured by the following technical lemma, which we specialize to our exact case in the subsequent corollary. 
Before stating the lemma, we introduce some notation. Let $F: \mathcal{X} \rightarrow \mathcal{Y}$. Let $\underline{v} \in \mathcal{X}^r$ be such that all $\underline{v}_i$ are distinct, for $i \in [r]$, and let $\underline{z} \in \mathcal{Y}^r$. We denote by $F_{\underline{v}, \underline{z}}$ the function,
$$F_{\underline{v}, \underline{z}}(x)= \begin{cases} F(x)    & \text{if } x \neq \underline{v}_i \text{ for all } i \,,\\
    \underline{z}_i &\text{if } x=\underline{v}_i \,.   \end{cases}$$

\begin{lem}
\label{lem: main bound}
Let $\mathcal{X}, \mathcal Y, \mathcal R$ be sets of binary strings. Let $r \in \mathbb{N}$. Let $F$ be a function-valued random variable, where each function is from $\mathcal{X}$ to $\mathcal{Y}$. Let $\mathcal Z \subseteq \mathcal Y^r$.
For any $\underline{v} \in \mathcal{X}^r$, any injective function $g:\mathcal Z\to\mathcal R$, any distribution $Z$ on $\mathcal{Z}$, any unitaries $U_\mathsf{B}, U_\mathsf{C}$, any bipartite state $\ket{\psi}$, any $q_\mathsf{B}, q_\mathsf{C} \in \mathbb{N}$, the following holds:
\begin{align}
    \mathbb{E}_F \mathbb{E}_{\underline{z} \leftarrow Z} \Big\| \Pi^{g(\underline{z})} \left(U_\mathsf{B} O_\mathsf{B}^{F_{\underline{v}, \underline{z}}}\right)^{q_\mathsf{B}} \otimes &\left(U_\mathsf{C} O_\mathsf{C}^{F_{\underline{v},\underline{z}}}\right)^{q_\mathsf{C}} \ket{\psi}\Big\|^2 \\ &\leq 9p_{\max{}} + \poly(q_\mathsf{B}, q_\mathsf{C}) \sqrt{M}\,,
\end{align} 
where $\Pi^{w} = \Pi_\mathsf{B}^w \otimes \Pi_\mathsf{C}^w$,
$
\,\,p_{\max{}}=\max_{r\in\mathcal R}\Pr[g(\underline{z})=r: \underline{z} \leftarrow Z]
$, and 
$$M = \mathbb{E}_k \mathbb{E}_l \mathbb{E}_F \mathbb{E}_{\underline{z} \leftarrow \mathcal{Z}} \Big\| ( P_{\underline{v}} \otimes P_{\underline{v}}) \left(U_\mathsf{B} O^{F_{\underline{v},\underline{z}}}_\mathsf{B})^{k} \otimes (U_\mathsf{C} O^{F_{\underline{v},\underline{z}}}_\mathsf{C})^{l}\right) \ket{\psi}\Big\|^2 \,,$$
where $P_{\underline{v}} = \sum_{i=1}^r \ket{\underline{v}_i}\bra{\underline{v}_i}$.
\end{lem}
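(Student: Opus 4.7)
\medskip

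\noindent\textbf{Proof plan.} The plan is to compare the ``real'' final state
\[
\ket{\Psi_{\underline z}} \;=\; \bigl(U_\mathsf{B} O_\mathsf{B}^{F_{\underline v,\underline z}}\bigr)^{q_\mathsf{B}} \otimes \bigl(U_\mathsf{C} O_\mathsf{C}^{F_{\underline v,\underline z}}\bigr)^{q_\mathsf{C}} \ket\psi
\]
with the ``ideal'' state $\ket\Phi = (U_\mathsf{B} O_\mathsf{B}^{F})^{q_\mathsf{B}} \otimes (U_\mathsf{C} O_\mathsf{C}^{F})^{q_\mathsf{C}} \ket\psi$, in which neither party's oracle is reprogrammed at $\underline v$ and which therefore does not depend on $\underline z$. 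I would then separately handle (a) the success probability of the tensor-product measurement $\Pi^{g(\underline z)} = \Pi_\mathsf B^{g(\underline z)}\otimes \Pi_\mathsf C^{g(\underline z)}$ on $\ket\Phi$, and (b) the $\ell_2$-deviation between $\ket{\Psi_{\underline z}}$ and $\ket\Phi$ induced by the reprogramming at $\underline v$.

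For part (a), because $\{\Pi_\mathsf B^r\}_r$ and $\{\Pi_\mathsf C^r\}_r$ act on disjoint registers, the operators $\{\Pi_\mathsf B^r \otimes \Pi_\mathsf C^r\}_r$ are mutually orthogonal projections that sum to at most $\Id$. Combined with injectivity of $g$ and the definition of $p_{\max{}}$, this yields
\[
\mathbb E_F\,\mathbb E_{\underline z\leftarrow Z}\,\bigl\|\Pi^{g(\underline z)}\ket\Phi\bigr\|^2 \;\leq\; p_{\max{}}\sum_r \mathbb E_F\,\bigl\|(\Pi_\mathsf B^r\otimes \Pi_\mathsf C^r)\ket\Phi\bigr\|^2 \;\leq\; p_{\max{}}.
\]
For part (b), I would invoke a two-player adaptation of Zhandry's one-way-to-hiding lemma in the spirit of~\cite{broadbent2019uncloneable}. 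A standard hybrid over the $q_\mathsf B$ queries of $\mathsf B$ shows that switching $\mathsf B$'s oracle from $F_{\underline v,\underline z}$ to $F$ changes the final state in norm by an amount controlled by the amplitude of $\mathsf B$'s query at $\underline v$, and analogously for $\mathsf C$. The crucial subtlety is that we must bundle the two reprogrammings into a \emph{single} hybrid: inserting $P_{\underline v}\otimes P_{\underline v}$ at independently chosen random steps $k$ on $\mathsf B$ and $l$ on $\mathsf C$ and averaging over $\underline z$ recovers exactly the \emph{correlated} quantity $M$, up to polynomial prefactors in $q_\mathsf B, q_\mathsf C$.

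Combining (a) and (b) via the triangle inequality on $\ket{\Psi_{\underline z}} = \ket\Phi + (\ket{\Psi_{\underline z}} - \ket\Phi)$ and the elementary inequality $(\alpha+\beta+\gamma)^2\leq 3(\alpha^2+\beta^2+\gamma^2)$ yields a bound of the form $c\cdot p_{\max{}} + \poly(q_\mathsf B,q_\mathsf C)\sqrt M$; a careful accounting of the hybrid steps in the two-player O2H reduction produces the constant $c=9$. The main obstacle is the \emph{simultaneity} requirement: a naive application of two independent one-player O2H bounds only produces the product of \emph{single-party} query amplitudes $\sqrt{M_\mathsf B}\cdot\sqrt{M_\mathsf C}$, whereas the statement demands the strictly stronger joint amplitude $M$. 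Achieving this, as in~\cite{broadbent2019uncloneable}, requires a joint hybrid that preserves the correlations induced by the shared entanglement in $\ket\psi$ and crucially uses that the \emph{same} vector $\underline z$ reprograms both oracles.
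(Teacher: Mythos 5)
There is a genuine gap in your approach, and it is precisely the simultaneity issue you flag at the end, but your proposed fix does not repair it.

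You compare $\ket{\Psi_{\underline z}}$ with the reprogramming-free ideal state $\ket\Phi=\bigl((U_\mathsf B O_\mathsf B^F)^{q_\mathsf B}\otimes(U_\mathsf C O_\mathsf C^F)^{q_\mathsf C}\bigr)\ket\psi$ and want to bound the deviation $\|\ket{\Psi_{\underline z}}-\ket\Phi\|$ by a quantity related to $M$. But a telescoping/hybrid argument over queries bounds that deviation by the probability that \emph{at least one} of $\mathsf B$ and $\mathsf C$ queries at $\underline v$ (an ``OR'' event), whereas $M$ involves $P_{\underline v}\otimes P_{\underline v}$, the probability that \emph{both} query at $\underline v$ simultaneously (an ``AND'' event). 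Inserting $P_{\underline v}\otimes P_{\underline v}$ at random steps on both sides does not recover the norm of the difference $\ket{\Psi_{\underline z}}-\ket\Phi$; the quantity your single combined hybrid naturally produces has the shape $P_{\underline v}\otimes \Id$ plus $\Id\otimes P_{\underline v}$, not the joint projector. The failure is not merely a constant: take a strategy where $\mathsf B$ queries $\underline v$ on its first query with amplitude close to $1$ and $\mathsf C$ never queries $\underline v$. Then $M=0$, yet $\|\ket{\Psi_{\underline z}}-\ket\Phi\|$ is $\Theta(1)$, and your decomposition $\|\Pi\ket{\Psi_{\underline z}}\|^2 \le \|\Pi\ket\Phi\|^2 + 2\|\Pi\ket\Phi\|\cdot\|\ket{\delta_{\underline z}}\| + \|\ket{\delta_{\underline z}}\|^2$ is not $O(p_{\max})$ when $p_{\max}$ is small.

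The paper avoids this by never comparing to a globally unreprogrammed ideal state. Instead it splits the operator itself, writing $(U_\mathsf L O_\mathsf L^{F_{\underline v,\underline z}})^{q_\mathsf L}=V_\mathsf L+W_\mathsf L$ with $V_\mathsf L=\bigl(U_\mathsf L O_\mathsf L^{F_{\underline v,\underline z}}(\Id-P_{\underline v})\bigr)^{q_\mathsf L}$, and then decomposes the tensor product as $(U_\mathsf B O_\mathsf B^{F_{\underline v,\underline z}})^{q_\mathsf B}\otimes V_\mathsf C + V_\mathsf B\otimes W_\mathsf C + W_\mathsf B\otimes W_\mathsf C$. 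Only the last term $W_\mathsf B\otimes W_\mathsf C$ involves both players ever touching $\underline v$, and it is exactly this term that the simultaneous-O2H machinery of Broadbent and Lord controls by $\sqrt M$. The first two terms each contain a $V$ on one side; the crucial observation is $V_\mathsf L^{F_{\underline v,\underline z}}=V_\mathsf L^F$ (it never queries $\underline v$, so reprogramming is invisible), which makes that party's side $\underline z$-independent and lets one dominate the $\underline z$-averaged projector $\mathbb E_{\underline z}\Pi_\mathsf L^{g(\underline z)}\le p_{\max}\Id$. The explicit algebra $\alpha\le p_{\max}$, $\beta\le 4p_{\max}$, and $\alpha+\beta+2\sqrt{\alpha\beta}\le 9p_{\max}$ is what produces the constant $9$; it is a direct decomposition bound, not a comparison to an ideal state, and I see no way to reproduce the joint-query-controlled term from the route you sketch.
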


\begin{proof}
The following proof is similar to the proof of Lemma 21 in \cite{broadbent2019uncloneable}, but extended to a slightly more general setting.

For $\mathsf{L} \in \{\mathsf{B},\mathsf{C}\}$, let 
$V_\mathsf{L}^{F} = \left(U_\mathsf{L} O_\mathsf{L}^{F}(\Id-P_{\underline{v}})\right)^{q_\mathsf{L}}$ and $W_\mathsf{L}^{F} = (U_\mathsf{L} O^{F}_\mathsf{L})^{q_\mathsf{L}} - V_\mathsf{L}^{F}$.

Fix $\underline{v} \in \mathcal{X}^r$, a set $\mathcal{Z} \subseteq \mathcal{Y}^r$ and $\underline{z} \in \mathcal{Z}$. We have the following.
\begin{align}
    \| \Pi^{g(\underline{z}))} &\left(\left(U_\mathsf{B} O_\mathsf{B}^{F_{\underline{v}, \underline{z}}}\right)^{q_\mathsf{B}} \otimes \left(U_\mathsf{C} O_\mathsf{C}^{F_{\underline{v}, \underline{z}}}\right)^{q_\mathsf{C}} \right) \ket{\psi}\|^2  \nonumber \\
    &= \| \Pi^{g(\underline{z})} \left(\left(U_\mathsf{B} O_\mathsf{B}^{F_{\underline{v}, \underline{z}}}\right)^{q_\mathsf{B}} \otimes V_\mathsf{C}^{F_{\underline{v}, \underline{z}}} + V_\mathsf{B}^{F_{\underline{v}, \underline{z}}} \otimes W_\mathsf{C}^{F_{\underline{v}, \underline{z}}} + W_\mathsf{B}^{F_{\underline{v}, \underline{z}}} \otimes W_\mathsf{C}^{F_{\underline{v}, \underline{z}}} \right) \ket{\psi}\|^2 \\
    &\leq \| \Pi^{g(\underline{z})} \left(\left(U_\mathsf{B} O_\mathsf{B}^{F_{\underline{v}, \underline{z}}}\right)^{q_\mathsf{B}} \otimes  V_\mathsf{C}^{F_{\underline{v}, \underline{z}}} + V_\mathsf{B}^{F_{\underline{v}, \underline{z}}} \otimes W_\mathsf{C}^{F_{\underline{v}, \underline{z}}}\right) \ket{\psi}\|^2 \nonumber \\
    &+ (3q_\mathsf{B} q_\mathsf{C} +2)q_\mathsf{B} q_\mathsf{C} \,\mathbb{E}_k \mathbb{E}_l \left\| P_{\underline{v}} \otimes P_{\underline{v}} \left(\left( U_\mathsf{B} O_\mathsf{B}^{F_{\underline{v}, \underline{z}}} \right)^k \otimes \left( U_\mathsf{C} O_\mathsf{C}^{F_{\underline{v}, \underline{z}}}\right)^l \right) \ket{\psi} \right\| \,, \label{eq: 9}
\end{align} 
where the inequality follows from a triangle inequality together with a bound used in \cite{broadbent2019uncloneable} (more precisely, Lemma 18 in \cite{broadbent2019uncloneable}). Applying Jensen's inequality and the definition of $M$, we obtain
\begin{align}
\mathbb{E}_F \mathbb{E}_{\underline{z} \leftarrow \mathcal{Z}}\| &\Pi^{g(\underline{z})} \left(\left(U_\mathsf{B} O_\mathsf{B}^{F_{\underline{v}, \underline{z}}}\right)^{q_\mathsf{B}} \otimes \left(U_\mathsf{C} O_\mathsf{C}^{F_{\underline{v}, \underline{z}}}\right)^{q_\mathsf{C}} \right) \ket{\psi}\|^2 \nonumber\\
 &\leq \mathbb{E}_{F} \mathbb{E}_{\underline{z} \leftarrow \mathcal{Z}}\| \Pi^{g(\underline{z})} \left(\left(U_\mathsf{B} O_\mathsf{B}^{F_{\underline{v}, \underline{z}}}\right)^{q_\mathsf{B}} \otimes  V_\mathsf{C}^{F_{\underline{v}, \underline{z}}} + V_\mathsf{B}^{F_{\underline{v}, \underline{z}}} \otimes W_\mathsf{C}^{F_{\underline{v}, \underline{z}}}\right) \ket{\psi}\|^2\\
 &\quad \quad+ (3q_\mathsf{B} q_\mathsf{C} + 2) q_\mathsf{B} q_\mathsf{C} \sqrt{M}  
\end{align}

We will show that, for any fixed $F$,
$$\mathbb{E}_{\underline{z} \leftarrow \mathcal{Z}} \| \Pi^{g(\underline{z})} \left(\left(U_\mathsf{B} O_\mathsf{B}^{F_{\underline{v}, \underline{z}}}\right)^{q_\mathsf{B}} \otimes  V_\mathsf{C}^{F_{\underline{v}, \underline{z}}} + V_\mathsf{B}^{F_{\underline{v}, \underline{z}}} \otimes W_\mathsf{C}^{F_{\underline{v}, \underline{z}}}\right) \ket{\psi}\|^2 \leq 9p_{\max{}}\,.$$
Let 
$$ \alpha = \mathbb{E}_{\underline{z} \leftarrow \mathcal{Z}} \| \Pi^{g(\underline{z})} \left(\left(U_\mathsf{B} O_\mathsf{B}^{F_{\underline{v}, \underline{z}}}\right)^{q_\mathsf{B}} \otimes V_\mathsf{C}^{F_{\underline{v}, \underline{z}}} \right)\ket{\psi}\|^2 \,,$$
and 
$$ \beta = \mathbb{E}_{\underline{z} \leftarrow \mathcal{Z}} \| \Pi^{g(\underline{z})} \left( V_\mathsf{B}^{F_{\underline{v}, \underline{z}}} \otimes W_\mathsf{C}^{F_{\underline{v}, \underline{z}}} \right)\ket{\psi}\|^2 \,.$$

Applying triangle inequalities, and using that the $\Pi^{g(\underline{z})}$ are orthogonal projectors, we get 
\begin{equation}
\mathbb{E}_{\underline{z} \leftarrow \mathcal{Z}} \| \Pi^{g(\underline{z})} \left(\left(U_\mathsf{B} O_\mathsf{B}^{F_{\underline{v}, \underline{z}}}\right)^{q_\mathsf{B}} \otimes  V_\mathsf{C}^{F_{\underline{v}, \underline{z}}} + V_\mathsf{B}^{F_{\underline{v}, \underline{z}}} \otimes W_C^{F_{\underline{v}, \underline{z}}}\right) \ket{\psi}\|^2  \leq \alpha + \beta + 2 \sqrt{\alpha\beta}
\end{equation}
Now, notice that $V_\mathsf{B}^{F_{\underline{v}, \underline{z}}}$ and $V_\mathsf{C}^{F_{\underline{v}, \underline{z}}}$ do not depend on $\underline{z}$, since they always project on the support of $\Id-P_{\underline{v}}$. Using standard properties of the operator norm, we get
\begin{align}
\alpha &= \mathbb{E}_{\underline{z} \leftarrow \mathcal{Z}} \| \Pi^{g(\underline{z})} \left(\left(U_\mathsf{B} O_\mathsf{B}^{F_{\underline{v}, \underline{z}}}\right)^{q_\mathsf{B}} \otimes V_\mathsf{C}^{F_{\underline{v}, \underline{z}}} \right)\ket{\psi}\|^2  \\
& \leq \mathbb{E}_{\underline{z} \leftarrow \mathcal{Z}} \| (\mathds{1}_\mathsf{B} \otimes \Pi_\mathsf{C}^{g(\underline{z})} ) \left(\mathds{1}_\mathsf{B} \otimes V_\mathsf{C}^{F_{\underline{v}, \underline{z}}} \right) \ket{\psi} \|^2  \\
&\leq \bra\psi\mathds{1}_\mathsf{B} \otimes\left( \left(V_\mathsf{C}^{F_{\underline{v}, \underline{z}}}\right)^\dagger\left(\mathbb{E}_{z \leftarrow \mathcal{Z}}\Pi_C^{g(\underline{z})}\right) V_\mathsf{C}^{F_{\underline{v}, \underline{z}}} \right) \ket{\psi}  \\
&\leq p_{\max{}},
\end{align}
where the last line follows because $\mathbb{E}_{\underline{z} \leftarrow \mathcal{Z}}\Pi_\mathsf{C}^{g(\underline{z})}\le p_{\max{}}\mathds{1}$.

Similarly, one obtains $\beta \leq 4 p_{\max{}}$, where the factor of $4$ is due to the fact that $W_\mathsf{C}^{F_{\underline{v}, \underline{z}}}$ is not unitary in general, and so we upper bound it by using

$$ \| W_\mathsf{C}^{F_{\underline{v}, \underline{z}}}\|_{\infty} \leq \|\left( U_\mathsf{C} O_\mathsf{C}^{F_{\underline{v}, \underline{z}}}\right)^{q_\mathsf{C}} \|_\infty + \| V_\mathsf{C}^{F_{\underline{v}, \underline{z}}} \|_\infty  \leq 2 \,. $$ 

It follows that $\alpha + \beta + 2\sqrt{\alpha \beta} \leq 9 p_{\max{}}$, as desired. 
\end{proof}

We specialize Lemma \ref{lem: main bound} to our exact case in the following corollary. Before stating the corollary, let's fix some notation. For functions $H_1,\ldots, H_r$, we define a ``combined'' oracle unitary $O^{H_1,\ldots, H_r}$, acting on a ``control'' register, a ``query'' register, and an auxiliary register, as follows:
$$ O^{H_1,\ldots, H_r} \ket{i}\ket{x}\ket{0} = \ket{i} O^{H_i}(\ket{x}\ket{0})\,.$$
\begin{cor}
\label{cor: main bound specialized}
Let $r \in \mathbb{N}$. Let $v \in \{0,1\}^{m(\lambda)}$ and $z, z'\in \{0,1\}^{\lambda}$.
For all unitaries $U_\mathsf{B}, U_\mathsf{C}$, for all bipartite states $\ket{\psi}$, the following holds:
\begin{align}
    \mathbb{E}_H \mathbb{E}_{w \leftarrow \{0,1\}^r} \| \Pi^w \left(U_\mathsf{B} O_\mathsf{B}^{H_{v,z}^{w_1},..,H_{v,z}^{w_r}}\right)^{q_\mathsf{B}} \otimes &\left(U_\mathsf{C} O_\mathsf{C}^{H_{v,z}^{w_1},..,H_{v,z}^{w_r}}\right)^{q_\mathsf{C}} \ket{\psi}\|^2 \\ &\leq \frac{9}{2^r} + \poly(q_\mathsf{B}, q_\mathsf{C}) \sqrt{M},\,
\end{align} 
where $\Pi^{w} = \Pi_\mathsf{B}^w \otimes \Pi_\mathsf{C}^w$, $H: \{0,1\}^{m(\lambda)} \rightarrow \{0,1\}^{\lambda}$ and, for $b \in \{0,1\}$,
$$H_{v,z}^b := \begin{cases} H_{v,z}   & \text{if } b=0\,,\\
    H_{v,z'} &\text{if } b=1 \,,   \end{cases}  $$
and
$$M = \mathbb{E}_k \mathbb{E}_l \mathbb{E}_H \mathbb{E}_w \| \ket{v}\bra{v} \otimes \ket{v}\bra{v} (U_\mathsf{B} O^{H_{v,z}^{w_1},..,H_{v,z}^{w_r}}_\mathsf{B})^{k} \otimes (U_\mathsf{C} O^{H_{v,z}^{w_1},..,H_{v,z}^{w_r}}_\mathsf{C})^{l} \ket{\psi}\|^2\,.$$
\end{cor}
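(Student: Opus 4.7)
The plan is to reduce Corollary \ref{cor: main bound specialized} to Lemma \ref{lem: main bound} by bundling the $r$ oracles $H_1,\ldots,H_r$ into a single oracle on an enlarged domain. Observe first that querying the combined oracle $O^{H_1,\ldots,H_r}$ is literally the same as querying $O^{F}$ where $F:\{1,\ldots,r\}\times\{0,1\}^{m(\lambda)}\to\{0,1\}^\lambda$ is defined by $F(i,x)=H_i(x)$: the control register simply becomes the first coordinate of the query register. Under this re-encoding, the random base oracle is $\tilde H(i,x):=H(x)$ (independent of $i$), and the $r$ points at which it is reprogrammed in the game of the corollary are $\underline{v}:=((1,v),\ldots,(r,v))$, which are all distinct as required by Lemma \ref{lem: main bound}.

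Next, I would choose the reprogramming set and the labelling. Let $\mathcal{Z}:=\{z,z'\}^r$ and let $Z$ be the uniform distribution on $\mathcal{Z}$; define $g:\mathcal{Z}\to\{0,1\}^r$ to be the map sending $\underline{z}$ to the indicator string $w$ with $w_i=0$ iff $\underline{z}_i=z$. With these choices, drawing $\underline{z}\leftarrow Z$ and drawing $w\leftarrow\{0,1\}^r$ produce identically distributed reprogrammed oracles $\tilde H_{\underline{v},\underline{z}}=(H_{v,z}^{w_1},\ldots,H_{v,z}^{w_r})$; the projector $P_{\underline{v}}=\sum_i\ket{(i,v)}\bra{(i,v)}=\Id_{[r]}\otimes\ket{v}\bra{v}$ agrees with the projector appearing inside the corollary's definition of $M$; and the measurement projectors $\Pi^{g(\underline{z})}$ match $\Pi^w$ pointwise. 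Assuming $z\ne z'$, the map $g$ is injective and $g_{*}Z$ is uniform on $\{0,1\}^r$, so $p_{\max{}}=2^{-r}$, and Lemma \ref{lem: main bound} applied with these parameters yields exactly the stated bound $9\cdot 2^{-r}+\poly(q_\mathsf{B},q_\mathsf{C})\sqrt{M}$.

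The only case this does not cover is the degenerate $z=z'$: there the oracles do not depend on $w$ at all, so the state produced by $U_\mathsf{B}$ and $U_\mathsf{C}$ is $w$-independent, and a short calculation gives $\mathbb{E}_w\|\Pi^w\ket{\phi}\|^2\le 2^{-r}$, well within the claimed bound. Thus no new quantum-information input is needed beyond Lemma \ref{lem: main bound}; the main task, which I expect to be the most error-prone part of writing the proof out in full, is simply to verify that the notational correspondence between the control-register formulation used by the corollary and the single-oracle formulation used by the lemma is faithful --- in particular that the randomness over $H$ in the corollary matches the randomness over $F$ in the lemma, and that the $M$'s on the two sides coincide.
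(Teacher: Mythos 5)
Your proposal matches the paper's proof exactly: the paper also reduces to Lemma~\ref{lem: main bound} by taking $\mathcal{X}=[r]\times\{0,1\}^{m(\lambda)}$, $\mathcal{Z}=\{z,z'\}^r$ with $Z$ uniform, $g$ the indicator map, $\underline{v}=((1,v),\ldots,(r,v))$, and the combined-oracle encoding $F(i,x)=H(x)$ (the paper's text writes $F(i,x)=(i,H(x))$, which is a typo since $\mathcal Y=\{0,1\}^\lambda$; your version is the intended one). Your explicit handling of the degenerate case $z=z'$ --- where $g$ fails to be injective and the lemma's $p_{\max}$ becomes $1$ --- is a genuine (if minor) gap in the paper's one-line proof, and your direct bound $\mathbb{E}_w\|\Pi^w\ket\phi\|^2\le 2^{-r}$ for the $w$-independent state closes it cleanly.
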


\begin{proof}
We apply Lemma \ref{lem: main bound} with
\begin{itemize}
    \item $\mathcal{R} = \{0,1\}^r$, $\,\mathcal{X} = [r] \times \{0,1\}^{m(\lambda)}$, $\,\mathcal{Y} = \{0,1\}^{\lambda}$.
    \item $\mathcal{Z} = \{(z_1,\ldots, z_r): z_i \in \{z,z'\}\}$, and $Z$ the uniform distribution over $\mathcal Z$,
    \item $g: \mathcal{Z} \rightarrow \mathcal{R}$ such that $g(z_1,\ldots,z_r) = w$ where $w_i = 0$ if $z_i = z$ and $w_i = 1$ if $z_i = z'$.
    \item $F: \mathcal{X} \rightarrow \mathcal{Y}$ is such that $F(i,x) = (i, H(x))$.
    \item $\underline{v} = \big((1,v),\ldots, (r,v) \big)$.
\end{itemize}
\end{proof}

\begin{lem}
\label{lem: Gr to monogamy}
For all query-bounded adversaries $\mathcal A$ there exists a negligible function $\mu$ such that $\mathcal A$ wins game $\mathsf{G}_r$ with probability at most $ \frac{9}{2^{r}} + \mu(\lambda)$.
\end{lem}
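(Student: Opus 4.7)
The plan is to combine Corollary \ref{cor: main bound specialized} with a reduction to the monogamy-of-entanglement game (Lemma \ref{lem: monogamy}). By the corollary applied to any query-bounded strategy $(\mathcal{P}', \mathcal{F}'_1, \mathcal{F}'_2)$ in $\mathsf{G}_r$, the winning probability is at most $\tfrac{9}{2^r} + \poly(q_\mathsf{B}, q_\mathsf{C})\sqrt{M}$, where $M$ is the joint ``query-at-$v$'' quantity defined in the corollary. It therefore suffices to show that $M = \negl(\lambda)$.

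To bound $M$, I would construct a monogamy-of-entanglement adversary $(\mathcal{A}_{\mathrm{mon}}, \mathcal{B}_{\mathrm{mon}}, \mathcal{C}_{\mathrm{mon}})$ for the game of Section \ref{sec: monogamy}, with the string length taken to be $m(\lambda)$ rather than $\lambda$ (Lemma \ref{lem: monogamy} applies verbatim and yields the bound $\bigl(\tfrac{1}{2}+\tfrac{1}{2\sqrt{2}}\bigr)^{m(\lambda)}$, which is $\negl(\lambda)$ since $m(\lambda) > 5\lambda$). On receiving $\ket{v^\theta}$, the pirate $\mathcal{A}_{\mathrm{mon}}$ internally samples a uniformly random $H$, strings $z, z' \leftarrow \{0,1\}^\lambda$, and $w \leftarrow \{0,1\}^r$; it then runs $\mathcal{P}'$ on $(\ket{v^\theta}, z)$, answering all oracle queries with the \emph{unreprogrammed} oracle $H$ (since, not knowing $v$, it cannot implement the $v$-dependent reprogramming $H_{v,z'}$), and forwards the two output registers plus the classical side information to the two freeloaders. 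Each of $\mathcal{B}_{\mathrm{mon}}$ and $\mathcal{C}_{\mathrm{mon}}$ receives $\theta$, samples a uniform query index, runs the corresponding $\mathcal{F}'_i$ up to that index with oracle $H$, measures the chosen query's input register in the computational basis, and submits the outcome as its guess for $v$. The winning probability of this reduction is exactly the ``unreprogrammed'' variant $\tilde M$ of $M$. Since the combined pirate action plus oracle simulation is a valid CPTP map, and the combined freeloader action plus oracle queries is a valid POVM parameterized by $\theta$, Lemma \ref{lem: monogamy} gives $\tilde M \leq \bigl(\tfrac{1}{2}+\tfrac{1}{2\sqrt{2}}\bigr)^{m(\lambda)} = \negl(\lambda)$.

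The remaining step is to show $|M - \tilde M| = \negl(\lambda)$. My plan is to invoke a one-way-to-hiding argument (Lemma \ref{lem: qrom technical step}): the oracles used in $M$ and $\tilde M$ differ only at the single input $v$, so the discrepancy between the resulting bipartite states is controlled by amplitudes on ``query-at-$v$'' events. I would split this control into two contributions: the pirate's own $v$-query amplitude, which is negligible because $\ket{v^\theta}$ reduces to the maximally mixed state after averaging over $\theta$ (leaving $v$ information-theoretically hidden from $\mathcal{P}'$), and the freeloaders' joint $v$-query amplitude, which is precisely what $\tilde M$ captures and has already been bounded. Combining these through O2H yields $M \leq \tilde M + \negl(\lambda) = \negl(\lambda)$, whence the corollary gives the claimed bound.

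The hardest part will be the last step, and specifically the apparent circularity of the O2H bound: the query-at-$v$ amplitude that appears in the O2H error superficially resembles $M$ itself. The plan to resolve this is to isolate the pirate's single-party contribution (controlled by BB84 hiding of $v$, independently of $M$) from the freeloaders' joint contribution (controlled by $\tilde M$ via the monogamy reduction above), and if needed to bootstrap the O2H estimate over the polynomially many queries. Once this is handled, $M = \negl(\lambda)$ follows and completes the proof.
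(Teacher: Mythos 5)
Your approach shares the paper's first step (applying Corollary~\ref{cor: main bound specialized}) but then takes a genuinely different route to bounding $M$. The paper introduces an \emph{extended} monogamy game in which the challenger itself grants oracle access to the $v$-dependent reprogrammed oracles $H^{w_i}_{v,z}$ and additionally sends $H(v)$ as classical side information; it then invokes Corollary~\ref{cor: monogamy entropy} (the conditional-min-entropy variant of the monogamy bound) to show this extended game still has negligible winning probability whenever $\lambda<\frac15 m(\lambda)$, i.e.\ $m(\lambda)>5\lambda$. Because the challenger supplies the reprogrammed oracles, the reduction from a $\mathsf{G}_r$ adversary can hand the freeloaders \emph{exactly} the oracles appearing in the definition of $M$, so the monogamy adversary's success probability is literally $M$ and negligibility is immediate --- no bridging is needed. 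You instead reduce to the plain Lemma~\ref{lem: monogamy}: since your pirate must simulate the oracle internally and does not know $v$, it can only implement the unreprogrammed $H$, and the reduction's winning probability is the modified quantity $\tilde M$ rather than $M$. What this buys you is that you only need the basic uniform-$v$ monogamy bound, not its non-uniform extension; what it costs is the extra step $|M-\tilde M|=\negl(\lambda)$, which you correctly flag as the hardest part. Your proposed decomposition (pirate's own $v$-query amplitude, controlled by BB84 hiding since the pirate's input $(\ket{v^\theta},z,H)$ becomes $v$-independent in the unreprogrammed simulation; freeloaders' \emph{joint} $v$-query amplitude, controlled by $\tilde M$) is pointed in the right direction, and I believe a careful telescoping of $(U_\mathsf{B}O^G)^{k}\otimes(U_\mathsf{C}O^{G})^{l}-(U_\mathsf{B}O^{H})^{k}\otimes(U_\mathsf{C}O^{H})^{l}$ --- keeping both projectors $P_v$ in place while absorbing the $\le 2$ operator-norm factors from each $O^G-O^H$ --- bottoms out in $\tilde M_{j,i}$-type quantities, all of which the basic monogamy reduction handles. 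But note that this telescoping is essentially re-deriving a bipartite O2H statement in the spirit of Lemma~\ref{lem: main bound} from the opposite direction, and one must be careful that every intermediate ``mixed'' term (one party reprogrammed, the other not) is itself reducible to a simulatable $\tilde M$-type quantity --- this is where the potential circularity you identify actually lives. The paper's strengthened monogamy game sidesteps this entirely. In short: a viable and arguably more elementary route, but it leaves the most delicate part of the argument as a plan precisely at the point where the paper has a clean shortcut.
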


\begin{proof}
Suppose for a contradiction that there existed an adversary $(\mathcal{P}, \mathcal{F}_1, \mathcal{F}_2)$, and a polynomial $p>0$, such that $(\mathcal{P}, \mathcal{F}_1, \mathcal{F}_2)$ wins game $\mathsf{G}_r$ with probability greater than $\frac{9}{2^{r}} + 1/p(\lambda)$ for infinitely many $\lambda$'s. We show that this implies an adversary that wins at the following variant of the monogamy game of Section \ref{sec: monogamy}. Let $\lambda, \lambda' \in \mathbb{N}$. For clarity, we will denote the adversary in this monogamy game as the triple $(\mathcal{P}'$, $\mathcal{F}_1', \mathcal{F}_2')$.
\begin{itemize}
    \item The challenger samples uniformly $\theta, v \leftarrow \{0,1\}^{\lambda}$ as in the original game. The challenger picks a uniformly random function $H: \{0,1\}^{\lambda} \rightarrow \{0,1\}^{\lambda'}$ and sends $H(v)$ to $\mathcal{P}'$. 
    \item Additionally, the challenger samples uniformly $w \leftarrow \{0,1\}^r$, and $z \leftarrow \{0,1\}^{\lambda'}$. $\mathcal{P}$, $\mathcal{F}_1$ and $\mathcal{F}_2$ get oracle access to $H_{v,z}^{w_i}$, for $i \in [r]$, where $$ H_{v,z}^b := \begin{cases} H    & \text{if } b=0\,,\\
    H_{v,z} &\text{if } b=1 \,.   \end{cases} $$
    \item The rest proceeds as in the original monogamy game. 
\end{itemize}

We claim that as long as $\lambda' <\frac15 \lambda$, any adversary wins with at most negligible probability in the game. The reason is that as long as $\lambda'<\frac15 \lambda$, we have $\Hmin(V| \mathsf{E}) > \frac45 \lambda$, where $V$ is the random variable for the string $v$, and $\mathsf{E}$ represents all classical and quantum information that the adversary gets. Thus, Corollary \ref{cor: monogamy entropy} applies.

Next, we are ready to construct an adversary $(\mathcal{P}', \mathcal{F}_1', \mathcal{F}_2')$ for the above monogamy game from an adversary $(\mathcal{P}, \mathcal{F}_1, \mathcal{F}_2)$ winning game $\mathsf{G}_r$ with probability $>\frac{9}{2^r} + 1/p(\lambda)$ for some $\lambda$. $(\mathcal{P}', \mathcal{F}_1', \mathcal{F}_2')$ is as follows:
\begin{itemize}
    \item $\mathcal{P}'$ runs $\mathcal{P}$ on the state received from the challenger. The output is a state on two registers: $\mathcal{P}'$ sends the first half to $\mathcal{F}'_1$ and the second half to $\mathcal{F}'_2$.
    \item Let $q_1$ and $q_2$ be the number of oracle queries performed by the $\mathcal{F}_1$ and $\mathcal{F}_2$ algorithms. $\mathcal{F}_1'$ and $\mathcal{F}_2'$ respectively pick uniformly $k \leftarrow [q_1]$ and $l \leftarrow [q_2]$. Then, $\mathcal{F}_1'$ and $\mathcal{F}_2'$ respectively run the $\mathcal{F}_1$ and $\mathcal{F}_2$ algorithms for $k$ and $l$ queries, using oracle access to $H^{w_i}_{v,z}$, for $i \in [r]$. Finally, $\mathcal{F}'_1$ and $\mathcal{F}'_2$ measure the respective oracle registers and return their outcomes to the challenger.
\end{itemize}

By Corollary \ref{cor: main bound specialized}, the outcome returned by $\mathcal{F}_1'$ and $\mathcal{F}_2'$ is $v$ with inverse-polynomial probability. 

\end{proof}

\begin{proof}[Proof of Lemma \ref{lem: 8}]
Combining lemmas \ref{lem: reduction to Gr} and \ref{lem: Gr to monogamy}, we deduce that, for any $r \geq 4$, the statement of Lemma \ref{lem: 8} must hold for any $\epsilon^*$ such that
$$ 1-4r\sqrt{ \epsilon^*}  > \frac{10}{2^r} \,.$$

Taking $r=5$, we deduce that the statement of Lemma \ref{lem: 8} holds for any $\epsilon^* \leq 10^{-4}$.
\end{proof}

\begin{proof}[Proof of Lemma \ref{lem: 4}]
Let $(\mathcal{P}, \mathcal{F}_1, \mathcal{F}_2)$ be a strategy that wins with probability $p$ in $H_3$. We will argue that $(\mathcal{P}, \mathcal{F}_1, \mathcal{F}_2)$ wins with probability at least $p - \frac{2}{3} + \delta^* - \negl(\lambda)$ in $H_4$, where $\delta^*>0$. When considering the winning probability of a strategy, we can divide the analysis into three cases bases on what type of inputs the challenger is challenging $\mathcal{F}_1$ and $\mathcal{F}_2$ on. Let $\theta \in \{0,1\}^{m(\lambda)}$ denote the basis choice used to encode a string $v \in \{0,1\}^{m(\lambda)}$. Then, in the input challenge phase, $\mathcal{F}_1$ and $\mathcal{F}_2$ receive basis choices $\theta_1$ and $\theta_2$, respectively, according to the following distribution:
\begin{itemize}
\item[(i)] with probability $\frac{1}{3}$, the challenger picks uniformly random $\theta', \theta'' \in \{0,1\}^{m(\lambda)}$, sends $\theta'$ to $\mathcal{F}_1$ and $\theta''$ to $\mathcal{F}_2$.
\item[(ii)] with probability $\frac{1}{3}$, the challenger picks a uniformly random $\theta' \in \{0,1\}^{m(\lambda)}$, sends $\theta'$ to $\mathcal{F}_1$ and the correct basis choice $\theta$ to $\mathcal{F}_2$.
\item[(iii)] with probability $\frac{1}{3}$, the challenger picks a uniformly random $\theta' \in \{0,1\}^{\lambda}$, sends the correct basis choice $\theta$ to $\mathcal{F}_1$ and $\theta'$ to $\mathcal{F}_2$.
\end{itemize}

Conditioned on case (i), we argue that the winning probabilities of $\mathcal{P}, \mathcal{F}_1, \mathcal{F}_2$ in $H_3$ and in $H_4$ are negligibly close, i.e. there exists a negligible function $\nu$ such that:
\begin{align}
\label{eq: recall}
    &\Big| \Pr\Big[(\mathcal{P}, \mathcal{F}_1, \mathcal{F}_2) \text{ win in } H_4 \,|\, \theta_1 \text{ is random and } \theta_2 \text{ is random} \Big] \nonumber \\
   & - \Pr\Big[(\mathcal{P}, \mathcal{F}_1, \mathcal{F}_2) \text{ win in } H_3 \,|\, \theta_1 \text{ is random and } \theta_2 \text{ is random} \Big]\Big| \leq \nu(\lambda).
\end{align}
This follows from a similar argument made earlier. By an application of the one-way-to-hiding lemma, one can bound the advantage in distinguishing $H_3$ from $H_4$ by the probability of querying at the encoded point $v$. An adversary that queries with non-negligible probability at $v$ straightforwardly yields an adversary that wins with non-negligible probability at the monogamy of entanglement game of Lemma \ref{lem: monogamy}.
This implies that the winning probability of $(\mathcal{P},\mathcal{F}_1,\mathcal{F}_2)$ in $H_4$ is at least $p-\frac{2}{3}-\nu(\lambda)$. 

What we will argue next is that in fact the lower bound is $p-\frac{2}{3} + \delta^* - \negl(\lambda)$, for some $\delta^* > 0$.
We appeal to Lemma \ref{lem: 8}, by which there exists $\epsilon^* > 0$ (where we can take $\epsilon^* = 10^{-4}$), and a negligible function $\mu$, such that for all $\lambda$, there exists an $i \in \{1,2\}$ such that $\mathcal{F}_i$ satisfies
\begin{align}
\label{eq: 26}
    \Big|\Pr&[\mathcal{F}_i \text{ returns 1 in } H_4 \,|\, \theta_i \text{ is correct } ] \nonumber \\
& -\Pr[\mathcal{F}_i \text{ returns 1 in } H_3 \,|\, \theta_i \text{ is correct} ] \Big| < 1-\epsilon^* +\mu(\lambda)\,.
\end{align}
Fix a $\lambda$, and assume \eqref{eq: 26} holds for $i = 1$ for this particular $\lambda$ (the other case being analogous).

We consider the two cases:
\begin{itemize}
    \item[(a)] \begin{equation}
    \Pr[(\mathcal{P}, \mathcal{F}_1, \mathcal{F}_2) \text{ win in } H_3 \,|\, \theta_1 \text{ is correct and } \theta_2 \text{ is random} ] > 1-\frac{\epsilon^*}{3}\,. \label{eq: 27 bis}
\end{equation}
    \item[(b)] \begin{equation} 
    \Pr[(\mathcal{P}, \mathcal{F}_1, \mathcal{F}_2) \text{ win in } H_3 \,|\, \theta_1 \text{ is a correct and } \theta_2 \text{ is random} ] \leq 1-\frac{\epsilon^*}{3}\,. 
    \end{equation}
\end{itemize}

We start by assuming (a). Then, Equation \eqref{eq: 26} implies that 
\begin{equation}
    \Pr[\mathcal{F}_1 \text{ returns 1 in } H_4 \,|\, \theta_1 \text{ is correct} ] > \frac{2\epsilon^*}{3} - \mu(\lambda)\,, \label{eq: 27}
\end{equation}
where we are using the fact that the correlation generated by $\mathcal{F}_1$ and $\mathcal{F}_2$ is non-signalling to remove the condition of $\theta_2$ being random.

By the same argument used for case (i), notice that $$| \Pr[\mathcal{F}_2 \text{ returns 0 in } H_3 \,|\, \theta_2 \text{ is random } ] - \Pr[\mathcal{F}_2 \text{ returns 0 in } H_4 \,|\, \theta_2 \text{ is random } ]| \leq  \negl(\lambda) \,.$$
Then, Equation \eqref{eq: 27 bis} implies 
\begin{equation}
    \Pr[\mathcal{F}_2 \text{ returns 0 in } H_4 \,|\, \theta_2 \text{ is random}] > 1-\frac{\epsilon^*}{3} - \negl(\lambda) \,, \label{eq: 29}
\end{equation}
where we again use the fact that the correlation generated by $\mathcal{F}_1$ and $\mathcal{F}_2$ is non-signalling.

Combining \eqref{eq: 27} and \eqref{eq: 29}, and using a union bound implies that
\begin{equation}
\label{eq: 30}
\Pr[(\mathcal{P}, \mathcal{F}_1, \mathcal{F}_2) \text{ win in } H_4 \,|\, \theta_1 \text{ is correct and } \theta_2 \text{ is random} ] > \frac{\epsilon^*}{3}- \negl(\lambda)\,.
\end{equation}

Thus, Equations \eqref{eq: 27 bis} and \eqref{eq: 30} trivially imply that 
\begin{align}
\label{eq: 31}
    &\Pr[(\mathcal{P}, \mathcal{F}_1, \mathcal{F}_2) \text{ win in } H_4 \,|\, \theta_1 \text{ is correct and } \theta_2 \text{ is random} ] \nonumber\\
    &> \Pr[(\mathcal{P}, \mathcal{F}_1, \mathcal{F}_2) \text{ win in } H_3 \,|\, \theta_1 \text{ is correct and } \theta_2 \text{ is random} ] - \left(1-\frac{\epsilon^*}{3}\right) - \negl(\lambda) \,.
\end{align}

Now, equations \eqref{eq: recall} and \eqref{eq: 31} imply:

\begin{align}
    &\Pr[(\mathcal{P}, \mathcal{F}_1, \mathcal{F}_2) \text{ win in } H_4] \nonumber\\
    &= \frac13 \Pr\big[(\mathcal{P}, \mathcal{F}_1, \mathcal{F}_2) \text{ win in } H_4 \,|\, \theta_1 \text{ is random and } \theta_2 \text{ is random} \big] \nonumber\\
    &+ \frac13 \Pr\big[(\mathcal{P}, \mathcal{F}_1, \mathcal{F}_2) \text{ win in } H_4 \,|\, \theta_1 \text{ is correct and } \theta_2 \text{ is random} \big] \nonumber\\
    &+ \frac13 \Pr\big[(\mathcal{P}, \mathcal{F}_1, \mathcal{F}_2) \text{ win in } H_4 \,|\, \theta_1 \text{ is random and } \theta_2 \text{ is correct} \big] \nonumber\\
    &> \frac13 \Pr\big[(\mathcal{P}, \mathcal{F}_1, \mathcal{F}_2) \text{ win in } H_3 \,|\, \theta_1 \text{ is random and } \theta_2 \text{ is random} \big] \nonumber\\
    &+ \frac13 \left(\Pr\big[(\mathcal{P}, \mathcal{F}_1, \mathcal{F}_2) \text{ win in } H_3 \,|\, \theta_1 \text{ is correct and } \theta_2 \text{ is a random} \big]  -  1 + \frac{\epsilon^*}{3} \right) \nonumber\\
    &+ \frac13 \left( \Pr\big[(\mathcal{P}, \mathcal{F}_1, \mathcal{F}_2) \text{ win in } H_3 \,|\, \theta_1 \text{ is random and } \theta_2 \text{ is correct}\big] -1 \right)  - \negl(\lambda) \nonumber 
    \\&= p - \frac23 +  \frac{\epsilon^*}{9} - \negl(\lambda)\,. \label{eq: first bound} 
\end{align}
Now, instead, we assume (b), i.e. 
\begin{equation} 
    \Pr\big[(\mathcal{P}, \mathcal{F}_1, \mathcal{F}_2) \text{ win in } H_3 \,|\, \theta_1 \text{ is correct and } \theta_2 \text{ is random} \big] \leq 1-\frac{\epsilon^*}{3}\,. 
\end{equation}
Then, we immediately have 
\begin{align}
\label{eq: 38}
    &\Pr\big[(\mathcal{P}, \mathcal{F}_1, \mathcal{F}_2) \text{ win in } H_4 \,|\, \theta_1 \text{ is correct and } \theta_2 \text{ is random} \big] \nonumber\\
    &> \Pr\big[(\mathcal{P}, \mathcal{F}_1, \mathcal{F}_2) \text{ win in } H_3 \,|\, \theta_1 \text{ is correct and } \theta_2 \text{ is random} \big] - \left(1-\frac{\epsilon^*}{3}\right) \,.
\end{align}
A similar calculation to Equation \eqref{eq: first bound} gives 
\begin{equation}
    \Pr\big[(\mathcal{P}, \mathcal{F}_1, \mathcal{F}_2) \text{ win in } H_4\big] > p -\frac23 + \frac{\epsilon^*}{9} - \negl(\lambda) \,.
\end{equation}
This gives the desired bound with $\delta^* = \frac{\epsilon^*}{9}$.

\end{proof}

This concludes the proof of Lemma \ref{lem: 4}, and hence of Theorem \ref{thm: security}. 

\subsection{Proof of quantum virtual black-box obfuscation}\label{subsec:VBB}

In this section, we show that our quantum copy-protection scheme for point functions is also a quantum virtual black-box ($\mathsf{VBB}$) obfuscator \cite{AlagicFefferman16}. In particular, we will show that the algorithm $\mathsf{CP.Protect}$ from Construction \ref{cons:cp} satisfies a notion of obfuscation called \emph{distributional indistinguishability}, which for evasive classes of circuits is equivalent to $\mathsf{VBB}$ obfuscation \cite{Wichs_Zirdelis_Compute_and_Compare} (it is straightforward to see that distributional indistinguishability implies VBB - the reverse implication requires slightly more work). $\mathsf{CP.Protect}$ is already functionality preserving, this follows from the definition of a copy-protection scheme. All that is left to show is security. In what follows, we assume that a program has an associated set of parameters $P.\mathsf{params}$ (input size, output size, circuit size) which we are not required to hide.
\begin{definition}[Distributional indistinguishability]
\label{def: dist indist}
An obfuscator $\mathsf{Obf}$ for the class of distributions $\mathcal{D}$ over programs $\mathcal C$, satisfies distributional indistinguishability if there exists a $\QPT$ simulator $\mathsf{Sim}$, such that for every distribution ensemble $D = \{D_{\lambda}\} \in \mathcal{D}$, we have
\begin{equation}
\mathbb{E}_{P \leftarrow D_{\lambda}} \mathsf{Obf}(1^{\lambda}, P) \, \approx_c \,\mathsf{Sim}(1^{\lambda}, P.\mathsf{params}) \,.
\end{equation}
\end{definition}
\noindent(where the notation ``$\approx_c$'' was introduced in Definition \ref{def: indistinguishability}.) We remark that the notion of 
indistinguishability of ensembles of quantum states (Definition \ref{def: indistinguishability}) already accounts for auxiliary quantum information in the two ensembles.

Distributional indistinguishability relative to any oracle is analogous to Definition \ref{def: dist indist}, except that the algorithms $\mathsf{Obf}$ and $\mathsf{Sim}$ are quantum oracle algorithms, and the notation ``$\approx_c$'' refers to Definition \ref{def: indistinguishability qrom}.

\begin{theorem}
The $\QPT$ algorithm $\mathsf{CP.Protect}$ from Construction \ref{cons:cp} satisfies distributional indistinguishability in the QROM for the class of distributions $\mathcal{D}_{\mathsf{PF}\mbox{-}\mathsf{UNP}}$.
\end{theorem}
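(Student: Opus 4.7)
The plan is to define the simulator $\mathsf{Sim}(1^\lambda, P.\mathsf{params})$ to output the maximally mixed state on $m(\lambda)+\lambda$ qubits (matching the output format of $\mathsf{CP.Protect}$), and to show via two hybrids that the ensemble $\mathbb{E}_{P_y\leftarrow D_\lambda}\mathsf{CP.Protect}(1^\lambda, y)$ is computationally indistinguishable in the QROM from this simulator output. Note that the simulator is oblivious to $y$.

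Replacing $\theta = G(y)$ with a uniformly random $\theta \leftarrow \{0,1\}^{m(\lambda)}$ gives the first hybrid. Indistinguishability follows from Corollary \ref{cor: dist entropy} applied to the oracle $G$: because $D \in \mathcal{D}_{\mathsf{PF}\mbox{-}\mathsf{UNP}}$ implies $\Hmin(y) \geq \lambda^\epsilon$, no query-bounded distinguisher with oracle access to $G$ has non-negligible advantage in distinguishing $G(y)$ from uniform. The reduction is straightforward: given the challenge $\theta^{\ast}$, sample $v \leftarrow \{0,1\}^{m(\lambda)}$, query $H(v)$, and feed $(\ket{v^{\theta^{\ast}}}, H(v))$ together with oracle access to $G$ and $H$ to the obfuscation distinguisher.

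Replacing $z = H(v)$ with uniform $z \leftarrow \{0,1\}^\lambda$ gives the second hybrid. I invoke Lemma \ref{lem: qrom technical step} with the $v$-indexed family of auxiliary states $\ket{\psi_v} = \ket{v^\theta}$ for uniformly random $\theta$ (absorbed into the joint randomness). Indistinguishability then reduces to bounding the quantity $M$ of Eq.~\eqref{eq: 44}, which by Jensen's inequality is controlled by the optimal probability of predicting $v$ from the auxiliary state alone. Since $\theta$ is unknown to the distinguisher, this is at most the single-shot guessing probability for $v$ from the mixture $\rho_v := \mathbb{E}_\theta \ket{v^\theta}\bra{v^\theta}$, which factorizes as a tensor product of single-qubit mixtures with $\|\rho_0 - \rho_1\|_1 = \sqrt{2}$. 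Per-qubit Helstrom yields optimal single-qubit success $\tfrac12 + \tfrac{1}{2\sqrt{2}}$, and standard product-state hypothesis-testing arguments give an overall guessing probability at most $\bigl(\tfrac12 + \tfrac{1}{2\sqrt{2}}\bigr)^{m(\lambda)} = 2^{-\Omega(m(\lambda))}$. Since $m(\lambda) > 5\lambda$, $M$ is negligible.

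Finally, in the second hybrid the averaged state is $\mathbb{E}_{\theta, v, z} \ket{v^\theta}\bra{v^\theta} \otimes \ket{z}\bra{z} = \mathds{1}/2^{m(\lambda)+\lambda}$, since $\mathbb{E}_v \ket{v^\theta}\bra{v^\theta} = \mathds{1}/2^{m(\lambda)}$ for every fixed $\theta$ (by unitarity of $H^\theta$) and $z$ is uniform, matching the simulator's output exactly. The main obstacle is the second hybrid, where one applies the O2H lemma with an auxiliary state depending on the secret $v$ and must bound the extraction term $M$; this works because the mixture $\rho_v$, viewed without the basis choice $\theta$, leaks only exponentially little (in $m(\lambda)$, hence in $\lambda$) information about $v$.
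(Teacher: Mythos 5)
Your proof is correct and coincides with the paper's in the simulator definition, the two-hybrid structure, and the use of Corollary~\ref{cor: dist entropy} and Lemma~\ref{lem: qrom technical step} for the two hybrid transitions; the one genuine divergence is in how the extraction term $M$ is bounded. The paper turns a hypothetical extractor of $v$ into an adversary for the monogamy of entanglement game of Lemma~\ref{lem: monogamy} — by having both $\mathcal B$ and $\mathcal C$ output the extracted classical string (ignoring $\theta$) — and cites the bound $\left(\tfrac12 + \tfrac{1}{2\sqrt2}\right)^{m(\lambda)}$ from that lemma. You instead bound the guessing probability $p_g\!\left(V \mid \mathbb E_\theta \ket{v^\theta}\bra{v^\theta}\right)$ directly via per-qubit Helstrom ($\tfrac12 + \tfrac1{2\sqrt2}$ each) plus multiplicativity of the guessing probability for tensor-product ensembles, i.e.\ additivity of conditional min-entropy under tensor products. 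The two routes produce the same numerical bound — unsurprisingly, since the tight strategy in the monogamy game is precisely an unentangled Helstrom measurement performed twice — but yours is more elementary in that it never imports the monogamy machinery, while the paper economizes by re-using an already-available lemma. The one point your version leaves implicit and should state is that the adversarially chosen auxiliary state $\nu_\lambda$ from Definition~\ref{def: indistinguishability} is independent of $v$, so it cannot increase the guessing probability; the monogamy reduction handles this automatically by letting the constructed monogamy adversary prepare $\nu_\lambda$ itself.
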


\begin{proof}
We define the following simulator $\mathsf{Sim}$:
\begin{itemize}
    \item $\mathsf{Sim}$ takes as input $1^{\lambda}$, an auxiliary state $\nu$, and outputs the state: $\frac{\mathds{1}}{2^{m(\lambda)+\lambda}} \otimes \nu$, where the first factor is the maximally mixed state on $m(\lambda) + \lambda$ qubits.
\end{itemize}

Let
$$ \rho_{\lambda,G, H} :=  \mathbb{E}_{P_y \leftarrow D_{\lambda}} \mathsf{CP.Protect}^{G,H}(1^{\lambda}, P_y) = \mathbb{E}_{P_y \leftarrow D_{\lambda}} \mathbb{E}_{v} \ket{v^{G(y)}}\bra{v^{G(y)}} \otimes \ket{H(v)}\bra{H(v)}\,,$$
and
$$\sigma_{\lambda,H} := \mathbb{E}_{\theta}\mathbb{E}_{v} \ket{v^{\theta}}\bra{v^{\theta}} \otimes \ket{H(v)}\bra{H(v)}\,.$$

By an argument analogous to that of the proof of Lemma \ref{lem: new hybrids 2},  it holds that, for any computationally bounded oracle adversary $\mathcal{A}$, and any auxiliary state $\nu_{\lambda}$, 
$$\mathbb{E}_{G,H} \big| \Pr[\mathcal{A}^{G, H}(\rho_{\lambda, G,H} \otimes \nu_{\lambda}) = 1] -  \Pr[\mathcal{A}^{G,H}(\sigma_{\lambda, H}\otimes \nu_{\lambda}) = 1] \big| = \negl(\lambda)\,.$$

Next, we replace the state $\sigma_{\lambda,H}$ with the state $\sigma'_{\lambda} :=  \mathbb{E}_{\theta} \mathbb{E}_{v}\mathbb{E}_z \ket{v^{\theta}}\bra{v^{\theta}} \otimes \ket{z}\bra{z}$. We argue that for any query bounded adversary $\mathcal{A}$, the following holds:

\begin{equation} \mathbb{E}_{H} \big| \Pr[\mathcal{A}^{H}(\sigma_{\lambda,H} \otimes \nu_{\lambda}) = 1] -  \Pr[\mathcal{A}^{H}(\sigma'_{\lambda}\otimes \nu_{\lambda}) = 1] \big| = \negl(\lambda)\,.
\label{eq: obf}
\end{equation}

Let $\mathcal{A}$ be a distinguisher making $q$ queries. Without loss of generality, let $\mathcal{A}$ be specified by the unitary $(U O^H)^q$, for some unitary $U$. 

We apply one-way-to-hiding (Lemma \ref{lem: qrom technical step}) to deduce that the LHS of \eqref{eq: obf} is negligible if the quantity  
$$\mathbb{E}_{H}\mathbb{E}_{v}\mathbb{E}_{\theta} \mathbb{E}_{z \leftarrow \{0,1\}^m} \mathbb{E}_k \Tr{\ket{v}\bra{v} (U O^H)^k \left(\ket{v^{\theta}}\bra{v^{\theta}} \otimes \ket{z}\bra{z} \otimes \nu_{\lambda}\right)(U O^H)^k } $$
is negligible. 

Suppose for a contradiction the latter is not negligible. Then, we can construct an adversary that wins at the monogamy of entanglement game of Lemma \ref{lem: monogamy}. The reduction is straightforward: the adversary for the monogamy of entanglement game prepares the auxiliary state $\nu_{\lambda}$, and runs $\mathcal{A}$ (by simulating an oracle) to extract $v$. Then sends $v$ to both $\mathcal{B}$ and $\mathcal{C}$ (using the notation for the monogamy game of Lemma \ref{lem: monogamy}). 

The conclusion of the theorem follows by observing that $\sigma'_{\lambda}$ is the maximally mixed state.
\end{proof}

\section{Extension to compute-and-compare programs}\label{sec:CnC}
In this section, we show that a quantum copy-protection scheme for point functions, which is secure with respect to the appropriate program and challenge ensembles, implies a quantum copy-protection scheme for compute-and-compare programs with the same level of security.

The idea is simple: to copy-protect the compute-and-compare program $\mathsf{CC}[f,y]$, we copy-protect the point function $P_y$, and give $f$ in the clear. By copy-protecting $P_y$ we are copy-protecting the portion of the compute-and-compare program which checks equality with $y$. The intuition is that this is sufficient to make the functionality unclonable because its output is not already determined by $f$. More generally, one might suspect that, for copy-protecting a function $F=f_1\circ f_2 ... \circ f_\ell$, it should be sufficient to copy-protect any  of the functions $f_i$ that is sufficiently non-constant \emph{within its context}.

Let $(\mathsf{CP}\mbox{-}\mathsf{PF.Protect},\mathsf{CP}\mbox{-}\mathsf{PF.Eval})$ be a copy-protection scheme for point functions. 

\begin{construction}[Copy-protection scheme for compute-and-compare programs]\label{cons: pf to cc} The quantum copy-protection scheme $(\mathsf{CP}\mbox{-}\mathsf{CC.Protect},\mathsf{CP}\mbox{-}\mathsf{CC.Eval})$  for compute-and-compare programs is defined as follows:
\begin{itemize}
\item $\mathsf{CP}\mbox{-}\mathsf{CC.Protect}(1^{\lambda}, (f,y))$: Takes as input a security parameter $\lambda$ and a compute-and-compare program $\mathsf{CC}[f,y]$, specified succinctly by $f$ and $y$. Then,
\begin{itemize}
    \item Let $\rho = \mathsf{CP}\mbox{-}\mathsf{PF.Protect}(\lambda, y))$.
    \item Output $(f, \rho)$.
\end{itemize}
\item $\mathsf{CP}\mbox{-}\mathsf{CC.Eval}(1^{\lambda}, (f,\rho); x)$: Takes as input a security parameter $\lambda$, an alleged copy-protected program $(f, \rho)$, and a string $x \in \{0,1\}^n$ (where $n$ is the size of the inputs to $f$). Then,
\begin{itemize}
    \item Compute $y' = f(x)$.
    \item Let $b \leftarrow \mathsf{CP}\mbox{-}\mathsf{PF.Eval}(\rho; y')$. Output $b$.
\end{itemize}
\end{itemize}
\end{construction}

Recall the definition of the class of distributions over point functions $\mathcal{D}_{\mathsf{PF}\mbox{-}\mathsf{UNP}}$ from Section \ref{sec: main}. We define a related class of distributions over compute-and-compare programs.
\begin{itemize}
    \item $\mathcal{D}_{\mathsf{CC}\mbox{-}\mathsf{UNP}}$. We refer to this class as the class of \textit{unpredictable compute-and-compare programs}. This consists of ensembles $D = \{D_{\lambda}\}$ where $D_{\lambda}$ is a distribution over compute-and-compare programs such that $\mathsf{CC}[f,y] \leftarrow D_{\lambda}$ satisfies $\Hmin(y|f) \geq \lambda^\epsilon$ for some $\epsilon>0$, and where the input length of $f$ is $\lambda$ and the output length is bounded by some polynomial $t(\lambda)$.
\end{itemize}

We also define the following class of distributions over input challenges:
\begin{itemize}
\item $\mathcal{D}_{\mathsf{CC}\mbox{-}\mathsf{Chall}}$. An ensemble $D = \{D_{f,y}\}$, where each $D_{f,y}$ is a distribution over pairs of elements in the domain of $f$, belongs to the class $\mathcal{D}_{\mathsf{CC}\mbox{-}\mathsf{Chall}}$ if there exists an efficiently sampleable family $\{X_{\lambda}\}$  of distributions over $\{0,1\}^{\lambda}$ with $\Hmin(X_{\lambda}) \geq \lambda^\epsilon$, for some $\epsilon >0$, and an efficiently sampleable family $\{Z_{f,y}\}$, where $Z_{f,y}$ is a distribution over the set $f^{-1}(y)$, such that $D_{f,y}$ is the following distribution (where $\lambda$ is the size of inputs to $f$):
    \begin{itemize}
        \item With probability $1/3$, sample $z \leftarrow Z_{f,y}$ and $x \leftarrow X_{\lambda}$, and output $(x,z)$. 
    \item With probability $1/3$, sample $z \leftarrow Z_{f,y}$ and $x \leftarrow X_{\lambda}$, and output $(z,x)$. 
    \item With probability $1/3$, sample $x,x' \leftarrow X_{\lambda}$, and output $(x,x')$.  
    \end{itemize}
    We say the ensemble $D$ is \emph{specified} by the families $\{X_{\lambda}\}$ and $\{Z_{f,y}\}$.
\end{itemize}

Just like in the point function case, we also define two classes of distributions over pairs of programs and challenges.

\begin{itemize}
\item $\mathcal{D}_{\mathsf{CC}\mbox{-}\mathsf{pairs}\mbox{-}\mathsf{stat}}.$ This consists of pairs of ensembles $\left(D = \{D_{\lambda}\}, D' = \{D'_{f,y}\} \right)$ where $D \in \mathcal{D}_{\mathsf{CC}\mbox{-}\mathsf{UNP}}$ and $D' \in \mathcal{D}_{\mathsf{CC}\mbox{-}\mathsf{Chall}}$ satisfying the following. Let $D'$ be parametrized by the families  $\{X_{\lambda}\}$ and $\{Z_{f,y}\}$ (following the notation introduced above), and denote by  $\mathsf{MarkedInput}\left(D_{\lambda}, \{Z_{f,y}\} \right)$ the distribution over $\{0,1\}^{\lambda}$ induced by $D_{\lambda}$ and $\{Z_{f,y}\}$, i.e.:
\begin{itemize}
    \item Sample $(f,y) \leftarrow D_{\lambda}$, then $z \leftarrow Z_{f,y}$.
\end{itemize}
For any fixed $f_*$ with domain $\{0,1\}^{\lambda}$ such that $(f_*,y_*)$ is in the support of $D_{\lambda}$ for some $y_*$, denote by $\mathsf{MarkedInput}(D_{\lambda}, \{Z_{f,y}\})|_{f_*}$, the distribution $\mathsf{MarkedInput}(D_{\lambda}, \{Z_{f,y}\})$ conditioned on $D_{\lambda}$ sampling $f_*$.
Then, we require that, for any sequence $\{f_*^{(\lambda)}\}$ (where, for all $\lambda$, $(f_*^{(\lambda)}, y_*)$ is in the support of $D_{\lambda}$ for some $y_*$), the families $\{X_{\lambda}\}$ and $\{\mathsf{MarkedInput}(D_{\lambda}, \{Z_{f,y}\})|_{f_*^{(\lambda)}}\}$ are statistically indistinguishable. 
\item $\mathcal{D}_{\mathsf{CC}\mbox{-}\mathsf{pairs}\mbox{-}\mathsf{comp}}.$ This is defined in the same way as $\mathcal{D}_{\mathsf{CC}\mbox{-}\mathsf{pairs}\mbox{-}\mathsf{stat}}$, except that we only require $\{X_{\lambda}\}$ and $\{\mathsf{MarkedInput}(D_{\lambda}, \{Z_{f,y}\})|_{f_*^{(\lambda)}}\}$ to be \emph{computationally} indistinguishable. 
\end{itemize}
\begin{theorem}
\label{thm: from pf to cc}
Let $(\mathsf{CP}\mbox{-}\mathsf{PF.Protect},\mathsf{CP}\mbox{-}\mathsf{PF.Eval})$ be a copy-protection scheme for point functions that is $\delta$-secure with respect to all pairs $(D,D') \in \mathcal{D}_{\mathsf{PF}\mbox{-}\mathsf{pairs}\mbox{-}\mathsf{stat}}$ ($\in \mathcal{D}_{\mathsf{PF}\mbox{-}\mathsf{pairs}\mbox{-}\mathsf{comp}}$). Then, the scheme of Construction \ref{cons: pf to cc}, instantiated with  $(\mathsf{CP}\mbox{-}\mathsf{PF.Protect},\mathsf{CP}\mbox{-}\mathsf{PF.Eval})$, is a $\delta$-secure copy-protection scheme for compute-and-compare programs with respect to all pairs $(D,D') \in \mathcal{D}_{\mathsf{CC}\mbox{-}\mathsf{pairs}\mbox{-}\mathsf{stat}}$ ($\in \mathcal{D}_{\mathsf{CC}\mbox{-}\mathsf{pairs}\mbox{-}\mathsf{comp}}$). The same conclusion holds relative to any oracle, i.e. when all algorithms have access to the same oracle, with respect to query-bounded (computationally bounded) adversaries.
\end{theorem}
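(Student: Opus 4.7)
The proof proceeds via a direct reduction. Given any adversary $\mathcal{A}^{CC}=(\mathcal{P}^{CC},\mathcal{F}^{CC}_1,\mathcal{F}^{CC}_2)$ winning $\mathsf{PiratingGame}$ for Construction \ref{cons: pf to cc} on a pair $(D^{CC},D'^{CC})\in\mathcal{D}_{\mathsf{CC}\mbox{-}\mathsf{pairs}\mbox{-}\mathsf{stat}}$ with probability $p$, the plan is to construct a PF adversary $\mathcal{A}^{PF}=(\mathcal{P}^{PF},\mathcal{F}^{PF}_1,\mathcal{F}^{PF}_2)$ winning the PF $\mathsf{PiratingGame}$ on a derived pair $(\tilde D^{PF},\tilde D'^{PF})\in\mathcal{D}_{\mathsf{PF}\mbox{-}\mathsf{pairs}\mbox{-}\mathsf{stat}}$ with probability at least $p-\negl(\lambda)$; the theorem then follows immediately from the hypothesized $\delta$-security of the PF scheme. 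Correctness is immediate from correctness of the underlying PF scheme together with the identity $\mathsf{CC}[f,y](x)=P_y(f(x))$.

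The derived pair is obtained by pushing $(D^{CC},D'^{CC})$ through $f$. I define $\tilde D^{PF}_{t(\lambda)}$ by sampling $(f,y)\leftarrow D^{CC}_\lambda$ and outputting the point function $P_y$; since $\Hmin(y)\ge\Hmin(y|f)\ge\lambda^\epsilon$ and $t(\lambda)=\poly(\lambda)$, this lies in $\mathcal{D}_{\mathsf{PF}\mbox{-}\mathsf{UNP}}$. I then define the PF challenge family $\{\tilde X_{t(\lambda)}\}$ by sampling $f$ from the marginal of $D^{CC}_\lambda$ and $x\leftarrow X_\lambda$, outputting $f(x)$. The statistical indistinguishability condition of $\mathcal{D}_{\mathsf{CC}\mbox{-}\mathsf{pairs}\mbox{-}\mathsf{stat}}$, applied to any sequence $\{f_*^{(\lambda)}\}$, yields the required min-entropy lower bound on $\tilde X_{t(\lambda)}$ and the statistical indistinguishability between $\{\tilde X_{t(\lambda)}\}$ and the marked-input family of $\tilde D^{PF}$, placing the pair in $\mathcal{D}_{\mathsf{PF}\mbox{-}\mathsf{pairs}\mbox{-}\mathsf{stat}}$.

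The reduction itself runs as follows. $\mathcal{P}^{PF}$ receives $\tilde\rho$ encoding some unknown marked input $y\sim\tilde D^{PF}$, samples $f$ from the marginal of $D^{CC}_\lambda$, runs $\mathcal{P}^{CC}$ on the pair $(f,\tilde\rho)$, and forwards each half of the resulting bipartite state together with a classical description of $f$ to the corresponding freeloader. On PF challenge $y_i$, each freeloader $\mathcal{F}^{PF}_i$ samples $x_i\sim X_\lambda$ conditioned on $f(x_i)=y_i$ (via rejection sampling, which is efficient because $X_\lambda$ is efficiently sampleable and $y_i$ lies in the image of $f(X_\lambda)$ except with negligible probability), runs $\mathcal{F}^{CC}_i$ on $(x_i,\text{register})$, and returns its output bit. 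The identity $\mathsf{CC}[f,y](x_i)=P_y(f(x_i))=P_y(y_i)$ converts any winning CC configuration into a winning PF configuration, and a case analysis over the three cases of the challenge distributions shows that $\mathcal{A}^{PF}$ wins with probability at least $p-\negl(\lambda)$.

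The main obstacle is that $\mathcal{P}^{PF}$ cannot sample $f$ from the true conditional distribution $\Pr[f\mid y]$, since $y$ is hidden inside $\tilde\rho$. My plan is to resolve this by sampling $f$ from the marginal of $D^{CC}_\lambda$ and to invoke the statistical indistinguishability condition of $\mathcal{D}_{\mathsf{CC}\mbox{-}\mathsf{pairs}\mbox{-}\mathsf{stat}}$: the hypothesis that $\mathsf{MarkedInput}(D_\lambda,\{Z_{f,y}\})|_{f_*^{(\lambda)}}$ is statistically close to $X_\lambda$ for every sequence $\{f_*^{(\lambda)}\}$ implies, by a short Bayesian calculation, that the joint distribution of $(f,y,x_1,x_2)$ induced by the reduction is statistically close to the joint distribution in the real CC game, so the success probabilities match up to a negligible additive loss. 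The oracle-relative version is handled by routing all queries through the single shared oracle, and the computational variant (pairs in $\mathcal{D}_{\mathsf{CC}\mbox{-}\mathsf{pairs}\mbox{-}\mathsf{comp}}$) follows by replacing statistical with computational indistinguishability throughout, which the $\QPT$ reduction preserves.
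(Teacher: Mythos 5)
Your reduction has a genuine gap at the step where $\mathcal{P}^{PF}$ samples $f$ from the marginal of $D^{CC}_\lambda$, independently of the marked input $y$ hidden inside $\tilde\rho$. This makes the joint distribution of $(f,y)$ in your simulated game the \emph{product} of the marginals, whereas in the real CC game $(f,y)$ is drawn jointly from $D^{CC}_\lambda$. The condition defining $\mathcal{D}_{\mathsf{CC}\mbox{-}\mathsf{pairs}\mbox{-}\mathsf{stat}}$ does not close this gap: it says that, for any fixed $f_*$, the pre-image $z\leftarrow Z_{f_*,y}$ is statistically close to $X_\lambda$, hence that $y$ conditioned on $f=f_*$ is close to $f_*(X_\lambda)$ --- a distribution that \emph{depends on $f_*$}. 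It therefore does not make $f$ and $y$ approximately independent. Concretely, if two functions $f_1,f_2$ in the support of $D^{CC}_\lambda$ have disjoint ranges and equal probability, then in the real game $y\in\mathrm{im}(f)$ always, but in your simulation $y\notin\mathrm{im}(f)$ with probability $1/2$; the total-variation distance between the two games is a constant, so the transfer of $\mathcal{A}^{CC}$'s winning probability from your simulated game to the real one simply fails. The downstream effects compound this: the PF challenge $y_i$ (whether the marked point, or a fresh sample from $\tilde X$, which itself draws a fresh independent $f'$) need not lie in the image of the $f$ your pirate chose, so the step ``sample $x_i\sim X_\lambda$ conditioned on $f(x_i)=y_i$'' may be impossible, and even when $y_i\in\mathrm{im}(f)$ its pre-image mass under $X_\lambda$ can be negligibly small, so rejection sampling is not a $\QPT$ procedure. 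Finally, even when it succeeds, $X_\lambda$ conditioned on $f(x_i)=y_i$ need not match the distribution $Z_{f,y_i}$ appearing in the real CC challenge distribution $D_{f,y}$.

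The missing idea is to \emph{fix} $f$ rather than sample it. The paper's proof opens with an averaging argument: if $\mathcal{A}^{CC}$ wins with probability $p(\lambda)$ overall, then for each $\lambda$ there is some $f_*^{(\lambda)}$ in the support of $D_\lambda$ such that $\mathcal{A}^{CC}$ wins with probability at least $p(\lambda)$ conditioned on $f=f_*^{(\lambda)}$. The reduction is then built entirely around this hard-coded sequence: $\mathcal{P}'$ always feeds $(f_*^{(\lambda)},\rho)$ to $\mathcal{P}^{CC}$; the PF program distribution $D'_{t(\lambda)}$ outputs $P_{f_*^{(\lambda)}(x)}$ for $x\leftarrow X_\lambda$; the PF challenge distribution $D'_y$ is the push-forward of $D_{f_*^{(\lambda)},y}$ under $f_*^{(\lambda)}$; and $\mathcal{F}_i'$ recovers a pre-image of its challenge $x_i$ by sampling directly from the efficiently-sampleable $Z_{f_*^{(\lambda)},x_i}$ provided by $D'\in\mathcal{D}_{\mathsf{CC}\mbox{-}\mathsf{Chall}}$, not by rejection sampling. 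Since $f$ is now a single fixed function, every challenge lies in its image by construction, the pre-image distributions match $Z_{f_*^{(\lambda)},\cdot}$ exactly, and the statistical-indistinguishability condition is invoked precisely where it applies: to argue that the CC game \emph{conditioned on $f=f_*^{(\lambda)}$} is statistically close to the game the reduction presents to $\mathcal{A}^{CC}$. You should replace your on-the-fly sampling of $f$ with this ``best slice of $f$'' step; once $f$ is fixed, the rest of your reduction goes through along the lines you sketched.
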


\begin{proof}
We prove the theorem for the case of $\left(\{D_{\lambda}\}, \{D_{f,y}\} \right)\in \mathcal{D}_{\mathsf{CC}\mbox{-}\mathsf{pairs}\mbox{-}\mathsf{stat}}$ (the case of $\left(\{D_{\lambda}\}, \{D_{f,y}\} \right)\in \mathcal{D}_{\mathsf{CC}\mbox{-}\mathsf{pairs}\mbox{-}\mathsf{comp}}$ being virtually identical). Let $t(\lambda)$ be the length of strings in the range of $f$'s sampled from $D_{\lambda}$. Let the ensemble $\{D_{f,y}\}$ be specified by $\{Z_{f,y}\}$ and $\{X_{\lambda}\}$ (using the notation introduced above for ensembles in $\mathcal{D}_{\mathsf{CC}\mbox{-}\mathsf{Chall}}$). 

Let $\mathcal{A} = (\mathcal{P},\mathcal{F}_1,\mathcal{F}_2)$ be an adversary for the compute-and-compare copy-protection scheme of Construction \ref{cons: pf to cc} with respect to ensembles $\{D_{\lambda}\}$ and $\{D_{f,y}\}$. Suppose $\mathcal{A}$ wins with probability $p(\lambda)>0$.
It follows that for each $\lambda$ there exists $f_*^{(\lambda)}$ such that $(f_*^{(\lambda)},y)$ is in the support of $D_{\lambda}$ for some $y$, and the probability that $\mathcal{A}$ wins is at least $p(\lambda)$, conditioned on $f^{(\lambda)}$ being sampled. 

We will construct an adversary $\mathcal{A}'$ that wins with probability $p(\lambda) - \negl(\lambda)$ in the point function security game with respect to $\{D'_{t(\lambda)}\}$ and $\{D'_y\}$, defined as follows:
\begin{itemize}
    \item $D'_{t(\lambda)}$: sample $x \leftarrow X_{\lambda}$ and output the point function $P_{f^{(\lambda)}_*(x)}$.
    \item $D'_y$: sample $(x,x') \leftarrow D_{f_*^{(\lambda)},y}$ and output $(f_*^{(\lambda)}(x), f_*^{(\lambda)}(x'))$.
\end{itemize}
The adversary $\mathcal{A}' = (\mathcal{P}', \mathcal{F}_1', \mathcal{F}_2')$ then acts as follows:
\begin{itemize}
    \item $\mathcal{P}'$ receives as input a state $\rho$. Then, $\mathcal{P}'$ provides $(f_*^{(\lambda)}, \rho)$ as input to $\mathcal{P}$. Let $\textsf{A}_1$ and $\textsf{A}_2$ be the registers returned by $\mathcal{P}$. $\mathcal{P}'$ forwards $\textsf{A}_1$ and $\textsf{A}_2$ to $\mathcal{F}_1'$ and $\mathcal{F}_2'$ respectively.
    \item Upon receiving a challenge $x_i$, $\mathcal{F}_i'$ samples $x_i' \leftarrow Z_{f,x_i}$. $\mathcal{F}_i'$ then runs $\mathcal{F}_i$ on input $x'_i$ and the register $\textsf{A}_i$. Let $b_i$ be the output returned by $\mathcal{F}_i$. $\mathcal{F}_i'$ returns $b_i$ to the challenger.
\end{itemize}

It is straightforward to check that the game ``simulated'' by $(\mathcal{P}', \mathcal{F}_1',\mathcal{F}_2')$ for $(\mathcal{P}, \mathcal{F}_1, \mathcal{F}_2)$ is statistically indistinguishable from a security game with respect to $\{D_{\lambda}\}$ and $\{D_{f,y}\}$, conditioned on $f^{(\lambda)}_*$. 
Thus, we deduce, by hypothesis, that $\mathcal{F}_1$ and $\mathcal{F}_2$ both return the correct bits with probability at least $p(\lambda)-\negl(\lambda)$, and thus $(\mathcal{P}', \mathcal{F}_1', \mathcal{F}_2')$ wins with probability at least $p(\lambda)-\negl(\lambda)$.
Crucially, note that $\left(\{D'_{t(\lambda)}\}, \{D'_y\}\right) \in \mathcal{D}_{\mathsf{PF}\mbox{-}\mathsf{pairs}\mbox{-}\mathsf{stat}}$. It follows that if the point function copy-protection scheme is $\delta$-secure, then the compute-and-compare copy-protection scheme must also be $\delta$-secure. It is easy to check that all steps of the proof also hold relative to an oracle.
\end{proof}

\begin{cor}
Let $\delta^*>0$ be the constant from Theorem \ref{thm: security}. There exists a $\delta^*$-secure copy-protection scheme for compute-and-compare programs with respect to ensembles $(D,D') \in \mathcal{D}_{\mathsf{PF}\mbox{-}\mathsf{pairs}\mbox{-}\mathsf{stat}}$ ($\in \mathcal{D}_{\mathsf{PF}\mbox{-}\mathsf{pairs}\mbox{-}\mathsf{comp}}$) against query-bounded adversaries (computationally bounded adversaries, assuming the existence of quantum-secure one-way functions).
\end{cor}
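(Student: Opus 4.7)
The plan is to derive the corollary by simply chaining together the two main results that precede it: Theorem \ref{thm: security}, which guarantees that Construction \ref{cons:cp} is a $\delta^*$-secure copy-protection scheme for point functions in the QROM with respect to any pair $(D,D') \in \mathcal{D}_{\mathsf{PF}\mbox{-}\mathsf{pairs}\mbox{-}\mathsf{stat}}$ (respectively $\mathcal{D}_{\mathsf{PF}\mbox{-}\mathsf{pairs}\mbox{-}\mathsf{comp}}$, assuming quantum-secure one-way functions), and Theorem \ref{thm: from pf to cc}, which lifts any such point-function copy-protection scheme to a compute-and-compare copy-protection scheme with the same security parameter, and whose conclusion explicitly holds relative to any oracle.

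Concretely, I would first instantiate $(\mathsf{CP}\mbox{-}\mathsf{PF.Protect}, \mathsf{CP}\mbox{-}\mathsf{PF.Eval})$ in Construction \ref{cons: pf to cc} with the pair $(\mathsf{CP.Protect}, \mathsf{CP.Eval})$ from Construction \ref{cons:cp}, using the same two random oracles $G, H$ with $m(\lambda) > 5\lambda$. By Theorem \ref{thm: security}, the inner scheme is $\delta^*$-secure for all pairs in $\mathcal{D}_{\mathsf{PF}\mbox{-}\mathsf{pairs}\mbox{-}\mathsf{stat}}$ against query-bounded adversaries in the QROM (and similarly for $\mathcal{D}_{\mathsf{PF}\mbox{-}\mathsf{pairs}\mbox{-}\mathsf{comp}}$ against computationally bounded adversaries, under the stated one-way-function assumption).

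Next, I would invoke Theorem \ref{thm: from pf to cc} with this scheme as input. Since the theorem's statement explicitly asserts that ``the same conclusion holds relative to any oracle, i.e.\ when all algorithms have access to the same oracle, with respect to query-bounded (computationally bounded) adversaries'', we may apply it in the QROM setting to transport the $\delta^*$-security of the point-function scheme into $\delta^*$-security of the resulting compute-and-compare scheme with respect to any pair in $\mathcal{D}_{\mathsf{CC}\mbox{-}\mathsf{pairs}\mbox{-}\mathsf{stat}}$ (respectively $\mathcal{D}_{\mathsf{CC}\mbox{-}\mathsf{pairs}\mbox{-}\mathsf{comp}}$). The only thing to double-check is that the ensembles produced inside the reduction of Theorem \ref{thm: from pf to cc}, namely the induced point-function program ensemble $\{D'_{t(\lambda)}\}$ and input challenge ensemble $\{D'_y\}$, indeed land inside the classes for which Theorem \ref{thm: security} applies; this is immediate from the construction of $\{D'_{t(\lambda)}\}$ and $\{D'_y\}$ inside the proof of Theorem \ref{thm: from pf to cc}, together with the statistical (resp.\ computational) indistinguishability conditions built into the definitions of $\mathcal{D}_{\mathsf{CC}\mbox{-}\mathsf{pairs}\mbox{-}\mathsf{stat}}$ and $\mathcal{D}_{\mathsf{CC}\mbox{-}\mathsf{pairs}\mbox{-}\mathsf{comp}}$.

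There is no real technical obstacle here; the corollary is a modular composition result, and the only mildly delicate point is the bookkeeping ensuring that the ensembles match. The heavy lifting has already been done in the proofs of Theorems \ref{thm: security} and \ref{thm: from pf to cc}, so the argument amounts to a single line of composition once one verifies that the reduction of Theorem \ref{thm: from pf to cc} is oracle-preserving, which is the case.
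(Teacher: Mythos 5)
Your proof is correct and takes essentially the same route as the paper, which simply observes that combining Theorem \ref{thm: security} with Theorem \ref{thm: from pf to cc} immediately yields the corollary; you additionally spell out the instantiation and the bookkeeping that the ensembles produced by the reduction fall into the right classes, which is sound. (One small remark: the ensembles in the corollary's statement are presumably meant to be drawn from $\mathcal{D}_{\mathsf{CC}\mbox{-}\mathsf{pairs}\mbox{-}\mathsf{stat}}$ and $\mathcal{D}_{\mathsf{CC}\mbox{-}\mathsf{pairs}\mbox{-}\mathsf{comp}}$ rather than the $\mathsf{PF}$ versions, as your argument implicitly and correctly assumes.)
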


\begin{proof}
Combining Theorem \ref{thm: security} with Theorem \ref{thm: from pf to cc} immediately gives this corollary.
\end{proof}

\section{Secure software leasing}\label{sec:SSL}

Recall that the level of security achieved by our copy-protection scheme in Theorem \ref{thm: security} is far from ideal (our security guarantees that the freeloaders fail at least with some constant probability at simultaneously answering correctly). In this section, we show that our construction (augmented with a verification routine) satisfies a weaker notion of copy-protection called ``secure software leasing'' $(\SSL)$, introduced in \cite{ananth2020secure}, but with a standard level of security, i.e.~the adversarial advantage is negligible in the security parameter.

We formalize the notion of $\SSL$.
\begin{definition}[Secure software leasing]
\label{def: ssl}
Let $\mathcal{C}$ be a family of classical circuits with a single bit output. A secure software leasing $(\SSL)$ scheme consists of $\QPT$ algorithms $(\SSL.\Gen,
\SSL.\Lease,\SSL.\Eval,\SSL.\Verify)$ defined as follows:
\begin{itemize}
    \item $\SSL.\Gen(1^\lambda)$ takes as input the security parameter $\lambda$ and outputs a secret key $\sk$.
    \item $\SSL.\Lease(\sk, C)$ takes as input a secret key $\sk$ and a circuit $C \in \mathcal{C}$ with input size $n$, and outputs a quantum state $\rho_C$.
    \item $\SSL.\Eval(x, \rho_C)$ takes a string $x$ as input to $C$ together with a state $\rho_C$, and outputs a bit and a post-evaluation state $\tilde \rho_C$.
    \item $\SSL.\Verify(\sk, C, \sigma)$ takes as input the secret key $\sk$, the circuit $C \in \mathcal{C}$ and a state $\sigma$, and outputs $1$, if $\sigma$ is a valid lease state for $C$, and $0$ otherwise. 
\end{itemize}
There exists a negligible function $\mu$ such that the scheme satisfies:
\begin{itemize}
    \item Correctness of evaluation: for all $\lambda$, for all $C \in \mathcal{C}$, and for all $x$ in the domain of $C$,
    $$ \Pr\big[ \SSL.\Eval(x,\rho) = C(x): \rho \leftarrow  \SSL.\Lease(\sk, C), \, \sk \leftarrow \SSL.\Gen(1^\lambda) \big] \geq 1 - \mu(\lambda). \quad \quad \quad \quad \quad \quad \quad \,\,\,
    $$
    \item Correctness of verification: for all $\lambda$, for all $C \in \mathcal{C}$, and for all $x$ in the domain of $C$,
    $$\Pr\big[ \SSL.\Verify(\sk, C, \rho) = 1: \rho \leftarrow  \SSL.\Lease(\sk, C), \, \sk \leftarrow \SSL.\Gen(1^\lambda) \big] \geq 1 - \mu(\lambda).
    $$
\end{itemize}

\end{definition}
    Security is defined in terms of a security game between a lessor and an adversary $\mathcal{A}$ (the lessee). 
    Informally, any secure software leasing ($\SSL$) scheme should satisfy the following key property. After receiving a leased copy of $C$ denoted by $\rho_C$ (generated using $\SSL.\Lease$), the adversary should not be able to produce a quantum state $\sigma$ on registers $\mathsf{R}_1$ and $\mathsf{R}_2$ such that:
\begin{itemize}
    \item $\SSL.\Verify$ deems the contents of register $\mathsf{R}_1$ of $\sigma_{\mathsf{R_1R_2}}$ to be valid, once it is returned.
    \item The adversary can still compute $C$ (on inputs chosen by the lessor) from the post-measurement state in register $\mathsf{R_2}$ given by $\sigma_{\mathsf{R_2}}^* \propto \mathrm{Tr}_{\mathsf{R_1}}\big[ \Pi_1 \big[ \big( \SSL.\Verify(\cdot)_{\mathsf{R_1}} \otimes \Id_{\mathsf{R_2}} \big)\sigma_{\mathsf{R_1 R_2}} \big] \big]$.
\end{itemize}
As in the case of copy protection, the security game is specified by a security parameter $\lambda$, a distribution $D_\lambda$ over circuits $C$ from $\mathcal C$, and a family of distributions $\{D_C\}_{C\in\mathcal C}$ over inputs to $C$. The game proceeds as follows:

\begin{itemize}
    \item The lessor samples a circuit $C \leftarrow D_{\lambda}$ and runs $\sk \leftarrow \SSL.\Gen(1^\lambda)$. Then, the lessor runs $\rho \leftarrow \SSL.\Lease(\sk, C)$. The lessor sends $\rho$ to $\mathcal{A}$.
    
    \item $\mathcal{A}$ outputs a (possibly entangled) state $\sigma$ on two registers $\mathsf{R}_1$ and $\mathsf{R}_2$, and then sends the first register $\mathsf{R}_1$ to the lessor.

    \item For verification, the lessor runs $\SSL.\Verify$ on input the secret key $\sk$, the circuit $C \in \mathcal{C}$ and the register $\mathsf{R}_1$ of the state $\sigma_{\mathsf{R}_1 \mathsf{R}_2}$. If $\SSL.\Verify$ accepts, the lessor outputs $\mathsf{ok}=1$ and lets the game continue, otherwise, the lessor outputs $\mathsf{ok}=0$ and $\mathcal{A}$ loses. 
    \item The lessor samples $x \leftarrow \mathcal{D}_C$, and sends $x$ to the adversary.
    \item $\mathcal{A}$ responds with a bit $b$. If $b = C(x)$, the lessor outputs $1$. Otherwise, the lessor outputs $0$. 
\end{itemize}
We let $\mathsf{SSLGame}(\lambda, \mathcal{A}, D_{\lambda}, \{D_C\})$ denote a random variable that is equal to $1$, if the above security game is won, and $0$ otherwise.

As in the case of full copy protection, we define the trivial winning probability, which in the case of $\SSL$ is just the straightforward guessing probability for the answer to the challenge,
\begin{equation}
    p^{\mathrm{triv}, \SSL}_{D_\lambda,\{D_C\}_{C\in\mathcal C}}=\max_{b\in\{0,1\}}\mathbb E_{C\leftarrow D_\lambda}\hat D_C(b),
\end{equation}
where $\hat D_C(b)$ is the probability that the correct answer to a challenge sampled from $D_C$ is $b$.
\begin{definition}[Security]
\label{def: security ssl}
A secure software leasing $(\SSL)$ scheme for a family of classical circuits $ \mathcal{C} = \{C_{\lambda}\}_{\lambda \in \mathbb{N}}$ is said to be $\delta$-secure with respect to the ensemble $D = \{D_{\lambda}\}_{\lambda \in \mathbb{N}}$ of distributions over circuits in $\mathcal{C}$, and with respect to the ensemble $\{D_{C}\}_{C \in \mathcal{C}}$, where $D_{C}$ is a distribution over challenge inputs to program $C$, if for any $\lambda \in \mathbb{N}$ and any $\QPT$ adversary $\mathcal{A}$,
$$ \Pr[\mathsf{SSLGame}(\lambda, \mathcal{A}, D_{\lambda}, \{D_{C}\}) =1 ] \leq 1-\delta(\lambda)+\negl(\lambda)\,.$$
If $\delta(\lambda)=1-p^{\mathrm{triv}, \SSL}_{D_\lambda,\{D_C\}_{C\in\mathcal C}}$, we simply call the scheme secure.
\end{definition}
We refer to $D = \{D_{\lambda}\}_{\lambda \in \mathbb{N}}$ as the \textit{program ensemble}, and to $\{D_{C}\}_{C \in \mathcal{C}}$ as the \textit{input challenge ensemble}.
 
\subsection{Secure software leasing for point functions}\label{sec:SSL_PF}

We first consider a version of the scheme for point functions in Construction \ref{cons:cp}, augmented with a verification procedure. Then, in Section \ref{sec:SSL-CnC}, we extend the scheme to the class of compute-and-compare programs, in a similar way as in the case of copy-protection. Again, for simplicity, we hand the marked input $y \in \{0,1\}^{n}$ to the leasing algorithm as an input, rather than a circuit for the point function $P_y$ itself. We also omit the procedure $\SSL.\Gen$ as we do not require it in our construction. 

\begin{construction}[$\SSL$ scheme for point functions] 
\label{const: ssl}Let $\lambda$ be the security parameter, and let $H: \{0,1\}^{m(\lambda)} \rightarrow  \{0,1\}^{\lambda}$ and $G: \{0,1\}^{n} \rightarrow \{0,1\}^{m(\lambda)}$ be hash functions, where $m(\lambda) \geq \lambda$. Consider the following secure software leasing $(\SSL)$ scheme $(\SSL.\Lease,\SSL.\Eval, \SSL.\Verify)$ for point functions $P_y$ with marked input $y \in \{0,1\}^n$:

\begin{itemize}
\item $\SSL.\Lease(1^\lambda,y)$: Takes as input a security parameter $\lambda$ and a point function $P_y$, succinctly specified by the marked input $y$ (of size $n$)
\begin{itemize}
    \item Set $\theta = G(y)$.
    \item Sample $v \leftarrow \{0,1\}^{m(\lambda)}$ uniformly at random and let $z = H(v)$.
        \item Output $(\ket{v^{\theta}}, z)$.
\end{itemize}
\item $\SSL.\Eval(1^{\lambda}, (\rho, z) ; x)$: Takes as input a security parameter $\lambda$, a program $(\rho, z)$, and a string $x \in \{0,1\}^{n}$ (the input on which the program is to be evaluated).
    \begin{itemize}
    \item Set $\theta' = G(x)$.
    \item Apply Hadamards $H^{\theta'} = H^{\theta'_1} \otimes \dots \otimes H^{\theta'_\lambda}$ to $\rho$. Append $n+1$ ancillary qubits, all in state $\ket 0$, and compute the hash function $H$ with input $\rho$ into the first $n$ of them (possibly making use of additional ancillary qubits). Then, coherently measure whether the first $n$ ancilla qubits are in state $\ket z$, recording the result in the last ancilla qubit, uncompute the hash function $H$ and undo the Hadamards $H^{\theta'}$. Finally, measure the last ancilla qubit to obtain a bit $b$ as output.
    \end{itemize}
\item $\SSL.\Verify(1^{\lambda},y,z, \sigma)$: Apply $H^\theta$ to the input state $\sigma$, where $\theta = G(y)$, and measure in the standard basis. Output $1$ if the result is $v$ such that $H(v) = z$, and $0$ otherwise.
\end{itemize}
\end{construction}
\vspace{2mm}

The correctness property of Construction \ref{const: ssl} according to Definition \ref{def: ssl} is immediate to verify. Before stating our main theorem on the security of Construction \ref{const: ssl}, we recall the following class of distributions over point functions, which was defined in Section \ref{sec: main}.
\begin{itemize}
    \item $\mathcal{D}_{\mathsf{PF}\mbox{-}\mathsf{UNP}}$. The class of \textit{unpredictable point function distributions} $\mathcal{D}_{\mathsf{PF}\mbox{-}\mathsf{UNP}}$ consists of ensembles $D = \{D_{\lambda}\}$ where $D_{\lambda}$ is a distribution over point functions on $\{0,1\}^{\lambda}$ such that $P_y \leftarrow D_{\lambda}$ satisfies $\Hmin(y) \geq \lambda^\epsilon$ for some $\epsilon>0$.
\end{itemize}
We also define the following class of distributions over input challenges.
\begin{itemize}
\item $\mathcal{D}_{\mathsf{PF}\mbox{-}\mathsf{Chall}\mbox{-}\mathsf{SSL}}$. An ensemble $D = \{D_y\}$, where each $D_y$ is a distribution over $\{0,1\}^{|y|}$, belongs to the class $\mathcal{D}_{\mathsf{PF}\mbox{-}\mathsf{Chall}\mbox{-}\mathsf{SSL}}$ if there exists an efficiently sampleable family $\{X_{\lambda}\}$  of distributions over $\{0,1\}^{\lambda}$ with $\Hmin(X_{\lambda}) \geq \lambda^\epsilon$, for some $\epsilon >0$, such that $D_y$ is the following distribution (where $\lambda = |y|$):
    \begin{itemize}
    \item with probability $1/2$, output $y$. 
    \item with probability $1/2$, sample $x \leftarrow X_{\lambda}$, and output $x$. 
\end{itemize}
We say the ensemble $D$ is \emph{specified} by the ensemble $X_{\lambda}$.
\end{itemize}

We finally define two classes of distributions over pairs of programs and challenges.

\begin{itemize}
\item $\mathcal{D}_{\mathsf{PF}\mbox{-}\mathsf{pairs}\mbox{-}\mathsf{stat}\mbox{-}\mathsf{SSL}}.$ This consists of pairs of ensembles $\left(D = \{D_{\lambda}\}, D' = \{D'_{y}\} \right)$ where $D \in \mathcal{D}_{\mathsf{PF}\mbox{-}\mathsf{UNP}}$ and $D' \in \mathcal{D}_{\mathsf{PF}\mbox{-}\mathsf{Chall}\mbox{-}\mathsf{SSL}}$ satisfying the following. Let $D'$ be parametrized by the family  $\{X_{\lambda}\}$ (following the notation introduced above), and denote by  $\mathsf{MarkedInput}(D_{\lambda})$ the distribution over marked points in $\{0,1\}^{\lambda}$ induced by $D_{\lambda}$. Then, the families $\{X_{\lambda}\}$ and $\{\mathsf{MarkedInput}(D_{\lambda})\}$ are statistically indistinguishable. 
\item $\mathcal{D}_{\mathsf{PF}\mbox{-}\mathsf{pairs}\mbox{-}\mathsf{comp}\mbox{-}\mathsf{SSL}}.$ This is defined in the same way as $\mathcal{D}_{\mathsf{PF}\mbox{-}\mathsf{pairs}\mbox{-}\mathsf{stat}\mbox{-}\mathsf{SSL}}$, except that we only require $\{X_{\lambda}\}$ and $\{\mathsf{MarkedInput}(D_{\lambda})\}$ to be \emph{computationally} indistinguishable. 
\end{itemize}

The following is our main result on the security of Construction \ref{const: ssl}.

\begin{theorem}\label{thm:ssl}
The scheme of Construction \ref{const: ssl}, with $m(\lambda) = \poly(\lambda)$, is a secure software leasing scheme for point functions with respect to any pair of ensembles $(D, D') \in \mathcal{D}_{\mathsf{PF}\mbox{-}\mathsf{pairs}\mbox{-}\mathsf{stat}\mbox{-}\mathsf{SSL}}$ ($\in \mathcal{D}_{\mathsf{PF}\mbox{-}\mathsf{pairs}\mbox{-}\mathsf{comp}\mbox{-}\mathsf{SSL}}$), against query-bounded (computationally bounded) adversaries in the quantum random oracle model.
\end{theorem}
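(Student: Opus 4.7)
The plan is to establish security via a short sequence of hybrids that isolate the essential task of the adversary, and then invoke a monogamy-of-entanglement ``uncertainty relation'' in the spirit of Unruh's analysis of revocable timed-release encryption \cite{Unruh15}.

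The first hybrid replaces the non-marked part of the challenge distribution $D'_y$---which samples from $X_\lambda$---by a fresh independent draw from $\mathsf{MarkedInput}(D_\lambda)$. This costs at most a negligible amount by the definition of $\mathcal{D}_{\mathsf{PF}\mbox{-}\mathsf{pairs}\mbox{-}\mathsf{stat}\mbox{-}\mathsf{SSL}}$ (resp.\ $\mathcal{D}_{\mathsf{PF}\mbox{-}\mathsf{pairs}\mbox{-}\mathsf{comp}\mbox{-}\mathsf{SSL}}$). The second hybrid replaces the hash $z = H(v)$ handed to the adversary with a uniformly random $z^{*} \in \{0,1\}^\lambda$, via the one-way-to-hiding lemma (Lemma \ref{lem: qrom technical step}). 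The O2H error is controlled by the probability that the adversary's $H$-query register has appreciable weight on $v$; this probability is negligible, for otherwise one could extract $v$ from the adversary's execution and violate Corollary \ref{cor: monogamy entropy} applied to $v$ uniform in $\{0,1\}^{m(\lambda)}$.

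The core of the argument is the following monogamy/uncertainty statement: conditioned on the verification phase succeeding in the final hybrid, the post-verification marginal of the adversary on register $\mathsf{R}_2$ is negligibly close (averaged over $v, \theta$) to a fixed state that is independent of $\theta$. Suppose this failed, and $\mathsf{R}_2$ admitted a measurement whose outcome distribution differs non-negligibly under $\theta$ versus a random $\theta'$. One can then convert this measurement, using the challenge $x$ as a probe for whether $G(x) = \theta$, into a POVM family $\{C_{\theta}^{v}\}_{v}$ on $\mathsf{R}_2$ that recovers $v$ with non-negligible probability given $\theta$ and (when relevant) a valid preimage of $z^{*}$. Combined with the verifier's POVM $\{B_{\theta}^{v}\}_{v}$ on $\mathsf{R}_1$---which succeeds at outputting $v$ by assumption of verification success---this yields a strategy that directly contradicts the monogamy bound of Lemma \ref{lem: monogamy}.

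With the post-verification state on $\mathsf{R}_2$ effectively $\theta$-independent (and hence $y$-independent once $G$ is a random oracle), the challenge phase reduces to the purely classical task: given two i.i.d.\ draws $y, y'$ from $\mathsf{MarkedInput}(D_\lambda)$ and a single one of them as challenge $x$, decide which one was labelled $y$, using only $y$-independent side information. Any such strategy succeeds with probability exactly $1/2$, which matches $1 - p^{\mathrm{triv}, \SSL}_{D_\lambda, \{D_y\}} = 1/2$ up to negligible error, proving the theorem. The main obstacle is the decoupling step of the previous paragraph, since Lemma \ref{lem: monogamy} gives only a guessing bound rather than a trace-distance decoupling statement; the key simplification over the copy-protection proof (Lemma \ref{lem: reduction to Gr}) is that only the $\mathsf{R}_2$-side needs to be converted from decision to extraction (the verifier side already outputs $v$ by construction), and no parallel repetition is required, so I expect a direct adaptation of Unruh's extraction argument to go through.
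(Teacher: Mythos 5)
Your hybrid scaffolding (replacing $X_\lambda$ by $\mathsf{MarkedInput}(D_\lambda)$, replacing $H(v)$ by random $z^*$ via O2H) and your key simplification -- that the verifier's register $\mathsf{R}_1$ already produces $v$ by construction, so no simultaneous two-party search-to-decision conversion and no parallel repetition are needed -- are both correct and match the paper's intuition. However, there is a genuine gap in the core step. You phrase the crux as a \emph{decoupling} statement: ``the post-verification marginal on $\mathsf{R}_2$ is negligibly close (averaged over $v, \theta$) to a fixed $\theta$-independent state,'' and propose to prove it by contradiction with Lemma~\ref{lem: monogamy}. But a monogamy game bound gives at best a \emph{guessing} bound on $v$: ``the adversary cannot output $v$ from $\mathsf{R}_2$ given $\theta$ with noticeable probability.'' This does not immediately imply that the conditional post-measurement state on $\mathsf{R}_2$ is close in trace distance to something $\theta$-independent -- that is a stronger claim, and the verifier's $H^\theta$-basis measurement on $\mathsf{R}_1$ \emph{can} steer an entangled $\mathsf{R}_2$ in a $\theta$-dependent way without enabling $v$-extraction. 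You acknowledge this mismatch yourself (``only a guessing bound rather than a trace-distance decoupling statement''); that acknowledgement is accurate, and expecting Unruh's extraction argument to ``go through'' is where the proof is left open.

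The paper sidesteps decoupling entirely. It bounds $|p(H_3)-p(H_4)|$ via O2H by the probability of the adversary \emph{querying} $H$ at $v$ \emph{after} a successful verification (Lemma~\ref{lem: 30}), and then shows that this querying probability is negligible by a direct min-entropy / uncertainty-relation argument: conditioned on verification passing, the joint $\mathsf{R}_1\mathsf{R}_2$ state is close to a state in the image of $\Pi_t^{\EPR}$ (a few-Pauli-error $\EPR$ subspace, via the hybrids $\widetilde{H}_1$, $\widetilde{H}_2$ using the ricochet property and $\Pieq$), and for such states $\Hmin(V|\mathsf{R}_2) \geq m - 2t\log(m+1)$ (Lemma~\ref{lem:monogamy-uncertainty}). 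Setting $t \approx \sqrt{m}$ makes both the $2^{-t-1}$-tail and $2^{-m}(m+1)^{2t}$ negligible. This is the ``custom uncertainty relation'' variant of monogamy the introduction alludes to, and it is the tool that closes the gap you identified. Finally, $p(H_4) \leq 1/2$ is established unconditionally by dropping verification and noting the state handed to $\mathcal{A}$ (with random $z$ and averaged over $v$) is maximally mixed; no decoupling of the post-verification state is invoked. If you want to rescue your route, reformulate the core claim as ``$\mathcal{A}$ cannot query at $v$ post-verification except with negligible probability,'' and then either carry out the reduction to the monogamy \emph{game} carefully (accounting for the oracle and the conditioning on verification passing, e.g.\ via $\Pr[\mathcal{B}\!\to\!v \wedge \mathcal{C}\!\to\!v] \geq \Pr[\text{verif.\ passes}] \cdot \Pr[\mathcal{C}\!\to\!v \mid \text{verif.\ passes}]$, noting you will need $\Pr[\text{verif.\ passes}]$ bounded below), or prove a direct uncertainty relation as the paper does.
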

Theorem \ref{thm:ssl} implies that, once a leased copy is successfully returned to the lessor, no adversary can distinguish the marked input of a point function from a random (non-marked) input with probability better than $1/2$, except for a negligible advantage (in the parameter $\lambda$).\ \\

We give a proof of Theorem \ref{thm:ssl} in the next section.

\subsection{Proof of security}

To prove the theorem, we rely on a few technical results. 
\begin{lem}\label{lem:trace_ineq}
Let $\alpha \in \mathbb{C}^n$ and $A_1, \dots, A_n \in \mathbb{C}^{m \times m}$. Then, it holds that
$$
\mathrm{Tr}\Big[\sum_{i=1}^n \alpha_i A_i \Big] \, \leq \, \|\alpha\|_1 \cdot  \sum_{i = 1}^n |\mathrm{Tr}[A_i]|.
$$
\end{lem}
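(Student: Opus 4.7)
The plan is to observe that this is an elementary inequality that follows directly from linearity of the trace, the triangle inequality, and a simple pointwise bound for sums of non-negative reals. I do not anticipate any real obstacle; the only subtlety is that the left-hand side as written is a complex scalar, so the stated inequality is most naturally read as a bound on its modulus.

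First I would apply linearity of the trace to write
\[
\mathrm{Tr}\Big[\sum_{i=1}^n \alpha_i A_i\Big] \;=\; \sum_{i=1}^n \alpha_i \, \mathrm{Tr}[A_i].
\]
Taking absolute values and invoking the triangle inequality for the modulus of a complex number then gives
\[
\Big|\mathrm{Tr}\Big[\sum_{i=1}^n \alpha_i A_i\Big]\Big| \;\leq\; \sum_{i=1}^n |\alpha_i| \cdot |\mathrm{Tr}[A_i]|.
\]

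The final step is a purely non-negative-scalar inequality: for any sequences $a_1,\dots,a_n \geq 0$ and $b_1,\dots,b_n \geq 0$, one has $\sum_i a_i b_i \leq (\sum_i a_i)(\sum_i b_i)$, since the right-hand side expands to $\sum_{i,j} a_i b_j$ and contains the diagonal terms together with the non-negative off-diagonal cross terms. Applying this with $a_i = |\alpha_i|$ and $b_i = |\mathrm{Tr}[A_i]|$ yields
\[
\sum_{i=1}^n |\alpha_i| \cdot |\mathrm{Tr}[A_i]| \;\leq\; \Big(\sum_{i=1}^n |\alpha_i|\Big)\Big(\sum_{i=1}^n |\mathrm{Tr}[A_i]|\Big) \;=\; \|\alpha\|_1 \cdot \sum_{i=1}^n |\mathrm{Tr}[A_i]|,
\]
which is the desired bound. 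Chaining the three displayed estimates above completes the proof.
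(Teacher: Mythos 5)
Your proof is correct, and it takes a genuinely different (and arguably more elementary) route than the paper's. After the common first step of linearity, the paper applies the Cauchy--Schwarz inequality to obtain $\bigl|\sum_i \alpha_i \mathrm{Tr}[A_i]\bigr| \leq \|\alpha\|_2 \cdot \bigl\|(\mathrm{Tr}[A_i])_i\bigr\|_2$ and then relaxes both $\ell^2$ norms to $\ell^1$ norms via $\|x\|_2 \leq \|x\|_1$. You instead apply the triangle inequality directly to get $\sum_i |\alpha_i|\,|\mathrm{Tr}[A_i]|$ and then use the purely combinatorial observation that $\sum_i a_i b_i \leq (\sum_i a_i)(\sum_j b_j)$ for non-negative reals, since the right-hand side contains the diagonal terms plus non-negative cross terms. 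Both arguments yield the same (rather loose) bound; yours has the modest advantage of avoiding Cauchy--Schwarz entirely and of explicitly flagging that the left-hand side is a complex scalar and must be read as a modulus, a point the paper leaves implicit.
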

\begin{proof}
Using the Cauchy-Schwarz inequality, we have
$$
\mathrm{Tr}\Big[\sum_{i=1}^n \alpha_i A_i \Big] = \sum_{i=1}^n \alpha_i \mathrm{Tr}\big[A_i \big] \leq \sqrt{ \sum_{i =1}^n |\alpha_i|^2} \cdot \sqrt{ \sum_{i =1}^n |\mathrm{Tr}\big[A_i \big]|^2}.
$$
The claim follows from the norm inequality $\|x\|_2 \leq \|x\|_1$, for all $x \in \mathbb{C}^n$.
\end{proof}

\begin{lem}\label{lem:TD_inequalities}
Let $0 \leq \Pi \leq \Id$ and let $\rho$ and $\sigma$ be states such that $\mathsf{TD}(\rho,\sigma) \leq \gamma.$ Then,
$$
\mathrm{Tr}[\Pi \rho] - \gamma \leq \mathrm{Tr}[\Pi \sigma] \leq \mathrm{Tr}[\Pi \rho] + \gamma
$$
\end{lem}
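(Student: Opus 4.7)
The plan is to invoke the well-known variational characterization of the trace distance, namely that for any two density matrices $\rho,\sigma$,
\[
\mathsf{TD}(\rho,\sigma) \;=\; \max_{0\le M\le\Id}\, \mathrm{Tr}\bigl[M(\rho-\sigma)\bigr].
\]
Given this, the lemma is essentially immediate: applying this identity to the specific POVM element $\Pi$ from the statement gives
\[
\mathrm{Tr}[\Pi\rho] - \mathrm{Tr}[\Pi\sigma] \;=\; \mathrm{Tr}\bigl[\Pi(\rho-\sigma)\bigr] \;\le\; \mathsf{TD}(\rho,\sigma) \;\le\; \gamma,
\]
and swapping the roles of $\rho$ and $\sigma$ (equivalently, applying the characterization to $\Id-\Pi$, which also satisfies $0\le \Id-\Pi\le \Id$) yields the matching lower bound $\mathrm{Tr}[\Pi\sigma] - \mathrm{Tr}[\Pi\rho] \le \gamma$. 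Combining the two inequalities gives exactly the claimed sandwich bound.

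If one prefers not to assume the variational characterization as a black box, it can be derived in a single line from the Jordan decomposition: write $\rho-\sigma = P - N$ with $P,N\ge 0$ having orthogonal supports, so that $\|\rho-\sigma\|_1 = \mathrm{Tr}[P]+\mathrm{Tr}[N]$ and $\mathrm{Tr}[P]=\mathrm{Tr}[N]$ (since $\mathrm{Tr}[\rho-\sigma]=0$). Then for any $0\le \Pi\le \Id$,
\[
\mathrm{Tr}[\Pi(\rho-\sigma)] \;=\; \mathrm{Tr}[\Pi P] - \mathrm{Tr}[\Pi N] \;\le\; \mathrm{Tr}[P] \;=\; \tfrac12\|\rho-\sigma\|_1 \;=\; \mathsf{TD}(\rho,\sigma),
\]
where we used $\mathrm{Tr}[\Pi P]\le \mathrm{Tr}[P]$ (since $\Pi\le \Id$ and $P\ge 0$) and $\mathrm{Tr}[\Pi N]\ge 0$.

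There is no real obstacle here; the lemma is a textbook consequence of the definition of trace distance and the proof is essentially a two-line inequality chain. The only judgment call is whether to cite the variational formula for $\mathsf{TD}$ directly or to unpack it via the Jordan decomposition, and either is acceptable for a result used merely as a technical ingredient in the subsequent security analysis.
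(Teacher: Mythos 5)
Your proof is correct and uses the same key ingredient as the paper, namely the variational characterization $\mathsf{TD}(\rho,\sigma)=\max_{0\le M\le\Id}\mathrm{Tr}[M(\rho-\sigma)]$ applied to $\Pi$ and then with the roles of $\rho,\sigma$ reversed. The optional Jordan-decomposition derivation is a nice self-contained extra, but the core argument matches the paper's proof verbatim.
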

\begin{proof}
By the standard identity $\mathsf{TD}(\sigma, \rho) = \underset{0 \leq \Lambda \leq \Id}{\max} \mathrm{Tr}[\Lambda (\sigma - \rho)]$, it follows that:
\begin{align*}
\mathrm{Tr}[\Pi \sigma] &=  \mathrm{Tr}[\Pi \rho] + \mathrm{Tr}[\Pi (\sigma - \rho)]\\
&\leq \mathrm{Tr}[\Pi \rho] + \underset{0 \leq \Lambda \leq \Id}{\max} \mathrm{Tr}[\Lambda ( \sigma - \rho)] \\
&= \mathrm{Tr}[\Pi \rho] + \mathsf{TD}(\sigma, \rho) \\
&\leq \mathrm{Tr}[\Pi \rho] + \gamma.
\end{align*}
The other inequality can be shown by reversing the role of $\rho$ and $\sigma$.
\end{proof}
\begin{lem}[\cite{Unruh15}, Lemma 18]\label{lem:equalityEPR}
Let $\theta \in \{0,1\}^m$ and define $\Pieq = \sum_{v \in \{0,1\}^{m}} H^\theta\ket{v} \bra{v} H^\theta \otimes H^\theta\ket{v} \bra{v} H^\theta$ (i.e the projector that checks if two registers yield the same outcome if measured in the $H^\theta$ basis). Then, the following is true for every $t \in [m]$. For any approximate $\EPR$ state, $$\ket{\phi^+_{ab}} = \frac{1}{\sqrt{2^m}}\sum_{v \in \{0,1\}^{m}}  \ket{v} \otimes X^a Z^b \ket{v},$$
where $a,b \in \{0,1\}^m$ have Hamming weight at most $t$, it follows that:
\begin{itemize}
    \item $\Pieq \ket{\phi^+_{ab}} = \ket{\phi^+_{ab}}$ holds if and only if for all $i \in [m]$:
    $$
    (\theta_i=0 \land a_i= 0) \lor (\theta_i = 1 \land b_i=0). 
    $$
    \item $\Pieq \ket{\phi^+_{ab}} = 0$ holds for all other cases.
\end{itemize}
\end{lem}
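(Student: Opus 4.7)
The plan is to reduce the statement to a collection of independent single-qubit computations, exploiting the fact that both $\Pieq$ and the state $\ket{\phi^+_{ab}}$ have a tensor product structure with respect to the pairing ``qubit $i$ of the first register $\leftrightarrow$ qubit $i$ of the second register''. Note that the Hamming weight hypothesis on $a,b$ plays no role in the argument itself; it is stated only to match the context in which the lemma is invoked later.

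First, I would establish the two factorizations. After regrouping the $2m$ qubits into $m$ qubit pairs as above, the projector factors as $\Pieq = \bigotimes_{i=1}^m \Pi^{\text{eq}}_{\theta_i}$, where
\begin{equation*}
\Pi^{\text{eq}}_{\theta_i} \;=\; \sum_{v \in \{0,1\}} \bigl(H^{\theta_i}\ket{v}\bra{v}H^{\theta_i}\bigr)^{\otimes 2}
\end{equation*}
is the analogous two-qubit projector for the $i$-th pair. Similarly, using $X^a Z^b \ket{v} = \bigotimes_i (-1)^{b_i v_i}\ket{v_i \oplus a_i}$, the state $\ket{\phi^+_{ab}}$ rewrites as $\bigotimes_{i=1}^m \ket{\phi^+_{a_i b_i}}$, where each factor is a single-qubit Bell state on the $i$-th qubit pair.

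Second, I would verify the single-qubit case by a direct computation over the four Bell states for each value of $\theta_i \in \{0,1\}$. When $\theta_i = 0$, $\Pi^{\text{eq}}_0 = \ket{00}\bra{00} + \ket{11}\bra{11}$, which stabilizes $\ket{\phi^+_{0 b_i}}$ for either $b_i$ (these are supported on $\{\ket{00},\ket{11}\}$) and annihilates $\ket{\phi^+_{1 b_i}}$ (these are supported on $\{\ket{01},\ket{10}\}$). When $\theta_i = 1$, $\Pi^{\text{eq}}_1 = \ket{++}\bra{++} + \ket{--}\bra{--}$; a short calculation in the Hadamard basis shows it stabilizes $\ket{\phi^+_{a_i 0}}$ and annihilates $\ket{\phi^+_{a_i 1}}$. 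So on each pair, the eigenvalue of $\Pi^{\text{eq}}_{\theta_i}$ is $1$ precisely when $(\theta_i=0 \land a_i=0) \lor (\theta_i=1 \land b_i=0)$ holds, and $0$ otherwise.

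Finally, taking the tensor product of per-qubit eigenvalues, $\Pieq \ket{\phi^+_{ab}}$ equals $\ket{\phi^+_{ab}}$ when the disjunction holds for every $i$, and equals $0$ as soon as it fails for at least one $i$. This is exactly the two bulleted cases. There is no real obstacle here: once the two tensor factorizations are in place, the argument reduces to the four-by-two table of single-qubit checks. The only care required is in tracking the correspondence between the native bipartition of the $2m$ qubits (first $m$ vs.\ last $m$) and its regrouping into $m$ qubit pairs.
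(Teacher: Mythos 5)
The paper does not prove Lemma~\ref{lem:equalityEPR}; it simply cites it as Lemma~18 of~\cite{Unruh15}, so there is no in-paper proof to compare against. Your argument is correct and self-contained. The tensor factorization $\Pieq \cong \bigotimes_i \Pi^{\text{eq}}_{\theta_i}$ and $\ket{\phi^+_{ab}} \cong \bigotimes_i \ket{\phi^+_{a_i b_i}}$ (after the standard reordering that pairs qubit $i$ of each register) is exact, and your per-qubit case analysis is right: for $\theta_i = 0$ the projector $\ket{00}\bra{00}+\ket{11}\bra{11}$ stabilizes $\ket{\phi^+_{0 b_i}}$ and kills $\ket{\phi^+_{1 b_i}}$, while for $\theta_i = 1$ one has $\ket{\phi^+_{a_i 0}} \in \mathrm{span}\{\ket{++},\ket{--}\}$ and $\ket{\phi^+_{a_i 1}} \in \mathrm{span}\{\ket{+-},\ket{-+}\}$, so $\ket{++}\bra{++}+\ket{--}\bra{--}$ stabilizes the former and kills the latter. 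Since each Bell factor is an eigenvector of the corresponding single-pair projector with eigenvalue $0$ or $1$, the product state is stabilized iff every factor is, and is annihilated the moment one factor is — exactly the two bullet points. Your observation that the Hamming-weight bound on $a,b$ is irrelevant to the claim is also correct; it is only needed downstream where the lemma is invoked. One cosmetic remark: it may be worth saying explicitly that the regrouping of tensor factors is a unitary permutation of qubits, which commutes through everything and justifies computing in the ``paired'' ordering.
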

\vspace{2mm}
 We also rely on the next lemma which is based on a result by Unruh \cite[Lemma 15]{Unruh15}. To state the lemma, we define the projector onto the subspace spanned by  $\EPR$-pairs in registers $\mathsf{XY}$ with up to $t \in \mathbb{N}$ single-qubit Pauli operators applied to register $\mathsf{Y}$:
 $$\Pi_t^{\EPR} = \sum_{\substack{a,b \in \{0,1\}^{m}\\
    w(a),w(b) \leq t}} \ket{\phi^+_{ab}} \bra{\phi^+_{ab}}, \quad  \quad \ket{\phi^+_{ab}} = \frac{1}{\sqrt{2^m}}\sum_{v \in \{0,1\}^{m}}  \ket{v} \otimes X^a Z^b \ket{v},$$
 where $w(a),w(b)$ denote the Hamming weights of the strings $a$ and $b$.  Since  $\big\{\ket{\phi_{ab}^+} : a,b \in \{0,1\}^m\big\}$ forms an orthogonal basis of $\mathsf{XY}$, any state $\rho$ such that $\left(\Pi_t^{\EPR} \otimes \Id_{\mathsf{R}}\right)\rho_{\mathsf{XYR}}=\rho_{\mathsf{XYR}}$ on registers $\mathsf{X,Y}$ and $\mathsf{R}$ can be written as follows (where $a,b$ of weight greater than $t$ have probability zero):
 \begin{equation}\label{eq:decomposition}
 \rho_{\mathsf{XYR}} \, = \sum_{\substack{a,b \in \{0,1\}^{m}\\
    w(a),w(b) \leq t}} p_{ab} \, \left( \ket{\phi^+_{ab}}\bra{\phi^+_{ab}}_{\mathsf{XY}} \otimes \sigma^{a,b}_{\mathsf{R}} \right),
\end{equation}
 for some arbitrary states $\sigma^{a,b}$ and indices $a,b \in \{0,1\}^{m}$. We show the following lemma:
 \begin{lem}[Monogamy uncertainty relation]\label{lem:monogamy-uncertainty}
Fix a parameter $t \in \mathbb{N}$ and string $\theta \in \{0,1\}^m$. Let $\rho$ be a density matrix on registers $\mathsf{X,Y}$ and $\mathsf{R}$ such that $\left(\Pi_t^{\EPR} \otimes \Id_{\mathsf{R}}\right)\rho_{\mathsf{XYR}}=\rho_{\mathsf{XYR}}$. Let $\{\Pi_{v'}\}_{v' \in \{0,1\}^m}$ be an arbitrary complete set of orthogonal projectors on register $\mathsf{R}$ and measure according to the set $\left\{ H^\theta \ket{v}\bra{v}_{\mathsf{X}} H^\theta \otimes \Id_{\mathsf{Y}} \otimes {\Pi_{v'}}_{\mathsf{R}}\right\}_{v' \in \{0,1\}^m}$. Then,
$$\Pr[v' =v] \,\,= \sum_{v \in \{0,1\}^m} \mathrm{Tr}\big[ \big(H^\theta \ket{v}\bra{v}_{\mathsf{X}} H^\theta \otimes \Id_{\mathsf{Y}} \otimes {\Pi_{v}}_{\mathsf{R}} \big) \,\rho_{\mathsf{XYR}} \big] \,\leq\, 2^{-m} (m+1)^{2t}.$$
In other words, the min-entropy of the random variable $V$ (with outcome $v$) given register $\mathsf{R}$ is at least $\Hmin(V|\mathsf{R}) \geq m - 2t \log(m+1)$.
 \end{lem}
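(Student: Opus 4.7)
The approach is to purify $\rho_{\mathsf{XYR}}$ to $\ket{\Psi}_{\mathsf{XYR}\tilde{\mathsf{R}}}$; since $(\Pi_t^{\EPR}\otimes\Id)\rho = \rho$, the purification lies in the image of $\Pi_t^{\EPR}$ on $\mathsf{XY}$, so
\[
\ket{\Psi} = \sum_{\substack{a,b \in \{0,1\}^m\\ w(a), w(b) \leq t}} \ket{\phi^+_{ab}}_{\mathsf{XY}} \otimes \ket{\xi_{ab}}_{\mathsf{R}\tilde{\mathsf{R}}},
\qquad \sum_{a,b}\|\ket{\xi_{ab}}\|^2 = 1.
\]
Using the identity $\ket{\phi^+}=\tfrac{1}{\sqrt{2^m}}\sum_w H^\theta\ket{w}_{\mathsf X}\otimes H^\theta\ket{w}_{\mathsf Y}$ (valid for any $\theta$), we get $\ket{\phi^+_{ab}}=\tfrac{1}{\sqrt{2^m}}\sum_w H^\theta\ket{w}_{\mathsf X}\otimes X^aZ^b H^\theta\ket{w}_{\mathsf Y}$, so measuring $\mathsf X$ in the $\theta$-basis and obtaining $v$ collapses the $\mathsf Y$-component to $X^aZ^b H^\theta\ket v$. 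Hence
\[
\Pr[v'=v]=\frac{1}{2^m}\sum_v\Big\|\sum_{a,b}X^aZ^bH^\theta\ket{v}_{\mathsf Y}\otimes \sqrt{\Pi_v}\ket{\xi_{ab}}_{\mathsf R\tilde{\mathsf R}}\Big\|^2.
\]

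The crux is to bring $X^aZ^bH^\theta\ket v$ into a normal form. Using $HXH=Z$ and $HZH=X$ on the $\theta_i=1$ qubits, conjugating $X^aZ^b$ by $H^\theta$ produces another Pauli whose $X$-part executes a bit-flip $v\mapsto v\oplus c$ while the $Z$-part only contributes a sign when acting on a computational-basis state. More precisely,
\[
X^aZ^bH^\theta\ket v \;=\; s(a,b,v,\theta)\,H^\theta\ket{v\oplus c(a,b,\theta)},\qquad c(a,b,\theta)_i=\begin{cases}a_i,&\theta_i=0,\\ b_i,&\theta_i=1,\end{cases}
\]
with $s(a,b,v,\theta)\in\{\pm 1\}$. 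Because the states $\{H^\theta\ket{v\oplus c}\}_c$ are mutually \emph{orthogonal} on $\mathsf Y$, grouping the sum over $(a,b)$ by the value $c=c(a,b,\theta)$ decomposes the norm squared cleanly:
\[
\Pr[v'=v]=\frac{1}{2^m}\sum_v\sum_c\Big\|\sum_{\substack{(a,b):\,c(a,b,\theta)=c\\ w(a),w(b)\leq t}} s(a,b,v,\theta)\,\sqrt{\Pi_v}\ket{\xi_{ab}}\Big\|^2.
\]

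A triangle inequality followed by Cauchy--Schwarz bounds each inner norm squared by $N_c\sum_{(a,b):c(a,b,\theta)=c}\|\sqrt{\Pi_v}\ket{\xi_{ab}}\|^2$, where $N_c$ is the number of admissible $(a,b)$ mapping to $c$. Summing over $v$ (using $\sum_v\Pi_v=\Id$) and then over $(a,b)$ and $c$ (using $\sum\|\ket{\xi_{ab}}\|^2=1$) yields $\Pr[v'=v]\leq N_{\max}/2^m$. Finally, a fixed $c$ determines $a$ on the $\theta=0$ coordinates and $b$ on the $\theta=1$ coordinates, leaving the remaining bits free subject only to $w(a),w(b)\leq t$, so $N_c\leq \binom{|S_1|}{\leq t}\binom{|S_0|}{\leq t}\leq (m+1)^{2t}$ via the standard estimate $\binom{n}{\leq t}\leq (n+1)^t$. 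This gives the claimed bound $\Pr[v'=v]\leq 2^{-m}(m+1)^{2t}$, and hence $\Hmin(V|\mathsf R)\geq m-2t\log(m+1)$. The main obstacle is establishing the clean normal form $X^aZ^bH^\theta\ket v=\pm H^\theta\ket{v\oplus c}$ by carefully tracking the Pauli/Hadamard commutations; once this is in place, the orthogonality on $\mathsf Y$ obviates any delicate sign cancellation and the rest is combinatorial counting.
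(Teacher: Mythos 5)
Your proof is correct, and it takes a genuinely different---and in fact more careful---route than the paper. The paper's proof relies on the decomposition in Eq.~\eqref{eq:decomposition}, $\rho_{\mathsf{XYR}}=\sum_{a,b}p_{ab}\,\ket{\phi^+_{ab}}\bra{\phi^+_{ab}}_{\mathsf{XY}}\otimes\sigma^{a,b}_{\mathsf R}$, which is asserted for every $\rho$ with $(\Pi_t^{\EPR}\otimes\Id_{\mathsf R})\rho=\rho$ but does not hold in general: a state supported on $\mathrm{span}\{\ket{\phi^+_{ab}}:w(a),w(b)\le t\}\otimes\mathcal H_{\mathsf R}$ can carry Bell-basis cross terms $\ket{\phi^+_{ab}}\bra{\phi^+_{a'b'}}$ correlated with $\mathsf R$, and the ideal state $\rho_\theta^{\mathsf{id}}$ to which the lemma is later applied is only guaranteed to be a mixture of arbitrary vectors in the image. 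With that restricted form in hand, the paper computes each diagonal term exactly via the ricochet identity and a trace inequality, then sums over all $(a,b)$ pairs of bounded weight. You instead purify, write $\ket\Psi=\sum_{a,b}\ket{\phi^+_{ab}}\ket{\xi_{ab}}$, and handle cross terms directly: conjugating $X^aZ^b$ through $H^\theta$ yields $X^aZ^bH^\theta\ket v=\pm H^\theta\ket{v\oplus c(a,b,\theta)}$, the orthogonality of $\{H^\theta\ket{v\oplus c}\}_c$ on $\mathsf Y$ eliminates cross terms between distinct shifts $c$, and a triangle inequality plus Cauchy--Schwarz within each fixed-$c$ block gives $\Pr[v'=v]\le N_{\max}/2^m$ with $N_{\max}\le(m+1)^{2t}$. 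The final bound matches the paper's, but your argument is valid for exactly the class of states the lemma is invoked on, and your per-shift count $N_{\max}$ is a modest sharpening of the paper's global count $|T|$ of all admissible $(a,b)$.
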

  \begin{proof}
For brevity, we define a family of projectors $\{\Lambda_{u}^\theta\}_{u}$ acting on registers $\mathsf{X}$ and $\mathsf{Y}$, where
$$\Lambda_{u}^\theta = \big(H^\theta \ket{u}\bra{u}_{\mathsf{X}} H^\theta \otimes \Id_{\mathsf{Y}}\big).$$
 Let $T$ be the set of all possible indices of weight less or equal than $t$.
 Now, using decomposition \eqref{eq:decomposition}, we can bound the success probability of measuring $v' = v$ using the information in the ancilla register $\mathsf{R}$ as follows:
 \begin{align*}
&\Pr[v' = v]\\
&= \sum_{v \in \{0,1\}^m} \mathrm{Tr}\big[ \left(H^\theta \ket{v}\bra{v}_{\mathsf{X}} H^\theta \otimes \Id_{\mathsf{Y}} \otimes {\Pi_{v}}_{\mathsf{R}} \right) \rho_{\mathsf{XYR}} \big]\\
&= \sum_{v \in \{0,1\}^m} \mathrm{Tr}\Big[ \sum_{\substack{a,b \in \{0,1\}^{m}\\
    w(a),w(b) \leq t}} p_{ab} \, \left( \Lambda_{v}^\theta \ket{\phi^+_{ab}}\bra{\phi^+_{ab}}_{\mathsf{XY}} \Lambda_{v}^\theta \right) \otimes \left( {\Pi_{v}} \sigma^{a,b}_{\mathsf{R}} \right)\Big] && \text{(by def.)}\\
& \leq \sum_{v \in \{0,1\}^m} \Big(\sum_{\substack{a,b \in \{0,1\}^{m}\\
w(a),w(b) \leq t}} p_{ab} \Big) \cdot \Big( \sum_{\substack{a,b \in \{0,1\}^{m}\\
w(a),w(b) \leq t}} \| \Lambda_{v}^\theta \ket{\phi^+_{ab}}\|^2 \cdot  \mathrm{Tr}\big[{\Pi_{v}} \sigma^{a,b}_{\mathsf{R}} \big] \Big) && \text{(Lem.~\ref{lem:trace_ineq})} \\
&= \sum_{v \in \{0,1\}^m}  \sum_{\substack{a,b \in \{0,1\}^{m}\\
w(a),w(b) \leq t}} \| H^\theta \ket{v}\bra{v}_{\mathsf{X}} H^\theta  \otimes \Id_{\mathsf{Y}} \ket{\phi^+_{ab}}\|^2 \cdot  \mathrm{Tr}\big[{\Pi_{v}} \sigma^{a,b}_{\mathsf{R}} \big] && \text{(by def.)} \\
&= \sum_{v \in \{0,1\}^m}  \sum_{\substack{a,b \in \{0,1\}^{m}\\
w(a),w(b) \leq t}} \| H^\theta \ket{v}\bra{v}_{\mathsf{X}} H^\theta  \otimes X^a Z^b_{\mathsf{Y}} \ket{\phi^+}\|^2 \cdot  \mathrm{Tr}\big[{\Pi_{v}} \sigma^{a,b}_{\mathsf{R}} \big] \\
&= \sum_{v \in \{0,1\}^m}  \sum_{\substack{a,b \in \{0,1\}^{m}\\
w(a),w(b) \leq t}} \| H^\theta \otimes X^a Z^b H^\theta \big( \ket{v}\bra{v}_{\mathsf{X}} \otimes \Id_{\mathsf{Y}} \big) \ket{\phi^+}\|^2 \cdot  \mathrm{Tr}\big[{\Pi_{v}} \sigma^{a,b}_{\mathsf{R}} \big] && \text{(Lem.~\ref{lem:ricochet})} \\
&= \sum_{v \in \{0,1\}^m}  \sum_{\substack{a,b \in \{0,1\}^{m}\\
w(a),w(b) \leq t}}  \frac{\mathrm{Tr}\big[{\Pi_{v}} \sigma^{a,b}_{\mathsf{R}} \big]}{2^{m}}
\, = \,  \sum_{\substack{a,b \in \{0,1\}^{m}\\
w(a),w(b) \leq t}}  \frac{\mathrm{Tr}\big[ \sigma^{a,b}_{\mathsf{R}} \big]}{2^{m}}
  =  \frac{|T|}{2^{m}},
 \end{align*}
 
where in the second-to-last step we used the completeness property that $\sum_{v} \Pi_v = \Id$, and in the last step we use that the ${\sigma^{a,b}}$ have unit trace, for every $a,b \in \{0,1\}^m$. It now suffices to bound $|T|$, the number of error indices of weight less or equal to $t$. In total we have $t$ indices to assign to $m+1$ possible choices (we add an additional degree of freedom to account for when there are no errors assigned). Since we have two independent indices $a,b \in \{0,1\}^m$, we get:
$$
\Pr[v' = v] \leq 2^{-m} |T| \leq 2^{-m} (m+1)^{2t}.
$$
This proves the claim.
 \end{proof}

Let us now proceed with the security proof. We consider the following sequence of hybrids of $\mathsf{SSLGame}$. We will show that the optimal winning probability in each successive hybrid changes at most negligibly. We will then bound the optimal winning probability in the final hybrid. \ \\
\ \\
$H_0$: This is the original game $\mathsf{SSLGame}$ in Section \ref{sec:SSL}:
\begin{itemize}
 \item The lessor runs $\SSL.\Lease(1^\lambda,y \in \{0,1\}^{\lambda})$ to sample $ v \leftarrow \{0,1\}^m$ and $\theta \leftarrow G(y) \in \{0,1\}^m$, and sends $(\ket{v^\theta},H(v))$ together with a circuit for $\SSL.\Eval$ to the adversary $\mathcal{A}$. 
 
 \item Having access to the random oracles $G$ and $H$, the adversary $\mathcal{A}$ outputs a (possibly entangled) state $\sigma$ on two registers $\mathsf{Y}$ and $\mathsf{R}$, and sends the register $\mathsf{Y}$ to the lessor.

\item For verification, the lessor runs $\SSL.\Verify(y,\mathsf{Y})$: Measure the register $\mathsf{Y}$ in the $H^\theta$ basis according to $\theta = G(y)$.
If the outcome is equal to $v$ such that $H(v)=z$, the lessor outputs $\mathsf{ok}=1$ and lets the game continue, otherwise, the lessor outputs $\mathsf{ok}=0$ and $\mathcal{A}$ loses.

\item Conditioning on $\mathsf{ok}=1$, the lessor sends the adversary a sample $x \leftarrow \mathcal{D}_y$ to which $\mathcal{A}$ responds with a bit (we refer to this phase of the security game as the ``input challenge phase''). Using the string $y$ given as input, the lessor outputs $1$, if the bit is equal to $P_y(x)$, and $0$ otherwise.  
\end{itemize}
\ \\
$H_1$: The game is the same as before, except that in the input challenge phase the lessor samples $x \leftarrow D_y$, and sends $G(x)$ to $\mathcal A$, (instead of sending $x$ directly).\ \\
\ \\
$H_2$: The game is the same as before, except for the input challenge phase. The lessor samples $x \leftarrow D_y$. Then, if $x \neq y$, the lessor chooses $\theta' \leftarrow \{0,1\}^m$ and sends $\theta'$ to $\mathcal{A}$ (instead of $G(x)$).\ \\
\ \\
$H_3$: The game is the same as before, except that the lessor samples $\theta \leftarrow \{0,1\}^m$ (instead of $\theta \leftarrow G(y)$). Then, in the input challenge phase, the lessor samples $x \leftarrow D_y$. If $x = y$, the lessor sends $\theta$ to $\mathcal{A}$.\ \\
\ \\
$H_4$: The game is identical to the game before, except that we replace $H(v)$ with a uniformly random string $z \leftarrow \{0,1\}^\lambda$.\\

First, we show that the advantage of any adversary in $H_4$ is negligible. Again, in the rest of the section, we denote by $p(H_i)$ the optimal winning probability in hybrid $H_i$ (see the proof in Section \ref{sec: security} for a clarification on what the expression $p(H_i)$ means formally).
\begin{lem}
$p(H_4) \leq \frac12$.
\end{lem}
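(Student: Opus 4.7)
The plan is to prove $p(H_4)\le \tfrac12$ via the EPR purification trick together with the monogamy uncertainty relation. First, I would replace the lessor's preparation of $\ket{v^\theta}$ by the equivalent purified process (Lemma~\ref{lem:ricochet}): the lessor shares an EPR pair $\ket{\phi^+}_{\mathsf{XY_0}}$ with the adversary, keeps $\mathsf X$, and obtains $v$ by applying $H^\theta$ to $\mathsf X$ and measuring. Crucially, the adversary's processing of $\mathsf Y_0$ into $(\mathsf Y,\mathsf R)$ depends on $z$ and the oracle $H$ but is independent of $\theta$, so the resulting state $\ket{\Psi}_{\mathsf{XYR}}$ is $\theta$-independent; in particular $\omega_{\mathsf{YR}}=\operatorname{Tr}_{\mathsf X}\ketbra{\Psi}$ is $\theta$-independent.

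Next I would expand $p(H_4)$ according to the two cases $x=y$ (so $\theta^*=\theta$, correct answer $b=1$) and $x\neq y$ (so $\theta^*=\theta'$ fresh uniform, correct answer $b=0$). Writing the adversary's bit as a POVM $\{M_0(\theta^*),M_1(\theta^*)\}$ on $\mathsf R$ and the verification as $\Pi_V(\theta,z,H)=H^\theta P_z H^\theta$ with $P_z=\sum_{v'\in H^{-1}(z)}\ketbra{v'}$, one obtains
\[
2p(H_4)=\mathbb E_{\theta,z,H}\operatorname{Tr}\!\bigl[(\Pi_V(\theta)\otimes M_1(\theta))\,\omega_{\mathsf{YR}}\bigr]+\mathbb E_{\theta,\theta',z,H}\operatorname{Tr}\!\bigl[(\Pi_V(\theta)\otimes(\mathds 1-M_1(\theta')))\,\omega_{\mathsf{YR}}\bigr].
\]
Using the independence of $\theta'$ to replace $\mathbb E_{\theta'}M_1(\theta')$ by $\bar M_1$ and rearranging yields the clean identity
\[
2p(H_4)=\Pr[V]+\mathbb E_\theta\operatorname{Tr}\!\bigl[((\Pi_V(\theta)-\bar\Pi_V)\otimes M_1(\theta))\,\omega_{\mathsf{YR}}\bigr].
\]
To conclude $p(H_4)\le \tfrac12$ it therefore suffices to bound the correlation term by $1-\Pr[V]$.

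The main step is to view $\mathbb E_\theta\operatorname{Tr}[(\Pi_V(\theta)\otimes M_1(\theta))\,\omega_{\mathsf{YR}}]$ as the winning probability of a monogamy-of-entanglement game in which both halves of the adversary receive $\theta$ and must simultaneously (i) produce on $\mathsf Y$ an outcome $v'\in H^{-1}(z)$ when measured in the $H^\theta$ basis, and (ii) return $1$ on $\mathsf R$. To control this, I would first apply Lemma~\ref{lem:closeness_ideal} with the projector $\Pi_t^{\EPR}\otimes\mathds 1_{\mathsf R}$ to approximate $\ket{\Psi}_{\mathsf{XYR}}$ by an ideal state supported in the image of $\Pi_t^{\EPR}$, picking up an error controlled by the failure probability of that projector. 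On the ideal state, the monogamy uncertainty relation (Lemma~\ref{lem:monogamy-uncertainty}) forces $\mathsf R$ to have essentially no information about the outcome $v$ of the $H^\theta$-basis measurement on $\mathsf X$, and hence, via the Bell-basis decomposition of Lemma~\ref{lem:equalityEPR} applied to $\mathsf{XY}$, no information about $\theta$ beyond that already present in the classical advice carried by the averaged POVM $\bar M_1$.

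The main obstacle will be converting a distinguisher of $\theta$ from a fresh $\theta'$ on $\mathsf R$ into a guesser of $v$ to which Lemma~\ref{lem:monogamy-uncertainty} directly applies; the key observation is that, because $\ket{v^\theta}$ couples $v$ and $\theta$ together, any distinguishing strategy on $\mathsf R$ that exploits the difference between $\theta$ and $\theta'$ can be combined with the verified value on $\mathsf Y$ (which, conditioned on verify, pins down $v'$ and therefore, via the Bell-basis decomposition, a candidate value for $v$) to violate the min-entropy guarantee of the monogamy-uncertainty relation. I expect the bookkeeping to require the trace inequality of Lemma~\ref{lem:trace_ineq} to redistribute the averaging over $z$ and $H$ after the closeness-to-ideal step, and that the net conclusion is exactly $p(H_4)\le \tfrac12$ with no negligible slack.
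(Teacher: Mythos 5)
Your proof attempt goes down a much harder road than necessary, and the road it takes is not clearly correct. You do observe the right thing early on — that $\omega_{\mathsf{YR}}=\operatorname{Tr}_{\mathsf X}\ketbra{\Psi}$ is $\theta$-independent — but you then fail to realize that this observation alone, combined with dropping verification, already finishes the proof. The paper's proof is two lines: (i) removing verification from the game can only increase the winning probability (the same adversary strategy wins whenever it won before, and possibly more often); (ii) in the relaxed game, since $z$ is fresh uniform and $\mathbb{E}_v\ket{v^\theta}\bra{v^\theta}$ is maximally mixed for every $\theta$, the adversary's entire view — $(\ket{v^\theta},z,\theta^*)$ averaged over the lessor's randomness — is the same maximally mixed state whether $\theta^*=\theta$ (case $x=y$) or $\theta^*$ is fresh (case $x\neq y$). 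Concretely, with verification removed one has
\begin{align*}
2p_{\mathrm{relaxed}} = \mathbb E_\theta\operatorname{Tr}[M_1(\theta)\,\omega_{\mathsf R}] + 1 - \mathbb E_{\theta'}\operatorname{Tr}[M_1(\theta')\,\omega_{\mathsf R}] = 1,
\end{align*}
which is immediate from the $\theta$-independence of $\omega_{\mathsf R}$. No EPR purification, no $\Pi_t^{\EPR}$, no monogamy relation, no Bell-basis decomposition is needed.

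The genuine gap in your plan is in the claim that the covariance term is bounded by $1-\Pr[V]$ via Lemma~\ref{lem:monogamy-uncertainty}. That lemma controls the probability that $\mathsf R$ correctly \emph{guesses the string} $v$ given that $\mathsf{XY}$ is close to the $\Pi_t^{\EPR}$ subspace, and its conclusion is a \emph{negligible} bound of the form $2^{-m}(m+1)^{2t}+2^{(-t-1)/2}$. It does not give you the arithmetic bound $\operatorname{Cov}\le 1-\Pr[V]$ that your identity demands, and your sketch for converting a $\theta$-vs-$\theta'$ distinguisher into a $v$-guesser is not made precise — in particular, that conversion is exactly the difficult content of the paper's Lemma~\ref{lem: 29}, which uses verification in an essential way and yields only a negligible advantage, not a clean $\le 1/2$ bound. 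Your stated expectation of a conclusion with no negligible slack is therefore incompatible with the tools you propose. Keep the $\theta$-independence observation, drop verification, and the statement falls out immediately.
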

\begin{proof}
First, the optimal probability of the adversary winning the game can only increase if we remove the verification portion of the game, and the lessor directly executes the input challenge phase. 

Then, we consider the state received by the adversary in the two distinct cases of the input challenge phase.
\begin{itemize}
    \item The lessor samples the marked point. In this case, the state received by the adversary is the following, which is completely independent of the oracle $H$: 
    $$ \mathbb{E}_{\theta, v} \left(\ket{v^{\theta}}\bra{v^{\theta}} \otimes \ket{\theta}\bra{\theta} \right) \otimes \mathbb{E}_{z} \ket{z} \bra{z} \,.$$ 
    Notice that the latter state is maximally mixed.
    \item The lessor samples a point other than the marked point. In this case, the adversary receives the following state, which is again independent of the oracle:
    $$ \mathbb{E}_{\theta, \theta', v} \left(\ket{v^{\theta}}\bra{v^{\theta}} \otimes \ket{\theta'}\bra{\theta'} \right) \otimes \mathbb{E}_{z} \ket{z} \bra{z}\,.$$ 
    The latter state is again maximally mixed.
\end{itemize}
Thus, an adversary can win the game $H_4$ with probability at most $\frac12$.
\end{proof}

We will now show that the optimal success probabilities in successive hybrids do not deviate by more than a negligible amount.

\begin{lem}
$|p(H_1) - p(H_0)| = \negl(\lambda)$.
\end{lem}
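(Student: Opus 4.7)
The plan is to establish $|p(H_0)-p(H_1)|=\negl(\lambda)$ by two reductions, one per direction. The direction $p(H_1) \le p(H_0)$ is essentially free: any $H_1$-adversary $\mathcal{A}$ winning with probability $p$ can be promoted to an $H_0$-adversary by making one extra oracle query to compute $G(x)$ from the received $x$ and then passing $G(x)$ to $\mathcal{A}$ verbatim, preserving the winning probability exactly. All of the content therefore lies in the reverse inequality $p(H_0)\le p(H_1)+\negl(\lambda)$, which I will prove by mirroring the reduction of Lemma~\ref{lem: new hybrids 1}.

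Given any $H_0$-adversary $\mathcal{A}$ winning with probability $p$, I would construct an $H_1$-adversary $\mathcal{A}'$ as follows. $\mathcal{A}'$ forwards the lease state $(\rho,z)$ received from the lessor directly to $\mathcal{A}$ and answers $\mathcal{A}$'s queries to $G$ and $H$ using its own oracle access; it hands $\mathcal{A}$'s post-processed register to the lessor for verification without modification, so that the verification acceptance probability matches that of $H_0$ exactly. Upon receiving the $H_1$ input challenge $w=G(x)$ from the lessor, $\mathcal{A}'$ samples a fresh $x'\leftarrow X_\lambda$ from the family specifying $D'$, hands $x'$ to $\mathcal{A}$ as its challenge input, and thereafter answers any further $G$-queries by $\mathcal{A}$ using the oracle $G_{x',w}$ obtained by reprogramming the real $G$ to return $w$ at $x'$. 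Finally, $\mathcal{A}'$ outputs whatever bit $\mathcal{A}$ outputs.

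The success analysis would argue that $\mathcal{A}$'s view inside $\mathcal{A}'$ is indistinguishable from its view in $H_0$, in each of the two branches of $D_y$. In the $x=y$ branch (so $w=\theta$), the simulated view replaces the pair ``input $y$ with real $G$ satisfying $G(y)=\theta$'' by ``input $x'$ with $G_{x',\theta}$''. The swap of $y$ for $x'$ is undetectable by the statistical (resp.\ computational) indistinguishability of $\{X_\lambda\}$ and $\{\mathsf{MarkedInput}(D_\lambda)\}$ built into $\mathcal{D}_{\mathsf{PF}\mbox{-}\mathsf{pairs}\mbox{-}\mathsf{stat}\mbox{-}\mathsf{SSL}}$ (resp.\ $\mathcal{D}_{\mathsf{PF}\mbox{-}\mathsf{pairs}\mbox{-}\mathsf{comp}\mbox{-}\mathsf{SSL}}$); the reprogramming of $G$ at $x'$ (and implicitly at $y$, as induced by the swap) is undetectable via Lemma~\ref{lem: qrom technical step}, since by Corollary~\ref{cor: dist entropy} together with the $\lambda^\epsilon$ min-entropy condition on $X_\lambda$, $\mathcal{A}$ places only negligible amplitude on the unrevealed points. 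The $x\leftarrow X_\lambda$ branch is handled in exactly the same way. In both branches the correct answer $P_y(x)$ coincides with the bit that $\mathcal{A}$ would produce in the corresponding $H_0$-view up to $\negl(\lambda)$, so $\mathcal{A}'$ wins $H_1$ with probability at least $p-\negl(\lambda)$.

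The principal obstacle will be, exactly as in Lemma~\ref{lem: new hybrids 1}, the careful quantitative one-way-to-hiding bookkeeping that justifies the oracle reprogramming at $x'$ in the presence of the quantum queries that $\mathcal{A}$ may already have issued during the lease and verification phases. For the computational version, the fresh-randomness sampling will be instantiated via a quantum-secure PRF~\cite{zhandry2012construct}, which contributes only an extra $\negl(\lambda)$ loss. The remaining details are essentially identical to those of Lemma~\ref{lem: new hybrids 1} (and its deferred appendix), and I would not reproduce them in full.
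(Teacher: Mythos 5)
Your plan is correct in outline and follows the same strategy the paper invokes (mirroring Lemma~\ref{lem: new hybrids 1}), but there is one meaningful deviation that is worth flagging and one spurious step. The paper's template for Lemma~\ref{lem: new hybrids 1} has the reduction sample a \emph{fresh} oracle $\hat G$ (a PRF in the computational case) to answer $\mathcal A$'s $G$-queries, and later reprograms $\hat G$ at the injected inputs and (in the appendix) also at $y$, so that the simulated oracle is cleanly a fresh random function consistent with the simulated challenge. You instead feed $\mathcal A$ the \emph{real} $G$ and only reprogram at $x'$. That is a valid alternative, but it means the simulated oracle carries a ``stray'' constraint $G(y)=\theta$ in addition to $G_{x',w}(x')=w$: whenever the challenge is the marked input, the simulated oracle has two preimages of $\theta$ whereas the real $H_0$ oracle has one. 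Your phrase ``and implicitly at $y$, as induced by the swap'' gestures at this, but it is not a reprogramming in your construction; what you actually need is a separate one-way-to-hiding/min-entropy argument that the extra constraint at the unknown point $y$ is undetectable (in addition to the O2H step at $x'$ and the statistical/computational closeness of $X_\lambda$ and the marked-input distribution). Making that step explicit would complete the argument.

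Two smaller points. First, because you reuse the real $G$ rather than sample a fresh $\hat G$, there is nothing in your reduction that requires a quantum-secure PRF: sampling $x'\leftarrow X_\lambda$ and lazily reprogramming $G$ at the single classical point $x'$ are both efficient, so the final sentence about instantiating ``the fresh-randomness sampling'' with a PRF is a vestige of the copy-protection proof and should be dropped (the PRF is needed in the paper's version precisely because it simulates a whole fresh $\hat G$). Second, your easy direction $p(H_1)\le p(H_0)$ is correct and worth stating, since the lemma asserts a two-sided bound while Lemma~\ref{lem: new hybrids 1} is one-sided; the paper's one-line ``analogous'' proof leaves this implicit.
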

\begin{proof}
The proof is analogous to the proof of Lemma \ref{lem: new hybrids 1} in the security of our copy-protection scheme. The intuition is that, since $G$ is a random oracle, the pre-image $x$ does not help the adversary, and can be simulated.
\end{proof}

\begin{lem}
$|p(H_2) - p(H_1)| = \negl(\lambda)$.
\end{lem}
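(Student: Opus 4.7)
The plan is to follow the strategy of Lemma \ref{lem: new hybrids 2} in the copy-protection proof, reducing any noticeable gap between $H_1$ and $H_2$ to a violation of Corollary \ref{cor: dist entropy}. Recall that the only difference between the two hybrids is in the input challenge phase: when the sampled input $x \leftarrow D_y$ happens to satisfy $x \neq y$, hybrid $H_1$ hands $\mathcal{A}$ the value $G(x)$, whereas $H_2$ hands $\mathcal{A}$ a fresh uniformly random string $\theta' \in \{0,1\}^{m(\lambda)}$. By construction of $\mathcal{D}_{\mathsf{PF}\mbox{-}\mathsf{Chall}\mbox{-}\mathsf{SSL}}$, conditioned on $x \neq y$ the input $x$ is distributed according to the efficiently sampleable family $\{X_{\lambda}\}$ with $\Hmin(X_{\lambda}) \geq \lambda^{\epsilon}$.

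Suppose for contradiction that there is a query-bounded adversary $\mathcal{A}$ such that $|\Pr[\mathcal{A}\text{ wins }H_1] - \Pr[\mathcal{A}\text{ wins }H_2]|$ is non-negligible (the reverse case being symmetric). I would construct a query-bounded distinguisher $\mathcal{A}'$ for the game $\mathsf{Distinguish}(\cdot, \lambda, m(\lambda), X_{\lambda})$ of Corollary \ref{cor: dist entropy}, as follows. On receiving a challenge $\theta^* \in \{0,1\}^{m(\lambda)}$ (which is either $G(x)$ for $x \leftarrow X_{\lambda}$ or uniformly random), $\mathcal{A}'$ simulates the lessor in $H_1$: it samples $y$ from the marked-input distribution induced by $D_{\lambda}$, generates the leased state $(\ket{v^{G(y)}}, H(v))$ honestly using its own oracle access to $G$ and $H$, and runs $\mathcal{A}$. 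During the verification phase, $\mathcal{A}'$ runs $\SSL.\Verify$ using $G(y)$. In the input challenge phase, $\mathcal{A}'$ tosses a fair coin to decide whether to use the marked input or not; in the latter case it forwards $\theta^*$ to $\mathcal{A}$, and in the former case it forwards $G(y)$. Finally, $\mathcal{A}'$ outputs $1$ if and only if $\mathcal{A}$ wins the simulated game.

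If $\theta^* = G(x)$ for $x \leftarrow X_{\lambda}$, the view of $\mathcal{A}$ is statistically identical to $H_1$ (the marginal of $G(x)$ for $x \leftarrow X_{\lambda}$ differs only negligibly from the marginal under the challenge distribution $D_y$ conditioned on $x \neq y$, by the statistical indistinguishability assumption built into $\mathcal{D}_{\mathsf{PF}\mbox{-}\mathsf{pairs}\mbox{-}\mathsf{stat}\mbox{-}\mathsf{SSL}}$); if $\theta^*$ is uniform, $\mathcal{A}$'s view is distributed exactly as in $H_2$. Hence, the distinguishing advantage of $\mathcal{A}'$ is non-negligible, contradicting Corollary \ref{cor: dist entropy}. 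In the computationally-bounded setting over $\mathcal{D}_{\mathsf{PF}\mbox{-}\mathsf{pairs}\mbox{-}\mathsf{comp}\mbox{-}\mathsf{SSL}}$, the same reduction goes through by additionally invoking the computational indistinguishability of $\{X_{\lambda}\}$ and $\{\mathsf{MarkedInput}(D_{\lambda})\}$ as a preliminary hybrid step. The only subtle point, analogous to the one in Lemma \ref{lem: new hybrids 2}, is to justify passing from a single-sample distinguisher to one consistent with $\mathcal{A}'$'s use of $G(y)$ in the simulation; this is standard, since $\mathcal{A}'$ has full oracle access to $G$ and can evaluate it on any input other than the challenge without disturbance.
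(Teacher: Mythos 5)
Your proof is correct and follows the same approach as the paper's, which simply says the argument is analogous to Lemma~\ref{lem: new hybrids 2}: build a distinguisher for $G(X_\lambda)$ versus $U_{m(\lambda)}$ and invoke Corollary~\ref{cor: dist entropy}. Two small cleanups would make the write-up tighter. First, the negligible discrepancy between the forwarded $\theta^*=G(x)$ for $x\leftarrow X_\lambda$ and the marginal of $G(x)$ under $D_y$ conditioned on $x\neq y$ is not a consequence of the statistical indistinguishability condition in $\mathcal D_{\mathsf{PF}\mbox{-}\mathsf{pairs}\mbox{-}\mathsf{stat}\mbox{-}\mathsf{SSL}}$ (that hypothesis is only used in the $H_0\to H_1$ step); it just follows from $\Hmin(X_\lambda)\geq\lambda^\epsilon$, which makes $\Pr[X_\lambda=y]$ negligible. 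Second, the closing remark about the single-sample to multi-sample reduction is inherited from the copy-protection version of the lemma, but it is not actually needed here: in the $\SSL$ game there is only one freeloader and one challenge input, so $\mathcal A'$ needs exactly one sample $\theta^*$ and the reduction is already in the single-sample form required by Corollary~\ref{cor: dist entropy}. The point that $\mathcal A'$ may freely query $G$ on $y$ (which is independent of the hidden preimage $x$) is, however, worth retaining.
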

\begin{proof}
The proof is analogous to the proof of Lemma \ref{lem: new hybrids 2}, where an adversary that wins with non-negligible difference in $H_2$ and $H_1$ yields a distinguisher for $G(X_{\lambda})$ and $U_{m(\lambda)}$.
\end{proof}

\begin{lem}
$|p(H_3) - p(H_2)| = \negl(\lambda)$.
\end{lem}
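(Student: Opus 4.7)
The plan is to mirror the proof of Lemma \ref{lem: new hybrids 3} from the copy-protection section: any noticeable gap between $p(H_2)$ and $p(H_3)$ yields a query-bounded distinguisher between the distributions $G(X_\lambda)$ and the uniform distribution $U_{m(\lambda)}$ on $\{0,1\}^{m(\lambda)}$, contradicting Corollary \ref{cor: dist entropy} (since $\Hmin(X_\lambda)\ge \lambda^{\epsilon}$).

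Concretely, suppose for contradiction that there exists a query-bounded adversary $\mathcal{A}$ and a polynomial $p$ such that $|{\Pr[\mathcal{A}\text{ wins }H_2]-\Pr[\mathcal{A}\text{ wins }H_3]}|\geq 1/p(\lambda)$ infinitely often. I would build a distinguisher $\mathcal{A}'$ that receives a single challenge string $\theta^{*}\in\{0,1\}^{m(\lambda)}$, which is either of the form $G(y)$ for $y$ drawn from $\mathsf{MarkedInput}(D_\lambda)$, or is uniformly random. $\mathcal{A}'$ internally simulates $H_2$ by sampling $(P_y)\leftarrow D_\lambda$, a random $v\leftarrow\{0,1\}^{m(\lambda)}$, and preparing the leased state $(\ket{v^{\theta^{*}}},H(v))$; it uses $\theta^{*}$ in place of $G(y)$ throughout, which in particular means that in the input-challenge phase, whenever the sampled $x$ equals $y$, $\mathcal{A}'$ hands $\theta^{*}$ to $\mathcal{A}$. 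On verification queries for $y$, $\mathcal{A}'$ measures in the $H^{\theta^{*}}$ basis. Random-oracle queries to $G$ at $y$ are answered with $\theta^{*}$, and all other queries are answered by lazy sampling (or by a fresh quantum-secure PRF in the computational version, as in the proof of Lemma \ref{lem: new hybrids 1}). Finally, $\mathcal{A}'$ outputs $0$ if $\mathcal{A}$ wins, and $1$ otherwise.

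By construction, if $\theta^{*}=G(y)$ then the joint distribution that $\mathcal{A}$ sees is exactly that of $H_2$, while if $\theta^{*}$ is uniformly random (and independent of $y$), then the simulated distribution coincides with $H_3$. Hence the distinguishing advantage of $\mathcal{A}'$ between $G(\mathsf{MarkedInput}(D_\lambda))$ and $U_{m(\lambda)}$ equals $|p(H_2)-p(H_3)|$, which is non-negligible by assumption. Next, because $(D,D')\in\mathcal{D}_{\mathsf{PF}\mbox{-}\mathsf{pairs}\mbox{-}\mathsf{stat}\mbox{-}\mathsf{SSL}}$ (resp.\ $\in\mathcal{D}_{\mathsf{PF}\mbox{-}\mathsf{pairs}\mbox{-}\mathsf{comp}\mbox{-}\mathsf{SSL}}$), the families $\mathsf{MarkedInput}(D_\lambda)$ and $X_\lambda$ are statistically (resp.\ computationally) indistinguishable; using that $\mathcal{A}'$ is query-bounded (resp.\ $\QPT$), this implies that $\mathcal{A}'$ also distinguishes $G(X_\lambda)$ from $U_{m(\lambda)}$ with non-negligible advantage. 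But this directly contradicts Corollary \ref{cor: dist entropy} applied to the ensemble $\{X_\lambda\}$, since $\Hmin(X_\lambda)\geq \lambda^{\epsilon}$.

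The only non-routine step is ensuring that $\mathcal{A}'$ genuinely simulates the oracle $G$ consistently with the challenge $\theta^{*}$ at the point $y$ while remaining query-bounded; this is standard lazy sampling / PRF reprogramming as in Lemma \ref{lem: new hybrids 1}, and (unlike in the copy-protection proof) no simultaneous extraction or monogamy-of-entanglement argument is required here, because the reduction only needs a single-bit distinguishing output from $\mathcal{A}$.
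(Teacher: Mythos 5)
Your high-level plan is right and matches the paper's: mirror the copy-protection Lemma~\ref{lem: new hybrids 3}, turn a gap between $p(H_2)$ and $p(H_3)$ into a distinguisher for $G(X_\lambda)$ versus $U_{m(\lambda)}$, and invoke Corollary~\ref{cor: dist entropy}. However, the way you build the reduction has a flaw that makes its advantage vanish.

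In the $\mathsf{Distinguish}$ game, the only thing that separates the two cases $b=0$ and $b=1$ is the \emph{correlation} between the challenge string $\theta^\ast$ and the oracle $G$: marginally, $\theta^\ast$ is uniform in both cases, and the advantage can only come from letting somebody query $G$ and detect whether $\theta^\ast$ arose as $G(x^\ast)$. Your $\mathcal{A}'$ never exposes $\mathcal{A}$ to the real $G$: it samples its own $y\leftarrow D_\lambda$, answers $\mathcal{A}$'s queries to $G$ via a freshly lazy-sampled function (or a fresh PRF) $\hat G$, and reprograms $\hat G(y)=\theta^\ast$. The oracle $\mathcal{A}$ sees is therefore $\hat G_{y,\theta^\ast}$, whose distribution is that of a uniformly random function \emph{in both cases}, with $\hat G_{y,\theta^\ast}(y)=\theta^\ast$ equal to the encoding basis in both cases. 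In other words, the simulated game is distributed exactly as $H_2$ regardless of the bit $b$, so $\Pr[\mathcal{A}'\text{ outputs }0\mid b=0]=\Pr[\mathcal{A}'\text{ outputs }0\mid b=1]$ and the reduction has zero advantage. The reprogramming at $y$ is precisely what kills the distinction: it forces the basis to equal the simulated oracle's value at the marked input, which is the defining property of $H_2$.

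The fix, which is what the paper's Lemma~\ref{lem: new hybrids 3} proof implicitly does, is to \emph{not} sample a fresh $\hat G$ and to \emph{not} reprogram: give $\mathcal{A}$ direct access to the Distinguish challenger's $G$, and simply use $\theta^\ast$ wherever the game would use ``$G(\text{marked input})$'' — in state preparation $\ket{v^{\theta^\ast}}$, in verification, and in the input-challenge phase. You do not need to sample your own $y$ at all: the SSL game only ever uses $G(y)$, never $y$ itself, and the challenge distribution $D_y\in\mathcal D_{\mathsf{PF}\mbox{-}\mathsf{Chall}\mbox{-}\mathsf{SSL}}$ can be sampled without knowing $y$ (flip a coin: with probability $1/2$ send $\theta^\ast$, otherwise sample $x\leftarrow X_\lambda$ and send a uniformly random string). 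Then for $b=0$ the simulated game is $H_2$ with the implicit marked input $x^\ast$ (distributed via $X_\lambda$, which is indistinguishable from $\mathsf{MarkedInput}(D_\lambda)$ by assumption), and for $b=1$ it is exactly $H_3$. Your invocation of Lemma~\ref{lem: new hybrids 1}'s lazy-sampling/PRF machinery imports a technique that is appropriate there (where no challenge string correlated with $G$ is at play) but is exactly the wrong move here.
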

\begin{proof}
The proof is analogous to the proof of Lemma \ref{lem: new hybrids 3}, where an adversary that wins with probabilities that differ non-negligibly in $H_3$ and $H_2$ yields a distinguisher for $G(X_{\lambda})$ and $U_{m(\lambda)}$.
\end{proof}

The crux of the security proof is showing that $p(H_3)$ and $p(H_4)$ are negligibly close.

\begin{lem}
\label{lem: 29}
$|p(H_4) - p(H_3)| = \negl(\lambda) \,.$
\end{lem}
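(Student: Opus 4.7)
The plan is to apply the one-way-to-hiding (O2H) lemma (Lemma \ref{lem: qrom technical step}) to reduce the gap $|p(H_4) - p(H_3)|$ to the probability that the adversary queries $H$ at the secret value $v$ at some point during the game. Since $H_3$ and $H_4$ differ only in replacing $H(v)$ with a uniformly random $z$, O2H yields a bound of the form $|p(H_4) - p(H_3)| \le \poly(q) \cdot \sqrt{P}$, where $q$ is the total number of oracle queries made over the course of the game and $P$ is the expected weight on $\ket{v}$ obtained by measuring a uniformly chosen query position of the adversary. It therefore suffices to show $P = \negl(\lambda)$.

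To bound $P$, I will move to the standard purified picture: the lessor prepares $\ket{\phi^+}_{\mathsf{XY}}$, sends $\mathsf{Y}$ to the adversary, and recovers $v$ at verification time by measuring $\mathsf{X}$ in the $H^\theta$ basis; this is operationally equivalent to the original scheme. Any successful extraction of $v$ from the adversary's residual register $\mathsf{R}$ (via the random query-measurement), combined with the lessor's own recovery from $\mathsf{X}$, would produce two simultaneous copies of $v$ on disjoint registers for a uniformly random $\theta$. After conditioning on successful verification $\Pieq$, the joint state on $\mathsf{XR}$ lies (up to perturbation controlled by Lemma \ref{lem:closeness_ideal}) in the image of $\Pi^{\EPR}_t$ for a small $t$, and Lemma \ref{lem:monogamy-uncertainty} then gives a min-entropy bound of at least $m - 2t\log(m+1) = \omega(\log \lambda)$ on the lessor's outcome conditioned on $\mathsf{R}$, forcing the extraction probability to be negligible.

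The main obstacle is establishing this low-Pauli-weight structure: I must argue that successful verification indeed confines the post-verification state to the image of $\Pi^{\EPR}_t$ for a small enough $t$. For this I will combine the hash check $H(v') = z$ (whose large range suppresses spurious collisions with overwhelming probability) with a monogamy-of-entanglement argument in the spirit of Lemma \ref{lem: monogamy}: if the effective Pauli-error weight were large, one could exhibit an explicit strategy that simultaneously passes the equality test on $\mathsf{Y}$ for a uniformly random $\theta$ and also extracts $v$ from $\mathsf{R}$, contradicting the monogamy bound of Lemma \ref{lem: monogamy}. Plugging the resulting $P = \negl(\lambda)$ back into the O2H inequality completes the proof.
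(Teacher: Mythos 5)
Your overall skeleton matches the paper's: reduce $|p(H_4)-p(H_3)|$ via O2H to the probability of querying at $v$, pass to the purified/EPR picture with the lessor holding register $\mathsf{X}$, invoke Lemma~\ref{lem:closeness_ideal} to approximate the post-verification state by one in the image of $\Pi^{\EPR}_t\otimes\Id_{\mathsf R}$, and close with the min-entropy bound of Lemma~\ref{lem:monogamy-uncertainty}. That much is right.

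The gap is in the step you yourself flag as ``the main obstacle'': showing that the post-verification state is mostly supported on $\Pi^{\EPR}_t$. You propose to get this from ``hash check plus a monogamy argument in the spirit of Lemma~\ref{lem: monogamy}: if the effective Pauli-error weight were large, one could exhibit an explicit strategy that simultaneously passes the equality test on $\mathsf Y$ for a uniformly random $\theta$ and also extracts $v$ from $\mathsf R$.'' This does not close. First, a state with large Pauli-error weight does not give you a strategy that \emph{passes} the consistency test for a random $\theta$ --- exactly the opposite: such a state fails for most $\theta$. Second, large Pauli weight on $\mathsf{XY}$ carries no implication that $\mathsf R$ is informative about $v$, so there is no ``simultaneous extraction'' to feed into Lemma~\ref{lem: monogamy}. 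Third, the hash check plays no role in the paper's version of this step: the paper first reprograms the verification to a direct $\mathsf X$-vs-$\mathsf Y$ consistency check (hybrids $\widetilde H_1,\widetilde H_2$, with the coarse-grained projector $\Pieq$), after which the hash is out of the picture.

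The actual argument the paper uses is a direct, purely combinatorial $\theta$-average. The post-verification state is $\Pieq$-invariant, so expanding in the Bell basis $\{\ket{\phi^+_{ab}}\}$ and using Lemma~\ref{lem:equalityEPR}, one sees that for a fixed $(a,b)$ with $\max(w(a),w(b))>t$, the set of $\theta\in\{0,1\}^m$ for which $\Pieq\ket{\phi^+_{ab}}\neq 0$ has density at most $2^{-(t+1)}$. Averaging over $\theta$ then gives
$\mathbb E_\theta\,\mathrm{Tr}\bigl[\bigl((\Id-\Pi^{\EPR}_t)\otimes\Id_{\mathsf R}\bigr)\rho_\theta\bigr]\le 2^{-t-1}$,
with no monogamy game and no hash argument. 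Only after this $\theta$-averaging bound is in place does Lemma~\ref{lem:monogamy-uncertainty} enter, applied to the $\Pi^{\EPR}_t$-supported ideal state. You should replace the proposed monogamy/hash step by this counting argument; as written, your proof has a hole at its load-bearing step.
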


The rest of the section is devoted to proving this lemma. At a high level, the proof has two parts:
\begin{itemize}
    \item For any adversary making $q$ queries to the oracle, we bound the difference between the winning probability in $H_3$ and in $H_4$ by $\poly(q) \cdot M$, where $M$ is a quantity related to the probability that the adversary queries the oracle at the encoded string $v$.
    \item Then, we show that the quantity $M$ is negligible. 
\end{itemize}

\begin{lem}
\label{lem: 30}
Let $\mathcal{A}$ be an adversary for $H_3$ and $H_4$, making $\poly(\lambda)$ oracle queries (pre and post verification). Suppose that $\mathcal{A}$ passes the verification step with probability at least $\frac12- \negl(\lambda)$ in $H_3$. Let $\mathcal{A}$ be specified by the unitary $U$ (i.e. $\mathcal{A}$ alternates oracles calls with applications of $U$). Let $p_{v,\theta,z,H} \in [0,1]$, and let $\rho_{\mathsf{R}}^{v,\theta,z,H}$ be density matrices, for all $v,\theta, z, H$. Let $$\sigma_{\mathsf{LR}} = \mathbb{E}_{v,\theta, z, H} \,p_{v,\theta, z, H} \left(\ket{H}\bra{H}\otimes  \ket{v}\bra{v} \otimes \ket{\theta}\bra{\theta} \otimes \ket{z}\bra{z} \right)_{\mathsf{L}} \otimes \rho^{v,\theta, z, H}_{\mathsf{R}}$$ 
be the post-verification state of the lessor and $\mathcal A$ in $H_4$ conditioned on $\mathcal A$ passing the verification step. Let $\tau_{\theta} = \frac12 \ket{\theta}\bra{\theta} + \frac12 \mathbb{E}_{\theta'} \ket{\theta'}\bra{\theta'}$. Then, 
$$|\Pr[\mathcal{A} \text{ wins in } H_3] - \Pr[\mathcal{A} \text{ wins in } H_4] | \leq \poly(\lambda) \cdot M + \negl(\lambda) \,,$$
where 
\begin{align*}
    M &=\frac12  \,\mathbb{E}_{H}\mathbb{E}_{v}\mathbb{E}_{\theta} \mathbb{E}_z \mathbb{E}_k p_{v,\theta,z,H} \Tr{\ket{v}\bra{v} (U O^{H_{v,z}})^k \left(\rho^{v,\theta, z, H}_{\mathsf{R}} \otimes \tau_{\theta} \right) \left(U O^{H_{v,z}})^k \right)^{\dagger} } \\ &+ \frac12  \,\mathbb{E}_{H}\mathbb{E}_{v} \mathbb{E}_{\theta} \mathbb{E}_z\mathbb{E}_k p_{v,\theta,z,H} \Tr{\ket{v}\bra{v} (U O^{H})^k \left(\rho^{v,\theta, z, H}_{\mathsf{R}} \otimes \tau_{\theta} \right) \left(U O^{H})^k \right)^{\dagger}}.  \nonumber
\end{align*}
\end{lem}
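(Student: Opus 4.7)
The plan is to apply the one-way-to-hiding (O2H) lemma (Lemma \ref{lem: qrom technical step}) to bound the difference in winning probabilities, and then decompose the resulting ``query-at-$v$'' quantity into a post-verification contribution (which matches exactly the $M$ in the statement) and a pre-verification contribution (which will be negligible by a monogamy-of-entanglement argument using the assumption on verification success).

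\textbf{Reformulation.} First, I will use the standard reprogramming trick: drawing $(H,z)$ with $z=H(v)$ is identical as a joint distribution to drawing $H,z$ independently and uniformly and then replacing $H$ by $H_{v,z}$. With this view, $H_3$ and $H_4$ become identical (the adversary receives $(\ket{v^\theta},z)$ with $z$ uniform in both) except that the oracle is $H_{v,z}$ in $H_3$ and $H$ in $H_4$. Hence $|\Pr[\mathcal A\text{ wins }H_3]-\Pr[\mathcal A\text{ wins }H_4]|$ is a standard oracle-distinguishing advantage for a $\poly(\lambda)$-query algorithm whose computation consists of pre-verification queries, a verification measurement, post-verification queries applied to the post-verification state together with a challenge register distributed as $\tau_\theta$, and a final output measurement.

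\textbf{Applying O2H and splitting.} Applying Lemma \ref{lem: qrom technical step} to this distinguisher, the advantage is bounded by $\poly(\lambda)\cdot M_{\mathrm{tot}}$, where $M_{\mathrm{tot}}$ is the expected norm-squared of projecting on $\ket v$ in the query register at a uniformly random query index (averaged over $H,v,\theta,z$, and for both oracles $H$ and $H_{v,z}$). I then split $M_{\mathrm{tot}} = M_{\mathrm{pre}} + M_{\mathrm{post}}$ according to whether the random query index falls in the pre- or post-verification phase. The post-verification term $M_{\mathrm{post}}$ factors exactly into the quantity $M$ in the statement: the adversary's post-verification register is precisely $\rho^{v,\theta,z,H}_{\mathsf R}$ weighted by $p_{v,\theta,z,H}$ by the definition of $\sigma_{\mathsf{LR}}$, the challenge register is in state $\tau_\theta$ (equal to $\ket\theta\bra\theta$ with probability $1/2$ and uniform $\ket{\theta'}\bra{\theta'}$ with probability $1/2$), and the $\frac12$ pre-factors in $M$ match the $\frac12$'s coming from the two oracle terms in Lemma \ref{lem: qrom technical step}.

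\textbf{Bounding the pre-verification contribution.} This is the main technical step. Suppose for contradiction that for some pre-verification query index $k$ the probability of measuring $v$ on the query register is non-negligible. Combined with the hypothesis that verification in $H_3$ succeeds with probability at least $1/2-\negl(\lambda)$, I can build a strategy for the monogamy game of Section \ref{sec: monogamy}. The monogamy-game pirate internally samples $(H,z)$, runs the adversary's pre-verification unitaries and oracle calls up to query $k-1$, and views the resulting state as bipartite by designating the query register (after one more oracle call) as going to the first freeloader and all remaining registers as going to the second. Both freeloaders receive $\theta$: the first measures the query register and outputs the outcome (it does not need $\theta$, which is harmless extra information), while the second completes the adversary's pre-verification computation and performs the $H^\theta$-basis verification measurement. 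By a union bound, both freeloaders output $v$ with probability at least the pre-verification query probability minus $1/2+\negl(\lambda)$, yielding non-negligible success in the monogamy game and contradicting Lemma \ref{lem: monogamy} (which gives an exponentially small upper bound). Hence $M_{\mathrm{pre}}=\negl(\lambda)$, and the claim follows.

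\textbf{Main obstacle.} The subtle point is the monogamy reduction: one needs to carefully view the first freeloader's ``query register measurement'' as a $\theta$-independent POVM and the second freeloader's computation (including its remaining oracle calls, using the sampled $(H,z)$) as a $\theta$-dependent POVM on the remaining registers, so that the bipartite state produced by the monogamy-game pirate together with these two POVMs fits exactly the template of Lemma \ref{lem: monogamy}. Once this setup is in place the quantitative bound is immediate, but the care is in ensuring that all randomness not specific to $(v,\theta)$ is absorbed into the pirate's internal state and that the pre-verification phase truly produces a bipartite state to which monogamy applies.
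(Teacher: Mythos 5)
Your proposal has two genuine gaps.

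First, the claim that $M_{\mathrm{post}}$ ``factors exactly'' into the statement's $M$ is not correct. The $M$ in the statement evaluates \emph{both} oracle terms ($O^H$ and $O^{H_{v,z}}$) on the same state $p_{v,\theta,z,H}\,\rho^{v,\theta,z,H}_{\mathsf{R}}\otimes\tau_\theta$, where $\rho^{v,\theta,z,H}_{\mathsf{R}}$ is the $H_4$ post-verification state conditioned on passing. If you apply Lemma~\ref{lem: qrom technical step} once to the whole algorithm and restrict attention to post-verification query indices, the state on which the query register is projected depends on which oracle was used during the pre-verification phase: for the $O^H$ branch you indeed get $p\,\rho$, but for the $O^{H_{v,z}}$ branch you get $\tilde p\,\tilde\rho$, the $H_3$ passing probability and $H_3$ conditional post-verification state. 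Replacing $\tilde p\,\tilde\rho$ by $p\,\rho$ requires first establishing that these two sub-normalized states are negligibly close, and that is exactly where the hypothesis ``$\mathcal A$ passes verification with probability at least $\frac12-\negl(\lambda)$'' is needed (in the paper this enters through the pre-verification closeness statement, Eq.~\eqref{eq: states almost equal 2}). Your write-up omits this step entirely and invokes the $\frac12$ hypothesis where it plays no role.

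Second, the monogamy reduction you give for bounding $M_{\mathrm{pre}}$ does not go through. After handing the query register to the first freeloader and ``all remaining registers'' to the second, the second freeloader cannot ``complete the adversary's pre-verification computation and perform the $H^\theta$-basis verification measurement'': those unitaries and oracle calls act jointly on the query register it no longer possesses, and the verification probability of whatever it can do need bear no relation to the original one. Independently, the union bound $\Pr[\text{both output }v]\geq p_A+p_B-1$ is vacuous whenever the query-at-$v$ probability $p_A\leq 1/2$, so it cannot rule out $M_{\mathrm{pre}}$ of order $1/\poly(\lambda)$. Neither ingredient is in fact needed: because the pre-verification phase does not use $\theta$, the monogamy pirate can simply sample $z$ and $H$, run the adversary up to a random pre-verification query index $k$, measure the query register to obtain a classical candidate $v'$, and hand $v'$ to both freeloaders, who each output it regardless of $\theta$. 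Their joint success probability is the query-at-$v$ probability itself, which Corollary~\ref{cor: monogamy entropy} forces to be negligible. This simpler argument is what the paper uses implicitly to justify Eq.~\eqref{eq: states almost equal 2}, and from there the $\frac12$ hypothesis yields the closeness you need in point one.
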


\begin{proof}
As we have done in several earlier proofs, we can recast $H_3$ as follows: $\mathcal{A}$ receives a uniformly random $z$, and gets access to a the reprogrammed oracle $H_{v,z}$. Let $\ket{v^{\theta}}$ denote the encoding of string $v$ using basis $\theta$. Let $q_1$ and $q_2$ denote the number of queries made by the adversary respectively before and after the verification phase.

First notice that the global states of the lessor and adversary right before the verification is executed are negligibly close in trace distance in $H_3$ and $H_4$.  

\begin{align}& \mathbb{E}_H \mathbb{E}_v \mathbb{E}_{\theta} \mathbb{E}_{z} \ket{H}\bra{H} \otimes \ket{v}\bra{v} \otimes \ket{\theta}\bra{\theta} \otimes \left((U O^{H_{v,z}})^{q_1} \ket{v^{\theta}} \bra{v^{\theta}} \otimes \ket{z}\bra{z} \left((U O^{H_{v,z'}})^{q_1}\right)^{\dagger}\right)\nonumber \\ \approx \,& \mathbb{E}_H \mathbb{E}_v \mathbb{E}_{\theta} \mathbb{E}_{z } \ket{H}\bra{H}\otimes \ket{v}\bra{v} \otimes \ket{\theta}\bra{\theta} \otimes \left((U O^{H})^{q_1} \ket{v^{\theta}} \bra{v^{\theta}} \otimes \ket{z}\bra{z}\left((U O^{H})^{q_1}\right)^{\dagger} \right)\,. \label{eq: states almost equal 2}
\end{align}
Here we have stored the complete function $H$ in an additional register, the quantum way of formulating indistinguishability of the joint distribution of $H$ and the adversary's state.

Equation \eqref{eq: states almost equal 2} follows from the one-way-to-hiding lemma (Lemma \ref{lem: qrom technical step}), and the fact that $\mathcal A$ only queries at $v$ with negligible probability (otherwise $\mathcal A$ would straightforwardly imply an adversary that wins the monogamy game (more precisely the variant of Lemma \ref{lem: monogamy}).

It follows that:
\begin{itemize}
    \item The probabilities of $\mathcal A$ passing the verification step in $H_3$ and in $H_4$ are negligibly close.
    \item The post-verification states, conditioned on passing verification must be negligibly close (this uses \eqref{eq: states almost equal 2} together with the fact that, by hypothesis, $\mathcal{A}$ passes verification with probability at least $\frac12 - \negl(\lambda)$). 
\end{itemize}

By definition, the joint state of lessor and adversary post-verification state in $H_4$ conditioned on $\mathcal{A}$ passing verification is 
$$\sigma_{\mathsf{LR}} = \mathbb{E}_{v,\theta, z, H} \,p_{v,\theta, z, H} \left(\ket{H}\bra{H}\otimes  \ket{v}\bra{v} \otimes \ket{\theta}\bra{\theta} \otimes \ket{z}\bra{z} \right)_{\mathsf{L}} \otimes \rho^{v,\theta, z, H}_{\mathsf{R}}\,.$$
Let the analogous state in $H_3$ be 
$$\tilde{\sigma}_{\mathsf{LR}} = \mathbb{E}_{v,\theta, z, H} \,p_{v,\theta, z, H} \left(\ket{H}\bra{H}\otimes  \ket{v}\bra{v} \otimes \ket{\theta}\bra{\theta} \otimes \ket{z}\bra{z} \right)_{\mathsf{L}} \otimes \tilde{\rho}^{v,\theta, z, H}_{\mathsf{R}} \,.$$

Then $ \sigma_{\mathsf{L,R}} \approx \tilde{\sigma}_{\mathsf{L,R}}$. 
Now, denote by $\{\Pi^0,\Pi^1 \}$ the projective measurement performed by $\mathcal{A}$ to guess the answer to the input challenge phase. Then, 
\begin{align}
\label{eq: 57}
    &\Pr[\mathcal{A} \text{ wins in } H_4 | \text{verification is passed}] \nonumber\\
    &= \mathbb{E}_{v,\theta, z, H} \,p_{v,\theta, z, H} \Bigg[ \frac12 \Tr{\Pi^1 (U O^{H})^{q_2} \rho^{v,\theta, z, H}_{\mathsf{R}} \otimes \ket{\theta}\bra{\theta} \left((UO^{H})^{q_2}\right)^{\dagger} }\nonumber\\ &+ \frac12 \mathbb{E}_{\theta'}\Tr{\Pi^0 (UO^{H})^{q_2} \rho^{v,\theta, z, H}_{\mathsf{R}} \otimes \ket{\theta'}\bra{\theta'} \left((UO^{H})^{q_2}\right)^{\dagger}} \Bigg] \,.
\end{align}
And, similarly,
\begin{align}
    &\Pr[\mathcal{A} \text{ wins in } H_3 | \text{verification is passed}] \nonumber\\
    &= \mathbb{E}_{v,\theta, z, H} \,p_{v,\theta, z, H}\Bigg[ \frac12 \Tr{\Pi^1 (U O^{H_{v,z}})^{q_2} \tilde{\rho}^{v,\theta, z, H}_{\mathsf{R}} \otimes \ket{\theta}\bra{\theta} \left((UO^{H_{v,z}})^{q_2}\right)^{\dagger} }\nonumber\\
    &+ \frac12 \mathbb{E}_{\theta'}\Tr{\Pi^0 (UO^{H_{v,z}})^{q_2} \tilde{\rho}^{v,\theta, z, H}_{\mathsf{R}} \otimes \ket{\theta'}\bra{\theta'} \left((UO^{H_{v,z}})^{q_2}\right)^{\dagger}} \Bigg] \nonumber\\
    \label{eq: 59}
    & \approx \mathbb{E}_{v,\theta, z, H} \,p_{v,\theta, z, H} \Bigg[\frac12\Tr{\Pi^1 (U O^{H_{v,z}})^{q_2} \rho^{v,\theta, z, H}_{\mathsf{R}} \otimes \ket{\theta}\bra{\theta} \left((UO^{H_{v,z}})^{q_2}\right)^{\dagger} }\nonumber\\ &+ \frac12 \mathbb{E}_{\theta'}\Tr{\Pi^0 (UO^{H_{v,z}})^{q_2} \rho^{v,\theta, z, H}_{\mathsf{R}} \otimes \ket{\theta'}\bra{\theta'} \left((UO^{H_{v,z}})^{q_2}\right)^{\dagger}} \Bigg]
    \,.
\end{align}
Using equations \eqref{eq: 57} and \eqref{eq: 59}, and applying the O2H lemma twice (once to bound the distance between the first terms in expressions \eqref{eq: 57} and \eqref{eq: 59}, and once to bound the distance between the second terms in \eqref{eq: 57} and \eqref{eq: 59}), we obtain:
\begin{align}
     &\left|\Pr[\mathcal{A} \text{ wins in } H_4 | \text{verification is passed}] - \Pr[\mathcal{A} \text{ wins in } H_3 | \text{verification is passed}] \right| \nonumber \nonumber \\
     &\leq \poly(\lambda) \cdot \frac12 \mathbb{E}_{v, \theta,z,H} \,p_{v,\theta,z,H} \Tr{\ket{v}\bra{v} (U O^{H})^k \left(\rho^{v,\theta, z, H}_{\mathsf{R}} \otimes \ket{\theta}\bra{\theta} \right) \left(U O^{H})^k \right)^{\dagger}} \nonumber \\
     &+ \poly(\lambda) \cdot \frac12 \mathbb{E}_{v, \theta,z,H} \,p_{v,\theta,z,H} \Tr{\ket{v}\bra{v} (U O^{H_{v,z}})^k \left(\rho^{v,\theta, z, H}_{\mathsf{R}} \otimes \ket{\theta}\bra{\theta} \right) \left(U O^{H_{v,z}})^k \right)^{\dagger}} \nonumber \\
     &+ \poly(\lambda) \cdot \frac12 \mathbb{E}_{v, \theta,z,H, \theta'} \,p_{v,\theta,z,H} \Tr{\ket{v}\bra{v} (U O^{H})^k \left(\rho^{v,\theta, z, H}_{\mathsf{R}} \otimes \ket{\theta'}\bra{\theta'} \right) \left(U O^{H})^k \right)^{\dagger}} \nonumber\\
     &+ \poly(\lambda) \cdot \frac12 \mathbb{E}_{v, \theta,z,H, \theta'} \,p_{v,\theta,z,H} \Tr{\ket{v}\bra{v} (U O^{H_{v,z}})^k \left(\rho^{v,\theta, z, H}_{\mathsf{R}} \otimes \ket{\theta'}\bra{\theta'} \right) \left(U O^{H_{v,z}})^k \right)^{\dagger}} + \negl(\lambda) \nonumber\\
     &= \poly(\lambda) \cdot \frac12 \mathbb{E}_{v, \theta,z,H} \,p_{v,\theta,z,H} \Tr{\ket{v}\bra{v} (U O^{H})^k \left(\rho^{v,\theta, z, H}_{\mathsf{R}} \otimes \tau_{\theta} \right) \left(U O^{H})^k \right)^{\dagger}} \nonumber\\
     &+ \poly(\lambda) \cdot \frac12 \mathbb{E}_{v, \theta,z,H} \,p_{v,\theta,z,H} \Tr{\ket{v}\bra{v} (U O^{H_{v,z}})^k \left(\rho^{v,\theta, z, H}_{\mathsf{R}} \otimes \tau_{\theta} \right) \left(U O^{H_{v,z}})^k \right)^{\dagger}} + \negl(\lambda)\nonumber \\
     &= \poly(\lambda)\cdot  M + \negl(\lambda)
     \,,
\end{align} 
where to get two equalities we used the definition of $\tau_{\theta}$ and $M$. This is the desired bound.
\end{proof}

In the rest of the section, we show that the quantity $M$ from Lemma \ref{lem: 30} is negligible. First of all, notice that $M$ is negligible if and only if the second term in $M$ is negligible, i.e. if and only if,
\begin{equation}\mathbb{E}_{H}\mathbb{E}_{v} \mathbb{E}_{\theta} \mathbb{E}_z\mathbb{E}_k p_{v,\theta} \Tr{\ket{v}\bra{v} (U O^{H})^k \left(\rho^{v,\theta, z, H}_{\mathsf{R}} \otimes \tau_{\theta} \right) \left(U O^{H})^k \right)^{\dagger}} = \negl(\lambda) \,. \label{eq: quantity to bound}
\end{equation}
where we are using the same notation as in Lemma \ref{lem: 30}. Thus, what we wish to show is equivalent to showing that, for any adversary $\mathcal A$ in $H_4$ who passes verification with probability at least $\frac12 - \negl(\lambda)$, the probability of querying the oracle at the encoded string $v$ at any point after a successful verification is negligible.

Thus, we will show that \eqref{eq: quantity to bound} is negligible. 
First, notice that an adversary $\mathcal{A}$ which passes verification in $H_4$ with probability at least $\frac12 -\negl(\lambda)$ and violates \eqref{eq: quantity to bound} immediately implies an adversary which succeeds at the following game $\widetilde{H}_0$ with non-negligible probability.
\vspace{2mm}

\noindent $\widetilde{H}_0$: This is identical to $H_4$ except we ask the adversary to return a guess

$v'$ for the encoded string $v$, instead of a bit. $\mathcal A$ wins if $v'=v$.
\vspace{2mm}

The reduction crucially uses the hypothesis that $\mathcal{A}$ passes verification with probability at least $\frac12 - \negl(\lambda)$.
We will show through another sequence of hybrids (which we denote using tildes) that the optimal winning probability in $\widetilde{H}_0$ is negligible. This will complete the proof that the quantity in \eqref{eq: quantity to bound}, and thus $M$ is negligible, for any adversary $\mathcal{A}$ who passes verification with probability at least $\frac12 - \negl(\lambda)$. Since the optimal winning probability in $H_3$ and $H_4$ is at least $\frac12$ (the honest strategy followed by random guessing achieves $\frac12$), this concludes the proof of Lemma \ref{lem: 29}, and hence that the optimal winning probability in $H_0$ is at most $\frac12 + \negl(\lambda)$. The following are the hybrids.
\vspace{2mm}

\noindent $\widetilde{H}_1$: Instead of sampling $v \leftarrow \{0,1\}^m$ and $\theta \leftarrow \{0,1\}^m$ at the beginning of the game, the lessor now prepares an $\EPR$ pair
on two registers $\mathsf{X}$ and $\mathsf{Y}$, and sends the registers $\mathsf{YZ}$ of the state $\ket{\phi^+ }_{\mathsf{XY}}\otimes \ket{z}_{\mathsf{Z}}$ to $\mathcal{A}$.
Rather than running $\SSL.\Verify$ for verification and measuring the register $\mathsf{Y}$, the lessor now measures both registers $\mathsf{X}$ and $\mathsf{Y}$ in the $H^\theta$ basis for a random $\theta \leftarrow \{0,1\}^m$, and checks if the outcomes result in the same string, which we denote by $v$.
\vspace{2mm}

\noindent $\widetilde{H}_2$: This game is identical to the one before, except that we change the verification procedure as follows. Instead of measuring each of the registers $\mathsf{X}$ and $\mathsf{Y}$ in the $H^\theta$ basis, the lessor now measures a bipartite projector $\Pieq$ in order to check if the registers $\mathsf{XY}$ yield the same outcome if measured in the $H^\theta$ basis. We define the projector as follows:
$${\Pieq} = \sum_{v \in \{0,1\}^{m}} H^\theta\ket{v} \bra{v}_\mathsf{X} H^\theta \otimes H^\theta\ket{v} \bra{v}_\mathsf{Y} H^\theta.$$
Afterwards, the lessor measures register $\mathsf{X}$ in the $H^\theta$ basis to determine $v$.
\vspace{2mm}

We will denote these hybrids using a tilde to distinguish them from the original sequence of hybrids.

\begin{lem}\label{lem: tildeH1 tildeH0}
$p(\widetilde{H}_1) = p(\widetilde{H}_0)$.
\end{lem}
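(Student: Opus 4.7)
The plan is to establish the equality by a direct EPR-purification (``deferred measurement'') argument, with no approximation incurred. The key observation is that in $\widetilde{H}_1$ the adversary's operations — its initial processing of $(\mathsf{Y},z)$, the bipartite state it returns at verification time, and any post-verification computation culminating in the final guess $v'$ — are channels that act only on register $\mathsf{Y}$, the classical string $z$, and the adversary's private workspace. Register $\mathsf{X}$ is held entirely by the lessor and is never accessed by the adversary; likewise the shared oracles $G$ and $H$ are independent of $v,\theta$. Hence every step of the adversary commutes with the lessor's computational-basis measurement of $\mathsf{X}$ in the rotated basis $H^\theta$, and this measurement can be performed at any point without changing any joint distribution.

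First I would commute the measurement of register $\mathsf{X}$ in the $H^\theta$ basis, together with the sampling of $\theta \leftarrow \{0,1\}^m$ and $z \leftarrow \{0,1\}^\lambda$, all the way to the beginning of the game, immediately after the lessor prepares the EPR pair $\ket{\phi^+}_{\mathsf{XY}}$. By the Ricochet property (Lemma~\ref{lem:ricochet}), measuring register $\mathsf{X}$ of $\ket{\phi^+}_{\mathsf{XY}}$ in the $H^\theta$ basis yields a uniformly random outcome $v \in \{0,1\}^m$ and collapses register $\mathsf{Y}$ to $H^\theta \ket{v} = \ket{v^\theta}$. Thus, after the rewriting, the lessor effectively samples $v,\theta \leftarrow \{0,1\}^m$ and $z \leftarrow \{0,1\}^\lambda$ uniformly and sends $(\ket{v^\theta}_{\mathsf{Y}}, z)$ to the adversary.

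Next I would check that the verification step becomes identical to the one in $\widetilde{H}_0$: since the lessor now holds the classical value $v$, the remaining action — measuring the register returned by the adversary in the $H^\theta$ basis and accepting iff the outcome matches the other outcome obtained from $\mathsf{X}$ — is exactly the procedure ``measure $\mathsf{Y}$ in the $H^\theta$ basis and check the outcome equals $v$'' of $\widetilde{H}_0$ (recall that in $H_4$, and hence in $\widetilde{H}_0$, the value $z$ is uniformly random and the lessor already knows $v$, so the check ``$H(v')=z$'' collapses to ``$v'=v$''). Every other ingredient — the adversary's channels, oracle access to $G,H$, and the final measurement used to output $v'$ — is unchanged by the commutation.

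Consequently, the two games induce exactly the same joint distribution over all random variables and measurement outcomes that determine the win event, so $p(\widetilde{H}_1) = p(\widetilde{H}_0)$. There is no real technical obstacle here; the content of the lemma is simply that a uniformly random Wiesner encoding $\ket{v^\theta}$ with known $\theta$ can equivalently be generated by measuring one half of an EPR pair in the $H^\theta$ basis at the end of the protocol.
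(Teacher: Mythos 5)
Your proposal is correct and takes essentially the same route as the paper: both rest on the observation (established via the ricochet property, Lemma~\ref{lem:ricochet}) that sampling $v$ and sending $\ket{v^\theta}_{\mathsf{Y}}$ is identical to sending the $\mathsf{Y}$ half of an EPR pair and measuring the $\mathsf{X}$ half in the $H^\theta$ basis, combined with the fact that the $\mathsf{X}$-measurement can be deferred because the adversary never touches $\mathsf{X}$. You make the commutation/deferred-measurement step explicit where the paper leaves it implicit, which is a mild improvement in rigor. One small wording quibble: the parenthetical explaining why the verification check reduces to ``$v'=v$'' is slightly off --- the reason is simply that the lessor knows $v$ and verification is really the test ``$v'=v$'' (in $H_0$ the test ``$H(v')=z$'' with $z=H(v)$ is a proxy for this, injective with overwhelming probability), not that $z$ being uniformly random per se collapses anything; as stated literally, ``$H(v')=z$'' for a fresh uniform $z$ would essentially never pass. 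This does not affect the correctness of your argument.
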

\begin{proof}
The argument is fairly standard. 
We consider the following two statements:
\begin{itemize}
    \item sample $v \leftarrow \{0,1\}^m$, let $\theta \in \{0,1\}^m$, and output $\bigotimes_{i=1}^{m} \ket{v_i^{\theta_i}}_{\mathsf{Y}}$.
    \item create an $m$-qubit $\EPR$ pair $\ket{\phi^+}_{\mathsf{XY}}$,
measure $\mathsf{X}$ in the $H^\theta$ basis, and output register $\mathsf{Y}$.
\end{itemize}
It is evident that the equivalence of the two statements implies that $p(\widetilde{H}_1)$ and $p(\widetilde{H}_0)$ are identical. Note that we omit the register $\ket{z}$ in the proof, since it is independent of the $\EPR$ registers and thus does not affect the argument. Consider the following family of projectors given by $$ \{ \big(H^\theta \ket{v}\bra{v}H^\theta \otimes \Id_{\mathsf{Y}}\big) \}_{v \in \{0,1\}^m}.$$
We analyze the post-measurement state $\ket{\psi_v}/\sqrt{\Braket{\psi_v |\psi_v}}$ with respect to the state given by $\ket{\psi_v} = \big(H^\theta \ket{v}\bra{v}_{\mathsf{X}} H^\theta \otimes \Id_{\mathsf{Y}}\big) \ket{\phi^+}$:
\begin{align*}
\ket{\psi_v}_{\mathsf{XY}} &= \big(H^\theta \ket{v}\bra{v}H^\theta \otimes \Id  \big) \ket{\phi^+}_{\mathsf{XY}}\\
&= \Big(\big(H^\theta \otimes \Id\big) \big( \ket{v}\bra{v} \otimes \Id \big) \big(H^\theta  \otimes \Id \big) \Big) \ket{\phi^+}_{\mathsf{XY}} \\
&= \Big( \big(H^\theta \otimes \Id\big) \big( \ket{v}\bra{v} \otimes \Id \big) \big(\Id \otimes H^\theta \big) \Big) \ket{\phi^+}_{\mathsf{XY}} \quad\quad\quad \text{(Lemma \ref{lem:ricochet})}\\
&= 2^{-m/2} \sum_{v' \in \{0,1\}^m} \Big(  \big(H^\theta \otimes \Id\big) \big( \ket{v}\bra{v} \otimes \Id \big) \big(\Id \otimes H^\theta \big) \Big) \ket{v'}_{\mathsf{X}} \otimes \ket{v'}_{\mathsf{Y}}\\
&= 2^{-m/2} \sum_{v' \in \{0,1\}^m} H^\theta \ket{v}_{\mathsf{X}}\Braket{v|v'} \otimes H^\theta \ket{v'}_{\mathsf{Y}}\\
&= 2^{-m/2} H^\theta \ket{v}_{\mathsf{X}} \otimes H^\theta \ket{v}_{\mathsf{Y}}.
\end{align*}
This proves the claim, since the $\mathsf{Y}$ register of $\ket{\psi_v}/\sqrt{\Braket{\psi_v |\psi_v}}$ is identical to $\bigotimes_{i=1}^{m} \ket{v_i^{\theta_i}}$.
\end{proof}

\begin{lem}\label{lem: tildeH2 tildeH1}
$p(\widetilde{H}_2) = p(\widetilde{H}_1)$
\end{lem}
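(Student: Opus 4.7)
The strategy is to show that the two verification procedures in $\widetilde H_1$ and $\widetilde H_2$ induce the exact same joint distribution on (i) the accept/reject bit, (ii) the extracted string $v$ upon acceptance, and (iii) the post-verification state in the register $\mathsf R$ held by the adversary. Since the lessor sends the adversary nothing except $\theta$, the challenge input and (upon acceptance) the chosen challenge, an equality of these joint distributions immediately forces $p(\widetilde H_2)=p(\widetilde H_1)$.

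The key algebraic observation is that $\Pieq$ is exactly the coarse-graining of the two-outcome product measurement onto its diagonal. Define
\[
\Pi_{v_1,v_2}^{\theta} \;=\; H^\theta\ket{v_1}\bra{v_1}H^\theta_{\mathsf X}\otimes H^\theta\ket{v_2}\bra{v_2}H^\theta_{\mathsf Y}.
\]
Then $\{\Pi_{v_1,v_2}^{\theta}\}_{v_1,v_2}$ is a complete family of orthogonal projectors, and by definition $\Pieq=\sum_{v}\Pi_{v,v}^{\theta}$. In particular, $\Pi_{v,v}^{\theta}\,\Pieq=\Pi_{v,v}^{\theta}=\Pieq\,\Pi_{v,v}^{\theta}$.

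In $\widetilde H_1$, verification on the joint state $\rho_{\mathsf{XYR}}$ (which also includes the purification of everything the adversary produced) measures $\{\Pi_{v_1,v_2}^{\theta}\}$ and accepts iff $v_1=v_2$. Thus the sub-normalized joint state on the label $v$ and $\mathsf R$, conditioned on acceptance, is
\[
\sigma^{(1)}_v \;=\; \mathrm{Tr}_{\mathsf{XY}}\!\Bigl[\bigl(\Pi_{v,v}^{\theta}\otimes \Id_{\mathsf R}\bigr)\,\rho_{\mathsf{XYR}}\,\bigl(\Pi_{v,v}^{\theta}\otimes\Id_{\mathsf R}\bigr)\Bigr].
\]
In $\widetilde H_2$, verification first measures $\{\Pieq,\Id-\Pieq\}$, and upon the accept outcome then measures the first register in the $H^\theta$ basis. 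Using $\Pi_{v,v}^{\theta}\Pieq=\Pi_{v,v}^{\theta}$, the sub-normalized state conditioned on accept and output $v$ equals
\[
\sigma^{(2)}_v \;=\; \mathrm{Tr}_{\mathsf{XY}}\!\Bigl[\bigl(H^\theta\ket{v}\bra{v}_{\mathsf X}H^\theta\otimes\Id_{\mathsf Y}\otimes\Id_{\mathsf R}\bigr)\,\Pieq\,\rho_{\mathsf{XYR}}\,\Pieq\,\bigl(H^\theta\ket{v}\bra{v}_{\mathsf X}H^\theta\otimes\Id_{\mathsf Y}\otimes\Id_{\mathsf R}\bigr)\Bigr]=\sigma^{(1)}_v.
\]
Hence the two verification procedures yield identical acceptance probabilities, identical distributions over $v$, and identical post-verification states on $\mathsf R$ for every outcome. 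Consequently every adversary has exactly the same winning probability in $\widetilde H_2$ as in $\widetilde H_1$.

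No obstacle is expected here; the lemma is a purely operational identity of two equivalent ways of implementing the same projective measurement (a fine-grained product measurement versus its coarse-graining followed by a refining measurement on one side), and the commutation $\Pi_{v,v}^{\theta}\Pieq=\Pi_{v,v}^{\theta}$ makes the two implementations coincide exactly, not just approximately.
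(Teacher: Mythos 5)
Your proof is correct and follows essentially the same reasoning as the paper, which simply notes that the measurement in $\widetilde H_2$ is a coarse-graining of the one in $\widetilde H_1$ with the same acceptance condition and concludes "immediate." You make this precise by observing $\Pieq=\sum_v \Pi^\theta_{v,v}$ and $\bigl(H^\theta\ket v\bra v H^\theta\otimes\Id\bigr)\Pieq=\Pi^\theta_{v,v}$, which shows the two procedures yield identical acceptance probabilities, output distributions on $v$, and post-verification states on $\mathsf R$.
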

\begin{proof}
The lemma is immediate as the measurement in $\tilde H_2$ is  a coarse-graining of the measurement in $\tilde H_1$, with the acceptance condition remaining the same.
\end{proof}

In the remaining part of the proof, we will show that $p(\widetilde{H}_2)$ is negligible. 
The following is an important technical lemma, which is inspired by Lemma $16$ and Lemma $19$ in \cite{Unruh15}.

\begin{lem}
$p(\widetilde{H}_2) = \negl(\lambda)\,.$
\end{lem}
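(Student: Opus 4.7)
The plan is to reduce $\widetilde{H}_2$ directly to the monogamy-of-entanglement game of Lemma \ref{lem: monogamy}. The crucial observation is that, by decomposing the projector $\Pieq = \sum_{v \in \{0,1\}^{m(\lambda)}} \left(H^\theta \ket{v}\bra{v} H^\theta\right)^{\otimes 2}$, the joint event ``$\Pieq$ passes and $\mathsf{X}$ measured in the $H^\theta$ basis yields $v$'' is equivalent to ``both $\mathsf{X}$ and $\mathsf{Y}$ measured in the $H^\theta$ basis yield the same outcome $v$''. Thus verification plus extraction of $v$ is a bipartite product measurement on the registers $\mathsf{X}$ and $\mathsf{Y}$.

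Next I would invoke the ricochet property (Lemma \ref{lem:ricochet}) to replace the lessor's preparation of an EPR pair $\ket{\phi^+}_{\mathsf{XY}}$ followed by a final measurement of $\mathsf{X}$ in the $H^\theta$ basis with the equivalent procedure: sample $v,\theta \leftarrow \{0,1\}^{m(\lambda)}$ uniformly and send $\ket{v^\theta}$ on register $\mathsf{Y}$ to the adversary. In $\widetilde{H}_2$, the auxiliary data $z$ is sampled uniformly at random, and the random oracles $G, H$ are also independent of $v$; consequently we can treat $G,H,z$ as local randomness supplied to the adversary. The game then reads: the lessor sends $\ket{v^\theta}$ to $\mathcal{A}$; $\mathcal{A}$ outputs a bipartite state on $(\mathsf{Y}', \mathsf{R})$ and returns $\mathsf{Y}'$; the lessor measures $\mathsf{Y}'$ in the $H^\theta$ basis and aborts unless the outcome equals $v$; finally $\mathcal{A}$ is given a challenge (either $\theta$ or a uniformly random $\theta'$) together with $\mathsf{R}$ and outputs a guess $v'$, winning iff $v' = v$.

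This is precisely an instance of the monogamy-of-entanglement game, with ``Bob'' implemented by the lessor's verification measurement on $\mathsf{Y}'$ (POVM $B^v_\theta = H^\theta\ket{v}\bra{v}H^\theta$) and ``Charlie'' implemented by the adversary's second stage on $\mathsf{R}$. Charlie's POVM can be taken to be
\begin{equation*}
C_\theta^v \,=\, \tfrac{1}{2}\,\Pi_v^{\mathcal{A}(\theta)} \,+\, \tfrac{1}{2}\,\mathbb{E}_{\theta'}\,\Pi_v^{\mathcal{A}(\theta')},
\end{equation*}
which correctly accounts for the input-challenge distribution of $\widetilde{H}_2$ and remains a valid family of POVMs indexed by $\theta$. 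Invoking Lemma \ref{lem: monogamy} then yields
\begin{equation*}
p(\widetilde{H}_2) \,\leq\, \left(\frac{1}{2} + \frac{1}{2\sqrt{2}}\right)^{m(\lambda)},
\end{equation*}
which is negligible whenever $m(\lambda) = \omega(\log \lambda)$, and in particular for $m(\lambda) = \poly(\lambda)$ as in Construction \ref{const: ssl}.

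The main subtlety lies in ensuring that the adversary's quantum oracle queries (to $G$ and $H$) and its use of the classical value $z$ can be absorbed into the POVM framework of Lemma \ref{lem: monogamy}. Because $G$, $H$, and $z$ are sampled independently of $v$, we can condition on any fixed realization: for each such realization, the adversary's two-stage algorithm defines a concrete $\CPTP$ pirate map $\Phi$ (the first stage) together with concrete POVMs $\{B_\theta^v\}$ and $\{C_\theta^v\}$ (the verification measurement and the second stage, respectively). The monogamy bound of Lemma \ref{lem: monogamy} applies uniformly over every such triple $(\Phi, B, C)$, so it survives averaging over the random oracles and $z$, completing the argument.
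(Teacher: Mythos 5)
Your proof is correct, and it takes a genuinely different route from the paper's. The paper follows Unruh's technique from revocable timed-release encryption: it projects onto the approximate-$\EPR$ subspace $\Pi_t^{\EPR}$, uses Lemma~\ref{lem:closeness_ideal} to pass to an ideal state in that subspace, applies the monogamy uncertainty relation (Lemma~\ref{lem:monogamy-uncertainty}) to bound the guessing probability on the ideal state, bounds $\mathbb{E}_{\theta}\Tr{(\Id-\Pi_t^{\EPR})\rho_\theta} \le 2^{-t-1}$ using the Pauli-error structure and Lemma~\ref{lem:equalityEPR}, and finally trades off the two error terms by choosing $t\approx\sqrt m$, giving a bound of order $2^{-\Omega(\sqrt m)}$. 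You instead observe that once the $\Pieq$ coarse-graining and the EPR purification are undone (returning to $\widetilde H_0$ by Lemmas~\ref{lem: tildeH1 tildeH0} and~\ref{lem: tildeH2 tildeH1} read in reverse), the game is \emph{exactly} a monogamy-of-entanglement game in the sense of Lemma~\ref{lem: monogamy}: the lessor's verification on $\mathsf Y'$ is Bob's $\POVM$ $B_\theta^v = H^\theta\ket v\bra v H^\theta$, and the adversary's post-verification stage on $\mathsf R$ defines Charlie's $\POVM$. The one technical point you must handle — that Charlie receives the correct $\theta$ only half the time — is dealt with cleanly by convexity: the averaged operators $C_\theta^v = \frac12\Pi_v^{\mathcal{A}(\theta)} + \frac12\mathbb{E}_{\theta'}\Pi_v^{\mathcal{A}(\theta')}$ form a legitimate $\POVM$ for each $\theta$, and the joint winning probability is precisely $\mathbb{E}_\theta\mathbb{E}_v\Tr{B_\theta^v\otimes C_\theta^v\,\Phi(\ket{v^\theta}\bra{v^\theta})}$, so Lemma~\ref{lem: monogamy} applies directly. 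Your conditioning on a fixed realization of $(G,H,z)$ to get a concrete $(\Phi,B,C)$ triple is justified because in $\widetilde H_2$ (derived from $H_4$) the basis $\theta$ is fresh uniform randomness and $z$ is fresh uniform, so $(G,H,z)$ is jointly independent of $(v,\theta)$. Your approach is shorter, bypasses Lemmas~\ref{lem:monogamy-uncertainty}, \ref{lem:equalityEPR}, and the $\Pi_t^{\EPR}$ machinery entirely, and yields a strictly better, exponentially small bound $(\tfrac12+\tfrac1{2\sqrt2})^{m(\lambda)}$ instead of the paper's sub-exponential $2^{-m}(m+1)^{2t} + 2^{(-t-1)/2}$.
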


\begin{proof}
Let $\mathcal{A}$ be an adversary for $\widetilde{H}_2$. Denote by $v'$ the final guess returned by the adversary, and by $v$ the encoded string. Let $\mathsf{ok}$ be a random variable for whether the verification passes. Then, the winning probability of $\mathcal{A}$ in $\widetilde{H}_2$ is given by:
$$
\Pr \big[ v' = v \, \land \, \mathsf{ok}=1] \,.
$$
We show that, for any $t \in [m]$,
\begin{equation}
\Pr \big[ v' = v \, \land \, \mathsf{ok}=1]  \, \leq \, 2^{-m} (m+1)^{2t} + 2^{\frac{-t-1}{2}}.
\end{equation}
Picking $t \approx \sqrt{m}$ then gives the desired result, as the RHS becomes negligible in $\lambda$.\ \\

Fix a basis choice $\theta \in \{0,1\}^m$. Let $\rho_\theta$ be the state on registers $\mathsf{X},\mathsf{Y}$ and $\mathsf{R}$ in $\widetilde{H}_2$ after the verification, where $\mathsf{R}$ is the leftover register held onto by $\mathcal{A}$ that also includes the challenge $\tau_{\theta}$ (where $\tau_{\theta}$ was defined in Lemma \ref{lem: 30}) sent by the lessor after verification.

In the analysis that follows, it is convenient to approximate $\rho_\theta$ by an ideal state that is diagonal in a basis for the image of $\Pi_t^{\EPR} \otimes \Id_{\mathsf{R}}$, where $\Pi_t^{\EPR}$ is as defined in Lemma \ref{lem:monogamy-uncertainty}. Recall that $\Pi_t^{\EPR}$ projects onto the subspace spanned by $\EPR$ pairs with up to $t$ Pauli errors, i.e. onto the space spanned by the orthogonal basis states $\big\{\ket{\phi^+_{ab}} : a,b \in \{0,1\}^m\big\}$, where
\begin{equation}
\ket{\phi^+_{ab}} = \frac{1}{\sqrt{2^m}}\sum_{v \in \{0,1\}^m} \ket{v} \otimes X^a Z^b \ket{v}.
\end{equation}
We can use Lemma \ref{lem:closeness_ideal} to argue that there exists such an ideal state $\rho_\theta^\mathsf{id}$, and that the trace distance between the two states satisfies:
$$\mathsf{TD}(\rho_\theta,\rho_\theta^\mathsf{id}) \leq \sqrt{1 - \mathrm{Tr}\big[\big(\Pi_{t}^{\EPR} \otimes \Id_{\mathsf{R}}\big) \, \rho_\theta \big] }\,.$$ 

We can represent the adversary's strategy in guessing $v$, after verification, by a projective measurement $\{ \Pi_{v'} \}_{v'}$.

We are now ready to bound the probability $\Pr \big[ v' = v \, \land \, \mathsf{ok}=1]$. Let $\Theta$ be a random variable for the basis choice made by the lessor. Then, by marginalizing over $\Theta$, we get:
\begin{align}
\label{eq: 62}
\Pr \big[ v' = v \, \land \, \mathsf{ok}=1 \big] &= \sum_{\theta \in \{0,1\}^m} 2^{-m} \cdot \Pr \big[v' = v | \, \mathsf{ok}=1 \, \land \,\Theta=\theta \big] \cdot \Pr[\mathsf{ok}=1 | \Theta=\theta] \nonumber\\
&\leq \sum_{\theta \in \{0,1\}^m} 2^{-m} \cdot \Pr \big[v' = v | \, \mathsf{ok}=1 \, \land \,\Theta=\theta \big] \nonumber\\
&= \mathbb{E}_{\theta} \Pr[v'=v \, | \, \mathsf{ok}=1 \, \land \, \Theta = \theta].
\end{align}

Fix any $\theta$. Using Lemma \ref{lem:TD_inequalities} and Lemma \ref{lem:monogamy-uncertainty} we obtain:
\begin{align}
\Pr \big[v' = v \, | \, \mathsf{ok} =1 \land \Theta = \theta\big] 
&\leq 2^{-m} (m+1)^{2t} + \mathsf{TD}(\rho_\theta,\rho_\theta^\mathsf{id}) \nonumber\\
&\leq 2^{-m} (m+1)^{2t} + \sqrt{1 - \mathrm{Tr}\big[\big(\Pi_{t}^{\EPR} \otimes \Id_{\mathsf{R}}\big) \, \rho_\theta \big]}\,.
\label{eq:v_eq_v}
\end{align}

Now, averaging over $\theta$ in the above inequality gives:
\begin{align}
    \mathbb{E}_{\theta} \Pr[v'=v \,| \, \mathsf{ok}=1 \land \Theta = \theta] &\leq 2^{-m} (m+1)^{2t} + \mathbb{E}_{\theta}\sqrt{1 - \mathrm{Tr}\big[\big(\Pi_{t}^{\EPR} \otimes \Id_{\mathsf{R}}\big) \, \rho_\theta \big]} \nonumber\\
    &\leq 2^{-m} (m+1)^{2t} + \sqrt{\mathbb{E}_{\theta}\mathrm{Tr}\Big[\Big( \big(\Id - \Pi_{t}^{\EPR} \big) \otimes \Id_{\mathsf{R}}\Big) \, \rho_\theta \Big] } \,.
    \label{eq: 67}
\end{align}
where the last inequality follows from Jensen's inequality.
We will proceed to bound the above term $\mathbb{E}_{\theta}\mathrm{Tr}\big[\big( \big(\Id- \Pi_{t}^{\EPR}\big) \otimes \Id_{\mathsf{R}}\big) \, \rho_\theta \big] $ by $2^{-t-1}$.
Let us first show that for any $a,b \in \{0,1\}^m$:
\begin{align}
p_{ab} \overset{\text{def}}{=} \sum_{\theta \in \{0,1\}^m} 2^{-m} \mathrm{Tr} \big[\big(\Id - \Pi_t^{\EPR}\big) \Pieq \ket{\phi^+_{ab}} \bra{\phi^+_{ab}}_{\mathsf{XY}} \big] \,\, \leq \, 2^{-t-1}. \label{eq:pab_bound}
\end{align}

This follows from considering the following two cases:
\begin{itemize}
    \item $w(a),w(b) \leq t$: Using Lemma \ref{lem:equalityEPR} we find that one of the following is true. Depending on $\theta$, either $\Pieq \ket{\phi^+_{ab}} = 0$ or $\Pieq \ket{\phi^+_{ab}} = \ket{\phi^+_{ab}}$. We also get that $\big(\Id - \Pi_t^{\EPR} \big) \ket{\phi^+_{ab}} =0$, since $\Pi_t^{\EPR} \ket{\phi^+_{ab}} = \ket{\phi^+_{ab}}$, and thus it follows that $p_{ab}=0$.
    \item $\max\big(w(a),w(b)\big) \geq t+1$: Here, Lemma \ref{lem:equalityEPR} implies that there are at most $2^m / {2^{t+1}}$ many values of $\theta$ for which it holds that $\Pieq \ket{\phi^+_{ab}} \neq 0$, and thus $p_{ab} \leq 2^{-m} \cdot 2^m / {2^{t+1}} \, = \, 2^{-t-1}$.
\end{itemize}

Observe now that $\Pi_t^{\EPR}$ and $\ket{\phi^+_{ab}}\bra{\phi^+_{ab}}$ are diagonal in the Bell basis, hence they commute. Lemma \ref{lem:equalityEPR} implies that the same is also true for the projector $\Pieq$. 
For any fixed $\theta \in \{0,1\}^m$, we express $\rho_\theta$ as a generic density operator on registers $\mathsf{X}$, $\mathsf{Y}$ and $\mathsf{R}$ such that, for a finite index set $I^\theta$, coefficients $q_{ij}$
and an orthogonal basis $\{\ket{\Psi^{i,\theta}} \, : \,i \in I^\theta \}$ the registers $\mathsf{X}$ and $\mathsf{Y}$:
\begin{align}
{\rho_\theta} = \sum_{i,j \in I^\theta} q_{ij} \, \ket{\Psi^{i,\theta}} \bra{\Psi^{j,\theta}}_\mathsf{XY} \otimes {\sigma_{\mathsf{R}}^{i,j,\theta}}, \label{eq:rho} 
\end{align}
where $\sigma^{i,j,\theta}$ are matrices for indices $i,j \in I^\theta$.
Since we assumed that $\rho_\theta$ is the state conditioned on the verification being successful for some $\theta$, we have the property that
\begin{equation} \big(\Pieq \otimes \Id_\mathsf{R} \big) {\rho_{\theta}} \big(\Pieq \otimes \Id_\mathsf{R} \big) = {\rho_{\theta}}, \quad\quad \forall \theta \in \{0,1\}^m.\label{eq:invariant}
\end{equation}
In other words, $\rho_\theta$ on is invariant under the action of the projector $\Pieq \otimes \Id_\mathsf{R}$.
Then,
\begin{align}
&\mathbb{E}_{\theta}\mathrm{Tr}\Big[ \big(\Id - \Pi_{t}^{\EPR} \big) \otimes \Id_\mathsf{R} \, {\rho_\theta} \Big] \nonumber\\
&=\sum_{\theta \in \{0,1\}^m} 2^{-m} \mathrm{Tr} \big[\big(\Id - \Pi_t^{\EPR}\big)\otimes \Id_\mathsf{R} \, {\rho_\theta} \big]\nonumber\\
& = \sum_{\theta \in \{0,1\}^m} 2^{-m} \mathrm{Tr} \big[\big(\Id - \Pi_t^{\EPR}\big)  \otimes \Id_\mathsf{R} \, \big(\Pieq \otimes \Id_\mathsf{R} \big) {\rho_{\theta}} \big(\Pieq \otimes \Id_\mathsf{R} \big) \big] \quad\quad\quad\quad\quad (\text{Eq.}~\eqref{eq:invariant}) \nonumber\\
& = \sum_{\theta \in \{0,1\}^m} 2^{-m} \mathrm{Tr} \Big[\Big(\big(\Id - \Pi_t^{\EPR}\big) \Pieq  \otimes \Id_\mathsf{R}\Big) \, {\rho_\theta} \Big]\nonumber\\
&= \sum_{\theta \in \{0,1\}^m} 2^{-m} \mathrm{Tr} \Big[ \Big( \sum_{a,b \in \{0,1\}^m}  \ket{\phi^+_{ab}} \bra{\phi^+_{ab}} \Big) \big(\Id - \Pi_t^{\EPR}\big) \Pieq \otimes \Id_\mathsf{R} \, {\rho_\theta} \Big]\nonumber\\
&= \sum_{\theta \in \{0,1\}^m} \sum_{a,b \in \{0,1\}^m} 2^{-m} \, \mathrm{Tr} \Big[ \ket{\phi^+_{ab}} \bra{\phi^+_{ab}} \big(\Id - \Pi_t^{\EPR}\big) \Pieq \otimes \Id_\mathsf{R} \, {\rho_\theta} \Big]\nonumber\\
&=\sum_{\theta \in \{0,1\}^m} \sum_{a,b \in \{0,1\}^m} 2^{-m} \, \mathrm{Tr} \Big[\big(\Id - \Pi_t^{\EPR}\big) \Pieq \otimes \Id_\mathsf{R} \, \big( \ket{\phi^+_{ab}} \bra{\phi^+_{ab}} \otimes \Id_\mathsf{R} \big) {\rho_\theta} \big( \ket{\phi^+_{ab}} \bra{\phi^+_{ab}} \otimes \Id_\mathsf{R} \big)  \Big]\nonumber
\end{align}
In the third to last line, we inserted the complete set $\sum_{a,b}  \ket{\phi^+_{ab}} \bra{\phi^+_{ab}} = \Id$.
Then, using the definition of $\rho$ in Eq.\eqref{eq:rho}, we can continue to expand the expression above as follows:
\begin{align}
&\sum_{\theta \in \{0,1\}^m} \sum_{a,b \in \{0,1\}^m} 2^{-m} \, \mathrm{Tr} \Big[\big(\Id - \Pi_t^{\EPR}\big) \Pieq \otimes \Id_\mathsf{R} \, \big( \ket{\phi^+_{ab}} \bra{\phi^+_{ab}} \otimes \Id_\mathsf{R} \big) {\rho_\theta} \big( \ket{\phi^+_{ab}} \bra{\phi^+_{ab}} \otimes \Id_\mathsf{R} \big)  \Big]\nonumber\\
&= \sum_{\theta \in \{0,1\}^m} \sum_{a,b \in \{0,1\}^m} 2^{-m} \sum_{i,j \in I^\theta} q_{ij} \, \mathrm{Tr} \Big[\big(\Id - \Pi_t^{\EPR}\big) \Pieq  \, \ket{\phi^+_{ab}} \bra{\phi^+_{ab}} \big(\ket{\Psi^{i,\theta}} \bra{\Psi^{j,\theta}}_\mathsf{XY}\big) \ket{\phi^+_{ab}}  \bra{\phi^+_{ab}} \otimes {\sigma_{\mathsf{R}}^{i,j,\theta}} \Big]\nonumber\\
&= \sum_{a,b \in \{0,1\}^m} \sum_{i,j \in I^\theta} p_{ab} \,q_{ij} \bra{\phi^+_{ab}} \big(\ket{\Psi^{i,\theta}} \bra{\Psi^{j,\theta}}_\mathsf{XY}\big)  \ket{\phi^+_{ab}} \,\,\mathrm{Tr}\big[ {\sigma_{\mathsf{R}}^{i,j,\theta}}\big] 
\quad\quad\quad\quad\,\,\,\, (\text{by def.})
\nonumber \\
&\leq\, 2^{-t-1} \sum_{i,j \in I^\theta} q_{ij} \sum_{a,b \in \{0,1\}^m} \bra{\phi^+_{ab}} \big(\ket{\Psi^{i,\theta}} \bra{\Psi^{j,\theta}}_\mathsf{XY}\big) \ket{\phi^+_{ab}} \, \mathrm{Tr}\big[ {\sigma_{\mathsf{R}}^{i,j,\theta}}\big] \quad\quad\quad\,\,\, (\text{Eq.}~\eqref{eq:pab_bound}) \nonumber\\
&=\, 2^{-t-1} \sum_{i,j \in I^\theta} q_{ij} \mathrm{Tr}\big[\ket{\Psi^{i,\theta}} \bra{\Psi^{j,\theta}}_\mathsf{XY}\big] \, \mathrm{Tr}\big[ {\sigma_{\mathsf{R}}^{i,j,\theta}}\big] \quad = \quad 2^{-t-1} \, \mathrm{Tr}\big[{\rho_\theta}\big] \quad = \quad  2^{-t-1}.  \nonumber
\end{align}
In the last line, we used that $\big\{\ket{\phi^+_{ab}} : a,b \in \{0,1\}^m\big\}$ is an orthogonal basis for $\mathsf{XY}$. Thus, we get
$$
\mathbb{E}_{\theta}\mathrm{Tr}\Big[ \big(\Id - \Pi_{t}^{\EPR} \big) \otimes \Id_\mathsf{R} \, {\rho_\theta} \Big] \,\, \leq \,\, 2^{-t-1}.
$$
Plugging this bound in \eqref{eq: 67} and then into \eqref{eq: 62} gives
\begin{equation}
    \Pr \big[ v' = v \, \land \, \mathsf{ok}=1 \big] \leq 2^{-m} (m+1)^{2t} + 2^{\frac{-t-1}{2}} \,.
\end{equation}
Choosing $t \approx \sqrt{m}$ makes the RHS negligible.
\end{proof}

\begin{cor}
$p(\widetilde{H}_0) = \negl(\lambda)$.
\end{cor}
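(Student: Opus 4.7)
The corollary is an immediate chaining of the three lemmas that immediately precede it in the excerpt, so the ``proof'' is essentially a one-line composition argument. The plan is to simply invoke the chain
\[
p(\widetilde H_0) \;=\; p(\widetilde H_1) \;=\; p(\widetilde H_2) \;=\; \negl(\lambda),
\]
where the first two equalities are by Lemmas \emph{tildeH1 tildeH0} and \emph{tildeH2 tildeH1}, and the final estimate is the content of the preceding technical lemma (obtained by instantiating the bound $2^{-m}(m+1)^{2t} + 2^{(-t-1)/2}$ at $t \approx \sqrt{m}$).

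Since nothing more is required, I would simply write: ``By the preceding three lemmas, $p(\widetilde H_0) = p(\widetilde H_1) = p(\widetilde H_2) \leq \negl(\lambda)$, proving the claim.'' The only thing worth double-checking is that the three lemmas quantify over the same class of adversaries (query-bounded, in the QROM) and the same security parameter so that the composition is legitimate; this is straightforward from the hybrid definitions, since $\widetilde H_1$ and $\widetilde H_2$ are obtained from $\widetilde H_0$ purely by rewriting the lessor's sampling/verification procedure without changing the adversary's interface.

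There is no real obstacle here: the genuine work was done in the third lemma, where the monogamy-of-entanglement uncertainty relation (Lemma \ref{lem:monogamy-uncertainty}) was combined with the Bell-basis diagonalization of $\Pieq$ and the bound $p_{ab} \leq 2^{-t-1}$. The corollary is just a bookkeeping step that closes out the chain of reductions from $H_0$ through $H_4$ and then through $\widetilde H_0, \widetilde H_1, \widetilde H_2$, and thereby completes the proof of Lemma \ref{lem: 29} (and hence of Theorem \ref{thm:ssl}).
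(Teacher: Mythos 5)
Your proposal is correct and matches the paper's (implicit) reasoning exactly: the paper states the corollary without proof precisely because it is an immediate chaining of the three preceding lemmas, $p(\widetilde H_0) = p(\widetilde H_1) = p(\widetilde H_2) = \negl(\lambda)$, and your bookkeeping argument captures this. Your remark that the hybrids only change the lessor's sampling and verification, not the adversary's interface, is the right sanity check for why the composition is legitimate.
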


As we argued earlier, this concludes the proof of Lemma \eqref{lem: 29}, and thus of Theorem \ref{thm:ssl}.

\subsection{Extension to compute-and-compare programs}\label{sec:SSL-CnC}

In this section, we show that an $\SSL$ scheme for point functions, which is secure with respect to the appropriate program and challenge ensembles, implies an $\SSL$ scheme for compute-and-compare programs with the same level of security, with respect to apppropriate program and challenge ensembles.
The idea is the same as in Section \ref{sec:CnC}: to lease the compute-and-compare program $\mathsf{CC}[f,y]$, we first lease the point function $P_y$, and then hand out the function $f$ in the clear. 

Let $(\SSL\mbox{-}\mathsf{PF.Gen},\SSL\mbox{-}\mathsf{PF.Lease},\SSL\mbox{-}\mathsf{PF.Eval},\SSL\mbox{-}\mathsf{PF.Verify})$ be any $\SSL$ scheme for point functions. The compute-and-compare program scheme is defined as follows:

\begin{construction}[$\SSL$ scheme for compute-and-compare programs]\label{cons: pf to cc - SSL} The $\SSL$ scheme $(\SSL\mbox{-}\mathsf{CC.Gen},\SSL\mbox{-}\mathsf{CC.Lease},\SSL\mbox{-}\mathsf{CC.Eval},\SSL\mbox{-}\mathsf{CC.Verify})$ for compute-and-compare programs is defined by:
\begin{itemize}
\item $\SSL\mbox{-}\mathsf{CC.Gen}(1^{\lambda})$: Takes as input the security parameter $\lambda$. Then,
\begin{itemize}
    \item Let $\mathsf{sk} \leftarrow \SSL\mbox{-}\mathsf{PF.Gen}(1^{\lambda})$. Output $\mathsf{sk}$.
\end{itemize}
\item $\SSL\mbox{-}\mathsf{CC.Lease}(1^{\lambda},\mathsf{sk},(f,y))$: Takes as input a security parameter $\lambda$, a secret key $sk$, and a compute-and-compare program $\mathsf{CC}[f,y]$, specified succinctly by $f$ and $y$. Then,
\begin{itemize}
    \item Let $\rho = \SSL\mbox{-}\mathsf{PF.Lease}(1^\lambda,\mathsf{sk},y))$. Output $(f, \rho)$.
\end{itemize}
\item $\SSL\mbox{-}\mathsf{CC.Eval}(1^{\lambda}, (f,\rho); x)$: Takes as input a security parameter $\lambda$, an alleged program copy $(f, \rho)$, and a string $x \in \{0,1\}^n$ (where $n$ is the size of the inputs to $f$). Then,
\begin{itemize}
    \item Compute $y' = f(x)$.
    \item Let $b \leftarrow \SSL\mbox{-}\mathsf{PF.Eval}(\rho; y')$. Output $b$.
\end{itemize}
\item $\SSL\mbox{-}\mathsf{CC.Verify}(1^{\lambda},\mathsf{sk},(f,\rho); \sigma)$:
\begin{itemize}
    \item Let $b' \leftarrow \SSL\mbox{-}\mathsf{PF.Verify}(1^{\lambda},\mathsf{sk},y; \sigma)$. Output $b'$.
\end{itemize}
\end{itemize}
\end{construction}

Recall the definition of the class of distributions over compute-and-compare programs  $\mathcal{D}_{\mathsf{CC}\mbox{-}\mathsf{UNP}}$ in Section \ref{sec:CnC}. We recall it here for convenience.
\begin{itemize}
    \item $\mathcal{D}_{\mathsf{CC}\mbox{-}\mathsf{UNP}}$. We refer to this class as the class of \textit{unpredictable compute-and-compare programs}. This consists of ensembles $D = \{D_{\lambda}\}$ where $D_{\lambda}$ is a distribution over compute-and-compare programs such that $\mathsf{CC}[f,y] \leftarrow D_{\lambda}$ satisfies $\Hmin(y|f) \geq \lambda^\epsilon$ for some $\epsilon>0$, and where the input length of $f$ is $\lambda$ and the output length is bounded by some polynomial $t(\lambda)$.
\end{itemize}

We also define the following class of distributions over input challenges:
\begin{itemize}
\item $\mathcal{D}_{\mathsf{CC}\mbox{-}\mathsf{Chall}\mbox{-}\mathsf{SSL}}$. An ensemble $D = \{D_{f,y}\}$, where each $D_{f,y}$ is a distribution over the domain of $f$, belongs to the class $\mathcal{D}_{\mathsf{CC}\mbox{-}\mathsf{Chall}\mbox{-}\mathsf{SSL}}$ if there exists an efficiently sampleable family $\{X_{\lambda}\}$  of distributions over $\{0,1\}^{\lambda}$ with $\Hmin(X_{\lambda}) \geq \lambda^\epsilon$, for some $\epsilon >0$, and an efficiently sampleable family $\{Z_{f,y}\}$, where $Z_{f,y}$ is a distribution over the set $f^{-1}(y)$, such that $D_{f,y}$ is the following distribution (where $\lambda$ is the size of inputs to $f$):
\begin{itemize}
    \item with probability $1/2$, sample $z \leftarrow Z_{f,y}$ and output $z$. 
    \item with probability $1/2$, sample $x \leftarrow X_{\lambda}$, and output $x$. 
\end{itemize}
We say the ensemble $D$ is \emph{specified} by the families $\{X_{\lambda}\}$ and $\{Z_{f,y}\}$.
\end{itemize}

Similarly to Section \ref{sec:CnC}, we also define two classes of distributions over pairs of programs and challenges for compute-and-compare programs. 
\begin{itemize}
\item $\mathcal{D}_{\mathsf{CC}\mbox{-}\mathsf{pairs}\mbox{-}\mathsf{stat}\mbox{-}\mathsf{SSL}}.$ This consists of pairs of ensembles $\left(D = \{D_{\lambda}\}, D' = \{D'_{f,y}\} \right)$ where $D \in \mathcal{D}_{\mathsf{CC}\mbox{-}\mathsf{UNP}}$ and $D' \in \mathcal{D}_{\mathsf{CC}\mbox{-}\mathsf{Chall}\mbox{-}\mathsf{SSL}}$ satisfying the following. Let $D'$ be specified by the families  $\{X_{\lambda}\}$ and $\{Z_{f,y}\}$, and denote by  $\mathsf{MarkedInput}\left(D_{\lambda}, \{Z_{f,y}\} \right)$ the distribution over $\{0,1\}^{\lambda}$ induced by $D_{\lambda}$ and $\{Z_{f,y}\}$, i.e.:
\begin{itemize}
    \item Sample $(f,y) \leftarrow D_{\lambda}$, then output $z \leftarrow Z_{f,y}$.
\end{itemize}
For any fixed $f_*$ with domain $\{0,1\}^{\lambda}$ such that $(f_*,y_*)$ is in the support of $D_{\lambda}$ for some $y_*$, denote by $\mathsf{MarkedInput}(D_{\lambda}, \{Z_{f,y}\})|_{f_*}$, the distribution $\mathsf{MarkedInput}(D_{\lambda}, \{Z_{f,y}\})$ conditioned on $D_{\lambda}$ sampling $f_*$.
Then, we require that, for any sequence $\{f_*^{(\lambda)}\}$ (where, for all $\lambda$, $(f_*^{(\lambda)}, y_*)$ is in the support of $D_{\lambda}$ for some $y_*$) the families $\{X_{\lambda}\}$ and $\{\mathsf{MarkedInput}(D_{\lambda}, \{Z_{f,y}\})|_{f_*^{(\lambda)}}\}$ are statistically indistinguishable. 
\item $\mathcal{D}_{\mathsf{CC}\mbox{-}\mathsf{pairs}\mbox{-}\mathsf{comp}\mbox{-}\mathsf{SSL}}.$ This is defined in the same way as $\mathcal{D}_{\mathsf{CC}\mbox{-}\mathsf{pairs}\mbox{-}\mathsf{stat}\mbox{-}\mathsf{SSL}}$, except that we only require $\{X_{\lambda}\}$ and $\{\mathsf{MarkedInput}(D_{\lambda}, \{Z_{f,y}\})|_{f_*^{(\lambda)}}\}$ to be \emph{computationally} indistinguishable. 
\end{itemize}

\begin{theorem}
\label{thm: from pf to cc - SSL}
Let $(\SSL\mbox{-}\mathsf{PF.Gen},\SSL\mbox{-}\mathsf{PF.Lease},\SSL\mbox{-}\mathsf{PF.Eval},\SSL\mbox{-}\mathsf{PF.Verify})$ be an $\SSL$ scheme for point functions that is $\delta$-secure with respect to all pairs $(D,D') \in \mathcal{D}_{\mathsf{PF}\mbox{-}\mathsf{pairs}\mbox{-}\mathsf{stat}\mbox{-}\mathsf{SSL}}$ ($\in \mathcal{D}_{\mathsf{PF}\mbox{-}\mathsf{pairs}\mbox{-}\mathsf{comp}\mbox{-}\mathsf{SSL}}$). Then, the scheme of Construction \ref{cons: pf to cc - SSL} is a $\delta$-secure $\SSL$ scheme for compute-and-compare programs with respect to all pairs $(D,D') \in \mathcal{D}_{\mathsf{CC}\mbox{-}\mathsf{pairs}\mbox{-}\mathsf{stat}\mbox{-}\mathsf{SSL}}$ ($\in \mathcal{D}_{\mathsf{CC}\mbox{-}\mathsf{pairs}\mbox{-}\mathsf{comp}\mbox{-}\mathsf{SSL}}$). The same conclusion holds relative to any oracle, i.e. when all algorithms have access to the same oracle, with respect to query-bounded (computationally bounded) adversaries.
\end{theorem}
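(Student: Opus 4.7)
The plan is to follow the template of Theorem \ref{thm: from pf to cc}, which handled the analogous reduction in the copy-protection setting, and to adapt it to the (somewhat simpler) SSL syntax where there is a single adversary rather than a pirate together with two freeloaders. Given an adversary $\mathcal{A}$ for the CC-SSL game with respect to $(D,D') \in \mathcal{D}_{\mathsf{CC}\mbox{-}\mathsf{pairs}\mbox{-}\mathsf{stat}\mbox{-}\mathsf{SSL}}$ winning with probability $p(\lambda)$, I would first use an averaging argument to fix, for each $\lambda$, a function $f_*^{(\lambda)}$ such that $\mathcal{A}$ wins with probability at least $p(\lambda)$ conditioned on $D_\lambda$ sampling a program of the form $\mathsf{CC}[f_*^{(\lambda)},y]$. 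From this I would build a pair of ensembles for the PF-SSL scheme: let $\tilde D_\lambda$ be the point-function distribution obtained by drawing $y$ from the conditional marginal of $D_\lambda$ given $f = f_*^{(\lambda)}$, and let $\tilde D'_y$ be the challenge distribution obtained by sampling $x \leftarrow D'_{f_*^{(\lambda)},y}$ and outputting $f_*^{(\lambda)}(x)$, so that $\tilde D'_y$ outputs $y$ with probability $1/2$ and $f_*^{(\lambda)}(u)$ for $u \leftarrow X_\lambda$ with probability $1/2$.

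Next, I would define the reduction $\mathcal{A}'$ as the natural wrapper around $\mathcal{A}$: upon receiving a leased point-function state $\rho$, it hands $(f_*^{(\lambda)},\rho)$ to $\mathcal{A}$; when $\mathcal{A}$ returns a state on $\mathsf{R}_1\mathsf{R}_2$, it forwards $\mathsf{R}_1$ to the lessor for verification (note that CC verification \emph{is} PF verification by Construction \ref{cons: pf to cc - SSL}); and upon receiving a post-verification challenge $x^\star$ from the lessor, it samples $x' \leftarrow Z_{f_*^{(\lambda)}, x^\star}$ and runs the evaluation phase of $\mathcal{A}$ on $(x',\mathsf{R}_2)$, outputting the resulting bit. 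Because $f_*^{(\lambda)}(x') = x^\star$ with probability $1$, the bit $\mathcal{A}$ is asked for is $\mathsf{CC}[f_*^{(\lambda)},y](x') = P_y(x^\star)$, which is exactly what the PF-SSL game needs.

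The two translations used by $\mathcal{A}'$ (pushing the CC challenge forward through $f_*^{(\lambda)}$ to define $\tilde D'_y$, then pulling it back through a fresh draw from $Z_{f_*^{(\lambda)},\cdot}$) compose to exactly reproduce $D'_{f_*^{(\lambda)},y}$. Consequently, the view simulated for $\mathcal{A}$ is statistically identical to the real CC-SSL game conditioned on $f = f_*^{(\lambda)}$, so $\mathcal{A}'$ succeeds in the PF-SSL game with probability at least $p(\lambda)$.

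What remains, and what I expect to be the only nontrivial step, is to certify that $(\tilde D,\tilde D') \in \mathcal{D}_{\mathsf{PF}\mbox{-}\mathsf{pairs}\mbox{-}\mathsf{stat}\mbox{-}\mathsf{SSL}}$ so that the assumed $\delta$-security of the PF scheme applies. Membership of $\tilde D$ in $\mathcal{D}_{\mathsf{PF}\mbox{-}\mathsf{UNP}}$ is immediate from $\Hmin(y\,|\,f)\ge \lambda^\epsilon$ in $\mathcal{D}_{\mathsf{CC}\mbox{-}\mathsf{UNP}}$. For $\tilde D'_y$, one needs the distribution $f_*^{(\lambda)}(X_\lambda)$ to have sufficient min-entropy and to be statistically close to $\{\mathsf{MarkedInput}(\tilde D_\lambda)\}$. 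Both facts fall out of the key hypothesis built into $\mathcal{D}_{\mathsf{CC}\mbox{-}\mathsf{pairs}\mbox{-}\mathsf{stat}\mbox{-}\mathsf{SSL}}$, namely statistical indistinguishability of $\{X_\lambda\}$ and $\{\mathsf{MarkedInput}(D_\lambda,\{Z_{f,y}\})|_{f_*^{(\lambda)}}\}$ \emph{for every} sequence $\{f_*^{(\lambda)}\}$: pushing both through $f_*^{(\lambda)}$ preserves statistical closeness, and the push-forward of the second is precisely the $y$-marginal of $D_\lambda$ given $f_*^{(\lambda)}$, which inherits min-entropy $\ge \lambda^\epsilon$ from $\mathcal{D}_{\mathsf{CC}\mbox{-}\mathsf{UNP}}$. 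Combining this with the $\delta$-security of the PF scheme yields $p(\lambda) \leq 1 - \delta(\lambda) + \negl(\lambda)$, as required. The computational version proceeds identically using $\mathcal{D}_{\mathsf{PF}\mbox{-}\mathsf{pairs}\mbox{-}\mathsf{comp}\mbox{-}\mathsf{SSL}}$ throughout, and the argument relativizes trivially since the reduction performs no quantum operations of its own beyond executing $\mathcal{A}$.
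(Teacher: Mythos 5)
Your proposal is correct and mirrors the paper's own proof: the same averaging argument to fix $f_*^{(\lambda)}$, the same wrapper adversary $\mathcal{A}'$ that forwards $(f_*^{(\lambda)},\rho)$ to $\mathcal{A}$, passes $\mathsf{R}_1$ to the lessor unchanged, and pulls the post-verification challenge back through a fresh draw from $Z_{f_*^{(\lambda)},\cdot}$. The only deviation is cosmetic: you take the induced point-function program distribution to be the conditional $y$-marginal of $D_\lambda$ given $f_*^{(\lambda)}$, whereas the paper sets it to the push-forward $f_*^{(\lambda)}(X_\lambda)$; by the statistical-indistinguishability requirement built into $\mathcal{D}_{\mathsf{CC}\mbox{-}\mathsf{pairs}\mbox{-}\mathsf{stat}\mbox{-}\mathsf{SSL}}$ these two distributions are negligibly close, so the two choices differ only in where the $\negl(\lambda)$ slack is absorbed (you spend it in the membership check for $\mathcal{D}_{\mathsf{PF}\mbox{-}\mathsf{pairs}\mbox{-}\mathsf{stat}\mbox{-}\mathsf{SSL}}$, the paper spends it in matching the simulated view to the real CC game). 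Your explicit check that $(\tilde D,\tilde D')\in\mathcal{D}_{\mathsf{PF}\mbox{-}\mathsf{pairs}\mbox{-}\mathsf{stat}\mbox{-}\mathsf{SSL}}$ is a welcome bit of extra care that the paper only states implicitly.
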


The proof of Theorem \ref{thm: from pf to cc - SSL} uses a similar reduction to the point function security game as in the copy-protection variant in Theorem \ref{thm: from pf to cc}. The main difference is that the reduction between the $\SSL$ games now involves a verification step.
We add the proof for completeness.

\begin{proof}[Proof of Theorem \ref{thm: from pf to cc - SSL}]
We prove the claim for $\left(\{D_{\lambda}\}, \{D_{f,y}\} \right)\in \mathcal{D}_{\mathsf{CC}\mbox{-}\mathsf{pairs}\mbox{-}\mathsf{stat}\mbox{-}\mathsf{SSL}}$ only, since the case of $\left(\{D_{\lambda}\}, \{D_{f,y}\} \right)\in \mathcal{D}_{\mathsf{CC}\mbox{-}\mathsf{pairs}\mbox{-}\mathsf{comp}\mbox{-}\mathsf{SSL}}$ is virtually identical. Let $t(\lambda)$ be the length of strings in the range of $f$'s sampled from $D_{\lambda}$ and let the ensemble $\{D_{f,y}\}$ be specified by $\{X_{\lambda}\}$ and $\{Z_{f,y}\}$ (using the notation introduced above for ensembles in $\mathcal{D}_{\mathsf{CC}\mbox{-}\mathsf{Chall}\mbox{-}\mathsf{SSL}}$). 

Let $\mathcal{A}$ be an adversary for the compute-and-compare $\SSL$ scheme of Construction \ref{cons: pf to cc - SSL} with respect to ensembles $\{D_{\lambda}\}$ and $\{D_{f,y}\}$ who wins at the $\SSL$ security game with probability $p(\lambda)>0$.
It then follows that for each $\lambda$ there exists $f_*^{(\lambda)}$ such that $(f_*^{(\lambda)},y)$ is in the support of $D_{\lambda}$ for some $y$, and such that the probability that $\mathcal{A}$ wins is at least $p(\lambda)$, conditioned on $f_*^{(\lambda)}$ being sampled. 
We will construct an adversary $\mathcal{A}'$ that wins with probability $p(\lambda) - \negl(\lambda)$ in the point function security game with respect to the distributions $\{D'_{t(\lambda)}\}$ and $\{D'_y\}$, defined as follows:
\begin{itemize}
    \item $D'_{t(\lambda)}$: sample $x \leftarrow X_{\lambda}$ and output the point function $P_{f^{(\lambda)}_*(x)}$.
    \item $D'_y$: sample $x \leftarrow D_{f_*^{(\lambda)},y}$ and output $f_*^{(\lambda)}(x)$.
\end{itemize}
The adversary $\mathcal{A}'$ against the point function $\SSL$ game acts as follows:
\begin{itemize}
    \item $\mathcal{A}'$ receives a state $\rho$ from the lessor, and then forwards $(f_*^{(\lambda)}, \rho)$ to adversary $\mathcal{A}$.
    \item $\mathcal{A}$ returns a supposed program copy $\sigma$ for the point function to $\mathcal{A}'$ who then sends it back to the lessor for verification.
    \item Conditioning on the verification being successful, the lessor replies with a challenge input $x \leftarrow D'_y$. $\mathcal{A}'$ then
    samples $x' \leftarrow Z_{f,x}$, and runs $\mathcal{A}$ with input challenge $x'$.
    \item Let $b$ be the bit returned by $\mathcal{A}$. The adversary $\mathcal{A}'$ replies with the same $b$ to the lessor.
\end{itemize}

It is straightforward to check that the game ``simulated'' by $\mathcal{A}'$ for $\mathcal{A}$ is statistically indistinguishable from a security game with respect to $\{D_{\lambda}\}$ and $\{D_{f,y}\}$, conditioned on $f^{(\lambda)}_*$. 
Thus, we deduce, by hypothesis, that $\mathcal{A}$
passes verification and returns the correct bit with probability at least $p(\lambda)-\negl(\lambda)$, and thus $\mathcal{A}'$ wins with probability at least $p(\lambda)-\negl(\lambda)$.
Crucially, note that $\left(\{D'_{t(\lambda)}\}, \{D'_y\}\right) \in \mathcal{D}_{\mathsf{PF}\mbox{-}\mathsf{pairs}\mbox{-}\mathsf{stat}\mbox{-}\mathsf{SSL}}$. It follows that if the $\SSL$ is $\delta$-secure, then the compute-and-compare scheme must also be $\delta$-secure. 

The proof of the theorem statement relative to any oracle is analogous.
\end{proof}

\bibliographystyle{plainnat}
\bibliography{references}

\appendix
\section{Appendix}
\subsection{Proof of Lemma \ref{lem: qrom technical step}}
\label{apx: lem 7 proof}
\begin{proof}
For any $x \in \{0,1\}^{\lambda}$, define $V_x^H =\left( U O^H (I - \ket{x}\bra{x}) \right)^q$ and define $W_x^H = UO^H - V_x^H$. Then,
\begin{align}
&\frac12 \mathbb{E}_{H}\mathbb{E}_{x \leftarrow X}\| \Pi^0  (U O^H)^q\left( \ket{H(x)} \otimes \ket{\psi_x} \right)\|^2 + \frac12 \mathbb{E}_{H}\mathbb{E}_{z \leftarrow \{0,1\}^m}\| \Pi^1  (U O^H)^q\left( \ket{z} \otimes \ket{\psi_x} \right)\|^2 \nonumber \\
= &\frac12 \mathbb{E}_{H}\mathbb{E}_{x \leftarrow X} \mathbb{E}_{z \leftarrow \{0,1\}^m} \| \Pi^0  (U O^{H_{x,z}})^q\left( \ket{z} \otimes \ket{\psi_x} \right)\|^2 + \frac12 \mathbb{E}_{H}\mathbb{E}_{z \leftarrow \{0,1\}^m}\| \Pi^1  (U O^H)^q\left( \ket{z} \otimes \ket{\psi_x} \right)\|^2 \nonumber \\
=& \frac12 \mathbb{E}_{H}\mathbb{E}_{x \leftarrow X} \mathbb{E}_{z \leftarrow \{0,1\}^m} \| \Pi^0  (V_x^{H_{x,z}} + W_x^{H_{x,z}})\left( \ket{z} \otimes \ket{\psi_x} \right)\|^2 \nonumber \\
& \quad +\frac12 \mathbb{E}_{H}\mathbb{E}_{x \leftarrow X}\mathbb{E}_{z \leftarrow \{0,1\}^m}\| \Pi^1  (V_x^H + W_x^H)\left( \ket{z} \otimes \ket{\psi_x} \right)\|^2 \nonumber \\
\leq& \frac12 \mathbb{E}_{H}\mathbb{E}_{x \leftarrow X} \mathbb{E}_{z \leftarrow \{0,1\}^m} \| \Pi^0 V_x^{{H_{x,z}}} \left( \ket{z} \otimes \ket{\psi_x} \right)\|^2 + \frac12 \mathbb{E}_{H}\mathbb{E}_{x \leftarrow X}\mathbb{E}_{z \leftarrow \{0,1\}^m}\| \Pi^1 V_x^H \left( \ket{z} \otimes \ket{\psi_x} \right)\|^2 \nonumber \\
+& \frac12 (3q+2)q \,\mathbb{E}_{H}\mathbb{E}_{x \leftarrow X} \mathbb{E}_{z \leftarrow \{0,1\}^m} \mathbb{E}_k \|  \ket{x}\bra{x} (U O^{{H_{x,z}}})^k \ket{z} \otimes \ket{\psi_x}\| \nonumber \\
+& \frac12 (3q+2)q \,\mathbb{E}_{H} \mathbb{E}_{x \leftarrow X}\mathbb{E}_{z \leftarrow \{0,1\}^m} \mathbb{E}_k \|  \ket{x}\bra{x} (U O^H)^k \ket{z} \otimes \ket{\psi_x}\|  \nonumber \\
=& \frac12 \mathbb{E}_{H}\mathbb{E}_{x \leftarrow X} \mathbb{E}_{z \leftarrow \{0,1\}^m} \| \Pi^0 V_x^{{H_{x,z}}} \left( \ket{z} \otimes \ket{\psi_x} \right)\|^2 + \frac12 \mathbb{E}_{H}\mathbb{E}_{x \leftarrow X}\mathbb{E}_{z \leftarrow \{0,1\}^m}\| \Pi^1 V_x^H \left( \ket{z} \otimes \ket{\psi_x} \right)\|^2 \nonumber \\
+& (3q+2)q\,M \, \label{eq: first inequality}
\end{align}
where the first equality uses Lemma \ref{lem: basic}, and the inequality uses Lemma 18 in \cite{broadbent2019uncloneable}.

In order to prove the desired inequality, it is sufficient to show that 
\begin{equation}
    \frac12 \mathbb{E}_{H}\mathbb{E}_{x \leftarrow X} \mathbb{E}_{z \leftarrow \{0,1\}^m} \| \Pi^0 V_x^{{H_{x,z}}} \left( \ket{z} \otimes \ket{\psi_x} \right)\|^2 + \frac12 \mathbb{E}_{H}\mathbb{E}_{x \leftarrow X}\mathbb{E}_{z \leftarrow \{0,1\}^m}\| \Pi^1 V_x^H \left( \ket{z} \otimes \ket{\psi_x} \right)\|^2 \leq \frac12 \,.
\end{equation}
Notice that $V_x^{H_{x,z}} = V_x^{H}$, since $V_x^H$ projects onto the subspace orthogonal to $x$ before every query to $H$. This implies that the LHS simplifies as
\begin{align}
    &\frac12 \mathbb{E}_{H}\mathbb{E}_{x \leftarrow X} \mathbb{E}_{z \leftarrow \{0,1\}^m} \| \Pi^0 V_x^{{H_{x,z}}} \left( \ket{z} \otimes \ket{\psi_x} \right)\|^2 + \frac12 \mathbb{E}_{H}\mathbb{E}_{x \leftarrow X}\mathbb{E}_{z \leftarrow \{0,1\}^m}\| \Pi^1 V_x^H \left( \ket{z} \otimes \ket{\psi_x} \right)\|^2    \nonumber \\
    =&\frac12 \mathbb{E}_{H}\mathbb{E}_{x\leftarrow X} \mathbb{E}_{z \leftarrow \{0,1\}^m} \| \Pi^0 V_x^H \left( \ket{z} \otimes \ket{\psi_x} \right)\|^2  + \frac12 \mathbb{E}_{H}\mathbb{E}_{x \in X} \mathbb{E}_{z \leftarrow \{0,1\}^m} \| \Pi^1 V_x^H \left( \ket{z} \otimes \ket{\psi_x} \right)\|^2 \nonumber\\
    =& \frac12 \mathbb{E}_{H}\mathbb{E}_{x \in X} \mathbb{E}_{z \leftarrow \{0,1\}^m} \| V_x^H \left( \ket{z} \otimes \ket{\psi_x} \right)\|^2  \nonumber \\
    \leq & \frac12 \,\,, \label{eq: one half}
\end{align}
where to get the third line, we used the fact that $\Pi^0, \Pi^1$ are a complete pair of orthogonal projectors, and to get the last line we exploited properties of the Euclidean norm. 

Combining \eqref{eq: first inequality} and \eqref{eq: one half} gives the desired inequality. 

With a little extra work, one can show that $M$ is negligible if and only if 
$$\frac12 \mathbb{E}_{H}\mathbb{E}_{x \leftarrow X}\mathbb{E}_{z \leftarrow \{0,1\}^m}\| \Pi^1 V_x^H \left( \ket{z} \otimes \ket{\psi_x} \right)\|^2 $$
is negligible. We refer the reader to the proof of Theorem 3 in \cite{ambainis2019quantum} for the full details.
\end{proof}

\subsection{Proof of Lemma \ref{lem: new hybrids 1}}
\label{app: 1}
We continue the proof of Lemma \ref{lem: new hybrids 1}, using the notation we introduced in the main text. In what follows, $\mathcal{P}$, $\mathcal{F}_1$, $\mathcal{F}_2$ always have access to a uniformly random oracle $H$, but we omit writing this. We have
\begin{align}
    &\Pr[(\mathcal{P}', \mathcal{F}_1', \mathcal{F}_2') \text{ win } H_1] \nonumber \\
    &= \frac13 \Pr\Big[\mathcal{F}_1^{\hat{G}_{x_1',w_1}}(\textsf{A}, x_1')= 0 \,\,\,\land \,\,\, \mathcal{F}_2^{\hat{G}_{x_2', w_2}}(\textsf{B}, x_2') = 0 \nonumber \\
    &\,\,\,\,\,\,\,\,\,\,\,\,\,\,\,\,\,: \textsf{AB} \leftarrow \rho, \,\,\, \rho \leftarrow \mathcal{P}^{\hat{G}}\left(\ket{v^{G(y)}}, H(v) \right), P_y \leftarrow D_y, v \leftarrow \{0,1\}^{m(\lambda)}, \nonumber\\
    & \,\,\,\,\,\,\,\,\,\,\,\,\,\,\,\,\,w_1, w_2 \leftarrow \{0,1\}^{m(\lambda)}, x_1',x_2' \leftarrow X_{\lambda}\,,\, G \leftarrow \text{Bool}(n,m(\lambda))\, ,\, \hat{G} \leftarrow \text{Bool}(n,m(\lambda))\Big] \nonumber\\   
    &\,\,\,\,+ \frac13 \Pr\Big[\mathcal{F}_1^{\hat{G}_{x_1',w_1}}(\textsf{A}, x_1')= 1 \,\,\,\land \,\,\, \mathcal{F}_2^{\hat{G}_{x_2', w_2}}(\textsf{B}, x_2') = 0 \nonumber\\
    &\,\,\,\,\,\,\,\,\,\,\,\,\,\,\,\,\,: \textsf{AB} \leftarrow \rho, \,\,\, \rho \leftarrow \mathcal{P}^{\hat{G}}\left(\ket{v^{G(y)}}, H(v) \right), P_y \leftarrow D_y, v \leftarrow \{0,1\}^{m(\lambda)}, \nonumber\\
    &\,\,\,\,\,\,\,\,\,\,\,\,\,\,\,\,\,w_1 \leftarrow G(y), w_2 \leftarrow \{0,1\}^{m(\lambda)}, x_1',x_2' \leftarrow X_{\lambda}\,,\, G \leftarrow \text{Bool}(n,m(\lambda))\, ,\, \hat{G} \leftarrow \text{Bool}(n,m(\lambda))\Big] \nonumber\\   
    &\,\,\,\,+ \frac13 \Pr\Big[\mathcal{F}_1^{\hat{G}_{x_1',w_1}}(\textsf{A}, x_1')= 0 \,\,\,\land \,\,\, \mathcal{F}_2^{\hat{G}_{x_2', w_2}}(\textsf{B}, x_2') = 1 \nonumber\\
    &\,\,\,\,\,\,\,\,\,\,\,\,\,\,\,\,\,: \textsf{AB} \leftarrow \rho, \,\,\, \rho \leftarrow \mathcal{P}^{\hat{G}}\left(\ket{v^{G(y)}}, H(v) \right), P_y \leftarrow D_y, v \leftarrow \{0,1\}^{m(\lambda)}, \nonumber\\
    & \,\,\,\,\,\,\,\,\,\,\,\,\,\,\,\,\,w_1 \leftarrow \{0,1\}^{m(\lambda)}, w_2 \leftarrow G(y), x_1',x_2' \leftarrow X_{\lambda}\,,\, G \leftarrow \text{Bool}(n,m(\lambda))\, ,\, \hat{G} \leftarrow \text{Bool}(n,m(\lambda))\Big] \,. \label{eq: 23}
\end{align}
For the next step, the key observation is that $\mathcal{P}$ only queries the oracle at $x_1'$ and $x_2'$ with negligible weight. Likewise, $\mathcal{F}_1$ only queries $x_2'$ with negligible weight, and $\mathcal{F}_2$ only queries $x_1'$ with negligible weight. The reason why this is true in this case is that, if it were true, $P$ could be used to construct an adversary that guesses a string sampled from $X_{\lambda}$ with non-negligible probability. But $X_{\lambda}$ has polynomial min-entropy. Thus, by an application of the one-way-to-hiding lemma, one can replace the current oracle accesses of $\mathcal{P}$, $\mathcal{F}_1$ and $\mathcal{F}_2$ with oracle access to the function $\hat{G}_{(x_1', w_1), (x_2',w_2)}$. Hence, we have, up to negligible quantities,
\begin{align}
    &\eqref{eq: 23}=\frac13 \Pr\Big[\mathcal{F}_1^{\hat{G}_{(x_1', w_1), (x_2',w_2)}}(\textsf{A}, x_1')= 0 \,\,\,\land \,\,\, \mathcal{F}_2^{\hat{G}_{(x_1', w_1), (x_2',w_2)}}(\textsf{B}, x_2') = 0 \nonumber\\
    &\,\,\,\,\,\,\,\,\,\,\,\,\,\,\,\,\,: \textsf{AB} \leftarrow \rho, \,\,\, \rho \leftarrow \mathcal{P}^{\hat{G}_{(x_1', w_1), (x_2',w_2)}}\left(\ket{v^{G(y)}}, H(v) \right), P_y \leftarrow D_y, v \leftarrow \{0,1\}^{m(\lambda)}, \nonumber\\
    &\,\,\,\,\,\,\,\,\,\,\,\,\,\,\,\,\,w_1, w_2 \leftarrow \{0,1\}^{m(\lambda)}, x_1',x_2' \leftarrow X_{\lambda}\,,\, G \leftarrow \text{Bool}(n,m(\lambda))\, ,\, \hat{G} \leftarrow \text{Bool}(n,m(\lambda))\Big] \nonumber\\   
    &\,\,\,\,+ \frac13 \Pr\Big[\mathcal{F}_1^{\hat{G}_{(x_1', w_1), (x_2',w_2)}}(\textsf{A}, x_1')= 1 \,\,\,\land \,\,\, \mathcal{F}_2^{\hat{G}_{(x_1', w_1), (x_2',w_2)}}(\textsf{B}, x_2') = 0 \nonumber\\
    &\,\,\,\,\,\,\,\,\,\,\,\,\,\,\,\,\,: \textsf{AB} \leftarrow \rho, \,\,\, \rho \leftarrow \mathcal{P}^{\hat{G}_{(x_1', w_1), (x_2',w_2)}}\Big(\ket{v^{w_1}}, H(v) \Big), y \leftarrow \{0,1\}^{\lambda}, v \leftarrow \{0,1\}^{m(\lambda)}, \nonumber\\
    & \,\,\,\,\,\,\,\,\,\,\,\,\,\,\,\,\,w_1 \leftarrow G(y), w_2 \leftarrow \{0,1\}^{m(\lambda)}, x_1',x_2' \leftarrow X_{\lambda}\,,\, G \leftarrow \text{Bool}(n,m(\lambda))\, ,\, \hat{G} \leftarrow \text{Bool}(n,m(\lambda))\Big] \nonumber\\   
    &\,\,\,\,+ \frac13 \Pr\Big[\mathcal{F}_1^{\hat{G}_{(x_1', w_1), (x_2',w_2)}}(\textsf{A}, x_1')= 0 \,\,\,\land \,\,\, \mathcal{F}_2^{\hat{G}_{(x_1', w_1), (x_2',w_2)}}(\textsf{B}, x_2') = 1 \nonumber\\
    &\,\,\,\,\,\,\,\,\,\,\,\,\,\,\,\,\,: \textsf{AB} \leftarrow \rho, \,\,\, \rho \leftarrow \mathcal{P}^{\hat{G}_{(x_1', w_1), (x_2',w_2)}}\Big(\ket{v^{w_2}}, H(v) \Big), P_y \leftarrow D_y, v \leftarrow \{0,1\}^{m(\lambda)}, \nonumber\\
    &\,\,\,\,\,\,\,\,\,\,\,\,\,\,\,\,\,w_1 \leftarrow \{0,1\}^{m(\lambda)}, w_2 \leftarrow G(y), x_1',x_2' \leftarrow X_{\lambda}\,,\, G \leftarrow \text{Bool}(n,m(\lambda))\, ,\, \hat{G} \leftarrow \text{Bool}(n,m(\lambda))\Big] \label{eq: 199}
\end{align}

For the second and third terms, it is convenient to further reprogram the oracle $\hat{G}$ at $y$, assigning as output a fresh uniformly random value in $\{0,1\}^{m(\lambda)}$. We can do this since none of the algorithms queries at $y$ with non-negligible weight. In this way, we are able to rewrite \eqref{eq: 199} more concisely, up to negligible terms, as:
\begin{align}
&\frac13 \Pr\Big[\mathcal{F}_1^{G}(\textsf{A}, x_1')= 0 \,\,\,\land \,\,\, \mathcal{F}_2^{G}(\textsf{B}, x_2') = 0 \nonumber\\
    &\,\,\,\,\,\,\,\,\,\,\,\,\,\,\,\,\,\,\,\,\,\,\,\,\,\,\,\,\,\,\,\,\,\,\,\,\,: \textsf{AB} \leftarrow \rho, \,\,\, \rho \leftarrow \mathcal{P}^{G}\left(\ket{v^{G(y)}}, H(v) \right), P_y \leftarrow D_y, v \leftarrow \{0,1\}^{m(\lambda)}, \nonumber\\
    & \,\,\,\,\,\,\,\,\,\,\,\,\,\,\,\,\,\,\,\,\,\,\,\,\,\,\,\,\,\,\,\,\,\,\,\,\,\,\,\,\,\,\,\,\, x_1',x_2' \leftarrow X_{\lambda}\,,\, G \leftarrow \text{Bool}(n,m(\lambda))\Big] \nonumber\\   
    &\,\,\,\,+ \frac13 \Pr\Big[\mathcal{F}_1^{G}(\textsf{A}, x_1')= 1 \,\,\,\land \,\,\, \mathcal{F}_2^{G}(\textsf{B}, x_2') = 0 \nonumber\\
    &\,\,\,\,\,\,\,\,\,\,\,\,\,\,\,\,\,\,\,\,\,\,\,\,\,\,\,\,\,\,\,\,\,\,\,\,\,: \textsf{AB} \leftarrow \rho, \,\,\, \rho \leftarrow \mathcal{P}^{G}\left(\ket{v^{G(x'_1)}}, H(v) \right), v \leftarrow \{0,1\}^{m(\lambda)}, \nonumber\\
    & \,\,\,\,\,\,\,\,\,\,\,\,\,\,\,\,\,\,\,\,\,\,\,\,\,\,\,\,\,\,\,\,\,\,\,\,\,\,\,\,\,\,\,\,\, x_1',x_2' \leftarrow X_{\lambda}\,,\, G \leftarrow \text{Bool}(n,m(\lambda))\Big] \nonumber\\   
    &\,\,\,\,+ \frac13 \Pr\Big[\mathcal{F}_1^{G}(\textsf{A}, x_1')= 1 \,\,\,\land \,\,\, \mathcal{F}_2^{G}(\textsf{B}, x_2') = 0 \nonumber\\
    &\,\,\,\,\,\,\,\,\,\,\,\,\,\,\,\,\,\,\,\,\,\,\,\,\,\,\,\,\,\,\,\,\,\,\,\,\,: \textsf{AB} \leftarrow \rho, \,\,\, \rho \leftarrow \mathcal{P}^{G}\left(\ket{v^{G(x'_2)}}, H(v) \right), v \leftarrow \{0,1\}^{m(\lambda)}, \nonumber\\
    & \,\,\,\,\,\,\,\,\,\,\,\,\,\,\,\,\,\,\,\,\,\,\,\,\,\,\,\,\,\,\,\,\,\,\,\,\,\,\,\,\,\,\,\,\, x_1',x_2' \leftarrow X_{\lambda}\,,\, G \leftarrow \text{Bool}(n,m(\lambda))\Big] \label{eq: 200}
\end{align}
Finally, notice that
\begin{align}
    \eqref{eq: 200} =&\,\frac13 \Pr (\mathcal{P},\mathcal{F}_1,\mathcal{F}_2) \text{ win } H_0 | \,\, \text{0-input, 0-input}]  \nonumber\\
    + &\,\frac13 \Pr (\mathcal{P},\mathcal{F}_1,\mathcal{F}_2) \text{ win } H_0 | \,\, \text{1-input, 0-input}] \nonumber\\
   +&\,\frac13 \Pr (\mathcal{P},\mathcal{F}_1,\mathcal{F}_2) \text{ win } H_0 | \,\, \text{0-input, 1-input}] \nonumber \\
   =&\, p \,,
\end{align}
which yields the desired result.

\end{document}